\newcommand{\field}[1]{\mathbb{#1}}
\newcommand{\DscrGrp}[1]{\mathrm{#1}}
\newcommand{\LieGrp}[1]{\mathrm{#1}}
\newcommand{\LieAlg}[1]{\mathfrak{#1}}
\newcommand{\cc}[1]{#1^*}	
\newcommand{\cmpl}[1]{\overline{#1}}
\newcommand{\pcmpl}[1]{\overline{#1}}
\newcommand{\ve}[1]{\mathbf{#1}}
\newcommand{\vs}[1]{\boldsymbol{#1}}
\newcommand{\tpl}[1]{\mathbf{#1}}
\newcommand{\tpls}[1]{\boldsymbol{#1}}
\newcommand{\vvs}[1]{\underline{\boldsymbol{#1}}}
\newcommand{\sr}[1]{\mathbf{#1}{\boldsymbol{\sigma}}}
\newcommand{\src}[1]{\cc{\mathbf{#1}}{\boldsymbol{\sigma}}}
\newcommand{\Id}{\mathrm{I}}
\newcommand{\IId}{\mathcal{I}}
\newcommand{\cket}[1]{\vert #1 \rangle}
\newcommand{\bra}[1]{\langle #1 \vert}
\newcommand{\bracket}[1]{\langle #1 \rangle}
\newcommand{\defn}{\overset{\text{def.}}{\Longleftrightarrow}}
\newcommand{\nsubset}{\not\subset}
\newcommand{\cnvroof}[1]{{#1}^{\cup}}
\newcommand{\cncroof}[1]{{#1}^{\cap}}
\newcommand{\diffeom}{\cong}
\newcommand{\isom}{\cong}
\DeclareMathOperator{\rk}{rk}
\DeclareMathOperator{\tr}{tr}
\DeclareMathOperator{\Det}{Det}
\DeclareMathOperator{\Spect}{Spect}
\DeclareMathOperator{\Eigv}{Eigv}
\DeclareMathOperator{\Ad}{Ad}
\DeclareMathOperator{\Lin}{Lin}
\DeclareMathOperator{\BiLin}{BiLin}
\DeclareMathOperator{\Conv}{Conv}
\DeclareMathOperator{\Extr}{Extr}
\DeclareMathOperator{\ee}{e}
\newcommand{\dd}{\mathrm{d}}
\newcommand{\transp}{\mathrm{t}}
\newcommand{\td}{\tilde{d}}
\newcommand{\tg}{\tilde{g}}
\newcommand{\tw}{\tilde{w}}
\newcommand{\Bigset}[2]{\Bigl\{ #1 \;\Big\vert\; #2 \Bigr\}}
\providecommand{\abs}[1]{\lvert#1\rvert}
\providecommand{\norm}[1]{\lVert#1\rVert}
\providecommand{\bigabs}[1]{\bigl\lvert#1\bigr\rvert}
\providecommand{\bignorm}[1]{\bigl\lVert#1\bigr\rVert}
\providecommand{\Bignorm}[1]{\Bigl\lVert#1\Bigr\rVert}
\numberwithin{equation}{chapter}
\numberwithin{section}{chapter}
\numberwithin{figure}{chapter}
\numberwithin{table}{chapter}
\newenvironment{organization}
{The organization of this chapter is as follows.
 
 \begin{description}[rightmargin=\parindent,leftmargin=\parindent,labelsep=0pt]
 }
{\end{description}}
\newenvironment{organizationc}
{
 \begin{description}[rightmargin=\parindent,leftmargin=\parindent,labelsep=0pt]
 }  
{\end{description}}
\newenvironment{remarks}
{Now, we list some remarks and open questions.
 \begin{enumerate}[label=(\roman*),ref=\roman*,rightmargin=\parindent,leftmargin=\parindent,itemindent=0pt, topsep=12pt]
 }
{\end{enumerate}}
\newtheorem{thm}{Theorem}[section]
\newtheorem{prop}[thm]{Proposition}
\newtheorem{alg}[thm]{Algorithm}
\begin{document}

%
\frontmatter


%
\cleardoublepage 
\thispagestyle{empty}

\begin{center}
\LARGE
\textbf{Quantum entanglement\\[6pt]
in finite-dimensional Hilbert spaces}\\
\smallskip
\normalsize
by\\
\smallskip
\LARGE
\textbf{Szil{\'a}rd Szalay}\\ 

\medskip
\normalsize
\textbf{Dissertation}\\[6pt]
presented to the Doctoral School of Physics of the\\[2pt]
Budapest University of Technology and Economics\\[2pt]
in partial fulfillment of the requirements for the degree of\\[6pt]
\textbf{Doctor of Philosophy in Physics}\\

\bigskip
\normalsize
\begin{tabular}[t]{rl}
Supervisor: & Dr.~P{\'e}ter P{\'a}l L{\'e}vay\\
            & research associate professor\\
            & Department of Theoretical Physics\\
            & Budapest University of Technology and Economics
\end{tabular}\\
\bigskip
\includegraphics[width=6cm]{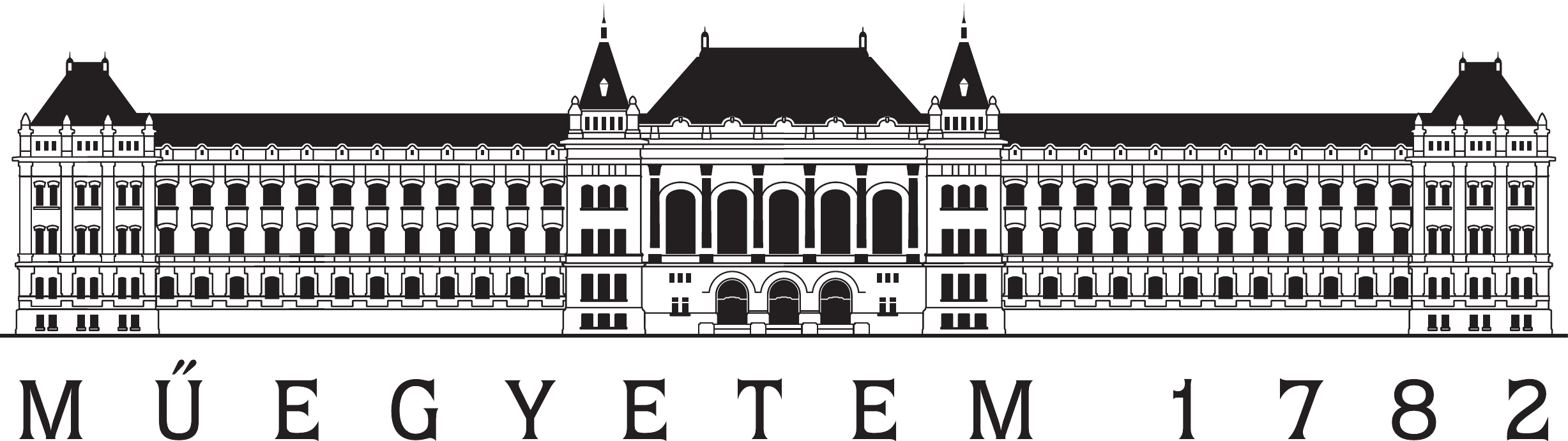}\\
\smallskip
\LARGE
2013
\end{center}

\pagebreak

%
\thispagestyle{empty}
\hbox{}\cleardoublepage 

\vspace*{13.5pc}
\begin{center}
\textit{\Small
  To my wife, daughter and son.}
\end{center}


%
\thispagestyle{empty}
\hbox{}\cleardoublepage 

\begin{center}
\begin{minipage}[t]{0.8\textwidth}
\Small
\textsc{Abstract.} In the past decades, 
quantum entanglement has been recognized to be the basic resource in quantum information theory.
A fundamental need is then the understanding its \emph{qualification} and its \emph{quantification}:
Is the quantum state entangled,
and if it is, then how much entanglement is carried by that?
These questions introduce the topics
of \emph{separability criteria} and \emph{entanglement measures}, 
both of which are based on the issue of \emph{classification of multipartite entanglement}.
In this dissertation,
after reviewing these three fundamental topics for finite dimensional Hilbert spaces,
I present my contribution to knowledge.
My main result is the elaboration of the 
\emph{partial separability classification of mixed states}
of quantum systems composed of arbitrary number of subsystems of Hilbert spaces of arbitrary dimensions.
This problem is simple for pure states,
however, for mixed states it has not been considered in full detail yet.
I give not only the \emph{classification} but also
\emph{necessary and sufficient criteria} for the classes,
which make it possible to determine to which class a mixed state belongs.
Moreover, these criteria are given by the vanishing of quantities \emph{measuring} entanglement.
Apart from these,
I present some side results
related to the entanglement of mixed states.
These results are obtained in the learning phase of my studies
and give some illustrations and examples.


\end{minipage}
\end{center}

%
\tableofcontents

%

\cleardoublepage
\phantomsection
\chapter*{Acknowledgements}

This work would not have been possible without the help of several people,
whom I would like to mention here.

First and foremost, it is a pleasure to thank my adviser \emph{P{\'e}ter L{\'e}vay} 
for his supervising through the years of learning and research,
and for giving me independence to pursue research on the ideas that came across my mind.
His helpful discussions together with his insight and passion for research have always been inspiring.

I would like to extend my gratitude to some of my other teachers as well,
\emph{D{\'e}nes Petz}, \emph{Tam{\'a}s Geszti} and \emph{Tam{\'a}s Matolcsi},
the lectures and books of whom were guides of great value in studying quantum mechanics and mathematical physics.

I am grateful to
\emph{L{\'a}szl{\'o} Szunyogh}, the head of the Department of Theoretical Physics,
and \emph{Gy{\"o}rgy Mih{\'a}ly}, the head of the Doctoral School of Physics,
as well as \emph{M{\'a}ria Vida}, my administrator,
for the flexible, effective and helpful attitude for administrative issues,
supporting my studies to a large extent. 
My Ph.D.~studies were partially supported by
    the \emph{New Hungary Development Plan}  (project ID: T{\'A}MOP-4.2.1.B-09/1/KMR-2010-0002),
    the \emph{New Sz{\'e}chenyi Plan of Hungary} (project ID: T{\'A}MOP-4.2.2.B-10/1--2010-0009)
and the \emph{Strongly correlated systems research group} of the \emph{``Momentum'' program of the Hungarian Academy of Sciences}
(project ID: 81010-00).

I am gerateful to \emph{my parents} for supproting my studies financially and in principles as well.
I would not be succesful without this.

Last but not least,
I thank my wife, \emph{M{\'a}rta}, for her faithful love and everlasting support,
providing the affectionate and peaceful atmosphere
which is an essential condition of any absorbed research.
I would like to dedicate this piece of work to her and to our children.

\cleardoublepage
\phantomsection
\chapter*{Certifications in hungarian}

\noindent
Alul{\'i}rott \textbf{Szalay Szil{\'a}rd} kijelentem,
hogy ezt a doktori {\'e}rtekez{\'e}st magam k{\'e}sz{\'i}tettem
{\'e}s abban csak a megadott forr{\'a}sokat haszn{\'a}ltam fel.
Minden olyan r{\'e}szt, amelyet sz{\'o} szerint 
vagy azonos tartalommal, de {\'a}tfogalmazva m{\'a}s forr{\'a}sb{\'o}l {\'a}tvettem,
egy{\'e}rtelm{\H u}en, a forr{\'a}s megad{\'a}s{\'a}val megjel{\"o}ltem.

\noindent
Budapest, 2013. február 14.

\vspace{48pt}
\begin{tabbing}
\hspace{240pt}\=\underline{\phantom{mmmmmmmmmmmmmmmmmmm}}\\
\>\hspace{24pt}Szalay Szil{\a'a}rd\\
\end{tabbing}

\vspace{48pt}

\noindent
Alul{\'i}rott \textbf{Szalay Szil{\'a}rd} hozz{\'a}j{\'a}rulok
a doktori {\'e}rtekez{\'e}sem interneten t{\"o}rt{\'e}n{\H o} 
korl{\'a}toz{\'a}s n{\'e}lk{\"u}li nyilv{\'a}noss{\'a}gra hozatal{\'a}hoz.

\noindent
Budapest, 2013. február 14.

\vspace{48pt}
\begin{tabbing}
\hspace{240pt}\=\underline{\phantom{mmmmmmmmmmmmmmmmmmm}}\\
\>\hspace{24pt}Szalay Szil{\a'a}rd\\
\end{tabbing}

%



\cleardoublepage
\phantomsection
\chapter*{List of publications}
\label{chap:publist}

The research articles [\ref{pub:ferm}], [\ref{pub:sepcrit}], [\ref{pub:deg6}] and [\ref{pub:partsep}] are covered by this thesis.
The research articles [\ref{pub:attrdist}] and [\ref{pub:attrconc}] are the results of another research project
done in the related field of Black Hole / Qubit correspondence.
The publications are listed in chronological order.

\vspace{22pt}

\begin{enumerate}[label={[}\arabic*{]}, ref=\arabic*,leftmargin=72pt,topsep=72pt,itemsep=6pt]
 \item \label{pub:ferm} 
  \textbf{Szil\'ard Szalay,} P\'eter L\'evay, Szilvia Nagy, J\'anos Pipek,\\
  \textit{A study of two-qubit density matrices with fermionic purifications,}\\
  \href{http://iopscience.iop.org/1751-8121/41/50/505304}
  {J.~Phys.~A \textbf{41}, 505304 (2008)}
  (\href{http://arxiv.org/abs/0807.1804}
  {arXiv: 0807.1804 [quant-ph]})
 \item \label{pub:attrdist}
  P\'eter L\'evay, \textbf{Szil\'ard Szalay,}\\
  \textit{Attractor mechanism as a distillation procedure,}\\
  \href{http://link.aps.org/doi/10.1103/PhysRevD.82.026002}
  {Phys.~Rev.~D \textbf{82}, 026002 (2010)}
  (\href{http://arxiv.org/abs/1004.2346}
  {arXiv: 1004.2346 [hep-th]})
 \item \label{pub:attrconc}
  P\'eter L\'evay, \textbf{Szil\'ard Szalay,}\\
  \textit{$STU$ attractors from vanishing concurrence,}\\
  \href{http://link.aps.org/doi/10.1103/PhysRevD.83.045005}
  {Phys.~Rev.~D \textbf{84}, 045005 (2011)}
  (\href{http://arxiv.org/abs/1011.4180}
  {arXiv: 1011.4180 [hep-th]}) 
 \item \label{pub:sepcrit}
  \textbf{Szil\'ard Szalay,}\\
  \textit{Separability criteria for mixed three-qubit states,}\\
  \href{http://link.aps.org/doi/10.1103/PhysRevA.83.062337}
  {Phys.~Rev.~A \textbf{83}, 062337 (2011)}
  (\href{http://arxiv.org/abs/1101.3256}
  {arXiv: 1101.3256 [quant-ph]})
 \item \label{pub:deg6}
  \textbf{Szil\'ard Szalay,}\\
  \textit{All degree 6 local unitary invariants of $k$ qudits,}\\
  \href{http://iopscience.iop.org/1751-8121/45/6/065302/}
  {J.~Phys.~A \textbf{45}, 065302 (2012)}
  (\href{http://arxiv.org/abs/1105.3086}
  {arXiv: 1105.3086 [quant-ph]})
 \item \label{pub:partsep}
  \textbf{Szil\'ard Szalay,} Zolt\'an K\"ok\'enyesi\\
  \textit{Partial separability revisited: Necessary and sufficient criteria,}\\
  \href{http://link.aps.org/doi/10.1103/PhysRevA.86.032341}
  {Phys.~Rev.~A \textbf{86}, 032341 (2012)}
  (\href{http://arxiv.org/abs/1206.6253}
  {arXiv: 1206.6253 [quant-ph]})
\end{enumerate}

\cleardoublepage
\phantomsection
\chapter*{Thesis statements}
\label{chap:statements}

In the past decades,
quantum entanglement has been recognized to be the basic resource in quantum information theory.
A fundamental need is the understanding of its \emph{qualification} and its \emph{quantification}:
Is the state entangled,
and in this case how much entanglement is carried by it?
These questions introduce the topics
of \emph{separability criteria} and \emph{entanglement measures},
both of which are based on the problem of \emph{classification of multipartite entanglement}.
In the following thesis statements
I present my contribution to these three issues.

\vspace{22pt}

\begin{enumerate}[label=\Roman*.,ref=\Roman*,rightmargin=24pt,leftmargin=24pt,itemsep=24pt]

\item \label{statement:ferm}
I study a $12$-parameter family of two-qubit mixed states,
arising from a special class of two-fermion systems with four single particle states 
or alternatively from a four-qubit state vector with amplitudes arranged in an antisymmetric matrix.
I obtain a local unitary canonical form for those states.
By the use of this
I calculate two famous \emph{entanglement measures} which are 
the Wooters concurrence and the negativity in a closed form. 
I obtain bounds on the negativity for given Wootters concurrence,
which are strictly stronger than those for general two-qubit states.
I show that the relevant entanglement measures satisfy 
the generalized Coffman-Kundu-Wootters formula of distributed entanglement.
I give an explicit formula for the residual tangle as well.\\[6pt]
%
The publication belonging to this thesis statement is [\ref{pub:ferm}] of the list on page \pageref{chap:publist}.\\[6pt]
The main references belonging to this thesis statement are 
\cite{LevayNagyPipekTwoFermions,VerstraeteetalNegConc,CKWThreetangle,OsborneVerstraeteMonogamy}.

\item \label{statement:deg6}
Local unitary invariance is a fundamental property of all \emph{entanglement measures}.
I study quantities having this property for general multipartite systems.
In particular, 
I give explicit index-free formulas for
all the algebraically independent local unitary invariant polynomials up to degree six,
for finite dimensional multipartite pure and mixed quantum states.
I carry out this task by the use of graph-technical methods,
which provide illustrations for this rather abstract topic.\\[6pt]
The publication belonging to this thesis statement is [\ref{pub:deg6}] of the list on page \pageref{chap:publist}.\\[6pt]
The main references belonging to this thesis statement are \cite{HWLUA,HWWLUA,PetiLUA1,PetiLUA23}.

\item \label{statement:sepcrit}
I study the noisy GHZ-W mixture
and demonstrate some \emph{necessary but not sufficient criteria}
for different \emph{classes of separability} of these states.
I find that
the partial transposition criterion of Peres 
and the criteria of G\"uhne and Seevinck 
dealing directly with matrix elements
are the strongest ones for different separability classes of this two-parameter state.
I determine a set of entangled states of positive partial transpose.
I also give constraints on three-qubit entanglement classes related to the pure SLOCC-classes,
and I calculate the Wootters concurrences of the two-qubit subsystems.\\[6pt]
The publication belonging to this thesis statement is [\ref{pub:sepcrit}] of the list on page \pageref{chap:publist}.\\[6pt]
The main references belonging to this thesis statement are \cite{PeresCrit,GuhneSevinckCrit}.

\item \label{statement:partsep}
I elaborate the \emph{partial separability classification} of mixed states of quantum systems
composed of arbitrary number of subsystems of Hilbert spaces of arbitrary dimensions.
This extended classification is complete in the sense of partial separability
and gives $1+18+1$ partial separability classes in the tripartite case
contrary to the formerly known $1+8+1$.
I also give \emph{necessary and sufficient criteria} for the classes 
by the use of convex roof extensions of functions defined on pure states.
I show that these functions can be defined so as to be entanglement-monotones,
which is another fundamental property of all \emph{entanglement measures}.\\[6pt]
The publication belonging to this thesis statement is [\ref{pub:partsep}] of the list on page \pageref{chap:publist}.\\[6pt]
The main references belonging to this thesis statement are \cite{DurCiracTarrach3QBMixSep,DurCiracTarrachBMixSep,SeevinckUffinkMixSep}.

\item \label{statement:threeqb}
For the case of three-qubit systems,
by the use of the Freudenthal triple system approach of three-qubit pure state entanglement,
I obtain a set of functions on pure states,
whose convex roof extensions
give \emph{necessary and sufficient criteria} for the partial separability classification.
These functions have some advantages over the ones defined in the general construction,
which is given in the previous thesis statement.
Moreover, these functions fit naturally
for a special three-qubit classification
which arises as the combination of the partial separability classification
with the classification obtained by Ac{\'i}n et.~al.~for three-qubit mixed states.\\[6pt]
The publication belonging to this thesis statement is [\ref{pub:partsep}] of the list on page \pageref{chap:publist}.\\[6pt]
The main references belonging to this thesis statement are \cite{BorstenetalFreudenthal3QBEnt,DurCiracTarrach3QBMixSep,DurCiracTarrachBMixSep,Acinetal3QBMixClass,SeevinckUffinkMixSep}.

\end{enumerate}

\cleardoublepage
\phantomsection
\chapter*{Prologue}

%

The laws of quantum mechanics proved to be very successful 
in the description and prediction of the behaviour of the microworld.
Among these predictions, however, there were some very surprising ones
which are in connection with the description of composite quantum systems.
In the formalism of quantum mechanics,
the so called \emph{entangled (or inseparable) states} of composite systems appear naturally,
while the understanding of the correlations of the physical qantities measured on the subsystems
of a system being in an entangled state
is a challenge for the mind.
Namely, these correlations arise from the quantum mechanical interactions between the subsystems,
and they can not be modelled classically,
these are the manifestations of the entirely quantum behaviour of the nature.
Entanglement theory is therefore
a deep and fundamental field of central importance,
lying in the very basics of the understanding of the physical world.

An interesting twist of the story is that
these nonclassical correlations can be used for nonclassical solutions of classical, moreover, of nonclassical tasks,
leading to the idea of quantum computation \cite{FeynmanSimPhysComp}.
These nonclassical computational and information theoretical methods
are the subject of the emerging field of quantum information theory, 
which is the extension of the classical information theory for quantum systems,
dealing with these quantum correlations \cite{NielsenChuang}.
The significance of this relatively new field of science is hallmarked, among other things, by the Wolf Prize in Physics in this year.

In the scope of quantum information theory,
there are entirely nonclassical, information theoretical tasks
(such as quantum communication with
super-dense coding, quantum teleportation, quantum key distribution, quantum cryptography, quantum error correction)
and also classical computational tasks 
(such as quantum algorithms for factoring numbers, for quantum search, and for further tasks.)
What is really fascinating is that quantum algorithms significantly outperform the best known classical algorithms for the same tasks,
moreover, they are able to solve some problems in polynomial time, 
which problems can not be solved in polynomial time by the known classical algorithms.

During the run of all the above quantum protocols,
the basic resource expended is entanglement,
that is, composite quantum systems being in entangled states.
A fundamental need is then the studying of the characterization of entanglement,
which is the main concern of this dissertation.
Although the entanglement which is used for quantum information processing tasks
is presented mostly in maximally entangled Bell pairs of two qubits,
but the structure of entanglement is far richer than that of two-qubit pure states.
We will consider some aspects of this issue in the present dissertation,
here and now we just want to emphasize that
the rich structure of multipartite entanglement might provide a lot of opportunities, 
which are still far from being explored and utilized.

The utilization of even the bipartite entanglement is by no means an easy job.
Quantum mechanics works in microscopic scales,
and, due to the environmental decoherence,
the manifestations of this particular behaviour are hard to reach.
Effects of entanglement are studied in many-body systems as well,
but an important color in the picture is that the experimental manipulation
of individual quantum objects is not out of reach,
as is also illustrated by the Nobel Prize in Physics in last year.


The organization of this dissertation is as follows:
\begin{organizationc}
\item[\ref{chap:QM}]
we give a brief review on the fundamental topics of quantum entanglement which we deal with.
We introduce the main notions and notational conventions
and attempt to cover the whole material which will be used in the following chapters.
Our main concerns are about
the qualification of entanglement, that is, 
deciding about a given state whether it is entangled or separable;
and the quantification of entanglement, that is, 
defining quantities characterizing the ``amount of entanglement'' carried by a given state,
doing this in some motivated way.
Of course, if we have some evaluated quantities in hand which give the amount of entanglement,
then the decision of entangledness is solved as well,
but we usually do not have such opportunity
and even the decision of entangledness leads to a hard optimization problem.
The situation is more complicated in multipartite systems,
where many different kinds of entanglement arise.
In the following chapters we present our contributions to knowledge in these fields.
\item[\ref{chap:Ferm}]
we start with a special two-qubit system.
Qubit systems are of particular importance 
because, on the one hand, qubits are the elementary building blocks of applications in quantum information theory,
on the other hand, they have a simple mathematical structure leading to explicit results
in the quantification of entanglement.
Apart from that, systems of bigger size can be embedded into multiqubit systems.
For the special family of two-qubit states we deal with,
we evaluate explicitly some measures of entanglement,
and investigate some relations among those.\\
The material of this chapter covers thesis statement \ref{statement:ferm}.
\item[\ref{chap:Deg6}]
we continue with a quite general construction of some quantites characterizing quantum states,
a construction which is independent of the size of the subsystems.
These quantities share the invariance property of the most detailed characterization of entanglement,
so these might provide a natural language for the characterization 
and even for the quantitative description of entanglement.\\
The material of this chapter covers thesis statement \ref{statement:deg6}.
\item[\ref{chap:SepCrit}]
after the investigations of the previous two chapters, concerning the characterization of quantum states by quantities in some sense,
we turn to the problem of the decision of entangledness.
In the literature there are numerous conditions for this.
For the use of these conditions, 
various quantities have to be evaluated for a given state.
Unfortunately, these quantities are given only implicitly in the most of the cases,
and those ones which can be evaluated explicitly
result in sufficient but not necessary criteria of entanglement only.
Here we show some of the criteria of this kind at work,
considering a particular example of a family of three-qubit states.\\
The material of this chapter covers thesis statement \ref{statement:sepcrit}.
\item[\ref{chap:PartSep}]
after the particular examples of the previous chapter, we consider the partial separability problem in general.
The partial separability treat every subsystem as a fundamental unit, regardless of its size or even of the number of its components,
and concerns the existence of entanglement among the subsystems only.
We extend the usual classification of partial separability
and formulate also necessary and sufficient criteria for the decision of different kinds of entanglement.
These criteria are given in terms of quantities measuring entanglement. 
The use of these necessary and sufficient criteria leads
to untractable hard optimization problems in general,
so these criteria can only be used for special families of states,
similarly to other necessary and sufficient criteria.
However, our criteria have the advantage of reflecting clearly the structure of partial separability, 
and they work in a similar way for all classes.
We work out the tripartite case, then we give the general definitions for arbitrary number of subsystems.\\
The material of this chapter covers thesis statement \ref{statement:partsep}.
\item[\ref{chap:ThreeQB}]
after the general constructions of the previous chapter,
we turn to the particular system of three qubits again.
In this case, thanks to a beautiful mathematical coincidence,
another set of quantities can be written for the formulation of the
necessary and sufficient criteria given in the previous chapter.
Although these quantities are not measures of entanglement,
but they fit not only for the partial separability classification
but also for a more interesting classification of three-qubit states which goes a bit beyond partial separability.\\
The material of this chapter covers thesis statement \ref{statement:threeqb}.
\end{organizationc}

%
\mainmatter

%

%

\chapter{Quantum entanglement}
\label{chap:QM}

In quantum systems, correlations having no counterpart in classical physics arise.
Pure states showing these strange kinds of correlations are called entangled ones \cite{Horodecki4,BengtssonZyczkowski},
and the existence of these states has so deep and important consequences
that Schr{\"o}dinger has identified entanglement to be the characteristic trait of quantum mechanics \cite{SchrodingerEnt,Schrodinger2}.

Historically, the nonlocal behaviour of entangled states of bipartite systems was the main concern first.
Einstein, Podolsky and Rosen in their famous paper \cite{EPRpaper} showed that
under the assumption of locality,
entanglement gives rise to some ``elements of reality'',
that is, values of physical quantities exactly known without measurements,
about which quantum mechanics does not know,
since it gives only statistical answers.
Therefore quantum mechanics is incomplete,
and there may exist variables, hidden for quantum mechanics, which determine the outcomes of the measurements uniquely. 
What is more interesting, is that any hidden-variable model of quantum mechanics is essentially nonlocal \cite{BellOnEPR},
which is the famous, experimentally testable result of Bell.
Nowadays, it is widely accepted that quantum mechanics is a complete, but statistical theory,
and only the composite system possesses values of physical quantities,
it is not possible to ascribe values of physical quantities of local subsystems prior to measurements \cite{BellOnEPR}.

Recently, the focus of attention in entanglement theory changed                                      
from locality issues to more general forms of nonclassical behaviour \cite{Horodecki4}.
As was mentioned in the Prologue,
the nonclassical behaviour of entangled quantum states has
far-reaching consequences manifested in quantum information theory,
which is the theory of nonclassical correlations together with applications \cite{NielsenChuang,CavesQISEmergingNoMore}.

In this dissertation, we encounter mixed states rather than pure ones,
since the former ones play much more important roles in entanglement theory than the latter ones,
because of multiple reasons.
The majority of methods in quantum information theory, as well as the issues concerning locality,
generally use pure entangled states, which can easily be prepared and which are easy to use to obtain nonclassical results.
However, in a laboratory one can not get rid of the interaction with the environment perfectly,
thus the separable compound state of the system and the environment evolves into an entangled one,
the prepared pure state of the system evolves into a noisy, mixed one.
This was a practical reason for studying mixed state entanglement, 
however, theoretical ones are much more important.
First, in the case of multipartite systems 
even if the state of the whole system is pure,
the states of its bipartite subsystems are generally mixed ones,
which is a hallmark of entanglement in itself \cite{SchrodingerEnt,Schrodinger2}.
Moreover, the understanding of classicality in the language of correlations
can also be done only for mixed states even in the bipartite case \cite{DornerVedralCorrelations}.

The definition of entanglement and separability of mixed states
was given first by Werner \cite{WernerSep}.
In this paper, he also constructed famous examples for mixed states which are entangled
and still local in the sense that a local hidden variable model can be constructed
for that, describing the usual projective measurements.
So we could think that from the point of view of nonclassicality, 
entanglement does not grasp the nonclassical behaviour perfectly.
However, an important result, came from quantum information theory, disprove this.
Namely, every entangled state can be used for some nonclassical task
\cite{MasanesEntanglementExtractFromAllNonsep,MasanesAllEntUseful,MasanesAllEntHiddenNonlocal}.
So, for mixed states, nonlocality is considered only as a stronger manifestation of nonclassicality,
but entanglement is still important from the point of view of nonclassicality.

In this chapter, 
we give a brief review of the fundamental topics we deal with
in quantum mechanics \cite{NeumannFoundQM,PetzLinAnal,BengtssonZyczkowski}
and quantum entanglement \cite{Horodecki4}, with some connections to quantum information theory \cite{PetzQInfo,NielsenChuang,PreskillNotes}.
We introduce the main notions together with the notational conventions,
and we attempt to cover the whole material which will be used in the following chapters.
We will see that entanglement in itself is a direct consequence of the formalism of the mathematical description of quantum mechanics.
Because of the reasons above,
we follow a treatment from the point of view of mixed states.
This has advantages and also disadvantages.
Usually, quantum mechanics is built upon the primary role of pure states,
resulting in an inductive, better motivated and historically faithful treatment,
in the course of which mixed states arise as ensembles or states of subsystems of entangled systems.
Here we give a reverse treatment, which is an axiomatic, deductive and less motivated one, usual in entanglement theory,
in the course of which pure states arise as special cases of mixed states.

\begin{organization}
\item[\ref{sec:QM.QuantSys}] 
we start with recalling the general description of singlepartite quantum systems
(section \ref{subsec:QM.QuantSys.Descr})
together with the characterization of the mixedness of the states of those
in the terms of entropic quantities
(section \ref{subsec:QM.QuantSys.Mixedness}).
The most important differences between classical and quantum systems
appear in these very basic topics.
We also give the detailed description of a single qubit,
which is the simplest quantum system
(section \ref{subsec:QM.QuantSys.Qubit}).
\item[\ref{sec:QM.Ent}]
after the issues of singlepartite systems in the previous section,
we turn to the description of compound systems and entanglement.
First, we review the general non-unitary operations on open quantum systems
arising from the quantum interaction inside the bipartite composite of the system with its environment
(section \ref{subsec:QM.Ent.Ops}),
then some basics about the entanglement in bipartite and multipartite systems 
(sections \ref{subsec:QM.Ent.2Part} and \ref{subsec:QM.Ent.NPart}),
and finally, the important point where these two topics meet each other, 
which is the so called distant lab paradigm
(section \ref{subsec:QM.Ent.LO}).
\item[\ref{sec:QM.EntMeas}]
after the basics of entanglement in the previous section,
we turn to issues 
related to the characterization of entanglement in 
some particular few-partite systems.
First we review some tools for the quantification of bipartite entanglement
(section \ref{subsec:QM.EntMeas.EntMeas}),
then we consider the pure and mixed states of general bipartite
(sections \ref{subsec:QM.EntMeas.2Pure} and \ref{subsec:QM.EntMeas.2Mix})
and two-qubit systems 
(section \ref{subsec:QM.EntMeas.2QBPure} and \ref{subsec:QM.EntMeas.2QBMix}).
The structure of multipartite entanglement is much more complex,
we just review some important results for the case of
three-qubit pure and mixed states
(sections \ref{subsec:QM.EntMeas.3QBPure} and \ref{subsec:QM.EntMeas.3QBMix}),
and of four-qubit pure states 
(section \ref{subsec:QM.EntMeas.4QBPure}).
\end{organization}

\section{Quantum systems}
\label{sec:QM.QuantSys}

In the most part of this dissertation, 
we deal with quantum states rather than physical quantities themselves.
By state we mean in general something
what determines the values of measurable physical quantities in some sense.
In classical mechanics, 
the (pure) state of the system is represented by a point 
in a subset of a $2f$ dimensional real vector space,
or more precisely in a simplectic manifold, called \emph{phase space},
where $f$ denotes the number of the degrees of freedom.
In principle, the values of all physical quantities are completely determined by the actual phase point,
so physical quantities are then represented by functions on this space.
The case of quantum mechanics is more subtle.
Instead of the real finite dimensional phase space
we have a complex separable Hilbert space,
the rays of that are regarded as (pure) quantum states.
Moreover, the values of physical quantities are \emph{not} determined by the quantum state,
only distributions of them.
%

\subsection{Description of quantum systems}
\label{subsec:QM.QuantSys.Descr}
The mathematical foundations of quantum mechanics are due to von Neumann \cite{NeumannFoundQM}.
Let $\mathcal{H}$ be the complex Hilbert space corresponding to a \emph{quantum system.}
In the whole of this dissertation, we consider systems having finite dimensional Hilbert space only.
The dimension of the Hilbert space is denoted by $d$.
In the classical scenario, this corresponds to the discrete phase space of $d$ points.
The system in the particular case when $d=2$ is called \emph{qubit}.
This case is not only the most simple but also a very exceptional one,
there are many mathematical coincidences which hold only in two dimensions.
We will see some manifestations of them in the following.

The \emph{dynamical variables} of the quantum system, also called \emph{observables},
are represented by normal operators acting on $\mathcal{H}$,
\begin{equation*}
\mathcal{A}(\mathcal{H}) = \Bigset{A\in\Lin(\mathcal{H})}{AA^\dagger=A^\dagger A}.
\end{equation*}
Operators of this kind admit the spectral decomposition
\begin{equation*}
A=\sum_ia_i\cket{\alpha_i}\bra{\alpha_i}, \qquad\text{where}\qquad \bracket{\alpha_{i'}\vert\alpha_i}=\delta^{i'}_{i},
\end{equation*}
which is of fundamental importance for the structure of the theory.
As we will see, the discrete eigenvalues represent the discrete outcomes of the measurments,
which is how quantum mechanics describe the quantized phenomena of the microworld.
The dynamical variables in quantum mechanics are usually inherited from the classical mechanics,
where they take real values.
In this case the quantum mechanical dynamical variables are represented by self-adjoint operators,
having real eigenvalues.
(Sometimes, only these operators are called observables.)
Another note is that there is a freedom in the choice of the Hilbert space,
as far as the considered observables can be represented on that.

The \emph{state}
of the quantum system is represented 
by a self-adjoint positive semidefinite operator acting on $\mathcal{H}$,
which is normalized, which means in this context that its trace is equal to $1$.
These operators are called statistical operators, or \emph{density operators.}
The set of the states is denoted by $\mathcal{D}\equiv\mathcal{D}(\mathcal{H})$,
which is then%
\footnote{Strictly speaking,
the states are the probability measures on the lattice of subspaces of the Hilbert space \cite{FayTorosKvLogika},
and the set of them is isomorphic to $\mathcal{D}(\mathcal{H})$ only for $d>2$, 
which is Gleason's theorem \cite{GleasonThm}.
In the pathological $d=2$ case there are probability measures
to which density operators can not be assigned.
We often consider qubits, but we deal only with density operators,
and, inaccurately, by states we mean density operators only.}
\begin{equation*}
\mathcal{D}(\mathcal{H}) = 
\Bigset{\varrho\in\Lin(\mathcal{H})}{\varrho^\dagger =\varrho,\; \varrho\geq0,\; \tr\varrho=1}.
\end{equation*}
The self-adjoint operators form a vector space over the field of real numbers.
This vector space can also be endowed with an inner product and also a metric.
%
The operators of unit trace forms an affin subspace in that,
while the positive semidefinite operators form a cone,
which is convex.
$\mathcal{D}(\mathcal{H})$ is then the intersection of these two,
so it is a convex set in the affin subspace of unit trace
in the real vector space of self-adjoint operators acting on $\mathcal{H}$.
By virtue of this, the dimension of $\mathcal{D}(\mathcal{H})$ is $d^2-1$.
The $\pi$ extremal points of $\mathcal{D}(\mathcal{H})$ are of the form $\pi=\cket{\psi}\bra{\psi}$,
where $\cket{\psi}\in\mathcal{H}$ is normalized, $\norm{\psi}^2=1$.
They are called \emph{pure states},
and they form a $2d-2$-dimensional submanifold of $\mathcal{D}(\mathcal{H})$,
denoted with  $\mathcal{P}(\mathcal{H})$.
Contrary to the classical scenario, here we have continuously many pure states even for qubits.
The set of states is the convex hull of the pure states $\mathcal{D}(\mathcal{H})=\Conv\bigl(\mathcal{P}(\mathcal{H})\bigr)$,
in other words, every state can be formed by the convex combination of pure states,
\begin{equation}
\label{eq:mixing}
\varrho=\sum_{j=1}^m p_j\pi_j, 
\end{equation}
where the $m$-tuple $\tpl{p}=(p_1,\dots,p_m)$ of convex combination coefficients is positive
and normalized with respect to the $1$-norm, $\norm{p}_1=\sum_jp_j=1$.
The set of such $m$-tuples, the $m-1$-simplex, is denoted with $\Delta_{m-1}\subset\field{R}^m$.
The principle of measurement, given in the following paragraphs,
enables us to consider this $\tpl{p}$ as a discrete probability distribution.
If the convex combination is not trivial then the state is called \emph{mixed state},
and its interpretation is that the system is in the pure state $\pi_j$ with probability $p_j$.
If an ensemble of quantum systems being in pure states $\pi_j$ with mixing weights $p_j$ is given,
then random sampling results in such a distribution.
Note that here, contrary to the classical scenario,
the pure states have intrinsic structure,
so a mixed quantum state is not only a probability distribution
but a probability distribution together with directions in the Hilbert space.

A \emph{(generalized) measurement} on the system is given by a set of measurement operators 
\begin{subequations}
\label{eq:Meas}
\begin{equation*}
\Bigset{M_i\in\Lin(\mathcal{H})}{i=1,\dots,m,\;\sum_iM_i^\dagger M_i=\Id}.
\end{equation*}
A \emph{selective measurement} has $m$ outcomes,
resulting in the $m$ post-measurement states:
\begin{equation}
\label{eq:selMeas}
\varrho\qquad\longmapsto\qquad\varrho_i'=\frac{M_i\varrho M_i^\dagger}{\tr M_i\varrho M_i^\dagger},
\qquad\text{with probability}\qquad q_i=\tr M_i\varrho M_i^\dagger.
\end{equation}
(The $\sum_iM_i^\dagger M_i=\Id$ resolution of identity ensures that $\sum_iq_i=1$.)
Here we have physical access to the $\varrho_i$ outcome states of the measurement,
under which we mean that we are able to execute different quantum operations on the different outcome systems.
Note that the probabilistic nature of the measurements 
is an inherent property of quantum mechanics,
it does not come from that the measurement devices are inaccurate and sometimes miss the right output.
Quite the contrary, these principles of quantum measurements are formulated with ideal mesurement devices.
Another point here is that 
the linearity of the trace in the $q_i$ probabilities
allows us to consider the (\ref{eq:mixing})
convex combination of pure states as a statistical mixture of states,
since the probabilities of the measurement outcomes arise from a weighted average of that of pure states.

The other main difference between the classical and quantum measurement is
that the measurement inherently affects, disturbes the state of the system.
If we carry out the measurement but forget about which outome we got,
that is, we form the mixture of the post-measurement states,
which is the result of a \emph{non-selective measurement}, we get
\begin{equation}
\label{eq:nselMeas}
\varrho\qquad\longmapsto\qquad\varrho'=\sum_iq_i\varrho_i'=\sum_iM_i\varrho M_i^\dagger,
\end{equation}
\end{subequations}
which is not equal to the original state in general.
Physically, the measurement device interacts with the system, and this interaction can not be neglected.

In the special case of the \emph{von Neumann measurement},
which is the archetype of measurements,
the measurement operators $M_i=P_i$ are projectors of orthogonal supports, 
$P_i=P_i^\dagger$, $P_iP_{i'}=\delta_{ii'}P_i$.
In this case, the repeated measurements give the same outcome.
The $P_i$ projectors arise as the spectral projectors of an observable $A$,
and the measured value of the observable in the case of the $i$th outcome of the measurement 
is the $a_i$ eigenvalue corresponding to the eigensubspace
onto which $P_i$ projects.
The expectation value of the measurement is then
\begin{equation}
\bracket{A}\equiv\sum_ia_iq_i=\tr A\varrho,
\end{equation}
in this sense the state $\varrho$ defines a linear functional on the observables.
In the next section we will see how the~(\ref{eq:Meas}) generalized measurement arises.

If the $q_i=\tr M_i\varrho M_i^\dagger$ measurement statistics is the only thing of interest,
then it is enough to deal with the $E_i=M_i^\dagger M_i$ positive operators
instead of the $M_i$ measurement operators.
The set $\{E_i\}$ is called
Positive Operator Valued Measure (POVM),
and the maps $\varrho\mapsto\tr E_i\varrho$, determining the measurement statistics, are linear functionals on the states.
This makes the use of POVMs much more convenient than that of the measurement operators.


The linear structure in the underlying Hilbert space is also important.
If the state is pure, 
sometimes we deal with the \emph{state vector} $\cket{\psi}\in\mathcal{H}$
instead of the rank one density matrix $\cket{\psi}\bra{\psi}\in\mathcal{P}(\mathcal{H})\subset\mathcal{D}(\mathcal{H})$.
In this case, we regard the pure state in the Hilbert space as the phase-equivalence class of the state vector.
Let $\{\cket{i}\mid i=1,\dots,d\}$ be an orthonormal basis in $\mathcal{H}$,
sometimes called computational basis, 
then the state vector can be written as%
\footnote{The indices of the basis run sometimes from $0$ to $d-1$,
especially in the elements of quantum information theory,
where this practice is rather convenient.
But note that in this case \emph{all} indices, 
even those of the convex combination coefficients in (\ref{eq:mixing}),
should run from zero,
because Schr{\"o}dinger's mixture theorem couples together these two kinds of summations,
as we will see in  (\ref{eq:SchMixtureThm}) in the next subsection.} 
\begin{equation*}
\cket{\psi}=\sum_{i=1}^d\psi^i\cket{i},\qquad\text{where}\qquad \psi^i=\bracket{i|\psi}\in\field{C}.
\end{equation*}
We use the convention for coefficients with lower indices $\cc{(\psi^i)}=\psi_i$,
which are the $\bracket{\psi|i}$ coefficients of the $\bra{\psi}=\cket{\psi}^*\in\mathcal{H}^*$ dual vector.%
\footnote{In the finite dimensional case, 
the $\bracket{\cdot|\cdot}$ inner product identifies $\mathcal{H}^*$ with $\mathcal{H}$,
and we denote this identification with the star: $*:\mathcal{H}\to\mathcal{H}^*$, $\cket{\psi}^*=\bra{\psi}$,
and since $\mathcal{H}^{**}\isom\mathcal{H}$ in the finite dimensional case, $\bra{\psi}^*=\cket{\psi}$.
This can be extended to tensors as well.
For example $\theta=\theta^i_{\phantom{i}j}\cket{i}\otimes\bra{j}\in\mathcal{K}\otimes\mathcal{H}^*$
(the $\otimes$ sign is often omitted in the case of tensors of this kind),
we have $\theta^*=(\theta^*)_i^{\phantom{i}j}\bra{i}\otimes\cket{j}\in\mathcal{K}^*\otimes\mathcal{H}$,
leading to $(\theta^*)_i^{\phantom{i}j}=\cc{(\theta^i_{\phantom{i}j})}$,
which is denoted simply with $\theta_i^{\phantom{i}j}$ through the identification.
Note, however, that the indices of tensors can not be uppered and lowered independently,
since $\bracket{\cdot|\cdot}$ is conjugate-linear in the first position.
Linear operations act from the left, that is, $\Lin(\mathcal{H}\to\mathcal{K})\isom \mathcal{K}\otimes\mathcal{H}^*$.
We have also the transposition, which is the natural operation 
$\transp:\mathcal{K}\otimes\mathcal{H}^*\to \mathcal{H}^*\otimes\mathcal{K}$, 
$\cket{i}\otimes\bra{j}\mapsto\bra{j}\otimes\cket{i}$.
This is defined without the inner product,
it simply interchanges the Hilbert spaces,                                                              
so it can act independently on pairs of indices.
Later, more general partial transpositions, reshufflings and general permutations of Hilbert spaces will also be used.
For convenience, we have also the hermitian transpostion 
$\dagger=*\circ\transp:\mathcal{K}\otimes\mathcal{H}^*\to \mathcal{H}\otimes\mathcal{K}^*$,
$\cket{i}\otimes\bra{j}\mapsto\cket{j}\otimes\bra{i}$
for the action of linear operations on the dual.
For further details in tensor algebraic constructions, see part 2.~in \cite{MatolcsiSpacetime},
with slightly different notations.} 

The Hilbert space $\mathcal{H}$ is closed under \emph{complex-linear combination}
$\cket{\psi}=\sum_jc_j\cket{\psi_j}$,
which is called \emph{superposition} in this context.
This makes the Hilbert space 
and also $\mathcal{P}(\mathcal{H})$ a much more interesting place 
than the classical phase space,
and in multipartite systems this is responsible for entanglement.
On the other hand,
the space of states $\mathcal{D}$ is closed under \emph{convex combination}
$\varrho=\sum_jp_j\cket{\psi_j}\bra{\psi_j}$,
which is called \emph{mixing.}
A fundamental difference between these two constructions is the possibility of interference.
The measurement probabilities in the first and second cases are
\begin{align*}
q_i&=\tr \bigl(M_i\cket{\psi}\bra{\psi}M^\dagger_i\bigr)=\Bignorm{\sum_jc_j M_i\cket{\psi_j}}^2,\\
q_i&=\tr \bigl(M_i\varrho M^\dagger_i\bigr)=\sum_jp_j\Bignorm{M_i\cket{\psi_j}}^2.
\end{align*}
In the first case, contrary to the second one, $q_i$ can be zero even if the vectors $M_i\cket{\psi_j}$ are nonzero,
which is a manifestation of the famous phenomenon of quantum interference.

If the system is in a pure state $\pi=\cket{\psi}\bra{\psi}\in\mathcal{P}$,
and we consider a von Neumann measurement
with the measurement operators being the orthogonal spectral projectors of a nondegenerate observable,
$M_i=\cket{\alpha_i}\bra{\alpha_i}$,
then we get back \emph{Born's Rule}
\begin{equation*}
q_i=\abs{\bracket{\alpha_i|\psi}}^2.
\end{equation*}
The square in that, together with the interference of different measurement outcomes
could have been the first indications
that there is a Hilbert space somewhere in the grounds of the mathematical description of quantum mechanics.
On the other hand, we can consider this measurement
as a transformation of the complex $\psi^i=\bracket{\alpha_i|\psi}$ superposition coefficients
to the real $q_i=\abs{\psi^i}^2$ mixing weights.
In this sense, the measurement washes away the interference.

The probabilities of the outcomes of the measurements are given by the trace, or the inner product,
both of them are invariant under the action of the unitary group
\footnote{This is, of course, not a coincidence. 
The trace is the \emph{natural} linear map from $\Lin(\mathcal{H})\isom\mathcal{H}\otimes\mathcal{H}^*$ to $\field{C}$,
and $\mathcal{H}^*$ is \emph{naturally} identified with $\mathcal{H}$ by the inner product of the Hilbert space,
and the unitary group is the invariance group of the inner product, by definition.}
$\LieGrp{U}(\mathcal{H})$,
which is, after fixing an orthonormal basis, isomorphic with the classical matrix group $\LieGrp{U}(d)$.
For $\cket{\psi}\mapsto\cket{\psi'}=U\cket{\psi}$ with an $U \in \LieGrp{U}(\mathcal{H})$,
we have the same group action on the states and the observables
\begin{align*}
\varrho\qquad&\longmapsto\qquad \varrho'= U\varrho U^\dagger,\\
 A     \qquad&\longmapsto\qquad  A'     = U A      U^\dagger,\\
 M_i   \qquad&\longmapsto\qquad  M_i'   = U M_i    U^\dagger,
\end{align*}
since all of them are elements in $\Lin(\mathcal{H})\isom\mathcal{H}\otimes\mathcal{H}^*$.
One can see, which is desired in physics,
that only the description may depend on the chosen coordinates in $\mathcal{H}$,
not the measurement statistics.

There is another role of unitary transformations 
besides the coordinate transformation in the Hilbert space,
which is the time evolution.
In this case the state and the observables are transformed differently,
that is, their ``relative angle'' in $\Lin(\mathcal{H})$ changes.
In quantum mechanics, the time evolution of the \emph{state} of an \emph{isolated quantum system} 
is described by a unitary transformation
$\varrho(0)\mapsto\varrho(t) = U(t)\varrho(0) U(t)^\dagger$,
while this time the observables are independent of time, hence not transformed (Schr{\"o}dinger picture).
This evolution operator can be obtained from the \emph{von Neumann equation}%
\footnote{In this dissertation we use metric system in which $\hbar = 1$.}
\begin{equation*}
\frac{\partial\varrho(t)}{\partial t} = -i\bigl[H(t),\varrho(t)\bigr]
\end{equation*}
given with the self-adjoint observable $H\in\mathcal{A}(\mathcal{H})$ corresponding to the energy of the system,
called Hamiltonian,
as the time-ordered operator
\begin{equation*}
U(t)=\mathcal{T}\exp\Biggl(-i\int_0^t H(t')\dd t'\Biggr).
\end{equation*}
This reduces to $U(t)=\exp\bigl(-iHt\bigr)$ if $H$ does not depend on time.

\subsection{The mixedness of a state}
\label{subsec:QM.QuantSys.Mixedness}

A good summary on the mixedness of the quantum states can be found in \cite{BengtssonZyczkowski}.
The decomposition of a mixed state into the ensemble of pure states is,
contrary to the classical case, far from unique.
In general, the state $\varrho$ can be written 
with the ensemble 
\begin{equation*}
\Bigset{\bigl(p_j,\cket{\psi_j}\bra{\psi_j}\bigr)}{\tpl{p}\in\Delta_{m-1},\;\norm{\psi_j}^2=1}
\end{equation*}
as
\begin{equation*}
\varrho=\sum_{j=1}^m p_j\cket{\psi_j}\bra{\psi_j}.
\end{equation*}
The spectral decomposition defines, however, a unique ensemble. 
It consists of the spectrum and the \emph{orthogonal} spectral projections, 
\begin{equation*}
\Bigset{\bigl(\lambda_i,\cket{\varphi_i}\bra{\varphi_i}\bigr)}{\tpls{\lambda}\in\Delta_{d-1},\;\bracket{\varphi_{i'}|\varphi_i}=\delta^i_{i'}},
\end{equation*}
giving
\begin{equation*}
\varrho=\sum_{i=1}^d \lambda_i\cket{\varphi_i}\bra{\varphi_i}.
\end{equation*}
There is an elegant theorem,
called Schr\"odinger's mixture theorem \cite{SchrodingerMixtureThm} or
Gisin-Hughston-Jozsa-Wootters lemma \cite{GisinMixtureThm,HughstonJozsaWoottersMixtureThm},
which gives all the possible decompositions of a density matrix.
It relates them to the spectral decomposition in the following way:
\begin{equation}
\label{eq:SchMixtureThm}
\sqrt{p_j}\cket{\psi_j} = \sum_{i=1}^d U^j_{\phantom{j}i}\sqrt{\lambda_i}\cket{\varphi_i},
\end{equation}
where $U^j_{\phantom{j}i}$s are the entries of an $m\times d$ matrix with orthonormal columns, $U^\dagger U=\Id_d$.
The meaning of this matrix of coefficients is clarified later 
from the point of view of pure states of bipartite systems (section \ref{subsec:QM.EntMeas.2Pure}).
The set of such matrices is a compact complex manifold
$V_d(\field{C}^m)\diffeom \LieGrp{U}(m)/\LieGrp{U}(m-d)$,
which is called Stiefel manifold.

Since we have the quantum state as a mixture of pure states, 
moreover, as the same mixture for different ensembles of pure states,
as a natural question arises, how mixed a state is then?
The mixedness of a state is given by the notion of majorization.
First we invoke
the notion of \emph{majorization for discrete probability distributions.}
For two probability distributions 
$\tpl{p}=(p_1,\dots,p_m)\in\Delta_{m-1}$ and $\tpl{q}=(q_1,\dots,q_m)\in\Delta_{m-1}$,
$\tpl{p}$ is majorized by $\tpl{q}$,
denoted with the symbol $\preceq$, 
with the following definition: 
\begin{equation}
\label{eq:major}
\tpl{p}\preceq\tpl{q}\qquad\defn\qquad
\sum_{i=1}^k p_i^\downarrow \leq \sum_{i=1}^k q_i^\downarrow  \quad \forall k=1,2,\dots,m,
\end{equation}
where $\downarrow$ in the superscript means decreasing order.
The majorization is clearly \emph{reflexive}
($\tpl{p}\preceq \tpl{p}$)
and \emph{transitive}
(if $\tpl{p}\preceq \tpl{q}$ and $\tpl{q}\preceq \tpl{r}$ then $\tpl{p}\preceq \tpl{r}$)
but the \emph{antisymmetry}
(if $\tpl{p}\preceq \tpl{q}$ and $\tpl{q}\preceq \tpl{p}$ then $\tpl{p}=\tpl{q}$)
holds only in a restricted manner: if $\tpl{p}\preceq \tpl{q}$ and $\tpl{q}\preceq \tpl{p}$ then $\tpl{p}^\downarrow=\tpl{q}^\downarrow$.
On the other hand,
it is clear that $\tpl{p}\npreceq \tpl{q}$ does not imply $\tpl{q}\preceq \tpl{p}$,
in other words there exist pairs of probability distributions which we can not compare by majorization.
Hence the majorization defines a \emph{partial order} on the set of probability distributions \emph{up to permutations}.

With respect to majorization,
the set of discrete probability distributions contains a greatest and a smallest element.
One can check that all
$\tpl{p}\in\Delta_{m-1}$ majorize the uniform distribution 
and all $\tpl{p}$ is majorized by the distribution containing only one element,
\begin{equation*}
(1/m,1/m,\dots)\preceq \tpl{p}\preceq (1,0,\dots).
\end{equation*}
It is generally accepted to 
use the \emph{mathematical definition} of majorization 
for the comparsion of disorderness (mixedness) of discrete probability distributions.
If $\tpl{p}\preceq \tpl{q}$ then we can say that $\tpl{p}$ is more disordered (more mixed) than $\tpl{q}$, 
or equivalently, $\tpl{q}$ is more ordered (more pure) than $\tpl{p}$,
but, as was mentioned before, 
there are pairs of distributions for which their rank of disorder can not be compared in this sense.
%

Real-valued functions defined on probability distributions
and compatible with majorization
are of particular importance.
An $f:\Delta_{m-1}\to\field{R}$ is \emph{Schur-concave}
if
\begin{equation}
\label{eq:ScurCnc}
\tpl{p}\preceq \tpl{q}\qquad\Longrightarrow\qquad f(\tpl{p})\geq f(\tpl{q}).
\end{equation}
Schur concavity is the definitive property of all \emph{(generalized) entropies,}
which means that
if a distribution is more mixed than the other 
then it has greater entropy.
Note that the entropies can be compared for all pairs of distributions,
not only for those which can be compared by majorization,
so comparsion of mixedness by entropies is not the same as by majorization.


The most basic entropy is the \emph{Shannon entropy}
\begin{subequations}
\begin{equation}
\label{eq:Shannon}
H(\tpl{p})=-\sum_j p_j\ln p_j,
\end{equation}
having the strongest properties among all entropies.
The \emph{R{\'e}nyi entropy} is defined as
\begin{equation}
\label{eq:Renyi}
H^\text{R}_q(\tpl{p})=\frac{1}{1-q}\ln\sum_j p_j^q,\qquad q>0,
\end{equation}
which is a generalization of the Shannon entropy in the sense that
$\lim_{q\to1}H^\text{R}_q(\tpl{p})=H(\tpl{p})$.
Its other limits are also notable.
For $q\to0^+$, this is the logarithm of the number of nonzero $p_j$s, known as \emph{Hartley entropy}
\begin{equation}
\label{eq:Hartley}
H^\text{R}_0(\tpl{p}):=\lim_{q\to0^+}H^\text{R}_q(\tpl{p})=\ln|\{j\mid p_j\neq0\}|.
\end{equation}
For $q\to\infty$, it converges to the \emph{Chebyshev entropy}
\begin{equation}
\label{eq:Chebyshev}
H^\text{R}_\infty(\tpl{p}):=\lim_{q\to\infty}H^\text{R}_q(\tpl{p})=-\ln p_\text{max}.
\end{equation}
The \emph{Tsallis entropy} is defined as
\begin{equation}
\label{eq:Tsallis}
H^\text{Ts}_q(\tpl{p})=\frac{1}{1-q}\Bigl(\sum_jp_j^q-1\Bigr),\qquad q>0,
\end{equation}
\end{subequations}
which is, contrary to the \emph{R{\'e}nyi entropy},
a non-additive generalization of the Shannon entropy,
$\lim_{q\to1}H^\text{Ts}_q(\tpl{p})=H(\tpl{p})$.
Note that the Tsallis entropy is the linear leading term in the 
power-series of the R{\'e}nyi entropy,%
\footnote{Remember that $\ln x = (x-1) - \frac12(x-1)^2 + \frac13(x-1)^3 - \dots +\frac{(-1)^k}{k}(x-1)^k+\dots$
for $0<x$, the role of which is played by 
$\sum_jp_j^q$.}
this is why Tsallis entropy is sometimes called \emph{linear entropy}.

How to generalize the above conceptions to the quantum case?
A quantum state can be formed as a mixture from different ensembles,
so the $\tpl{p}$ mixing weights are not inherent properties of it.
However, the spectrum of a state is not only well-defined,
but, 
thanks to Schr{\"o}dingers mixture theorem~(\ref{eq:SchMixtureThm})
and the Hardy-Littlewood-P{\'o}lya lemma \cite{BengtssonZyczkowski},
it also majorizes every other mixing weights.
So the spectrum is special from the point of view of mixedness,
and the \emph{majorization of density matrices} is defined 
via the corresponding majorization of their spectra,
\begin{equation}
\varrho\preceq\omega\qquad\defn\qquad
\Spect\varrho\preceq\Spect\omega.
\end{equation}
By virtue of this, we can compare the mixedness of density matrices.
On the other hand, because of Schur concavity,
the entropy of the spectrum is smaller than that of any other mixing weights.
Now, if the quantum entropies of a state are defined
as the classical entropies of the spectrum,
then they are Schur concave in the sense of the majorization of density matrices.
Moreover,
the entropies above for a quantum state can be written by the density matrix itself
without any reference to the decompositions.

The quantum generalization of the Shannon entropy is
called \emph{von Neumann entropy}
\begin{subequations}
\begin{equation}
\label{eq:Neumann}
S(\varrho)=-\tr(\varrho\ln\varrho)=H\bigl(\Spect\varrho\bigr),
\end{equation}
the \emph{quantum R{\'e}nyi entropy} is
\begin{equation}
\label{eq:qRenyi}
S^\text{R}_q(\varrho)=\frac{1}{1-q}\ln\tr\varrho^q=H^\text{R}_q\bigl(\Spect\varrho\bigr),\qquad q>0,
\end{equation}
while its limits, the \emph{quantum Hartley entropy} is
\begin{equation}
\label{eq:qHartley}
S^\text{R}_0(\varrho):=\lim_{q\to0^+}S^\text{R}_q(\varrho)=\ln\rk\varrho=H^\text{R}_0\bigl(\Spect\varrho\bigr),
\end{equation}
which is the logarithm of the rank of $\varrho$,
and the \emph{quantum Chebyshev entropy}
\begin{equation}
\label{eq:qChebyshev}
S^\text{R}_\infty(\varrho):=\lim_{q\to\infty}S^\text{R}_q(\varrho)=-\ln \max\Spect\varrho
=H^\text{R}_\infty\bigl(\Spect\varrho\bigr).
\end{equation}
The other family,
the \emph{quantum Tsallis entropy} is
\begin{equation}
\label{eq:qTsallis}
S^\text{Ts}_q(\varrho)=\frac{1}{1-q}\bigl(\tr\varrho^q-1\bigr)=H^\text{R}_q\bigl(\Spect\varrho\bigr),\qquad q>0.
\end{equation}
\end{subequations}
An advantage of the Tsallis and R{\'e}nyi entropies is
that they are easy to evaluate for integer $q\geq2$ parameters,
when only matrix powers have to be calculated instead of the entire spectrum.

All of the above quantum entropies vanish for pure states
and reach their maxima for 
\begin{equation}
\label{eq:whitenoise}
\varrho=\frac{1}{d}\Id,
\end{equation}
having the uniform distribution as its spectrum.
This state is sometimes called white noise,
because in this state 
all outcomes of a measurement of a nondegenerate observable
occur with equal probability.

Some other quantities are also in use for the characterization of mixedness.
For example the \emph{purity}
\begin{subequations}
\begin{equation}
\label{eq:purity}
P(\varrho)=\tr\varrho^2,
\end{equation}
the \emph{participiation ratio}
\begin{equation}
\label{eq:partratio}
R(\varrho)=\frac1{\tr\varrho^2},
\end{equation}
which can be interpreted as an effective number of pure states in the mixture,
and the so called \emph{concurrence-squared}
\begin{equation}
\label{eq:conc2}
C^2(\varrho)=\frac{d}{d-1}S^\text{Ts}_2(\varrho)=\frac{d}{d-1}\left(1 - \tr \varrho^2\right),
\end{equation}
\end{subequations}
the latter is normalized, $0\leq C^2(\varrho)\leq1$.
All of them are related to the $q=2$ quantum Tsallis (or quantum R{\'e}nyi) entropy,
which is in connection with Euclidean distances in $\mathcal{D}(\mathcal{H})$ \cite{BengtssonZyczkowski}.


The Shannon or von Neumann entropies are widely used in classical and quantum statistical physics,
while their generalizations are often considered unphysical or useless,
mainly because of, e.g., the non-additivity (non-extensivity) of the Tsallis entropies.
An interesting observation of us is that in entanglement theory,
contrary to statistical physics,
the generalized entropies often prove to be more useful than the original one.
We will see in section \ref{subsec:SepCrit.2Part.Entr}
that for a family of three-qubit states,
the generalized entropies for high parameters $q$
give stronger conditions of entanglement.
Here the R{\'e}nyi and Tsallis entropies lead to the same conditions for the same parameters $q$.
Another, more sophisticated example for the usefulness of Tsallis entropies
can be found in section \ref{sec:ThreeQB.GenThreePart},
where it is shown that 
the additive definition of some of the indicator functions for tripartite systems
can be given only by generally non-additive entropies.
In this case the subadditivity seems to be more important than the additivity.
We should mention here also that
Tsallis entropies are sometimes used even in thermodynamics.
For example,
non-extensive thermodynamical models are developed
for the modelling of the behaviour of the quark-gluon plasma
produced in heavy-ion collisions,
see in \cite{BiroNonExt} and in the references therein.

\subsection{The states of a qubit}
\label{subsec:QM.QuantSys.Qubit}

As an example, consider the case of a qubit, $d=\dim\mathcal{H}=2$. 
Spin degree of freedom of particles having $1/2$ spin,
or polarization degree of freedom of photons are the typical physical systems
whose states are described by qubits.
In $\mathcal{A}(\mathcal{H})$ we have the linearly independent $\sigma_0,\sigma_1,\sigma_2,\sigma_3$ Pauli operators,
which are self-adjoint, $\tr\sigma_\nu=2\delta_{\nu0}$
and obeying the well-known Pauli algebra
\begin{equation}
\label{eq:Pauli.alg}
\sigma_0\sigma_0=\sigma_0,\qquad
\sigma_0\sigma_i=\sigma_i\sigma_0=\sigma_i,\qquad
\sigma_i\sigma_j=\delta_{ij}\sigma_0+i\sum_k\varepsilon_{ijk}\sigma_k,
\end{equation}
where $i,j,k\in\{1,2,3\}$,
and $\varepsilon_{ijk}$ denotes the parity of the permutation $ijk$ of $123$
if $i$, $j$ and $k$ are different, othervise $\varepsilon_{ijk}$ is zero.
Any density operator can be written
with these in the form
\begin{equation}
\label{eq:qubit}
\varrho=\frac{1}{2}\left(\sigma_0+\sr{x}\right),
\end{equation}
where the $\ve{x}\in\mathbb{R}^3$ \emph{Bloch vector} parametrizes the state,
and we use the shorthand notation $\sr{x}=x^1\sigma_1+x^2\sigma_2+x^3\sigma_3$.
The characteristic equation of $\varrho$
\begin{equation*}
\lambda^2 - \lambda \tr\varrho + \det\varrho = 0
\end{equation*}
allows us to obtain the eigenvalues,
and by the use of the Cayley-Hamilton theorem
\begin{equation*}
\varrho^2 - \varrho \tr\varrho + \Id \det\varrho = 0
\end{equation*}
together with the (\ref{eq:Pauli.alg}) algebra of Pauli operators
we have that $\norm{\ve{x}}^2=1-4\det\varrho$,
which allows us to write the eigenvalues
in geometrical terms
\begin{equation*}
\lambda_\pm
=\frac{1}{2}\left(1\pm\sqrt{1-4\det\varrho}\right)
=\frac{1}{2}\left(1\pm\norm{\ve{x}}\right).
\end{equation*}
This tells us that $\varrho$ is a proper quantum state of a qubit
if and only if $\norm{\ve{x}}^2\leq1$,
while $\varrho$ is a pure state 
if and only if $\norm{\ve{x}}^2=1$.
So, for qubits, we have the space of states $\mathcal{D}(\mathcal{H})\diffeom B^3$,
and its extremal points, i.e.~the set of pure states $\mathcal{P}(\mathcal{H})\diffeom S^2$,
which are called \emph{Bloch ball} and \emph{Bloch sphere} in this context (figure \ref{fig:BlochBall}).
The $\norm{\ve{x}}^2=0$ center of the ball is the $1/2\Id$ white noise.%
\footnote{So, for qubits, $\sigma_0=\Id$, 
but note that this does not hold for the higher dimensional representations of the Pauli algebra.}
Note that in this case the whole boundary of $\mathcal{D}(\mathcal{H})$ is extremal.
This does not hold for $d>2$, as can easily be seen by counting the dimensions.%
\footnote{There are some results on the geometry of the state space of a qutrit ($d=3$),
in which case the Gell-Mann matrices can be used \cite{SarbickiBengtssonQutritGeom,GoyalGeomGenBlochQutrit}.
For general $d$, the suitable generators of $\LieGrp{SU}(d)$ can be used.}

The only self-adjoint observable in this case is generally of the form%
\footnote{Adding $\sigma_0$ only shifts the eigenvalues and leave the eigensubspaces invariant.}
$A_\ve{u}=\sr{u}$ with $\ve{u}\in\field{R}^3$.
Because of the  (\ref{eq:Pauli.alg}) Pauli algebra,
$\{\frac12\sigma_1,\frac12\sigma_2,\frac12\sigma_3\}$
obey the 
commutation relations of the angular momentum%
\footnote{Hence they represent the $\LieAlg{su}(2)$ Lie-algebra of $\LieGrp{SU}(2)$.}
\begin{equation*}
\Bigl[\frac12\sigma_i,\frac12\sigma_j\Bigr]=i\sum_k\varepsilon_{ijk} \frac12\sigma_k,
\end{equation*}
so if $\norm{\ve{u}}^2=1$ then 
$A_\ve{u}$ represents the observable corresponding to a spin measurement in $\hbar/2$ units, along the direction $\ve{u}$.
As before, we have that $A_\ve{u}$ has the eigenvalues $\pm\norm{\ve{u}}$.
If we multiply the eigenvalue equation
\begin{equation*}
A_\ve{u}\cket{\alpha_\pm(\ve{u})}=\pm\norm{\ve{u}}\cket{\alpha_\pm(\ve{u})}
\end{equation*}
with $\bra{\alpha_\pm(\ve{u})}\sigma_i$ from the left, 
using (\ref{eq:Pauli.alg}) we have for the real part that
\begin{equation*}
\bra{\alpha_\pm(\ve{u})}\sigma_i\cket{\alpha_\pm(\ve{u})} = \pm\frac{u_i}{\norm{\ve{u}}}.
\end{equation*}
This gives meaning to the $\cket{\alpha_\pm(\ve{u})}$ eigenvectors,
these represent the pure sates corresponding to the $\pm\ve{u}$ spin direction.

Now a state $\varrho$ given in (\ref{eq:qubit}) with the Bloch vector $\ve{x}$ 
and an observable $A_\ve{x}$ have the same eigensubspaces,
and if $\varrho$ is pure 
then it can easily be checked that 
\begin{equation}
\varrho=\cket{\alpha_\pm(\ve{x})}\bra{\alpha_\pm(\ve{x})},\qquad\text{if and only if $\norm{\ve{x}}=1$}.
\end{equation}
Therefore we can assign physical meaning to the points of the Bloch sphere
through the expectation values of a measurement:
if $\ve{x}\in S^2$  then
$\varrho$ is the sate corresponding to the  $\ve{x}$ spin direction.
Note that this does not hold for points \emph{inside} the Bloch ball,
they represent statistical mixtures of pure points instead.

\begin{figure}
 \includegraphics{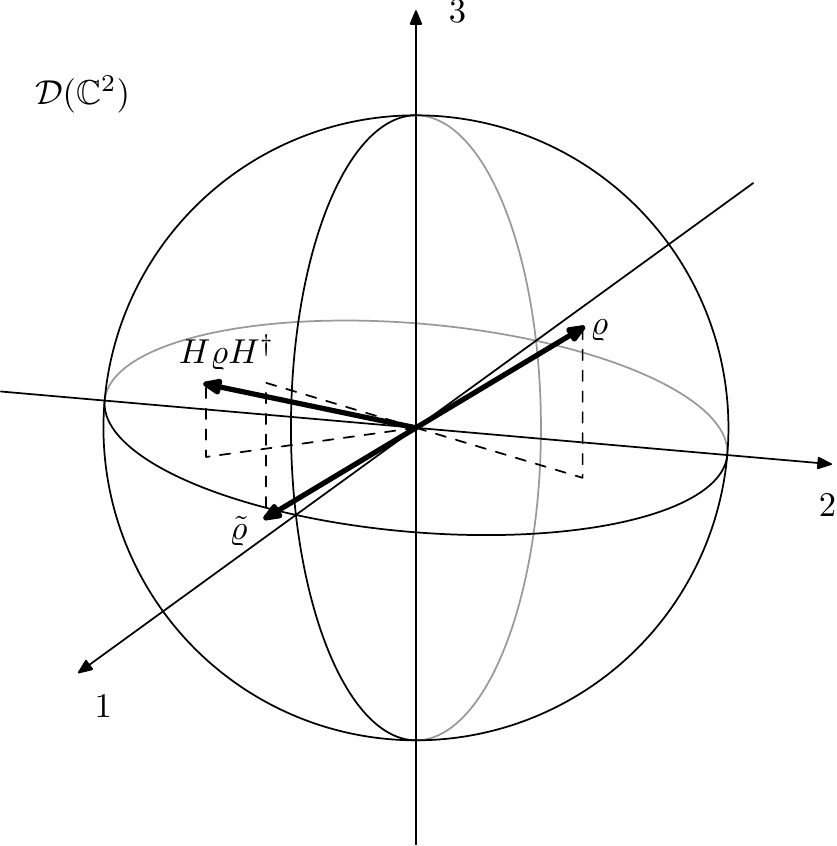}
 \caption{Bloch ball, the state space of a qubit. 
The effects of spin flip (\ref{eq:sflipRho}) 
and discrete Fourier transformation (\ref{eq:Fourier})
are also depicted.}
\label{fig:BlochBall}
\end{figure}

The mixedness of the state $\varrho$ can be written by, e.g., the concurrence-squared~(\ref{eq:conc2})
\begin{equation}
\label{eq:conc2qubit}
C^2(\varrho)=1-\norm{\ve{x}}^2 = 4\det\varrho.
\end{equation}
The eigenvalues and all the entropies can be expressed with this quantity.
When we deal with qubits,
it is useful to use logarithm to the base $2$ in the definition of the quantum-R\'enyi entropies,
then they range from $0$ to $1$,
and the von Neumann entropy is said to be measured in \emph{qubits}.
It can be expressed with the concurrence as
\begin{subequations}
\begin{equation}
\label{eq:qubitentr}
S(\varrho)=h\left(\frac{1}{2}\Bigl(1+\sqrt{1-C^2(\varrho)}\Bigr)\right)
\end{equation}
with the binary entropy function
\begin{equation}
\label{eq:binentr}
h(x)=-x\log_2x-(1-x)\log_2(1-x).
\end{equation}
\end{subequations}
It is clear from (\ref{eq:conc2qubit}) that the concurrence-squared $C^2(\varrho)$
(and the von Neumann entropy $S(\varrho)$ as well)
is an $\LieGrp{U}(1)\times\LieGrp{SL}(2,\field{C})$-invariant quantity for qubits.%
\footnote{Note that the normalization of a state vector and that of a density matrix
are not invariant under the action of $\LieGrp{U}(1)\times\LieGrp{SL}(2,\field{C})$.
For unnormalized distributions, 
the $S^\text{Ts}_q(\varrho)=1/(1-q)\bigl(\tr\varrho^q-(\tr\varrho)^q\bigr)$
definition of Tsallis entropies has to be used instead of (\ref{eq:qTsallis}),
leading to $C^2(\varrho)=2\bigl((\tr\varrho)^q - \tr\varrho^q\bigr)$ for qubits.}
since entropies are invariant only under unitaries in general.

Note that all of the above constructions were carried out
\emph{without} an explicit representation for the Pauli operators.
This abstract approach is very useful in the derivation of the nonlinear Bell inequalities,
which are recalled in section \ref{subsec:QM.Ent.2Part}, and of which multipartite extension is used in section \ref{subsec:SepCrit.3Part.Spin}.
Now, after choosing an orthonormal basis $\{\cket{0},\cket{1}\}$, the Hilbert space $\mathcal{H}\isom\field{C}^2$,
and for further use we introduce the usual matrix representation of the Pauli operators
\begin{equation}
\label{eq:Pauli.mx}
\sigma_0=\begin{bmatrix}1&0\\0&1\end{bmatrix}=\Id,\qquad
\sigma_1=\begin{bmatrix}0&1\\1&0\end{bmatrix},\qquad
\sigma_2=\begin{bmatrix}0&-i\\i&0\end{bmatrix},\qquad
\sigma_3=\begin{bmatrix}1&0\\0&-1\end{bmatrix},
\end{equation}
which are called Pauli matrices.
But there is a matrix of particular importance which \emph{has to be given explicitly},
\begin{equation}
\label{eq:epsilon}
\varepsilon=\begin{bmatrix}0&1\\-1&0\end{bmatrix},
\end{equation}
its $\varepsilon_{ij}$ entry is the parity of the permutation $ij$ of $01$ if $i$ and $j$ are distinct, othervise zero.
With this, the linear transformation $\varepsilon=\varepsilon_{ij} \bra{i}\otimes\bra{j}
\in\mathcal{H}^*\otimes\mathcal{H}^*\isom\Lin(\mathcal{H}\to\mathcal{H}^*)$
gives another identification of $\mathcal{H}$ with $\mathcal{H}^*$,
which is a basis-dependent one.%
\footnote{Note that we use here a convention different from the one which is used
in the representation theory of the Lorentz group on Dirac and Weyl spinors,
where there are two Hilbert spaces, carrying the left-handed and right handed representations,
having undotted and dotted indices,
and $\varepsilon$ is used for lowering and uppering indices in both Hilbert spaces.
Instead of these,
we have upper and lower indices on $\mathcal{H}$ and $\mathcal{H}^*$, respectively,
and $\varepsilon\in\Lin(\mathcal{H}\to\mathcal{H}^*)$ and $\varepsilon^*\in\Lin(\mathcal{H}^*\to\mathcal{H})$
are always written out explicitly,
and $\varepsilon^{ii'}\equiv(\varepsilon^*)^{ii'}=\cc{(\varepsilon_{ii'})}=\varepsilon_{ii'}$.
This convention is more convenient when the ``default'' group action is that of $\LieGrp{U}(2)$
instead of $\LieGrp{SL}(2,\field{C})$, which latter represents the Lorentz group on two dimensional Hilbert spaces.
Note again, however, that in this case changing index positions can only be done for all indices collectively.}
Let $\bra{\tilde{\psi}}=\varepsilon\cket{\psi}$, then
\begin{subequations}
\begin{equation}
\label{eq:sflipPsi}
\cket{\psi}=\psi^i\cket{i}
\qquad\longmapsto\qquad
\cket{\tilde{\psi}}=\bra{\tilde{\psi}}^*=\cc{(\varepsilon_{ii'})}\cc{{\psi^{i'}}}\cket{i} = \varepsilon^{ii'}\psi_{i'}\cket{i},
\end{equation}
where the $\tilde{\;}=*\circ\,\varepsilon$ notation is used.
The corresponding operation on $\mathcal{D}(\mathcal{H})$
\begin{equation}
\label{eq:sflipRho}
\begin{split}
\varrho=\varrho^{i}_{\phantom{i}j}\cket{i}\otimes \bra{j}
\qquad\longmapsto\qquad
\tilde{\varrho}=(\varepsilon\varrho\varepsilon^\dagger)^*
&=(\varepsilon_{ii'}\varrho^{i'}_{\phantom{i'}j'}\varepsilon^{jj'}\bra{i}\otimes \cket{j})^*\\
=\cc{(\varepsilon_{ii'})}\cc{(\varrho^{i'}_{\phantom{i'}j'})}\cc{(\varepsilon^{jj'})}\cket{i}\otimes \bra{j}
&=\varepsilon^{ii'}\varrho_{i'}^{\phantom{i'}j'}\varepsilon_{jj'}\cket{i}\otimes \bra{j}
\end{split}
\end{equation}
\end{subequations}
where the $\tilde{\;}=*\circ\Ad_\varepsilon$ notation is used,
results in the $\ve{x}\mapsto -\ve{x}$ space inversion in $\field{R}^3$.
This operation is called \emph{spin-flip} (figure \ref{fig:BlochBall}).
Note that this is an antilinear operation on $\mathcal{H}$ and $\Lin(\mathcal{H})$.
Antilinear operations are in connection with the time reversal in quantum mechanics.
Indeed, a spin changes sign for time reversal, but not for space inversion, being an axial-vector.
(Space inversion is not even contained in $\LieGrp{SO}(3)\subset\LieGrp{SO}(3,1)$, 
the group of space rotations, represented on $\Lin(\mathcal{H})$,
whose double cover $\LieGrp{SU}(2)\subset\LieGrp{SL}(2,\field{C})$ is represented on $\mathcal{H}$.)

The characteristic property of $\varepsilon$ is the very special transformation behaviour
\begin{equation}
\label{eq:epstraf}
A^\transp\varepsilon A =(\det A) \varepsilon
\end{equation}
for any $A\in\Lin(\mathcal{H})$,
leading to the invariance under $\LieGrp{SL}(2,\field{C})$.
This makes $\varepsilon$ suitable for obtaining Lorentz-invariant combinations from Weyl spinors.
Although, in nonrelativistic entanglement theory Lorentz transformations are not involved,
but $\LieGrp{SL}(2,\field{C})$ comes into the picture in a different way,
making the structure $\varepsilon$ still important.
A sign of this is that the determinant can also be written with the spin-flip given by $\varepsilon$,
leading to
\begin{equation}
\label{eq:qbconcsq}
C^2(\varrho)=4\det\varrho=2\tr\varrho\tilde{\varrho}.
\end{equation}
This characterizes not only the mixedness of a qubit,
but, as we will see, its entanglement with its environment.
And this is not the only case in which $\varepsilon$ comes into the picture,
it appears again and again along the issues of the entanglement of qubits.
We will meet it in sections \ref{subsec:QM.EntMeas.2QBPure}, \ref{subsec:QM.EntMeas.2QBMix},
\ref{subsec:QM.EntMeas.3QBPure}, \ref{subsec:QM.EntMeas.4QBPure}
and in section \ref{subsec:ThreeQB.Pure.SLOCC3QBLSL} in connection with the
Freudenthal triple system approach of three-qubit entanglement \cite{BorstenetalFreudenthal3QBEnt}.

Another operation, which is important in quantum information theory,
is the discrete Fourier transformation
\begin{subequations}
\begin{equation}
\label{eq:Fourier}
\varrho\qquad\longmapsto\qquad H\varrho H^\dagger
\end{equation}
given by the unitary involution having the matrix
\begin{equation}
\label{eq:Hadamard}
H=\frac1{\sqrt2}\begin{bmatrix}
 1 & 1 \\
 1 &-1
\end{bmatrix},
\end{equation}
\end{subequations}
which is a Hadamard matrix.
This results in a $(x^1,x^2,x^3)^\transp\mapsto(x^3,-x^2,x^1)^\transp$ rotation (figure \ref{fig:BlochBall}).



\section{Composite systems and entanglement}
\label{sec:QM.Ent}

In the classical scenario, the phase space of a composite system
arises as the \emph{direct product} of the phase spaces of the subsystems.
In quantum mechanics, however, the Hilbert space of a bipartite composite quantum system 
arises as the \emph{tensor product} of the Hilbert spaces of the subsystems.
As we will see, this structure along with the superposition is responsible for entanglement.


For two subsystems, the tensor product of the Hilbert spaces of the subsystems is 
$\mathcal{H}\equiv\mathcal{H}_{12}=\mathcal{H}_1\otimes\mathcal{H}_2$
and $d_1=\dim\mathcal{H}_1$, $d_2=\dim\mathcal{H}_2$ and $d\equiv d_{12}=\dim\mathcal{H}=d_1d_2$,
and $\tpl{d}=(d_1,d_2)$ denotes the $2$-tuple of the local dimensions.%
\footnote{If we have the computational bases
$\{\cket{i}\mid i=1,\dots,d_1\}\subset\mathcal{H}_1$ and
$\{\cket{j}\mid j=1,\dots,d_2\}\subset\mathcal{H}_2$ of the subsystems,
then the computational basis of the composite system is
$\{\cket{i}\otimes\cket{j}\mid i=1,\dots,d_1,j=1,\dots,d_2\}\subset\mathcal{H}$,
the element of which is often abbreviated as $\cket{i}\otimes\cket{j}\equiv\cket{ij}$.}
The set of states $\mathcal{D}\equiv\mathcal{D}_{12}\equiv\mathcal{D}(\mathcal{H}_{12})$
is defined in the same way as for singlepartite systems,
while the sets of states of the subsystems are 
$\mathcal{D}_1\equiv\mathcal{D}(\mathcal{H}_1)$ and 
$\mathcal{D}_2\equiv\mathcal{D}(\mathcal{H}_2)$.
The sets of pure states of the composite system and those of its subsystems are denoted with
$\mathcal{P}\equiv\mathcal{P}_{12}\equiv\mathcal{P}(\mathcal{H}_{12})$,
and 
$\mathcal{P}_1\equiv\mathcal{P}(\mathcal{H}_1)$ and
$\mathcal{P}_2\equiv\mathcal{P}(\mathcal{H}_2)$.
The \emph{reduced states} of $\varrho\in\mathcal{D}$ are given by the partial trace operation,
$\varrho_1=\tr_2\varrho\in\mathcal{D}_1$ and $\varrho_2=\tr_1\varrho\in\mathcal{D}_2$,
which is the quantum analogy of obtaining marginal distributions.
On the other hand, a \emph{purification} of a state $\varrho_1\in\mathcal{D}_1$
is a $\pi\in\mathcal{P}$ pure state of the composite system
from which $\varrho_1$ arises as the reduced state, that is, $\varrho_1=\pi_1\equiv\tr_2\pi$.
Such purification exists for all $\varrho_1$
if the other Hilbert space $\mathcal{H}_2$ is big enough, that is, $d_2\geq d_1$.

Composite systems in quantum mechanics appear basically in two main respects.
Namely,
when a composite of subsystems playing equal roles is investigated (entanglement theory),
and 
when the composite system is regarded as the compound of a system with its environment (theory of open quantum systems).
These two cases have the same mathematical description,
the difference is physical: we can not execute quantum operations on the environment.
Of course, these two fields are strongly interrelated,
the distinction is made with respect to their main concerns only.
In the following, we review the general non-unitary operations on open quantum systems,
some basics about the entanglement of bipartite and multipartite systems,
and the important point where these two meet each other, which is the so called distant lab paradigm.

The main reference on entanglement is \cite{Horodecki4}.

%
%
%



\subsection{Operations on quantum systems}
\label{subsec:QM.Ent.Ops}
Here we outline the treatment of open quantum systems, following section 2.~of \cite{SagawaNotes}.
The most general operations on quantum states
can be formulated by the use of \emph{completely positive maps}.
These are the positive maps $\Phi\in \Lin\bigl(\Lin(\mathcal{H})\to\Lin(\mathcal{H}')\bigr)$ 
for which the map extended with identity $\Phi\otimes\IId\in 
\Lin\bigl(\Lin(\mathcal{H}\otimes\mathcal{H}_\text{Env})\to\Lin(\mathcal{H}'\otimes\mathcal{H}_\text{Env})\bigr)$ is also positive
for an arbitrary dimensional Hilbert space $\mathcal{H}_\text{Env}$ corresponding to the environment.
(Note that in this general treatment the change of the Hilbert space is also allowed.
For example, an ancillary system can be coupled to the original system, 
in which case $\mathcal{H}'=\mathcal{H}\otimes\mathcal{H}_\text{Anc}$, 
or it can also be dropped, in which case $\mathcal{H}=\mathcal{H}'\otimes\mathcal{H}_\text{Anc}$.)
These maps preserve the positivity of the state not only of the system
but also of the compound of the system and its environment---or the reservoir, or the rest of the world---hence
the physically relevant transformations must be of this kind.
The representation theorem of Kraus states that $\Phi$ is completely positive
if and only if it can be written by the Kraus operators $M_j\in\Lin(\mathcal{H}\to\mathcal{H}')$ as
\begin{equation*}
\Phi(\varrho)=\sum_jM_j\varrho M^\dagger_j, 
\end{equation*}
which is called the Kraus form of $\Phi$.
On the other hand, a completely positive $\Phi$ should preserve the trace
to be a proper transformation on quantum states.
To handle selective measurements, we have to allow a map to decrease the trace too.
A completely positive map is trace-preserving if and only if
\begin{equation*}
\sum_jM^\dagger_jM_j=\Id,
\end{equation*}
and trace non-increasing if and only if
\begin{equation*}
\sum_jM^\dagger_jM_j\leq\Id.
\end{equation*}

In the light of these,
the most general operations on quantum states
can be given by the so called \emph{stochastic} quantum operation
\begin{equation*}
\varrho\qquad\longmapsto\qquad\varrho'=\frac{\Phi(\varrho)}{\tr\Phi(\varrho)}
\end{equation*}
with the trace non-increasing completely positive map $\Phi$.
The operation takes place with probability $q=\tr\Phi(\varrho)$,
which is equal to one if and only if $\Phi$ is trace-preserving.
A trace-preserving completely positive map is called \emph{deterministic} quantum operation,
or \emph{quantum channel}.

We have the following physical prototypes of deterministic quantum operations:
\begin{subequations}
\label{eq:protoCP}
\begin{align}
\label{eq:protoCP.anc}
\Phi(\varrho)&=\varrho\otimes\pi_\text{Anc},\\ 
\label{eq:protoCP.U}
\Phi(\varrho)&=U\varrho U^\dagger,\\
\label{eq:protoCP.tr}
\Phi(\varrho)&=\tr_{\mathcal{H}_\text{Anc}}\varrho,
\end{align}
that is, adding an uncorrelated ancilla,
unitary time-evolution, and
throwing away a subsystem, respectively.
The prototype of stochastic quantum operations is a selective von Neumann measurement
with a projector $P_i$
\begin{equation}
\label{eq:protoCP.proj}
\Phi(\varrho)=P_i \varrho P_i^\dagger.
\end{equation}
\end{subequations}
This means that we throw away all but the $i$th output state,
which is a special case of the \emph{postselection} operation.
If we have only one copy of the state 
then the operation takes place with probability $q=\tr\Phi(\varrho)$,
othervise the protocol fails.
On the other hand, 
if the state is present in multiple copies, then only a fraction of the copies, proportional to $q$,
is left after the operation.
We have these two physical interpretations of the decreasing of the trace.

Since the physically relevant transformations are the completely positive ones,
the outputs of a measurement are related in general to the input by such transformations.
So a generalized measurement is given by a set of completely positive maps $\{\Phi_i\}$,
which are given by the Kraus operators $\{M_{ij}\}$ as
\begin{subequations}
\begin{equation}
\Phi_i(\varrho)=\sum_jM_{ij}\varrho M_{ij}^\dagger,\qquad\text{with probability}\qquad q_i=\tr\Phi_i(\varrho),
\end{equation}
and which all are trace non-increasing, $\sum_{j}M_{ij}^\dagger M_{ij} \leq \Id$.
However, there is a constraint that the whole non-selective measurement $\Phi = \sum_i\Phi_i$, acting as
\begin{equation}
\Phi(\varrho)=\sum_i q_i \frac{\Phi_i(\varrho)}{\tr\Phi_i(\varrho)} = \sum_i \sum_jM_{ij}\varrho M_{ij}^\dagger,
\end{equation}
\end{subequations}
has to be trace preserving,
$\sum_{ij}M_{ij}^\dagger M_{ij} = \Id$.

It can be shown that 
every measurement described by such a $\{\Phi_i\}$
can be written on an extended system as
\begin{equation}
\Phi_i(\varrho)=\tr_{\mathcal{H}_\text{Anc}}\bigl((\Id\otimes P_i)U(\varrho\otimes\pi_\text{Anc})U^\dagger(\Id\otimes P_i)^\dagger\bigr),
\end{equation}
where $\{P_i\}$ is a complete set of projection operators having orthogonal support, $\sum_i P_i=\Id$, $P_iP_j=\delta_{ij}P_i$.
So we have the physical interpretation for the trace non-increasing completely positive maps
corresponding to the measurement outputs:
A generalized measurement arises as a von Neumann measurement on an ancilla, which interacts with the original system.
In other words, every selective generalized measurement (non-unitary stochastic operation) can be constructed 
by the use of the elementary, physically motivated steps (\ref{eq:protoCP.anc})--(\ref{eq:protoCP.proj}).
It can also be shown that if the $P_i$ projectors are of rank one,
which is the case of measurement with nondegenerate observable,
then all $\Phi_i$s are given by only one Kraus operator each. That is, $M_{ij}=M_i$ for all $j$,
so $\Phi_i(\varrho)=M_{i}\varrho M_{i}^\dagger$,
and we get back the (\ref{eq:Meas}) formulas of generalized measurements.

On the other hand, we get the trace-preserving operation
by summing up $\Phi = \sum_i\Phi_i$,
which results in that every trace preserving completely positive map
can be written in an extended system as
\begin{equation}
\Phi(\varrho)=\tr_{\mathcal{H}_\text{Anc}}\bigl(U(\varrho\otimes\pi_\text{Anc})U^\dagger\bigr).
\end{equation}
So we have the physical interpretation for the non-unitary evolution of a system:
It can be modelled by a unitary evolution of an extended system.
In other words, every non-selective generalized measurement (non-unitary deterministic operation) can be constructed 
by the use of the elementary, physically motivated steps (\ref{eq:protoCP.anc})--(\ref{eq:protoCP.tr}).

\subsection{Entanglement in bipartite systems}
\label{subsec:QM.Ent.2Part}

Now, we consider a composite quantum system of two subsystems.
The central notion here is that of separable states, which is defined to be
 the convex sum of tensor products of states
\begin{subequations}
\label{eq:sep}
\begin{equation}
\label{eq:sepDecomp}
\varrho\in\mathcal{D}_\text{sep}\qquad\defn\qquad
\varrho=\sum_j p_j'\varrho_{1,j}\otimes \varrho_{2,j},
\end{equation}
where $\varrho_{1,j}\in \mathcal{D}_1$ and $\varrho_{2,j}\in \mathcal{D}_2$,
and  $\tpl{p}'\in\Delta_{m'-1}$, as usual.
The motivation of this definition is that
states of this kind can be prepared locally,
with the use of classical correlations only \cite{WernerSep}. 
Due to the positivity restriction of the $p_j'$s, $\mathcal{D}_\text{sep}$ is a proper subset of $\mathcal{D}$,
and the elements of the set $\mathcal{D}\setminus\mathcal{D}_\text{sep}$ are called \emph{entangled states}.
That is, entangled states can not be written as a convex combination of product states,
which is another plausible motivation, 
since classical joint probability distributions can always be written
as a convex combination of product distributions.
The set $\mathcal{D}_\text{sep}$ is a convex one,
and, since the dimensions of the Hilbert spaces of the subsystems are finite, 
we can rewrite its elements (\ref{eq:sep}) as%
\footnote{Here we use that 
$\Lin(\mathcal{H}_1)\otimes\Lin(\mathcal{H}_2)
\isom\mathcal{H}_1\otimes\mathcal{H}_1^*\otimes\mathcal{H}_2\otimes\mathcal{H}_2^*
\isom\mathcal{H}_1\otimes\mathcal{H}_2\otimes\mathcal{H}_1^*\otimes\mathcal{H}_2^*
\isom\Lin(\mathcal{H}_1\otimes\mathcal{H}_2)$.}
\begin{equation}
\label{eq:sepPureDecomp}
\varrho\in\mathcal{D}_\text{sep}\qquad\Longleftrightarrow\qquad
\varrho=\sum_j p_j
\bigl(\cket{\psi_{1,j}}\otimes\cket{\psi_{2,j}}\bigr)
\bigl( \bra{\psi_{1,j}}\otimes \bra{\psi_{2,j}}\bigr)
\end{equation}
\end{subequations}
with the different weights $\tpl{p}\in\Delta_{m-1}$.
Hence the set of separable states is the convex hull of separable pure states $\pi=\cket{\psi}\bra{\psi}$,
arising from tensor product vectors of the form $\cket{\psi}=\cket{\psi_1}\otimes\cket{\psi_2}$.
The set of these is denoted with $\mathcal{P}_\text{sep}$.
The reduced states of a separable pure state are pure ones,
$\pi_1=\cket{\psi_1}\bra{\psi_1}$ and
$\pi_2=\cket{\psi_2}\bra{\psi_2}$.
Due to superposition, not all vectors in $\mathcal{H}_1\otimes\mathcal{H}_2$ are of this kind.
In fact, almost all vectors are not of this kind, so they are called entangled ones.
As we will see, the reduced states of an entangled pure state are mixed ones,
contrary to the classical case, where the marginals of a pure joint probability distribution are pure ones.
This was the first embarrassing observation about entanglement,
made by Einstein, Podolsky, Rosen and Schr\"odinger:
even if we know exactly the state of the whole system,---i.e.~it is in a pure state,
which contains all the information that quantum mechanics can provide about the system---the
possible (pure) states of the subsystems are known only with some probabilities \cite{EPRpaper,SchrodingerEnt,Schrodinger2}.
(And what is worse, the ensemble of these pure states is not even unique.)
This means that
if we have an ensemble of systems prepared \emph{identically} in a pure entangled state
then we can not choose such measurements on a subsystem which leads to 
pure measurement statistics $\tpl{q}_1=(1,0,0,\dots)$.


As an extremal example, consider one of the (pure) Bell-states of two qubits, given by the state vector
\begin{equation}
\label{eq:B}
\cket{\text{B}}=\frac{1}{\sqrt 2}\bigl(\cket{00}+\cket{11}\bigr).
\end{equation}
Its reduced states $\varrho_1=1/2\Id$, $\varrho_2=1/2\Id$ are maximally mixed.
But this is only one part of the story.
If a selective measurement is carried out on both subsystems 
of a system being in the state $\cket{\text{B}}\bra{\text{B}}$
with the observables $A\otimes\Id$ and $\Id\otimes B$ with $A=\sigma_3$, $B=\sigma_3$, 
($1/2$-spin mesurements along the z axis)
then the outcomes of the two measurements are maximally correlated.
Moreover, 
after the selective measurement (\ref{eq:selMeas}) on the first subsystem only (with $A\otimes\Id$),
the whole state become a separable pure one,
the reduced states of both subsystems are changed to pure ones,%
\footnote{Note that a non-selective measurement (\ref{eq:nselMeas})
modifies the state of only that subsystem on which it is performed, 
independently of the state.}
so the state of the second subsystem is determined without measurement.
What was really embarrassing with this is that 
this happens instantaneously even if 
the measurements are spatially separated.
This nonlocal behaviour of entangled states 
was called ``spuckhafte Fernwirkung'' (spooky action at a distance) by Einstein.
Note that this nonlocality can not be used for superluminal signalling,
because the outcome of the first measurement is trully random.
This observation is called ``no-signalling''.



Here we have to take a short detour and pose the question:
What does it mean that the correlation contained in the Bell-state is considered entirely nonclassical?
In the case of mixed states this happens even in the classical scenario.
If the joint state of the two subsystems is correlated%
\footnote{A joint probability distribution
(state of classical compound system) is correlated
if it does not arise as a product of the probability distributions of the subsystems,
$p_{ij}\neq p_ip_j$.} 
then we can obtain information about one subsystem (its state is then updated)
by performing a measurement on the other one.
In the quantum scenario, if a state is correlated%
\footnote{A density operator of a compound system is correlated
if it does not arise as a tensor product of the density operators of the subsystems
$\varrho\neq\varrho_1\otimes\varrho_2$.
Note that this does not mean entanglement, 
the set of uncorrelated states is a proper subset of the separable states.}
then the same happens, which is then not implausible.
What is interesting is that in the quantum scenario this happens even if the state of the system is pure.
In the classical scenario, if a composite system is in a pure state then the subsystems are in pure states as well,
and such state is completely uncorrelated.
In the quantum scenario, however,
the subsystems of a system being in a pure state can be in mixed states,
which means then a new kind of correlation, which is not classical.

Maybe the most famous topic in connection with entanglement is the topic of the \emph{Bell inequalities} \cite{BellOnEPR,PreskillNotes}.
It is related to the problem of the existence of a local hidden variable model 
for the description of a quantum system, 
which could be a possibility to avoid nonlocality.
Namely, it can be possible that 
quantum mechanics does not provide a complete description of the physical world,
and there are variables, hidden for quantum mechanics, which determine the outcomes of the measurements uniquely.
From a complete theory containing these hidden variables,
the probabilities inherent in quantum mechanics arise in the sense 
that the preparation of a quantum state does not fix the value of the hidden variable uniquely,
and quantum mechanics arises as an effective theory.
Now, the locality principle is not violated if 
the value of these hidden variables are fixed during the interaction of the parties,
and they are not affected by each other after the subsystems are moved far away from each other, and considered to become isolated.
The key discovery here, found by Bell \cite{BellOnEPR},
is that if the measurement statistics are determined by a Local Hidden Variable Model (LHVM)
then constraints on the statistics of correlation experiments can be obtained.
Instead of Bell's original inequality,
we show here a simplified version,
using only two dichotomic%
\footnote{Dichotomic measurements are measurements having only two outcomes.}
measurements on each site,
proposed by Clauser, Horne, Shimony and Holt (CHSH) \cite{CHSHExperimentLHVM}.
Let these be denoted with $A$ and $A'$ in the first subsystem,
and $B$ and $B'$ in the second one,
all of them have the outcomes $\pm1$.
%
%
We denote the expectation value with respect to the hidden variable with $\bracket{\cdot}$,
then such a constraint is given by the CHSH inequality
\begin{equation}
\text{LHVM}\qquad\Longrightarrow\qquad
\abs{\bracket{AB+AB'+A'B-A'B'}}\leq 2.
\end{equation}
In quantum mechanics, we have the archetype of dichotomic measurements,
which is the measurement of the spin of a spin-$1/2$ particle (section \ref{subsec:QM.QuantSys.Qubit}).
Let the unit vectors describing the directions of the measurements be denoted with
$\ve{a},\ve{a}',\ve{b},\ve{b}'\in S^2\subset\field{R}^3$
then the observable of the correlation experiment is the following
\begin{equation*}
S_\text{CHSH}=
 \ve{a}\boldsymbol{\sigma}\otimes\ve{b}\boldsymbol{\sigma}
+\ve{a}\boldsymbol{\sigma}\otimes\ve{b}'\boldsymbol{\sigma}
+\ve{a}'\boldsymbol{\sigma}\otimes\ve{b}\boldsymbol{\sigma}
-\ve{a}'\boldsymbol{\sigma}\otimes\ve{b}'\boldsymbol{\sigma}.
\end{equation*}
Then the CHSH inequality takes the form
\begin{equation}
\text{$\varrho$ admits LHVM}\qquad\Longrightarrow\qquad
 \abs{\bracket{S_\text{CHSH}}}
=\abs{\tr\bigl(\varrho S_\text{CHSH}\bigr)}\leq 2 
\quad\text{for all settings}.
\end{equation}
However, it is known that in quantum mechanics
there are states and measurement settings for which this bound can be violated,
hence the predictions of quantum mechanics can not be obtained by a
local hidden variable model.
The experiments confirm the predictions of quantum mechanics,
although there exists loopholes
because of the insufficient efficiency of the detectors.

The Bell inequalities and their connection existence of local hidden variable models
is a deep and widely studied question \cite{GisinBellQuestions,MasanesAllEntHiddenNonlocal}, 
with heavy physical and philosophical consequences \cite{ESzabo,SchlosshauerKoflerZeilingerSnapshot}.
There is, however, another application of the Bell inequalities,
which is the detection of entanglement.
First of all it can be shown that for pure states,
entanglement is necessary and sufficient for the possibility of finding measurement settings
for which the CHSH inequality is violated,
or alternatively,
the CHSH inequality hold for all measurement settings if and only if the state is separable,
\begin{subequations}
\begin{equation}
\label{eq:critBellPure}
\pi\in\mathcal{P}_\text{sep}\qquad\Longleftrightarrow\qquad
\abs{\tr\bigl(\pi S_\text{CHSH}\bigr)}\leq 2 \quad\text{for all settings}.
\end{equation}
However, Werner showed that this does not hold for mixed states,
here we have only that
\begin{equation}
\label{eq:critBellMixed}
\varrho\in\mathcal{D}_\text{sep}\qquad\Longrightarrow\qquad
\abs{\tr\bigl(\varrho S_\text{CHSH}\bigr)}\leq 2 \quad\text{for all settings}.
\end{equation}
That is, there are mixed states, 
which are although entangled
but still can be modelled by an LHVM \cite{WernerSep}.
But these are not the last words about
the connection between LHVMs of different measurement scenarios and entanglement of mixed states.
This is a widely studied and very interesting issue, 
however, it is out of the scope of this dissertation.%
\footnote{For a recent summary for this topic we refer to section IV.C.1 of \cite{Horodecki4}.}

What is important for us is that 
(\ref{eq:critBellMixed}) can be used for the detection of entanglement,
which is a central problem of this dissertation.
It is a difficult question to decide whether a state is separable or not,
that is, whether the decomposition given in (\ref{eq:sep}) exists or not.
The condition (\ref{eq:critBellMixed}) serves also for this purpose.
Namely, its negation states that 
if we can find a measurement setting for which the CHSH-bound is violated, then the state is entangled.
\begin{equation}
\label{eq:critBellMixedNeg}
\varrho\notin\mathcal{D}_\text{sep}\qquad\Longleftarrow\qquad
\quad\text{there exists setting giving}\quad
\abs{\tr\bigl(\varrho S_\text{CHSH}\bigr)}\nleq 2.
\end{equation}
\end{subequations}
Unfortunately, this is only a sufficient but not necessary criterion of entanglement,
that is, there are entangled states for which there does not exist such measurement setting
for which the entangledness can be detected by this method.
Moreover, another difficulty shows up here too, which accompanies us all along,
which is the issue of optimization over a huge manifold.
In this case, for a given state, we have to find a measurement setting, 
which leads to the violation of the inequality.
In other cases, other kinds of optimizations have to be done,
which makes the detection of entanglement difficult 
even if we have necessary and sufficient criteria of entanglement.


Another way of detecting entanglement is the use of witness operators \cite{HorodeckiPosMapWitness}.
A witness operator is, by definition, an $W\in\Lin(\mathcal{H})$ self-adjoint observable
which has nonnegative expectation value for all separable states,
but there exists at least one entangled state for which the expectation value is negative.
In other words, a witness operator defines a $\mathcal{D}\to\field{R}$ linear functional
$\varrho\mapsto\tr W\varrho$,
the kernel of which, which is a hyperplane in $\Lin(\mathcal{H})$,
cuts into $\mathcal{D}$ but not into $\mathcal{D}_\text{sep}$
(figure \ref{fig:witness}).
A corollary of the Hahn-Banach theorem
is that for every given entangled state there exists a witness operator which detects it \cite{HorodeckiPosMapWitness,BengtssonZyczkowski}.
In this sense, witnessing gives rise to a necessary and sufficient condition of entanglement,
leading to
\begin{subequations}
\label{eq:witnessing}
\begin{equation}
\varrho\in\mathcal{D}_\text{sep}\qquad\Longleftrightarrow\qquad
\bracket{W}\equiv\tr W\varrho\geq0\quad\text{for all withesses $W$}.
\end{equation}
The characterization of the convex set of separable states $\mathcal{D}_\text{sep}$
by witness operators (that is, by supporting hyperplanes)
is a hard problem that have not been solved yet.
So, for the decision of separability of a given state
the problem of optimization is still exists,
since one has to find a witness which detects the entanglement of that given state.
We can get necessary but not sufficient condition for separability
using only an insufficient set of witnesses
\begin{equation}
\varrho\in\mathcal{D}_\text{sep}\qquad\Longrightarrow\qquad
\bracket{W}\equiv\tr W\varrho\geq0\quad\text{for some withesses $W$}.
\end{equation}
\end{subequations}

One can obtain, for example, the witness corresponding to the CHSH correlation-experiment
\begin{equation*}
W_\text{CHSH}=2\Id\otimes\Id \pm S_\text{CHSH}.
\end{equation*}
This is actually a family of witnesses parametrized by the
$\ve{a},\ve{a}',\ve{b},\ve{b}'$
measurement settings.
As we have from (\ref{eq:critBellMixed}),
there are entangled states which can not be detected by CHSH inequality of any settings,
this means that the arrangement of the hyperplanes given by $W_\text{CHSH}$
is not strict enough to clip around $\mathcal{D}_\text{sep}$ perfectly.

\begin{figure}
 \includegraphics{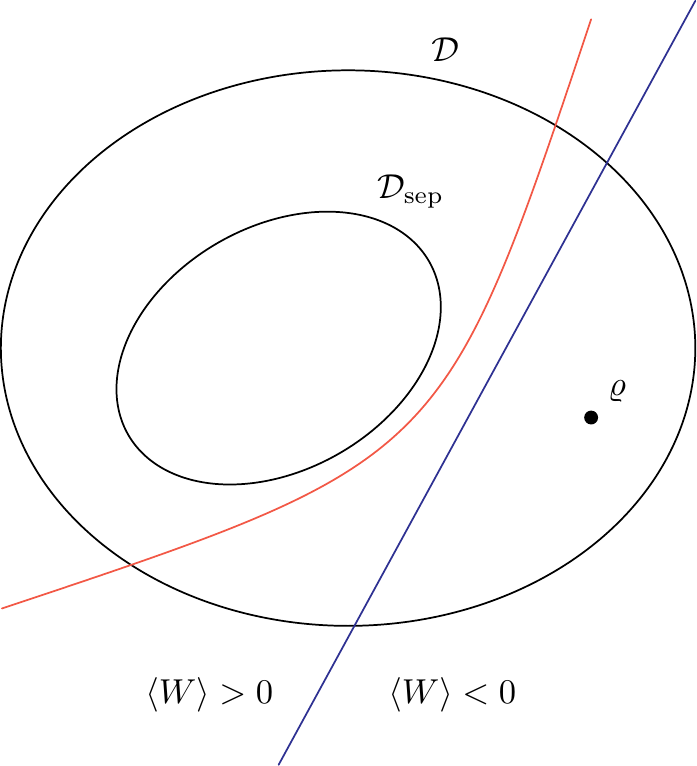}
 \caption{Detection of entanglement of an entangled state $\varrho$. 
The blue straight line represents the kernel of $\bracket{W}$ 
given by a witness $W$ in the space os states $\mathcal{D}$.
Hence the witness identifies all states to be entangled for which $\bracket{W}<0$,
and $\bracket{W}\geq0$ for all separable states (\ref{eq:witnessing}).
The red line represents the border of the domain
in which separable states have to be found,
given by other nonlinear necessary but not sufficient criteria of separability,
for example the quadratic Bell-inequalities.}
\label{fig:witness}
\end{figure}

The linear witnessing, which is the reformulation of the linear CHSH-inequalities
is not sufficient for the detection of entanglement,
but there is, however, a nonlinear extension of this criterion,
which proved to be a necessary and sufficient one for the two-qubit case.
These are called quadratic Bell inequalities \cite{UffinkSeevinckNonlinBell},
although their connection to the existence of LHVM is not clear.
To obtain these inequalities, from now, consider measurements on each site along \emph{orthogonal} directions,
moreover, let us introduce a third observable on each site, $A''$ and $B''$,
being orthogonal to the previous two.
From the theory of spin-measurements then we have that
$\bracket{A}^2+\bracket{A'}^2+\bracket{A''}^2=1$ and
$\bracket{B}^2+\bracket{B'}^2+\bracket{B''}^2=1$
for pure states of subsystems (section \ref{subsec:QM.QuantSys.Qubit}).
Because of these, it is straightforward to check that
with the definition of the following bipartite correlation-observables
\begin{align*}
I &= \frac12 (\Id\otimes \Id + A''\otimes B''),&\qquad
I'&= \frac12 (\Id\otimes \Id - A''\otimes B''),\\
X &= \frac12 (A\otimes B - A'\otimes B'),&\qquad
X'&= \frac12 (A\otimes B + A'\otimes B'),\\
Y &= \frac12 (A'\otimes B + A\otimes B'),&\qquad
Y'&= \frac12 (A'\otimes B - A\otimes B'),\\
Z &= \frac12 (A''\otimes\Id + \Id\otimes B''),&\qquad
Z'&= \frac12 (A''\otimes\Id - \Id\otimes B''),
\end{align*}
the following holds \emph{for separable pure states}
\begin{equation*}
 \bracket{X}^2+\bracket{Y}^2
=\bracket{X'}^2+\bracket{Y'}^2
=\bracket{I}^2-\bracket{Z}^2
=\bracket{I'}^2-\bracket{Z'}^2.
\end{equation*}
The separable states are the convex combinations of separable pure states,
and using convexity arguments,
the following holds for all separable states
\begin{equation}
\label{eq:quadrBell}
\begin{split}
\varrho\in\mathcal{D}_\text{sep}\qquad\Longrightarrow\qquad
\bigl\{\bracket{X}^2+\bracket{Y}^2,&\bracket{X'}^2+\bracket{Y'}^2\bigr\}
\leq
\bigl\{\bracket{I}^2-\bracket{Z}^2, \bracket{I'}^2-\bracket{Z'}^2\bigr\}\\
&\text{for all orthogonal settings}.
\end{split}
\end{equation}
Moreover, these inequalities turned out to be necessary and sufficient ones for separability \cite{UffinkSeevinckNonlinBell},
but this observation can not be directly generalized for subsystems of arbitrary dimensional Hilbert spaces.
In contrast to these, since $S_\text{CHSH} = 2(X+Y)$ with the new observables, 
the original linear inequality condition (\ref{eq:critBellMixed}) takes the form
\begin{equation*}
\varrho\in\mathcal{D}_\text{sep}\qquad\Longrightarrow\qquad 
\abs{\tr\varrho(X+Y)}\equiv\abs{\bracket{X}+\bracket{Y}}\leq1.
\end{equation*}
However, note that in the quadratic case only orthogonal spin observables are used.
Moreover, if the orientations of the $\{A,A',A''\}$ and $\{B,B',B''\}$ sets of local spin measurements are both right-handed (or both left-handed), then
$\{I,X,Y,Z\}$ and also $\{I',X',Y',Z'\}$ obey the right-handed Pauli algebra (\ref{eq:Pauli.alg})
(or a left-handed one, featuring $-\varepsilon_{ijk}$ instead of $\varepsilon_{ijk}$).
In this case it holds for all states that
$\bracket{X}^2+\bracket{Y}^2+\bracket{Z}^2\leq \bracket{I}^2$ and
$\bracket{X'}^2+\bracket{Y'}^2+\bracket{Z'}^2\leq \bracket{I'}^2$,
with equality only for pure states,
which lead to a generalization for $n$ qubits \cite{SeevinckUffinkMixSep},
which will be used in section \ref{subsec:SepCrit.3Part.Spin}.

An important advantage of the separability criteria presented so far
is that they are formulated in the terms of measurable quantities,
so they are ready to be used in a laboratory.
However, the optimization still has to be carried out by the tuning of the measurement settings.
There are other criteria, which are theoretical ones,
under which we mean that the full tomography of the state is needed,
and the criteria are checked on a computer.
A famous criterion of this latter kind
was formulated by Peres \cite{PeresCrit}, involving partial transpose.%
\footnote{For bipartite density matrices,
the partial transposition with respect to the first subsystem is given by
$\transp_1:\Lin(\mathcal{H}_1\otimes\mathcal{H}_2)\isom
\mathcal{H}_1\otimes\mathcal{H}_2\otimes\mathcal{H}_1^*\otimes\mathcal{H}_2^*\to
\mathcal{H}_1^*\otimes\mathcal{H}_2\otimes\mathcal{H}_1\otimes\mathcal{H}_2^*\isom
\Lin(\mathcal{H}_1^*\otimes\mathcal{H}_2)$,
$\cket{i}\otimes\cket{j}\otimes\bra{k}\otimes\bra{l}\mapsto
\bra{k}\otimes\cket{j}\otimes\cket{i}\otimes\bra{l}$.}
If a bipartite state is separable
then the partial transposition on subsystem $1$, being linear, 
acts on the $\varrho_{1,i}$s of the decomposition given in equation~(\ref{eq:sepDecomp}).
The transposition does not change the eigenvalues of a self-adjoint matrix,
so $(\varrho_{1,i})^\transp$s are also proper density matrices.
Hence the partial transpose of a separable density matrix is also a density matrix,%
\footnote{Its eigenvalues are not the same in general as the ones of the original matrix,
but they are also nonnegative ones, and they sum up to one.
On the other hand, it is clear that no matter which subsystem is transposed, 
$\varrho^{\transp_1}\geq 0\;\Leftrightarrow\;\varrho^{\transp_2}=(\varrho^{\transp_1})^\transp\geq 0$.}
\begin{subequations}
\begin{equation}
\label{eq:critPeres}
\varrho\in\mathcal{D}_\text{sep}\qquad\Longrightarrow\qquad
\varrho^{\transp_1}\geq 0.
\end{equation}
And, what is important, the partial transpose of a general density matrix is not necessarily positive,
so the negation of the implication above can be used for the detection of entanglement:
If $\varrho^{\transp_1}$ is not positive then $\varrho$ is entangled,
while there still exist entangled states of positive partial transpose (PPTES).
This criterion proved to be a very strong one, as can also be seen in chapter \ref{chap:SepCrit}.
Moreover, it is necessary and sufficient for states of qubit-qubit and qubit-qutrit systems \cite{HorodeckiPosMapWitness}
\begin{equation}
\label{eq:critPeres6}
\varrho\in\mathcal{D}_\text{sep}\qquad\Longleftrightarrow\qquad
\varrho^{\transp_1}\geq 0 \qquad\text{if $d=d_1d_2\leq6$}.
\end{equation}
\end{subequations}
So in this case there do not exist any PPTESs.
Another important advantage of this criterion is that there is no need of optimization to use it.

The partial transposition criterion has a generalization,
in which general positive (but not completely positive) maps act on a subsystem
\cite{HorodeckiPosMapWitness}.
There are several other criteria of separability even for more than two subsystems.
We will review some of them in chapter~\ref{chap:SepCrit}.

\subsection{Multipartite systems}
\label{subsec:QM.Ent.NPart}

A bipartite mixed state can be either separable or entangled,
depending on the existence of a decomposition given by equation~(\ref{eq:sep}).
However, the structure of separability classes can be very complex even for three subsystems.
To get an adequate generalization of equation~(\ref{eq:sep}), 
we recall the definitions of $k$-separability and $\alpha_k$-separability,
as was given in \cite{SeevinckUffinkMixSep}.
Note that, however, a more complete generalization
can be given, which is one of our results in chapter~\ref{chap:PartSep}.

Consider an $n$-partite system with Hilbert space 
$\mathcal{H}\equiv\mathcal{H}_{12\dots n}=\mathcal{H}_1\otimes\mathcal{H}_2\otimes\dots\otimes\mathcal{H}_n$,
and denote the full set of states for this system as 
$\mathcal{D}\equiv\mathcal{D}_{12\dots n}\equiv\mathcal{D}(\mathcal{H}_{12\dots n})$, as before.
Let $L=\{1,2,\dots,n\}$ be the set of the labels of the singlepartite subsystems,
then a $K\subseteq L$ subset defines an arbitrary subsystem.
For the partial separability,
let $\alpha_k=L_1|L_2|\dots|L_k$ denote a $k$-partite split,
that is a \emph{partition} of the labels of singlepartite subsystems
$L$ into $k\leq n$ disjoint nonempty subsets $L_r$.
A density matrix is \emph{$\alpha_k$-separable}, 
i.e.~separable under the \emph{particular} $k$-partite split $\alpha_k$,
if and only if it can be written as a convex combination of product states
with respect to the split $\alpha_k$.
We denote the set of these states with $\mathcal{D}_{\alpha_k}$, that is,
\begin{subequations}
\label{eq:sepak}
\begin{equation}
\varrho\in\mathcal{D}_{\alpha_k}\qquad\defn\qquad
\varrho=\sum_jp_j'\otimes_{r=1}^k\varrho_{L_r,j},
\end{equation}
where $\varrho_{L_r,j}\in\mathcal{D}_{L_r}\equiv\mathcal{D}(\mathcal{H}_{L_r})$,
and  $\tpl{p}'\in\Delta_{m'-1}$, as usual. 
$\mathcal{D}_{\alpha_k}$ is a convex set, and we can rewrite its elements as
\begin{equation}
\varrho\in\mathcal{D}_{\alpha_k}\qquad\Longleftrightarrow\qquad
\varrho=\sum_jp_j 
\bigl(\otimes_{r=1}^k\cket{\psi_{L_r,j}}\bigr)
\bigl(\otimes_{r'=1}^k\bra{\psi_{L_{r'},j}}\bigr).
\end{equation}
\end{subequations}
Hence $\mathcal{D}_{\alpha_k}$ is the convex hull of the partially separable pure states $\pi=\cket{\psi}\bra{\psi}$,
which arise from the tensor product vector $\otimes_{r=1}^k\cket{\psi_{L_r}}$.
The special case when $\alpha_k=\alpha_n\equiv 1|2|\dots|n$, the state  is called \emph{fully separable},
\begin{equation}
\label{eq:fullsep}
\varrho\in\mathcal{D}_{1|2|\dots|n}\equiv\mathcal{D}_\text{sep}\qquad\Longleftrightarrow\qquad
\varrho=\sum_jp_j'\varrho_{1,j}\otimes\varrho_{2,j}\otimes\dots\otimes\varrho_{n,j}.
\end{equation}
Again, states of this kind can be prepared locally,
using classical communication only.

More generally, \emph{for a given} $k$ we can consider states which can be written as a mixture of
$\alpha_k^j$-separable states for generally different $\alpha_k^j$ splits.
These states are called \emph{$k$-separable states} and denoted as $\mathcal{D}_\text{$k$-sep}$, that is,
\begin{subequations}
\label{eq:sepk}
\begin{equation}
\varrho\in\mathcal{D}_\text{$k$-sep}\qquad\defn\qquad
\varrho=\sum_jp_j'\otimes_{r=1}^k\varrho_{L_r^j,j},
\end{equation}
where $\varrho_{L_r^j,j}\in\mathcal{D}_{L_r^j}$ 
and in this case the $\alpha_k^j=L_1^j|L_2^j|\dots|L_k^j$ $k$-partite splits can be different for different $j$s.
Again, $\mathcal{D}_\text{$k$-sep}$ is a convex set, and we can rewrite its elements as
\begin{equation}
\varrho\in\mathcal{D}_\text{$k$-sep}\qquad\Longleftrightarrow\qquad
\varrho=\sum_jp_j 
\bigl(\otimes_{r=1}^k\cket{\psi_{L_r^j,j}}\bigr)
\bigl(\otimes_{r'=1}^k\bra{\psi_{L_{r'}^j,j}}\bigr).
\end{equation}
\end{subequations}
Hence $\mathcal{D}_\text{$k$-sep}$ is the convex hull of all the $k$-partite separable pure states.
The motivation for the introduction of the sets of states of these kinds is that
to mix a $k$-separable state
we need at most only $k$-partite entangled pure states.

Since $\mathcal{D}_\text{$(k+1)$-sep}\subset\mathcal{D}_\text{$k$-sep}$,
the notion of $k$-separability gives rise to a natural hierarchic ordering of the states.
The full set of states is $\mathcal{D}\equiv\mathcal{D}_\text{1-sep}$,
and we call elements of $\mathcal{D}_\text{$k$-sep}\setminus\mathcal{D}_\text{$k+1$-sep}$
(i.e.~the $k$-separable but not $k+1$-separable states)
``\emph{$k$-separable entangled}''.
We call the $n$-separable states (\ref{eq:fullsep}) \emph{fully separable}
and the $1$-separable entangled states \emph{fully entangled}.

Clearly, $\mathcal{D}_{\alpha_k}$ is a convex set,
and so is $\mathcal{D}_\text{$k$-sep}$, because it is the convex hull of
the union of $\mathcal{D}_{\alpha_k}$-s for a given $k$.
Note that these definitions allow
a $k$-separable state not to be $\alpha_k$-separable for any particular split $\alpha_k$,
and
a state which is $\alpha_k$-separable for all $\alpha_k$ partitions not to be $k+1$-separable.
The existence of such states was counterintuitive,
since for pure states, if, e.g., a tripartite pure state is separable under any $a|bc$ bipartition then it is fully separable.
For mixed states, however, explicit examples can be constructed.
(Using  a method dealing with \emph{unextendible product bases},
Bennett et.~al.~have constructed a three-qubit state
which is separable for all $\alpha_2$ but not fully separable \cite{BennettetalUPB}.
Another three-qubit example can be found in \cite{Acinetal3QBMixClass}.)

Let us now consider the case of three subsystems,
then we have the partitions
$\alpha_1=123$,
$\alpha_2=1|23$,
$\alpha_2'=2|13$,
$\alpha_2''=3|12$,
$\alpha_3=1|2|3$. 
With this, 
adopting the notations of \cite{SeevinckUffinkMixSep},
the classes of separability of mixed tripartite states are as follows (figure~\ref{fig:3part}).

\begin{figure}
 \includegraphics{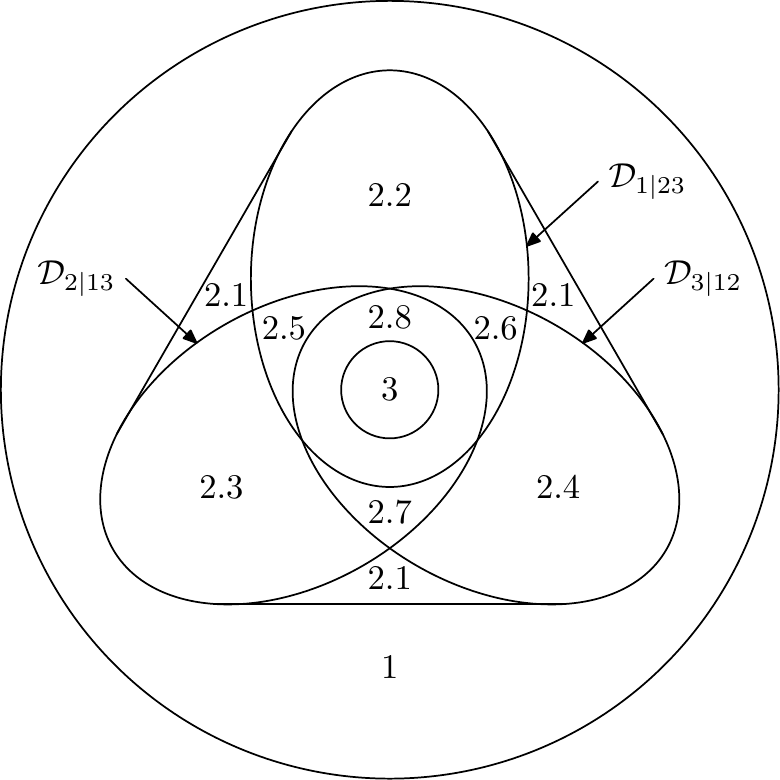}
 \caption{Separability classes for three subsystems.}
\label{fig:3part}
\end{figure}

\textit{Class 3:} This is the set of fully separable three-qubit states,
$\mathcal{D}_\text{$3$-sep}=\mathcal{D}_{1|2|3}$.
\textit{Classes 2.1--2.8:} These are the disjoint subsets of $2$-separable entangled states
$\mathcal{D}_\text{$2$-sep}\setminus\mathcal{D}_\text{$3$-sep}$.
Classes 2.2--2.8
can be obtained by the set-theoretical intersections of
$\mathcal{D}_{1|23}$,
$\mathcal{D}_{2|13}$ and
$\mathcal{D}_{3|12}$, as can be seen in figure~\ref{fig:3part}.
For example
\textit{Class 2.8} is $(\mathcal{D}_{1|23}\cap\mathcal{D}_{2|13}\cap\mathcal{D}_{3|12})\setminus \mathcal{D}_{1|2|3}$
(states that can not be mixed without the use of bipartite entanglement,
but can be mixed by the use of bipartite entanglement within only one bipartite subsystem,
it does not matter which one).
\textit{Class 2.7} is $(\mathcal{D}_{2|13}\cap\mathcal{D}_{3|12})\setminus\mathcal{D}_{1|23}$
(states that can be mixed by the use of bipartite entanglement within only the $12$ or $13$ subsystems,
but can not be mixed by the use of bipartite entanglement within only the $23$ subsystem).
\textit{Class 2.2} is $\mathcal{D}_{1|23}\setminus(\mathcal{D}_{2|13}\cup\mathcal{D}_{3|12})$
(states that can be mixed by the use of bipartite entanglement within only the $23$ subsystem,
but can not be mixed by the use of bipartite entanglement within only the $12$ or $13$ subsystem).
On the other hand,
the union of the sets of $\alpha_2$-separable states is not a convex one,
it is a proper subset of its convex hull $\mathcal{D}_\text{$2$-sep}$.
This defines \textit{Class 2.1} as $\mathcal{D}_\text{$2$-sep}\setminus (\mathcal{D}_{1|23}\cup\mathcal{D}_{2|13}\cup\mathcal{D}_{3|12})$,
that is, the set of states 
that are $2$-separable but can not be mixed by the use of bipartite entanglement
within only one bipartite subsystem.
However, we do not consider these states fully entangled
since they can be mixed without the use of tripartite entanglement.
\textit{Class 1:} This contains all the fully entangled states of the system:
$\mathcal{D}_\text{$1$-sep}\setminus\mathcal{D}_\text{$2$-sep}$.
This classification will be refined in section \ref{subsec:PartSep.Threepart.PSclasses}.

It is again a difficult question to decide to which class a given state belongs.
There are several criteria for the detection of these classes,
arising mostly as the generalizations of the bipartite criteria,
such as the the generalization of nonlinear Bell inequalities for multiqubit mixed states \cite{UffinkQuadBellMultipartEnt,SeevinckUffinkMixSep}.
We will review some of them in chapter~\ref{chap:SepCrit}.
We should mention, although we do not use that, that
another approach has also been worked out by the use of semidefinite programming
\cite{Dohertycrit1,Dohertycrit2,Dohertycrit3}.

\subsection{Global and local operations}
\label{subsec:QM.Ent.LO}
In closing this section, now we return to the operations performed on quantum states.
First of all, we have to make mention of the roles played by unitary transformations in the description of multipartite systems. 
We have the global unitary group, which is the unitary group $\LieGrp{U}(\mathcal{H})$ of the Hilbert space of the composite system,
and the local unitary group, which is a product of the unitary groups of the Hilbert spaces of the subsystems (with identified centers)
$\LieGrp{U}(\mathcal{H}_1)\times\LieGrp{U}(\mathcal{H}_2)\times\dots\times\LieGrp{U}(\mathcal{H}_n) \subset\LieGrp{U}(\mathcal{H})$.
First of all, unitaries have the role of basis-changes in Hilbert spaces.
In this sense, local unitary transformations are the basis-changes in the Hilbert spaces of the subsystems only,
and global unitary transformations are the basis-changes in the Hilbert space of the whole system.
Note that since not only the states but also the observables are transformed under basis-changes,
the whole description of composite systems is invariant under the action of both groups.
The other role played by unitaries is their generation of time-evolution.
Since in Schr{\"o}dinger picture only the states are transformed under time-evolution, but not the observables,
the measurement statistics are not invariant under such unitary transformations, which is indeed the desirable behaviour in such a situation.
But the important point here is that entanglement is \emph{not} invariant under \emph{global} unitary transformation of only the state.
For example, any density operator can be transformed into a diagonal form by the use of global unitary transformation,
and a diagonal density operator is obviously separable.
Unitaries are invertible, so any entangled state can be obtained from separable ones via suitable global unitary transformations.
And, indeed, this is the way of creating entanglement by the use of quantum interaction,
since such global unitaries are time-evolution operators arising from Hamiltonians which contain an interacting part.
Such operators have nontrivial action on the Hilbert spaces of at the most of two subsystems.
On the other hand, entanglement is clearly invariant under local unitary transformations,
as can be seen from the definitions of different kinds of separabilities (\ref{eq:sep}), (\ref{eq:sepak}) and (\ref{eq:sepk}).

There is another aspect of this issue, namely
the dependence of entanglement upon the choice of the tensor product structure.
If we start only with a Hilbert space $\mathcal{H}$
without specifying the tensor product structure on it,
that is, its ``decomposition'' into the $\mathcal{H}_a$ Hilbert spaces of its subsystems,
then it is meaningless to say that a state is entangled or separable,
since the tensor product structure is inherent in the definition of entanglement.
(For example, such a decomposition can be given by
a $\{\cket{\psi_k}\mid k=1,\dots,d_1d_2\}$
orthonormal basis in $\mathcal{H}$,
if we specify which $\cket{\psi_{k}}$ is considered
which $\cket{\psi_i}\otimes\cket{\psi_j}$.)
In this sense, the tensor product structure (together with entanglement)
is invariant under local unitary transformations,
but not invariant under global unitary ones.
The important result here is that the tensor product structure
is induced by the algebra of operationally accessible interactions and observables \cite{ZanardiVirtualQSubsys,ZanardiLidarLloydTPS}.

The dependence of entanglement upon the tensor product structure is sometimes
erroneously regarded as a weakness of the notion of entanglement,
that is, ``entanglement is not an inherent property, but a relative one, depending on the observer''.
However, maybe not an accident that observer and observable are two different words.
This relativity of entanglement does not work in a same manner
as the relativity in, e.g.,~the theory of special relativity.
In the case of the latter, for a system there can exist more than one observers moving differently at the same time,
they coordinatize the same system in different ways (connected by Lorentz transformations)
but the physics is independent of the coordinatization, which is relative to the observers.
In quantum mechanics, however, the relativity of entanglement is a different kind of relativity.
Here the different laboratory equipments
leading to different local observable algebras
leading to different tensor product structures can not exist in the same time.
And this is not due to some technical difficulty.
The quantum measurement disturbes the system,
so different noncommuting observables (leading to different tensor product structures) can not be measured at the same time,
while in special relativity, the measurements do not disturb the system, 
so different observers can exist in parallel.

After these considerations,
an advanced thinking of entanglement is due to the \emph{distant laboratory} paradigm,
also called \emph{paradigm of LOCC},
which is the abbreviation of \emph{Local Operations and Classical Operations},
attempting to tackle the nonclassicality of quantum states \cite{BennettetalMixedStates}.
This is a natural class of operations suitable for manipulating entanglement,
where it is assumed that the subsystems can be manipulated locally, 
and they do not interact in a quantum mechanical sense,
only in a classical sense, which is modelled by classical communication.
This class of operations is motivated also from the point of view of quantum information technology,
because transferring classical information is cheap, 
since it is encoded in states of classical systems, hence can be amplified;
while transferring quantum information is expensive, 
since it is encoded in quantum states, which are very fragile.
However, the real point here is not ``economical'', but rather information theoretical: 
Classical communication, which means the transfer of classical systems,
can not convey quantum information.

A clear example for LOCC is the teleportation of an unknown pure quantum state \cite{Teleport}.
For the teleportation of a qubit state from Alice to Bob, given by $\cket{\varphi}$, 
there is a need of a maximally entangled Bell-state $\cket{\text{B}}$ (\ref{eq:B}), 
shared between them previously. 
Furthermore, Alice and Bob are employed in distant laboratories and they are allowed to perform LOCC only,
where local operations are meant with respect to the $12$ and $3$ subsystems.
That is, $\mathcal{H}_A=\mathcal{H}_1\otimes\mathcal{H}_2$ corresponds to Alice's subsystem
and $\mathcal{H}_B=\mathcal{H}_3$ corresponds to Bob's subsystem.
So they have the shared state $\cket{\psi}=\cket{\varphi}\otimes\cket{\text{B}}$,
and they would like to perform the flip operator of the $13$ system,
that is, an operator acting in general as 
$\cket{\psi_1}\otimes\cket{\psi_2}\otimes\cket{\psi_3}\mapsto\cket{\psi_3}\otimes\cket{\psi_2}\otimes\cket{\psi_1}$.
Unfortunately, this is a nonlocal operation of course, which is not allowed.
But, the trick here is that this operator can be decomposed
for a sum of local operations, 
written as 
$1/2\sum_{i=0}^3\sigma_i\otimes\Id\otimes\sigma_i$
with the Pauli matrices (\ref{eq:Pauli.mx}),
resulting in
\begin{equation*}
\cket{\psi}=\cket{\varphi}\otimes\cket{\text{B}}
=\frac12\sum_{i=0}^3\cket{\text{B}_i}\otimes(\sigma_i\cket{\varphi}).
\end{equation*}
Here the maximally entangled Bell-states $\cket{\text{B}_i}=(\sigma_i\otimes\Id)\cket{\text{B}}$
constitute a complete orthonormal basis in $\mathcal{H}_1\otimes\mathcal{H}_2$.
This is just an equivalent writing of the original state, 
featuring the flip operator from the desired flipped state to the original one,
but, from this they can read out what to do.
First, Alice performs a selective von-Neumann measurement in the $12$ subsystem
given by the Bell-states%
\footnote{This is an entangled measurement, but inside only Alice's subsystem $A=12$.
If Alice can measure only with observables acting on subsystems $1$ and $2$,
then se has to perform a CNOT operation previously,
which is nonlocal \emph{inside} the $12$ subsystem \cite{NielsenChuang}.
Note that performing such an operation is equivalent to generating a
maximally entangled Bell state from a separable state.
}
$P_i=\cket{\text{B}_i}\bra{\text{B}_i}$, resulting in
\begin{equation*}
\cket{\psi}\bra{\psi}\qquad\longmapsto\qquad \varrho_i'=\cket{\text{B}_i}\bra{\text{B}_i}\otimes (\sigma_i\cket{\varphi}\bra{\varphi}\sigma_i^\dagger)
\end{equation*}
with probability $q_i=1/4$.
Then Alice communicate the $i$ outcome of the measurement to Bob, 
who is then perform the corresponding $\sigma_i^{-1}=\sigma_i$ rotation locally on his subsystem,
which gets itself then into the teleported state
\begin{equation*}
\varrho_i'\qquad\longmapsto\qquad \varrho_i''=(\Id\otimes\Id\otimes\sigma_i) \varrho_i' (\Id\otimes\Id\otimes\sigma_i)^\dagger
=\cket{\text{B}_i}\bra{\text{B}_i}\otimes\cket{\varphi}\bra{\varphi}.
\end{equation*}
Note that for a teleportation of one copy of a state,
roughly speaking, they carry out only the ``quarter part of the flip operation''
(one from the four terms in the sum),
but they had no prior knowledge of which one.
If they teleport multiple copies of a state, 
then the weighted average of the outcomes results in the overall operation
\begin{equation*}
\cket{\psi}\bra{\psi}\qquad\longmapsto\qquad \varrho''=\sum_{i=0}^3q_i \varrho_i''=
\frac12\Id\otimes\frac12\Id\otimes \cket{\varphi}\bra{\varphi},
\end{equation*}
so Alice is left with white noise, the state is fully separable,
the entanglement they have shared before the teleportation is used up.%
\footnote{Of course Alice can be smart and perform a 
$(\sigma_i\otimes\Id)^{-1}=(\sigma_i\otimes\Id)$ rotation after each measurement on her subsystem 
so as to recover the entanglement in her subsystem,
$\cket{\text{B}}\bra{\text{B}}\otimes\cket{\varphi}\bra{\varphi}$.
But this is just the entanglement which is brought in by the entangled measurement.
On the other hand, although they performed the nonlocal flip operation \emph{on average} in this way,
but not in general, since this works only for the case when the $23$ subsystem is in a maximally entangled pure state.}
This is an archetype of the use of entanglement as a \emph{resource} for manipulating quantum information.

An important note here is that teleportation can not be used for superluminal signalling either.
Although the  measurement causes the change of the state in Bob's laboratory instantaneously,
but he needs the classical information about the outcome of the measurement to get the state $\cket{\varphi}\bra{\varphi}$,
which arrives in a classical channel subluminally.
Without this information he has only white noise on average.

During the teleportation protocol only LOCC was used, beyond the sharing of entangled states done previously.
But the point of view of the LOCC paradigm came up even in the very basic grounds of entanglement theory.
Namely, even the (\ref{eq:sep}) definition of separability was also motivated by this approach \cite{WernerSep}.
Here the $\varrho_{1,j}\otimes\varrho_{2,j}$ states, prepared locally in distant laboratories, are uncorrelated for a given $j$,
and the mixing with the weights $p_j$ needs classical communication between the laboratories,
that is, we have to tell the preparing devices the outcome $j$ of a classical random number generator
realizing the probability distribution $\tpl{p}$. 
(To have entangled systems we need a preparing device 
in which the subsystems can interact in a controllable, or at least known way.)
Therefore, neither local operations nor classical communication can not give rise to entanglement.
On the other hand, LOCC can decrease entanglement, as we have seen in the case of the teleportation.

Although LOCC can not increase entanglement, but there are LOCC protocols
which obtain a number $m$ of pure maximally entangled $(\cket{\text{B}}\bra{\text{B}})^{\otimes m}$ Bell states (\ref{eq:B})
from a larger number $k$ of least entangled input states $\varrho^{\otimes k}$, which can be pure or mixed ones as well.
Such protocols are called \emph{entanglement distillation} protocols \cite{BennettetalDistillation,BennettetalMixedStates,ClarisseThesis}.
These are important methods for quantum information theory
since, for instance, they have the ability to recover some entanglement from states
which was originally maximally entangled, then shared between subsystems
and became mixed with some noise
due to the imperfect quantum channels in which environmental decoherence can not be avoided perfectly.
On the other hand, such protocols have important role also in quantum error correction \cite{BennettetalMixedStates,Horodecki4,NielsenChuang}.
Apart from practical reasons of these kinds,
distillation is important theoretically as well,
because it has turned out that there are entangled states from which no pure state entanglement can be distilled out.
This gives rise to a distinction between two kinds of entanglement,
which are called distillable and undistillable ones \cite{Horodecki3BoundEnt}.
The latter is often called bound entanglement,
and it is regarded then as a weaker form of entanglement.
(Of course, one can not distill entanglement out from separable states, since LOCC can not increase entanglement.)
An interesting result is that all entangled states having positive partial transpose (PPTES) are undistillable \cite{Horodecki3BoundEnt},
and the existence of undistillable states having non-positive partial transpose is still an open question \cite{Horodecki4,ClarisseThesis}.
It is also known that there are no such states for two qubits, moreover, all entangled two-qubit states are distillable \cite{Horodeckietal2qubitDist}.
It is usually hard to check whether a state of positive partial transpose is not separable,
there are few explicit examples of PPTESs in the literature
(see a list of references in section 1.2.4 of~\cite{ClarisseThesis}).
One of our results is to obtain a set of PPTESs (section \ref{subsec:SepCrit.3Part.Matrix}).
Note that in the tripartite case, all the states in Class 2.8 are PPTESs.

After these illustrations, now we turn to the general formalism of local operations \cite{Horodecki4,ChitambaretalWoodyLOCC}.
We list here some important classes of operations related to the structure of subsystems.
The completely positive maps are then denoted with $\Lambda$ instead of the general $\Phi$,
referring to that these are local in some sense.

\emph{Local Operations}: The parties are allowed to execute quantum operations locally only,
and the operation is $\varrho\mapsto \Lambda(\varrho)/\tr\Lambda(\varrho)$ with
$\Lambda=\Lambda_1\otimes\dots \otimes\Lambda_n$.
That is, if the $\Lambda_r:\mathcal{D}_r\to\mathcal{D}_r$ completely positive maps are given by the
$\{M_{r,j_r}\}$ Kraus operators, then
\begin{subequations}
\begin{equation}
\Lambda(\varrho)=\sum_{j_1,j_2\dots,j_n} M_{1,j_1}\otimes M_{2,j_2}\otimes \dots \otimes M_{n,j_n}  
\;\varrho\; M_{1,j_1}^\dagger\otimes M_{2,j_2}^\dagger\otimes \dots \otimes M_{n,j_n}^\dagger.
\end{equation}
The Kraus operators obey
$\sum_{j_r} M_{r,j_r}^\dagger M_{r,j_r}=\Id$ or $\leq\Id$  for all $r$, 
in the case of \emph{Deterministic} or \emph{Stochastic Local Operations}, respectively.
The former case is simply called Local Operations.

\emph{Local Operations with One-time Classical Communication}:
In this case,
there is a subsystem singled out,
on which a selective measurement is carried out.
Depending on the outcome of this measurement,
communicated to the other parties,
operations on the other subsystems are carried out.
Let the first subsystem be singled out.
Let moreover $\Lambda_1:\mathcal{D}_1\to\mathcal{D}_1$ be a completely positive map 
acting on that given by the $\{M_{1,j_1}\}$ Kraus operators.
Then there is a set of completely positive maps acting on every other subsystem.
Their elements are labelled by the $j_1$ outcomes of $\Lambda_1$,
that is, there are
$\Lambda_r^{j_1}:\mathcal{D}_r\to\mathcal{D}_r$ completely positive maps for all $j_1$
given by the $\{M_{r,j_r}^{j_1}\}$ Kraus operators.
Then the operation is $\varrho\mapsto \Lambda(\varrho)/\tr\Lambda(\varrho)$, with
\begin{equation}
\Lambda(\varrho)=\sum_{j_1\dots,j_n} M_{1,j_1}\otimes M_{2,j_2}^{j_1}\otimes \dots \otimes M_{n,j_n}^{j_1}
\;\varrho\; M_{1,j_1}^\dagger\otimes M_{2,j_2}^{j_1\dagger}\otimes \dots \otimes M_{n,j_n}^{j_1\dagger}.
\end{equation}
Again, the Kraus operators obey
$\sum_{j_1} M_{1,j_1}^\dagger M_{1,j_1}=\Id$ and
$\sum_{j_r} M_{r,j_r}^{j_1\dagger} M_{r,j_r}^{j_1}=\Id$ for all $r\neq1$ and for all $j_1$,
or
these are $\leq\Id$
for all $r\neq1$ and for all $j_1$,
in the case of \emph{Deterministic} or \emph{Stochastic Local Operations with One-time Classical Communication}, respectively.
The former case is simply called Local Operations with One-time Classical Communication.

\emph{Local Operations with Classical Communication}:
This is the class of operations which arise as the arbitrary compositions of the operations of the above kinds.
Again, this can be either \emph{Deterministic Local Operations with Classical Communication},
depending on whether all the constituting operations are deterministic,
or \emph{Stochastic Local Operations with Classical Communication},
if at least one of the constituting operations are stochastic.
(The former case is simply called Local Operations with Classical Communication, abbreviated with LOCC,
while the latter one is abbreviated by SLOCC.)
The general writing of these operations is complicated, 
it can be found in, e.g., \cite{DonaldetalUniqunessEntMeasLOCC,GrudkaetalEntSwapBoxesLOCC}.

\emph{Separable Operations}:
The operations of this class can not be implemented locally in general,
however, it holds for their overall Kraus operators that they are tensor products.
That is, the operation is $\varrho\mapsto \Lambda(\varrho)/\tr\Lambda(\varrho)$, with
\begin{equation}
\Lambda(\varrho)=\sum_{j} M_{1,j}\otimes M_{2,j}\otimes \dots \otimes M_{n,j}
\;\varrho\; M_{1,j}^\dagger\otimes M_{2,j}^\dagger\otimes \dots \otimes M_{n,j}^\dagger.
\end{equation}
\end{subequations}
Here 
$\sum_j M_{1,j}^\dagger M_{1,j}\otimes M_{2,j}^\dagger M_{2,j}\otimes \dots \otimes M_{n,j}^\dagger M_{n,j} =\Id\otimes\Id\otimes\dots\otimes\Id$
or 
$\leq\Id\otimes\Id\otimes\dots\otimes\Id$
in the case of \emph{Deterministic} or \emph{Stochastic Separable Operations}, respectively.
The former case is simply called Separable Operations.
From these restrictions on the Kraus operators it can be seen
that the set of SO contains set of LOCC,
while it can also be known that SO is a proper subset of LOCC \cite{BennettetalNonlocalityWithoutEnt},
while SSO is the same as SLOCC, up to probability of success \cite{Horodecki4}.
Although only (S)LOCC can be implemented locally in general,
but (S)SO is also extensively used, since it has a much simpler form than (S)LOCC, 
so from results concerning (S)SO one can draw conclusions 
concerning (S)LOCC.

Note that full-separability is preserved by the operations of the above kind,
which makes these operations important.
More generally, $\alpha_k$-separability can only be transformed into a finer%
\footnote{The partition $\alpha_{k'}=L_1'|L_2'|\dots|L_{k'}'$ is finer than $\alpha_k=L_1|L_2|\dots|L_k$ if 
the $L_r$ subsets arise as the $L_r'$ subsets or the unions of those.}
$\alpha_{k'}$-separability by these operations, 
hence partial separability can be preserved or increased.

Now, having the practically motivated LOCC and SLOCC classes of operations,
motivated classifications of states can also be defined.

\emph{LOCC classificaion:}
Two states are equivalent under LOCC (they are in the same LOCC class) by definition
if they can be transformed into each other \emph{with certainty} by the use of LOCC:
\begin{subequations}
\begin{equation}
\label{eq:LOCCequiv}
\varrho\sim_\text{LOCC}\varrho'\qquad\defn\qquad
\exists \Lambda, \Lambda':\quad
\varrho'=\Lambda(\varrho),\quad \varrho=\Lambda'(\varrho'),
\end{equation}
where $\Lambda$ and $\Lambda'$ are trace preserving completely positive maps implementing LOCC transformations.
For pure states, it turned out that
two states are equivalent under LOCC
if and only if they are equivalent under LU,
that is, they can be transformed into each other by LU (Local Unitary) transformations \cite{BennettetalEquivalences}:
\begin{equation}
\label{eq:LOCCequivPure}
\cket{\psi}\sim_\text{LOCC}\cket{\psi'}\qquad\Longleftrightarrow\qquad
\exists U_j\in\LieGrp{U}(\mathcal{H}_j):\quad
\cket{\psi'} = U_1\otimes U_2\otimes\dots\otimes U_n \cket{\psi}.
\end{equation}
\end{subequations}
So this gives the most fine grained classification scheme imaginable for pure states,
many continuous and discrete parameters are required to label the LOCC classes 
\cite{LindenPopescuOnMultipartEnt,AcinetalGenSchmidt3QB,Acinetal3QBPureCanon,Sudbery3qb,Kempe3qb}.
An important corollary is that 
the local spectra of LOCC-equivalent pure states are the same,
\begin{equation}
\pi\sim_\text{LOCC}\pi'\qquad\Longrightarrow\qquad \Spect \pi_K = \Spect \pi_K'\quad\text{for all $K\subseteq L$ subsystems}.
\end{equation}
(Note that the reverse is not true.)
From the point of view of quantum computational purposes,
two LOCC-equivalent pure states can be used for exactly the same task.
However, to our knowledge, 
there is no such practical criterion of LOCC-equivalence and LOCC classification for mixed states
as the LU-equivalence was for pure states.

\emph{SLOCC classificaion:}
A coarse-grained classification can be defined if we demand only the possibility of the transformation.
Two states are equivalent under SLOCC (they are in the same SLOCC class) by definition
if they can be transformed into each other \emph{with nonzero probability} by the use of LOCC,
or, equivalently, if there are SLOCC transformations relating them:
\begin{subequations}
\begin{equation}
\label{eq:SLOCCequiv}
\varrho\sim_\text{SLOCC}\varrho'\qquad\defn\qquad
\exists \Lambda, \Lambda': \quad
\varrho'=\frac{\Lambda(\varrho)}{\tr\Lambda(\varrho)},\quad 
\varrho =\frac{\Lambda'(\varrho')}{\tr\Lambda'(\varrho')},
\end{equation}
where $\Lambda$ and $\Lambda'$ are trace non-increasing completely positive maps implementing SLOCC transformations.
For pure states, it turned out that
two states are equivalent under SLOCC
if and only if they are equivalent under LGL,
that is, they can be transformed into each other by LGL (Local General Linear) transformations%
\footnote{Sometimes that is called ILO, standing for Invertible Local Operation \cite{DurVidalCiracSLOCC3QB},
but we prefer the uniform naming after the corresponding Lie groups.
On the other hand, it is enough to use only LSL (Local Special Linear) transformations,
that is, the $\LieGrp{SL}(\mathcal{H}_j)\subset\LieGrp{GL}(\mathcal{H}_j)$
subgroups because of the normalization.}
\cite{DurVidalCiracSLOCC3QB}:
\begin{equation}
\label{eq:SLOCCequivPure}
\cket{\psi}\sim_\text{SLOCC}\cket{\psi'}\qquad\Longleftrightarrow\qquad
\exists G_j\in\LieGrp{GL}(\mathcal{H}_j):\quad
\cket{\psi'} = \frac{G_1\otimes G_2\otimes\dots\otimes G_n \cket{\psi}}
{\norm{G_1\otimes G_2\otimes\dots\otimes G_n \cket{\psi}}}.
\end{equation}
\end{subequations}
Since $\LieGrp{U}(\mathcal{H}_j)\subset\LieGrp{GL}(\mathcal{H}_j)$,
this gives a coarse grained classification scheme for pure states.
In some cases, including the three-qubit case, SLOCC classes of only finite number arise \cite{DurVidalCiracSLOCC3QB}.
An important corollary is that 
the local ranks of SLOCC-equivalent pure states are the same
\begin{equation}
\pi\sim_\text{SLOCC}\pi'\qquad\Longrightarrow\qquad \rk \pi_K = \rk \pi_K'\quad\text{for all $K\subseteq L$ subsystems}.
\end{equation}
(Note that the reverse is not true.)
This can be used for a coarse-grained classification involving SLOCC-invariant classes of finite number, 
even in the cases in which continuously many SLOCC classes arise \cite{LiLiSLOCC}.
From the point of view of quantum computational purposes,
two SLOCC-equivalent pure states can be used for the same task but with different probability of success.
Again, to our knowledge, 
there is no such practical criterion of SLOCC-equivalence and SLOCC classification for mixed states
as the LGL-equivalence was for pure states.

\section{Quantifying entanglement}
\label{sec:QM.EntMeas}

In the previous section we have introduced entanglement
together with some basic approaches
for the characterization of its structure,
such as separability classes, LOCC and SLOCC classes.
On the other hand, as was also illustrated by the quantum teleportation protocol,
entanglement is the basic resource of quantum information processing,
so its quantification is also a natural need.
For entanglement quantification one uses special real-valued functions on the states. 
In the light of the LOCC paradigm, we expect that these functions do not increase under LOCC
in order to express some quantity characterizing the amount of entanglement.
In this section we survey some of the important results in connection with this issue,
together with particular results for quantum systems of small numbers of subsystems.

\subsection{Entanglement measures}
\label{subsec:QM.EntMeas.EntMeas}

The most fundamental property of \emph{entanglement measures} \cite{PlenioVirmaniEntMeas}
is the monotonicity under LOCC \cite{HorodeckiEntMeas,VidalEntMon}.
A $\mu:\mathcal{D}\to\field{R}$ is \emph{(non-increasing) monotone under LOCC} if
\begin{subequations}
\label{eq:meas}
\begin{equation}
\label{eq:meas.mon}
\mu\bigl(\Lambda(\varrho)\bigr) \leq \mu(\varrho)
\end{equation}
for any LOCC transformation $\Lambda$,
which expresses that entanglement can not increase by the use of local operations
and classical communication.
Note that this implies \emph{LU-invariance} automatically,
\begin{equation}
\mu\bigl(U_1\otimes\dots\otimes U_n\varrho U_1^\dagger\otimes\dots\otimes U_n^\dagger\bigr) = \mu(\varrho),
\qquad U_j\in\LieGrp{U}(\mathcal{H}_j).
\end{equation}
A $\mu:\mathcal{D}\to\field{R}$ is \emph{non-increasing on average under LOCC} if
\begin{equation}
\label{eq:meas.average}
\sum_i p_i \mu(\varrho_i) \leq \mu(\varrho),
\end{equation}
where the LOCC is constitued as $\Lambda=\sum_i \Lambda_i$,
where the $\Lambda_i$s are the parts of the LOCC realizing the outcomes of selective measurements,
and $\varrho_i=\frac1{p_i}\Lambda_i(\varrho)$
with $p_i=\tr\Lambda_i(\varrho)$.
This latter condition is stronger than the former one
if the function is \emph{convex}:
\begin{equation}
\label{eq:meas.conv}
\mu\Bigl(\sum_i p_i \varrho_i\Bigr) \leq \sum_i p_i \mu(\varrho_i)
\end{equation}
\end{subequations}
for all ensembles $\{(p_i,\varrho_i)\}$,
which expresses that entanglement can not increase by mixing.
This is also a fundamental, and also plausible property,
since mixing is interpreted as forgetting some classical information
about the state in which the system is.
A $\mu:\mathcal{D}\to\field{R}$ is an \emph{entanglement-monotone}
if~(\ref{eq:meas.average}) and~(\ref{eq:meas.conv}) hold \cite{VidalEntMon}.
There is common agreement
that LOCC-monotonity~(\ref{eq:meas.mon}) is the only necessary postulate
for a function to be an \emph{entanglement measure} \cite{Horodecki4}.
However, the stronger condition~(\ref{eq:meas.average}) 
is often satisfied too,
and it is often easier to prove.

If $\mu$ is defined only for pure states,
$\mu:\mathcal{P}\to\field{R}$,
then only~(\ref{eq:meas.average}) makes sense, whose restriction is
\begin{equation}
\label{eq:averagePure}
\sum_j p_j \mu(\pi_j) \leq \mu(\pi).
\end{equation}
Here $\{(p_j,\pi_j)\}$ is the pure ensemble
generated by all the Kraus-operators of all $\Lambda_i$s
from the input state $\pi$.
Note that not all $\pi_j$ members of the ensemble are accessible physically,
only the outcomes of the LOCC, which are formed by partial mixtures of this ensemble \cite{HorodeckiEntMeas}.
Mathematically, however, the pure state ensemble can also be used, which makes some constructions much simpler.

There are also other properties beyond the monotonities above,
which are useful or convenient for the measuring of \emph{bipartite} entanglement.
For example, a plausible property is the \emph{discriminance}, 
that is,
$\mu:\mathcal{D}\to\field{R}$ function for bipartite states 
should vanish exactly for separable states,
\begin{subequations}
\begin{equation}
\label{eq:measdiscr.strong}
\varrho\in\mathcal{D}_\text{sep}\qquad\Longleftrightarrow\qquad \mu(\varrho)=0.
\end{equation}
But there are such functions for which only the \emph{weak discriminance} holds,
\begin{equation}
\label{eq:measdiscr.weak}
\varrho\in\mathcal{D}_\text{sep}\qquad\Longrightarrow\qquad \mu(\varrho)=0.
\end{equation}
We can regard the entanglement carried by the Bell state (\ref{eq:B}) as a unit of entanglement,
(that is, $1$ qubit)
then a $\mu:\mathcal{D}\to\field{R}$ function for bipartite states is \emph{normalized} if
\begin{equation}
\label{eq:measnorm}
\mu(\cket{\text{B}}\bra{\text{B}}) = 1,
\end{equation}
\end{subequations}
which can be achieved by trivial rescaling, so we are not concerned with this one.
Normalization can be useful for comparing different measures.

The pure state entanglement measures can be extended to mixed states 
by the so called \emph{convex roof extension}
\cite{BennettetalMixedStates,UhlmannFidelityConcurrence,UhlmannConvRoofs,RothlisbergerLehmannLossNumericalConvRoof,LibCreme}.
It is motivated by the practical approach of the optimal mixing of the mixed state
from pure states, that is, using as little amount of pure state entanglement as possible.
For a continuous function $\mu:\mathcal{P}\to\field{R}$,
its convex roof extension $\cnvroof{\mu}:\mathcal{D}\to\field{R}$ is defined as
\begin{equation}
\label{eq:cnvroofext}
\cnvroof{\mu}(\varrho)=\min_{\sum_i p_i \pi_i=\varrho}  \sum_i p_i \mu(\pi_i),
\end{equation}
where the minimization
takes place over all $\{(p_i,\pi_i)\}$ pure state decompositions of $\varrho$. 
It follows from Schr\"odinger's mixture theorem~(\ref{eq:SchMixtureThm}) that
the decompositions of a mixed state into an ensemble of $m$ pure states
are labelled by the elements of the \emph{compact} complex Stiefel manifold. 
On the other hand,
the Carath\'eodory theorem ensures that we need only \emph{finite} $m$,
or to be more precise $m \leq (\rk\varrho)^2 \leq d^2$, shown by Uhlmann \cite{UhlmannOptimalDecomp}.
These observations guarantee the existence of the minimum in~(\ref{eq:cnvroofext}).

Obviously, for pure states the convex roof extension is trivial,
\begin{equation}
\label{eq:cnvroofpure}
\cnvroof{\mu}(\pi)=\mu(\pi) \qquad\forall\pi\in\mathcal{P}.
\end{equation}
The convex roof extension of a function is convex~(\ref{eq:meas.conv}),
moreover, it is the largest convex function
taking the same values for pure states as the original function \cite{UhlmannOptimalDecomp}.
The convex roof extensions of pure state measures are good measures of entanglement,
because it can be proven \cite{VidalEntMon,HorodeckiEntMeas} that
if a function $\mu:\mathcal{P}\to\field{R}$ is non-increasing on average for pure states~(\ref{eq:averagePure}),
then its convex roof extension is also non-increasing on average for mixed states%
\footnote{The $\Leftarrow$ implication is obvious.}
(\ref{eq:meas.average})
\begin{equation}
\label{eq:averageConvRoof}
\sum_i p_i \mu(\pi_i) \leq \mu(\pi)
\quad\Longleftrightarrow\quad
\sum_i p_i \cnvroof{\mu}(\varrho_i) \leq \cnvroof{\mu}(\varrho).
\end{equation}
From this, $\cnvroof{\mu}(\varrho)$ is entanglement-monotone as well.
Because of these, for pure states~(\ref{eq:averagePure}) is called entanglement monotonicity.
The convex roof extension preserves the weak discriminance property (\ref{eq:measdiscr.weak}) 
and also the strong one (\ref{eq:measdiscr.weak}) if we additionally assume that $\mu\geq0$.
This is very useful because strong discriminance for mixed states can be used for the detection of entanglement,
which will exensively be done in chapters \ref{chap:PartSep} and \ref{chap:ThreeQB}.
Indeed, let $\mu:\mathcal{P}\to[0,\infty)$, then 
\begin{equation}
\label{eq:convroof.discr}
\begin{split}
\varrho\in\mathcal{D}_\text{sep}
\qquad&\Longleftrightarrow\qquad \exists \;\text{decomposition}\; \varrho=\sum_jp_j\pi_j \;\text{such that}\; \pi_j\in\mathcal{P}_\text{sep}\\
\qquad&\begin{array}{l}\Longrightarrow\\\Longleftrightarrow\end{array}\qquad \exists \;\text{decomposition}\; \varrho=\sum_jp_j\pi_j \;\text{such that}\; \mu(\pi_j)=0 \\
\qquad&\Longleftrightarrow\qquad \cnvroof{\mu}(\varrho)=0,
\end{split}
\end{equation}
where in the second implication the upper one is the weak discriminance (\ref{eq:measdiscr.weak})
and the lower one is the strong discriminance (\ref{eq:measdiscr.strong}),
while $\mu\geq0$ is necessary for the $\Leftarrow$ direction of the last implication.
On the other hand, the normalization (\ref{eq:measnorm}) property is obviously preserved by the convex roof extension.
%

\subsection{State vectors of bipartite systems}
\label{subsec:QM.EntMeas.2Pure}

Now, let $\mathcal{H}=\mathcal{H}_1\otimes\mathcal{H}_2$ the Hilbert space of bipartite systems
of dimension $\tpl{d}=(d_1,d_2)$.
A bipartite state vector $\cket{\psi}\in\mathcal{H}$,
having the general form
\begin{equation*}
\cket{\psi}=\sum_{i,j=1}^{d_1,d_2}\psi^{ij}\cket{i}\otimes\cket{j},
\end{equation*}
can be transformed to the so called Schmidt canonical form
\begin{subequations}
\label{eq:SchmidtDecompComp}
\begin{equation}
\label{eq:SchmidtDecomp}
\cket{\psi}=\sum_{i=1}^{d_\text{min}}\sqrt{\eta_i}\cket{\varphi_{1,i}}\otimes\cket{\varphi_{2,i}}
\end{equation}
by suitable local unitary transformations.
The existence of this form, also called Schmidt decomposition, makes the entanglement of pure states of bipartite systems simple.
Here $\{\cket{\varphi_{1,i}}\}$ and $\{\cket{\varphi_{2,i}}\}$
are sets of orthonormal vectors in $\mathcal{H}_1$ and $\mathcal{H}_2$.
The nonnegative $\eta_i$ numbers are called Schmidt coefficients,
and they sum up to one.
They form the spectra of the reduced states,
which are therefore the same for both of the subsystems,
the only difference can be the degenerancy of the zero eigenvalue,
since
\begin{equation}
\label{eq:SchmidtDecomp.Subsys}
\pi_1=\sum_{i=1}^{d_\text{min}}\eta_i\cket{\varphi_{1,i}}\bra{\varphi_{1,i}},\qquad
\pi_2=\sum_{i=1}^{d_\text{min}}\eta_i\cket{\varphi_{2,i}}\bra{\varphi_{2,i}},
\end{equation}
with the pure state $\pi=\cket{\psi}\bra{\psi}$ and its reduced states $\pi_1=\tr_2\pi$ and  $\pi_2=\tr_1\pi$,  as usual.
\end{subequations}
It is clear that 
$\cket{\psi}$ is separable if and only if it has only one non-zero Schmidt coefficient,
which is then equal to $1$.

On the other hand,
we can label the LU orbits in $\mathcal{H}$ by the Schmidt coefficients
(ordered non-increasingly)
so they are the only non-local parameters of a bipartite pure state.
Alternatively, we can form an equivalent set of LU-invariants, 
which can be calculated without the diagonalization of the local states, which is
\begin{equation}
\label{eq:2canonPureLUinvs}
I_q(\psi) = \tr\pi_1^q\equiv \sum_{i=1}^{d_\text{min}}\eta_i^q,\qquad q=1,2,\dots,d_\text{min}.
\end{equation}
($I_1(\psi)\equiv\norm{\psi}^2$ equals to $1$ if the state vector is normalized,
however, during the investigation of orbit structures in $\mathcal{H}$ under group actions,
this constraint is often relaxed.)
Since the LOCC equivalence is the same as the LU equivalence for pure states,
we have that two states can be \emph{interconverted} by the use of LOCC
if and only if they have the same invariants (\ref{eq:2canonPureLUinvs}).
A beautiful result here \cite{NielsenMaj} is that a condition can be given even for the LOCC \emph{convertibility}
by majorization (\ref{eq:major}):
\begin{equation}
\label{eq:NielsenMajor}
\text{$\exists\Lambda$ LOCC that}\;\cket{\psi'}=\Lambda(\cket{\psi})\qquad\Longleftrightarrow\qquad
\tpls{\eta}\preceq\tpls{\eta}'
\end{equation}
where $\tpls{\eta}$ and $\tpls{\eta}'$ are the $d_\text{min}$-tuples of 
Schmidt coefficients of $\cket{\psi}$ and $\cket{\psi'}$.
From this, the LU-equivalence follows as necessary and sufficient condition for LOCC interconvertibility.
Unfortunately, there is no such an elegant condition for the LOCC convertibility of pure states of systems composed of more than two subsystems,
although the LU-equivalence is still a necessary and sufficient condition for LOCC interconvertibility (\ref{eq:LOCCequivPure}).

The LOCC equivalence classes have a bit too fine-grained structure,
characterized by the $d_\text{min}-1$ real parameters,
containing all the pieces of nonlocal information about the state.
What can be said about the SLOCC equivalence classes?
The Schmidt coefficients are not invariant under LGL transformations,
but their vanishings are that.
Therefore the Schmidt rank of the state,
which is the usual matrix rank of the reduced state
\begin{equation*}
\rk\psi = \rk \pi_1,
\end{equation*}
is invariant under LGL transformations,
moreover, it can easily be seen that there are $d_\text{min}$ SLOCC classes, which are characterized by the Schmidt rank.
The set of separable states is one of them, which is the set of states of $\rk\psi =1$,
and
\begin{equation}
\cket{\psi_r}=\sum_{i=1}^r \frac1{\sqrt r}\cket{i}\otimes\cket{i}.
\end{equation}
gives a representative element for the $\rk\psi = r$ SLOCC class.%
\footnote{It is not difficult to construct the LGL transformation
which maps a state $\cket{\psi}$ to this canonical form.}

We have seen that the Schmidt decomposition~(\ref{eq:SchmidtDecomp})
is very useful in the understanding of the structure of entanglement of bipartite pure states.
In additional,
we can also give illustration to Schr\"odinger's mixture theorem~(\ref{eq:SchMixtureThm}) 
by the use of that.
To this end, let us write the pure state in terms of state vectors in Schmidt canonical form
\begin{equation*}
\cket{\psi}\bra{\psi}=\sum_{ii'}\sqrt{\eta_i}\sqrt{\eta_{i'}}\cket{\varphi_{1,i}}\bra{\varphi_{1,i'}}\otimes\cket{\varphi_{2,i}}\bra{\varphi_{2,i'}},
\end{equation*}
now apply a unitary transformation on the second subsystem 
\begin{equation*}
\Id\otimes U\cket{\psi}\bra{\psi}\Id\otimes U^\dagger
=\sum_{ii'}\sqrt{\eta_i}\sqrt{\eta_{i'}}\cket{\varphi_{1,i}}\bra{\varphi_{1,i'}}\otimes U\cket{\varphi_{2,i}}\bra{\varphi_{2,i'}}U^\dagger,
\end{equation*}
and form the reduced state of the first subsystem
\begin{equation*}
\begin{split}
\tr_2\bigl(\Id\otimes U\cket{\psi}\bra{\psi}\Id\otimes U^\dagger\bigr)
&=\sum_{ii'}\sqrt{\eta_i}\sqrt{\eta_{i'}}\cket{\varphi_{1,i}}\bra{\varphi_{1,i'}} 
\sum_j \bra{\varphi_{2,j}}U\cket{\varphi_{2,i}}\bra{\varphi_{2,i'}}U^\dagger\cket{\varphi_{2,j}}\\
&=\sum_j \Bigl(\sum_iU^j_{\phantom{j}i}\sqrt{\eta_i}\cket{\varphi_{1,i}}\Bigr)\Bigl(\sum_{i'}\sqrt{\eta_{i'}}\cc{(U^j_{\phantom{j}i'})}\bra{\varphi_{1,i'}}\Bigr),
\end{split}
\end{equation*}
where the decomposition vectors in the form of (\ref{eq:SchMixtureThm}) appear in the parentheses.
So we can think of 
the freedom in the decomposition of a density matrix
as the unitary freedom in the additional Hilbert space of the purification.
%


As we have also seen, 
the reduced states of an entangled pure state are mixed ones,
which is interpreted as
we know the possible pure states of the subsystem only with some probabilities.
So the uncertainty in the pure states of the subsystem presents itself 
for the quantification of entanglement.
It can be characterized for example by the von Neumann entropy  (\ref{eq:Neumann}) of the reduced state:
\begin{equation}
\label{eq:entanglement}
s(\psi)=S(\pi_1)\equiv H(\tpls{\eta}). 
\end{equation}
But, is this an entanglement measure in the sense of section \ref{subsec:QM.EntMeas.EntMeas}?
The answer is yes,
it has been shown in \cite{VidalEntMon,HorodeckiEntMeas} that
every unitary-invariant and concave function of the reduced state is non-increasing on average under LOCC~(\ref{eq:averagePure}),
which is the key property of the functions measuring entanglement.
%
The von Neumann entropy~(\ref{eq:Neumann}),
the Tsallis entropies~(\ref{eq:qTsallis}) for all $q>0$,
and the R\'enyi entropies~(\ref{eq:qRenyi}) for all $0<q<1$
are known to be concave \cite{BengtssonZyczkowski}, and all of them are unitary-invariant,
so all of them can be used on the right-hand side of (\ref{eq:entanglement})
to get an entanglement measure.%
\footnote{For the von Neumann entropy, this is called \emph{entanglement} function, and denoted by $E(\psi)$,
and for the R{\'e}nyi entropies, this is called \emph{generalized entanglement} function, and denoted by $E_q(\psi)$,
but we prefer to use this more general small letter--capital letter convention (\ref{eq:entanglement})
for functions on the states of the subsystems.}
However, note that the von Neumann entropy of the reduced state (\ref{eq:entanglement}) is of particular importance
due to Schumacher's noiseless coding theorem, 
which is the quantum counterpart of Shannon's noiseless coding theorem of classical information theory \cite{NielsenChuang}.
For every generalized entropy 
we obviously have that $s(\psi)=0$ if and only if the state is separable,
that is, the (strong) discriminance property (\ref{eq:measdiscr.strong}) holds.

\subsection{Mixed states of bipartite systems}
\label{subsec:QM.EntMeas.2Mix}

For mixed states, 
for the quantification of entanglement
we can use the convex roof extension (\ref{eq:cnvroofext}) of pure state measures.
They have the operational meaning of the optimal mixing of the state
from pure states with respect to the given pure state measure.
Maybe this is the most plausible method of measuring the entanglement of mixed states.
Thanks to (\ref{eq:averageConvRoof}), the resulting function is also an entanglement measure,
moreover, it is entanglement monotone.
For example, the convex roof extended entanglement function (\ref{eq:entanglement}) is
\begin{equation}
\label{eq:entForm}
\cnvroof{s}(\varrho)=\min_{\sum_i p_i \pi_i=\varrho}  \sum_i p_i s(\pi_i),
\end{equation}
which is called \emph{entanglement of formation}.
(Or \emph{generalized entanglement of formation} for the R{\'e}nyi entropies.)
Since the  (strong) discriminance property (\ref{eq:measdiscr.strong}) holds for the local entropies,
it holds also for the convex roof extension of those (\ref{eq:convroof.discr}),
that is,
\begin{equation}
\label{eq:cnvRoofVanishSet}
\varrho\in\mathcal{D}_\text{sep}\qquad\Longleftrightarrow\qquad \cnvroof{s}(\varrho)=0.
\end{equation}

There are other entanglement measures, which do not arise as convex roof extension of pure state measures.
Such an entanglement measure is the negativity \cite{ZyczkowskietalVolSepStates,EisertPlenioConcNeg}.
It is related to the notion of partial transposition and the criterion of Peres \cite{PeresCrit}.
If the partial transposed density matrix has negative eigenvalue,
which implies entanglement (\ref{eq:critPeres}),
then its trace, which equals to $1$,
is less than the trace of its absolute value,%
\footnote{The absolute value of a matrix $M$ is defined by the unique positive square root
of the positive matrix $M^\dagger M$, that is, $\abs{M}=\sqrt{M^\dagger M}$.
(If $M$ itself is positive then $\abs{M}=M$.)
The spectrum of $\abs{M}$, called the set of singular values of $M$,
consists of the
absolute values of the (generally complex) eigenvalues of $M$.}
the latter is called trace-norm, $\norm{M}_{\tr}=\tr\sqrt{M^\dagger M}$.
It turns out that if we simply take the difference of these two traces
then we get an entanglement measure,
which is called \emph{negativity} 
\begin{equation}
\label{eq:neg}
  N(\varrho) = \norm{\varrho^{\transp_1}}_{\tr} - 1.
\end{equation}
The negativity is convex (\ref{eq:meas.conv}) and non-increasing on average (\ref{eq:meas.average})
hence entanglement monotone.
Moreover, what its greatest advantage is,
it is easy to calculate because there is no need of optimization, contrary to convex roof measures. 
Unfortunately, since the positivity of the partial transpose is only a necessary criterion of separability in general (\ref{eq:critPeres}),
there are entangled states of zero negativity, hence only weak discriminance (\ref{eq:measdiscr.weak}) holds for the negativity.
Because of (\ref{eq:critPeres}), however, the strong discriminance (\ref{eq:measdiscr.strong}) holds for qubit-qubit or qubit-qutrit systems.
It can easily be shown that the spectrum of the partial transposed density matrix, hence also the negativity,
is invariant under the action of the LU-group $\LieGrp{U}(\mathcal{H}_1)\times\LieGrp{U}(\mathcal{H}_2)$.

There are several other measures of entanglement \cite{BengtssonZyczkowski,Horodecki4}.
For the sake of completeness, we just make mention of some noteworthy ones, without further use.

For example there are the operational measures \emph{entanglement cost} and \emph{distillable entanglement}
\cite{BennettetalMixedStates,RainsDistEnt,PlenioVirmaniEntMeas},
which are in connection with entanglement manipulating LOCC protocols.
The distillable entanglement $E_\text{D}$ is in connection with the distillation of $k$ copies of Bell states out from the $m$ copies of a given state $\varrho$.
In the limit of $m\to\infty$, the ratio $r=k/m$ is characteristic of the state and the distillation protocol used.
What is characteristic of only the state $\varrho$ is the supremum of the $k/m$ ratios with respect to all distillation protocols,
which is the distillable entanglement
\begin{subequations}
\begin{equation}
\label{eq:EntDist}
E_\text{D}(\varrho) = 
\sup\Bigset{r}{\lim_{m\to\infty}
\Bigl( \inf_\Lambda\bignorm{\Lambda(\varrho^{\otimes m}) - (\cket{\text{B}}\bra{\text{B}})^{\otimes mr} }_{\tr} \Bigr)=0}.
\end{equation}
The entanglement cost $E_\text{C}$ is defined via the dual approach, 
that is, we want to obtain $m$ copies of a given state $\varrho$ by the expending of $k$ copies of Bell states.
Again, the ratio $r=k/m$ is characteristic of the state and the protocol used.
What is characteristic of only the state $\varrho$ is the infimum of the $k/m$ ratios with respect to all distillation protocols,
which is the entanglement cost
\begin{equation}
\label{eq:EntCost}
E_\text{C}(\varrho) = 
\inf\Bigset{r}{\lim_{m\to\infty}
\Bigl( \inf_\Lambda\bignorm{\Lambda\bigl((\cket{\text{B}}\bra{\text{B}})^{\otimes mr}\bigr) - \varrho^{\otimes m} }_{\tr} \Bigr)=0}.
\end{equation}
Finding optimal LOCC protocols for these purposes
makes the evaluation of the distillable entanglement and entanglement cost a very hard problem.
Both of the distillable entanglement and the entanglement cost are non-increasing on average under LOCC (\ref{eq:meas.average})
and normalized (\ref{eq:measnorm}).
Since there are bound-entangled states, 
only the weak discriminance (\ref{eq:measdiscr.weak}) holds for the distillalble entanglement,
while the discriminance of the entanglement cost is not known.

Another interesting and important measure is the \emph{squashed entanglement},
an additive entanglement monotone (\ref{eq:meas.average})-(\ref{eq:meas.conv}) 
with good asymptotic properties \cite{ChristandlWinterSquashed,BrandaoChristandlYardFaithfulSquashed,AlickiFannesContQMutInfSquashedAsympt},
\begin{equation}
\label{eq:squashedEnt}
E_\text{S}(\varrho_{12}) = 
\inf_{\varrho_{123}}\frac12\bigl( S(\varrho_{13}) + S(\varrho_{23}) - S(\varrho_3)- S(\varrho_{123})  \bigr),
\end{equation}
\end{subequations}
where the optimization takes place on all extended $\varrho_{123}$  states
from which the measured state can be reduced, that is, $\tr_3\varrho_{123}=\varrho_{12}$,
resulting in an especially hard optimization problem.
It is not known whether the discriminance property (\ref{eq:measdiscr.strong}) holds for this measure.

There are also geometrical measures of entanglement,
which are in connection with distances and distinguishability measures in the space of states \cite{BengtssonZyczkowski,Horodecki4}.

\subsection{State vectors of two qubits} 
\label{subsec:QM.EntMeas.2QBPure}

Now, consider the simplest composite system,
which is the system of two qubits, $\tpl{d}=(2,2)$.
A two-qubit state-vector $\cket{\psi}\in\mathcal{H}$ is expressed in the computational basis as
\begin{equation*}
\cket{\psi}=\sum_{i,j=0}^1\psi^{ij}\cket{i}\otimes\cket{j}.
\end{equation*}
Here we have only two Schmidt coefficients,
from which there is only one independent,
so all non-local properties are characterized by only one real parameter.
We use here another quantity, which is more convenient than the Schmidt coefficients
or even the LU-invariant $I_2(\psi)$ given in (\ref{eq:2canonPureLUinvs}).
The reduced states are mixed one-qubit states.
Their spectra can be expressed in terms of the concurrence,
as was seen in~(\ref{eq:conc2qubit}).
So we define the \emph{concurrence for a two-qubit pure state}
as the concurrence (\ref{eq:conc2}), (\ref{eq:qbconcsq}) of the reduced state
\begin{equation}
\label{eq:2QBpureConc}
c(\psi)=C(\tr_2\cket{\psi}\bra{\psi}).
\end{equation}
Now the~(\ref{eq:entanglement}) entanglement of $\cket{\psi}$ is
\begin{equation}
\label{eq:scalSc}
s(\psi)=\mathcal{S}\bigl(c(\psi)\bigr),
\end{equation}
where
\begin{equation}
\label{eq:calS}
\mathcal{S}(c)=h\left(\frac{1}{2}\Bigl(1+\sqrt{1-c^2}\Bigr)\right)
\end{equation}
with the binary entropy function $h(x)$ given in~(\ref{eq:binentr}).
Note that 
the entanglement $s(\psi)$ and the concurrence $c(\psi)$
are both good measures of entanglement in the sense of section~\ref{subsec:QM.EntMeas.EntMeas}.
Here we also see that for two qubits, they
are related by the monotone increasing function $\mathcal{S}(c)$.

Since we have that the concurrence-squared $C^2$ of a one-qubit mixed state is just four times its determinant,
see in (\ref{eq:conc2qubit}),
we have $c^2(\psi)=4\det(\tr_2\cket{\psi}\bra{\psi})$,
and $c(\psi)$ is just two times the usual determinant of $\psi$ regarded as a \emph{matrix}, that is,
\begin{equation}
\label{eq:2QBpureConcDet}
c(\psi)=2\abs{\det\psi}.
\end{equation}
The determinant can be expressed in terms of the antisymmetric tensor $\varepsilon$, see in (\ref{eq:epsilon}),
as
\begin{equation}
\label{eq:2QBpureConcEps}
\det\psi = \frac12 \varepsilon_{ii'}\varepsilon_{jj'}\psi^{ij}\psi^{i'j'},
\end{equation}
which shows that $\det\psi$ is invariant under local $\LieGrp{SL}(2,\field{C})$ transformations.
This is also a consequence of the (\ref{eq:epstraf}) transformation property of $\varepsilon$.
Carrying out the sums, we get the expression for the determinant
\begin{equation*}
\det\psi = \psi^{00}\psi^{11}-\psi^{01}\psi^{10},
\end{equation*}
which shows, that $\det\psi$ is a permutation-invariant quantity,
because of the Schmidt decomposition.
On the other hand, the local concurrence $c(\psi)$ (as well as the local von Neumann entropy $s(\psi)$) is then
an $\LieGrp{U}(1)\times\LieGrp{SL}(2,\field{C})\times\LieGrp{SL}(2,\field{C})$-invariant.
The form (\ref{eq:2QBpureConcEps}) suggests that
$c(\psi)$ can also be written by the use of the spin flip (\ref{eq:sflipPsi}) as
\begin{equation}
\label{eq:2QBpureConcSF}
c(\psi)=\abs{\bracket{\tilde{\psi}\vert\psi}},
\end{equation}
where $\bra{\tilde{\psi}}
=\varepsilon\otimes\varepsilon\cket{\psi}$.

The $C$ concurrence (\ref{eq:conc2}) is normalized, 
so $0\leq c(\psi)\leq1$,
and the LOCC classes of pure states of this system
are labelled by this one continuous parameter.
On the other hand, 
$c(\psi)=0$ if and only if $\rk\psi=1$ (the state is separable),
and we have two SLOCC classes, the set of separable and entangled states,
$\rk\psi=2$ for that.
If we relax the normalization condition again, then we have the SLOCC classes
\begin{itemize}
\item $\mathcal{V}_\text{Null}$ (Class Null): The zero vector of $\mathcal{H}$.
\item $\mathcal{V}_{1|2}$ (Class $1|2$): These non-zero vectors are separable, which are of the form
$\cket{\psi_1}\otimes\cket{\psi_2}$.
\item $\mathcal{V}_{12}$ (Class $12$): All the other vectors.
\end{itemize}
Formally speaking,
these classes define disjoint, LGL-invariant subsets of $\mathcal{H}$,
and cover $\mathcal{H}$ entirely,
$\mathcal{H}=
\mathcal{V}_\text{Null}
\cup\mathcal{V}_{1|2}
\cup\mathcal{V}_{12}$.
Except $\mathcal{V}_\text{Null}$, these classes are not closed.

For any $\cket{\psi}\in\mathcal{H}$,
it can be determined to which class $\cket{\psi}$ belongs
by the vanishing of the norm
\begin{subequations}
\label{eq:pureLUinvs2}
\begin{equation}
\label{eq:pureLUinvs2.n}
n(\psi)= \norm{\psi}^2,
\end{equation}
and the local entropies
\begin{equation}
\label{eq:pureLUinvs2.sa}
c^2_a(\psi)= C^2(\pi_a) = 4 \det \pi_a.
\end{equation}
\end{subequations}
Here we use the concurrence-squared~(\ref{eq:conc2}),~(\ref{eq:conc2qubit})
although every entropy does the job,
since they vanish only for pure density matrices.
These are in general LU-invariant quantities,
(which is $\LieGrp{U}(2)^{\times2}$ in this case,)
moreover, $n$ is invariant under the larger group $\LieGrp{U}(4)$,
and $c^2$ under $\bigl[\LieGrp{U}(1)\times\LieGrp{SL}(2,\field{C})\bigr]^{\times2}
\isom\LieGrp{U}(1)\times\LieGrp{SL}(2,\field{C})^{\times2}$.
Then the SLOCC classes of pure two-qubit states
can be determined by the vanishing of these quantities
in the way which can be seen in table~\ref{tab:SLOCC2Pure}.
\begin{table}
\begin{tabu}{X[c]||X[c]|X[c]}
\hline
Class                     & $n(\psi)$  & $c(\psi)$  \\
\hline
\hline
$\mathcal{V}_\text{Null}$ & $=0$       & $=0$          \\
\hline
$\mathcal{V}_{1|2}$       & $>0$       & $=0$          \\
\hline
$\mathcal{V}_{12}$        & $>0$       & $>0$          \\
\hline
\end{tabu}
\bigskip
\caption{SLOCC classes of two-qubit state vectors
identified by the vanishing of LU-invariants~(\ref{eq:pureLUinvs2}).}
\label{tab:SLOCC2Pure}
\end{table}
Note that these quantities are entanglement-monotones~(\ref{eq:averagePure}):
$n$ trivially and $c^2$ by the reasoning after (\ref{eq:entanglement}).

\subsection{Mixed states of two qubits} 
\label{subsec:QM.EntMeas.2QBMix}

For $\varrho\in\mathcal{D}(\mathcal{H}_1\otimes\mathcal{H}_2)$ mixed states, 
a celebrated result is that the minimization in the formula of the entanglement of formation
can be carried out explicitly for qubits.
The main point is that the pure state concurrences $c(\psi_j)$ are the same
for the $\psi_j$s of the optimal decomposition,
so the minimization of $c$ gives the minimization of $s$
in the formula of (\ref{eq:scalSc}), that is,
\begin{equation}
\cnvroof{s}(\varrho)=\mathcal{S}\bigl(\cnvroof{c}(\varrho)\bigr).
\end{equation}
The minimization in the calculation of $\cnvroof{c}(\varrho)$ can be carried out explicitly
resulting in the so called \emph{Wootters concurrence}
\cite{HillWoottersConc,WoottersConc} 
\begin{equation}
\label{eq:WConc}
\cnvroof{c}(\varrho)=
\bigl(\lambda^\downarrow_1-\lambda^\downarrow_2-\lambda^\downarrow_3-\lambda^\downarrow_4\bigr)^+,  
\end{equation}
where $+$ in the superscript means the positive part%
\footnote{That is, $x^+=\max\{0,x\}$.}
and $\lambda^\downarrow_i$s 
are the decreasingly ordered eigenvalues 
of the positive matrix $\sqrt{\sqrt{\varrho}\tilde{\varrho}\sqrt{\varrho}}$,
written with the spin-flip $\tilde{\varrho}=(\varepsilon\otimes\varepsilon\varrho\varepsilon^\dagger\otimes\varepsilon^\dagger)^*$.
These eigenvalues are the same as the square root of the eigenvalues of the non-hermitian matrix $\varrho\tilde{\varrho}$.
The latter ones are more easy to calculate.
It can be illustrative to check that for pure states, 
(\ref{eq:WConc}) gives back the pure-state concurrence (\ref{eq:2QBpureConc}),
that is, $\cnvroof{c}(\cket{\psi}\bra{\psi})=\abs{\bracket{\tilde{\psi}\vert\psi}}=c(\psi)$, 
although this holds generally for convex roofs (\ref{eq:cnvroofpure}).
As was mentioned before,
the vanishing of $\cnvroof{s}(\varrho)$ (or, that of $\cnvroof{c}(\varrho)$, equivalently) 
is necessary and sufficient condition for separability (\ref{eq:convroof.discr}).
On the other hand, 
it is easy to prove that the Wootters concurrence is invariant under the action of
 $\LieGrp{U}(1)\times\LieGrp{SL}(2,\field{C})\times\LieGrp{SL}(2,\field{C})$.

There is another exceptional property of two-qubit mixed states,
which is related to the partial transpose,
namely, the partially transposed two-qubit density matrix can have only one negative eigenvalue
\cite{SanperaetalLocalDescQInsep}.
For the negativity (\ref{eq:neg}) this gives the formula
\begin{equation}
\label{eq:neg2QB}
  N(\varrho) = \bigl(-2 \min\Spect\varrho^{t_2} \bigr)^+,
\end{equation}
using again the positive part function as in the Wootters concurrence.
Note that in the two-qubit case
the positive partial transpose is necessary and sufficient criterion for separability (\ref{eq:critPeres6}),
so is the vanishing of the negativity.
In other words, the strong discriminance (\ref{eq:measdiscr.strong}) holds for the negativity for two-qubit systems.
On the other hand, negativity in this case has also a geometric meaning.
The noisy state
$1/(1+x)\varrho + x/(1+x) \frac12\Id\otimes\frac12\Id$
has positive partial transpose, therefore it is separable,
if and only if $2N(\varrho)\leq x$,
so $2N(\varrho)$ is the minimal relative weight of white noise
needed to wash out the entanglement in the two-qubit case \cite{VerstraeteMaxEnd2QBMix}.

\subsection{State vectors of three qubits}
\label{subsec:QM.EntMeas.3QBPure}

We have seen that the reason for the simplicity of the structure of entanglement of pure states for bipartite systems
was the existence of the Schmidt decomposition (\ref{eq:SchmidtDecomp}).
It is easy to see that pure states of multipartite systems ($n\geq3$) do not admit \emph{the usual form} of Schmidt decomposition in general.
If the state vector 
is in the straightforwardly generalized Schmidt form
$\cket{\psi}=\sum_{i=1}^{d_\text{min}}\sqrt{\eta_i}\cket{i}\otimes\cket{i}\otimes\dots\otimes\cket{i}$,
then the states of all composite subsystems are separable ones (\ref{eq:sepPureDecomp}),
although pure states with entangled bipartite subsystems can easily be constructed even in the case of three qubits.
Finding generalized Schmidt decompositions, that is, LU-canonical forms for systems of more-than-two subsystems
of arbitrary dimensional Hilbert spaces
is a difficult problem, which has not been carried out yet.
We note here that this problem is solved for the case of three qubits,
in which an LU-canonical form parametrized by six real parameters 
is obtained \cite{AcinetalGenSchmidt3QB,Acinetal3QBPureCanon},
but this particular form can not be generalized in a straightforward manner.

So, we turn to the simplest system which is not bipartite, the system of three qubits.
We have the Hilbert space $\mathcal{H}\equiv\mathcal{H}_{123}=\mathcal{H}_1\otimes\mathcal{H}_2\otimes\mathcal{H}_3$
with $\tpl{d}=(2,2,2)$ local dimensions.
Here we introduce a convention being very convenient for the tripartite case.
The letters $a$, $b$ and $c$ are variables
taking their values in the set of labels $L=\{1,2,3\}$.
When these variables appear in a formula,
they form a partition of $\{1,2,3\}$,
so they take always different values
and \emph{the formula is understood for all the different values of these variables automatically.}
Although, sometimes a formula is symmetric under the interchange of two such variables
in which case we keep only one of the identical formulas.

Let the three-qubit state vector $\cket{\psi}\in\mathcal{H}$ be expressed 
in the computational basis $\cket{ijk}=\cket{i}\otimes\cket{j}\otimes\cket{k}$ as
\begin{equation*}
\cket{\psi}=\sum_{i,j,k=0}^1\psi^{ijk}\cket{ijk}.
\end{equation*}
We also have the pure state $\pi=\cket{\psi}\bra{\psi}\in\mathcal{P}(\mathcal{H})$,
and in this tripartite system we have bipartite and singlepartite subsystems, 
for which we have the density matrices 
$\pi_{bc}=\tr_a\pi\in\mathcal{D}(\mathcal{H}_{bc})$ and
$\pi_a=\tr_{bc}\pi\in\mathcal{D}(\mathcal{H}_a)$.
Now, how to characterize the entanglement in this system and in its subsystems?
We have, for example, the concurrence (\ref{eq:conc2}) of singlepartite subsystems
$C(\pi_a)$
measuring the entanglement of the subsystem $a$ with the rest of the system, which is $bc$ in this case.
On the other hand,
thanks to the Schmidt decomposition (\ref{eq:SchmidtDecompComp}),
all the reduced states of a three-qubit state are at most of rank two,
which makes the calculation
of the $\cnvroof{c}(\pi_{bc})$ Wootters concurrences (\ref{eq:WConc})
of the bipartite reduced states possible
in a closed form \cite{CKWThreetangle} (see in section \ref{subsec:ThreeQB.Pure.WConc}).
Moreover, these are bounded from above by the concurrence of the one-qubit subsystem
by the so called \emph{Coffmann-Kundu-Wootters inequality} as follows
\begin{equation}
\label{eq:CKWmonogamy}
 {\cnvroof{c}}^2(\pi_{ab})
+{\cnvroof{c}}^2(\pi_{ac})\leq
C^2(\pi_a).
\end{equation}
This means that there is a restriction on the entanglement of 
the subsystem $a$ with subsystems $b$ and $c$
by its entanglement with the subsystem $bc$,
which is called the \emph{monogamy}
\footnote{For a recent introduction to the monogamy,
 see \cite{KimGourSandersMonogamy}.}
of the concurrence \cite{CKWThreetangle}.
For example, if $a$ is maximally entangled with $b$, 
then it can not be entangled (neither classically correlated) with $c$,
resembling the situation in a marriage, after which this relation of entanglement is named.
This is an entirely quantum feature,
there is no such restriction on correlations in classical systems.
This makes the quantum cryptography 
essentially different from its classical counterpart.
The generalization of the monogamy relation proved to be true for all $n$-qubit systems \cite{OsborneVerstraeteMonogamy},
which can be written as
\begin{equation}
\label{eq:monogamynQB}
\sum_{b\neq a} {\cnvroof{c}}^2(\pi_{ab}) \leq C^2(\pi_a).
\end{equation}
However, it is known that the concurrence is not monogamous for subsystems of higher than two dimensions
\cite{OuNonMonogamous}.

It is interesting to find states which are somehow extremal in the sense of (\ref{eq:CKWmonogamy}).
It can be checked that for the W-state \cite{DurVidalCiracSLOCC3QB}
\begin{subequations}
\label{eq:WGHZ}
\begin{equation}
\label{eq:W}
\cket{\text{W}}=\frac{1}{\sqrt{3}}\bigl(\cket{100}+\cket{010}+\cket{001}\bigr)
\end{equation}
$C^2(\pi_a)=2(2/3)^2$, while ${\cnvroof{c}}^2(\pi_{ab})={\cnvroof{c}}^2(\pi_{ac})=(2/3)^2$,
hence the inequality (\ref{eq:CKWmonogamy}) is saturated,
meaning that
all the entanglement between the subsystems $a$ and $bc$ is shared in the $ab$ and $ac$ subsystems equally.
The other extremal case is that of the 
Greenberger-Horne-Zeilinger state \cite{GHZBeyondBell}
\begin{equation}
\label{eq:GHZ}
\cket{\text{GHZ}}=\frac{1}{\sqrt{2}}\bigl(\cket{000}+\cket{111}\bigr).
\end{equation}
\end{subequations}
This state is maximally entangled in the sense that its singlepartite subsystems are maximally mixed,
$C^2(\pi_a)=1$, 
while its bipartite subsystems are separable
${\cnvroof{c}}^2(\pi_{ab})={\cnvroof{c}}^2(\pi_{ac})=0$.
Hence the difference between the two sides of inequality (\ref{eq:CKWmonogamy}) is maximal for $\cket{\text{GHZ}}$,
meaning that subsystem $a$ is entangled with $bc$,
but it is not entangled with $b$ or with $c$ individually.
This interesting distribution of entanglement is characterized by 
the difference between the two sides of the inequality~(\ref{eq:CKWmonogamy}),
which is called \emph{residual tangle}, or \emph{three-tangle} $\tau(\psi)$
\begin{equation}
\label{eq:CKW}
C^2(\pi_a)=
 {\cnvroof{c}}^2(\pi_{ab})
+{\cnvroof{c}}^2(\pi_{ac})+\tau(\psi).
\end{equation}
An important finding \cite{DurVidalCiracSLOCC3QB} is
that $\tau(\psi)$ is an entanglement monotone (\ref{eq:averagePure}),
so all terms in the above equality are measures of the amount of entanglement.
This means that in this three-qubit case
there are two kinds of entanglement, that is,
(bipartite) entanglement which is shared among pairs of qubits,
and (tripartite) entanglement which can not be seen in the two-qubit subsystems,
although it is present in the whole three-qubit system.


The explicit form of the three-tangle $\tau(\psi)$ is also noteworthy.
It is given by Cayley's $(2,2,2)$ hyperdeterminant $\Det\psi$ 
\cite{CayleyHDet,GelfandetalDiscriminants,CKWThreetangle} as
\begin{equation}
\label{eq:tau}
\tau(\psi)=4\abs{\Det\psi},
\end{equation}
where
\begin{equation}
\label{eq:HDet}
\Det\psi=-\frac12
\varepsilon_{ii'}\varepsilon_{jj'}
\varepsilon_{kk'}\varepsilon_{ll'}
\varepsilon_{mm'}\varepsilon_{nn'}
\psi^{ikl}\psi^{jk'l'}\psi^{i'mn}\psi^{j'm'n'}
\end{equation}
with $\varepsilon$ given in (\ref{eq:epsilon}).
This writing shows that $\Det\psi$ is invariant under local $\LieGrp{SL}(2,\field{C})$ transformations,
thanks to (\ref{eq:epstraf}).
Carrying out the sums, we get the expression for the hyperdeterminant 
{\setlength{\mathindent}{0.5\mathindent}
\begin{equation*}
\begin{split}
\Det\psi 
  &=\psi^{000}\psi^{111}\psi^{000}\psi^{111}+
     \psi^{110}\psi^{001}\psi^{110}\psi^{001}+
     \psi^{101}\psi^{010}\psi^{101}\psi^{010}+
     \psi^{011}\psi^{100}\psi^{011}\psi^{100}\\
 &-2\begin{aligned}[t]
    \bigl( \psi^{000}\psi^{111}\psi^{110}\psi^{001}+
     \psi^{000}\psi^{111}\psi^{101}\psi^{010}+
    &\psi^{000}\psi^{111}\psi^{011}\psi^{100}\\+
     \psi^{110}\psi^{001}\psi^{101}\psi^{010}+
    &\psi^{110}\psi^{001}\psi^{011}\psi^{100}\\+
    &\psi^{101}\psi^{010}\psi^{011}\psi^{100}\bigr)
\end{aligned}\\
&+4 \bigl(\psi^{000}\psi^{110}\psi^{101}\psi^{011}+
     \psi^{111}\psi^{001}\psi^{010}\psi^{100}\bigr),
\end{split}
\end{equation*} }\noindent
which shows that $\Det\psi$ is a permutation-invariant quantity.
In the calculations resulting in the formulas above,
index contractions by $\varepsilon$,
which was already used also in the two-qubit case,
appear at all times and in all places.
We will see in section \ref{sec:ThreeQB.Pure} that
these three-qubit results have a natural treatment in the terms of LSL-covariants.
On the other hand, the construction of
$c^2$ (\ref{eq:2QBpureConcDet}) and $\tau$ (\ref{eq:tau}),
given by the square of the usual $(2,2)$ determinant (\ref{eq:2QBpureConcEps}) and the $(2,2,2)$ hyperdeterminant (\ref{eq:HDet}),
can be generalized to $n$-qubit systems 
by formulating the index contractions with $\varepsilon$ \cite{WongChristensenPotMultipartEntMeas}.
Apart from these, the same structure appears in the invariant comb approach of multiqubit entanglement
\cite{OsterlohSiewertEntMonAntilin,OsterlohSiewertInvComb,EltschkaetalEntMon}.


What can be said about the LOCC and SLOCC classifications of three-qubit pure states?
In \cite{Sudbery3qb},
the following set of algebraically independent LU-invariant homogeneous polynomials is given
for three-qubit state vectors,
\begin{subequations}
\label{eq:3QBcanonPureLUinvs}
\begin{align}
I_0(\psi) &= \tr \pi \equiv \norm{\psi}^2,\\
I_a(\psi) &= \tr \pi_a^2 ,\\
\label{eq:3QBcanonPureLUinvs.I4}
I_4(\psi) &=3\tr(\pi_b\otimes\pi_c)\pi_{bc} - \tr\pi_b^3 - \tr\pi_c^3,\\
I_5(\psi) &= \abs{\Det\psi}^2,
\end{align}
\end{subequations}
(the normalization is again relaxed).
These invariants are sufficient for the labelling of the LU-orbits,
that is, the LOCC classes of three-qubit pure states.
The structure of these invariants are more complicated than that of the invariants (\ref{eq:2canonPureLUinvs})
for two-qubit state vectors.
Here $I_4$ is the Kempe invariant \cite{Kempe3qb}, it is the same for all different $b,c\in\{1,2,3\}$ labels.
Note that the three invariants $I_a$ (together with $I_0$ in the unnormalized case), 
carrying all pieces of information about the local density matrices $\pi_a$,
are not sufficient for the characterization the LOCC classes.
Therefore there are states different only in the invariants $I_4$ and $I_5$,
that is, globally different states which are locally the same.
This is called hidden nonlocality,
and this is a useful resource of quantum cryptography \cite{Kempe3qb}.

While there are infinitely many LOCC classes labelled by six real parameters,
it is a celebrated result that there are SLOCC classes of finite number in this three-qubit case,
which is referred as ``three qubits can be entangled in two inequivalent ways'' \cite{DurVidalCiracSLOCC3QB}. 
More fully, there are $1+1+3+1+1$ three-qubit SLOCC classes,
that is, subsets invariant under LGL transformations:
\begin{itemize}
\item $\mathcal{V}_\text{Null}$ (Class Null): The zero-vector of $\mathcal{H}$.
\item $\mathcal{V}_{1|2|3}$ (Class $1|2|3$): These vectors are fully separable, which are of the form
$\cket{\psi_1}\otimes\cket{\psi_2}\otimes\cket{\psi_3}$.
\item $\mathcal{V}_{a|bc}$ (three biseparable Classes $a|bc$),
for example:
$\cket{\psi_1}\otimes\cket{\psi_{23}}\in\mathcal{V}_{1|23}$, where 
$\cket{\psi_{23}}$ is not separable.
\item $\mathcal{V}_\text{W}$ (Class W): 
This is the first class of tripartite entanglement,
when no subsystem can be separated from the others.
A representative element is the standard W state (\ref{eq:W})
\item $\mathcal{V}_\text{GHZ}$ (Class GHZ): 
This is the second class of tripartite entanglement,
the class of Greenberger-Horne-Zeilinger-type entanglement.
A representative element is the standard GHZ state (\ref{eq:GHZ}).
\end{itemize}
Formally speaking,
these classes define disjoint, LGL-invariant subsets of $\mathcal{H}$,
and cover $\mathcal{H}$ entirely,
$\mathcal{H}=
\mathcal{V}_\text{Null}
\cup\mathcal{V}_{1|2|3}
\cup\mathcal{V}_{1|23}
\cup\mathcal{V}_{2|13}
\cup\mathcal{V}_{3|12}
\cup\mathcal{V}_\text{W}
\cup\mathcal{V}_\text{GHZ}$.
Except $\mathcal{V}_\text{Null}$, these classes are not closed. 

\begin{table}
\begin{tabu}{X[c]||X[c]|X[c]X[c]X[c]|X[c]}
\hline
Class                     & $n(\psi)$  & $c^2_1(\psi)$ & $c^2_2(\psi)$ & $c^2_3(\psi)$ & $\tau(\psi)$ \\
\hline
\hline
$\mathcal{V}_\text{Null}$ & $=0$       & $=0$        & $=0$        & $=0$        & $=0$         \\
\hline
$\mathcal{V}_{1|2|3}$     & $>0$       & $=0$        & $=0$        & $=0$        & $=0$         \\
\hline
$\mathcal{V}_{1|23}$      & $>0$       & $=0$        & $>0$        & $>0$        & $=0$         \\
$\mathcal{V}_{2|13}$      & $>0$       & $>0$        & $=0$        & $>0$        & $=0$         \\
$\mathcal{V}_{3|12}$      & $>0$       & $>0$        & $>0$        & $=0$        & $=0$         \\
\hline
$\mathcal{V}_\text{W}$    & $>0$       & $>0$        & $>0$        & $>0$        & $=0$         \\
$\mathcal{V}_\text{GHZ}$  & $>0$       & $>0$        & $>0$        & $>0$        & $>0$    \\ 
\hline
\end{tabu}
\bigskip
\caption{SLOCC classes of three-qubit state vectors
identified by the vanishing of LU-invariants~(\ref{eq:pureLUinvs3}).}
\label{tab:SLOCC3Pure}
\end{table}

For any $\cket{\psi}\in\mathcal{H}$,
it can be determined to which class $\cket{\psi}$ belongs
by the vanishing of the following quantities:
the norm
\begin{subequations}
\label{eq:pureLUinvs3}
\begin{equation}
\label{eq:pureLUinvs3.n}
n(\psi)= \norm{\psi}^2,
\end{equation}
the local entropies
\begin{equation}
\label{eq:pureLUinvs3.sa}
c^2_a(\psi)= C^2(\pi_a) = 4 \det \pi_a,
\end{equation}
(here we use the concurrence-squared~(\ref{eq:conc2}),~(\ref{eq:conc2qubit})
although every entropy does the job,
since they vanish only for pure density matrices)
and the three-tangle (\ref{eq:tau})
\begin{equation}
\label{eq:pureLUinvs3.tau}
\tau(\psi)=4\abs{\Det\psi}.
\end{equation}
\end{subequations}
All of these quantities are LU-invariant ones,
(which is $\LieGrp{U}(2)^{\times3}$ in this case,)
moreover, $n$ is invariant under the larger group $\LieGrp{U}(8)$,
and $\tau$ under $\bigl[\LieGrp{U}(1)\times\LieGrp{SL}(2,\field{C})\bigr]^{\times3}
\isom\LieGrp{U}(1)\times\LieGrp{SL}(2,\field{C})^{\times3}$.
It follows from the invariance properties and other observations \cite{DurVidalCiracSLOCC3QB}
that the SLOCC classes of pure three-qubit states
can be determined by the vanishing of these quantities
in the way which can be seen in table~\ref{tab:SLOCC3Pure}.
Note that all of these quantities are entanglement-monotones~(\ref{eq:averagePure}):
$n$ trivially, $c_a$ by the reasoning after (\ref{eq:entanglement}),
and the entanglement-monotonicity of $\tau$ is proven in \cite{DurVidalCiracSLOCC3QB}.

We note here that in the tripartite case with local dimensions $\tpl{d}=(2,2,d_3)$ 
the SLOCC classification can also be carried out,
resulting in finite number of LSL-orbits, 
namely, eight and nine for $d_3=3$ and $d_3\geq4$, respectively \cite{Miyake22d}.

\subsection{Mixed states of three qubits}
\label{subsec:QM.EntMeas.3QBMix}

In~\cite{Acinetal3QBMixClass} Ac\'in et.~al.~have investigated
the classification of \emph{mixed} three-qubit states in connection with the pure state SLOCC classes.
They have shown that the tripartite classification scheme given in section \ref{subsec:QM.Ent.NPart}
can be naturally extended.
In their classification, Class 1 of fully entangled states
is divided into two subsets, 
namely the ones of GHZ and W-type entanglement, by the following definitions.
A state is of \emph{W-type} ($\mathcal{D}_{\text{W}}$) if it can be expressed as a mixture of projectors onto $2$-separable and Class W vectors
(therefore $\mathcal{D}_{\text{W}}$ is also a convex set)
and Class GHZ vector is required for a \emph{GHZ-type} mixed state ($\mathcal{D}_{\text{GHZ}}$).
Hence the following holds
\begin{equation}
\label{eq:incl}
\mathcal{D}_\text{$3$-sep}\subset
\mathcal{D}_\text{$2$-sep}\subset
\mathcal{D}_{\text{W}}\subset
\mathcal{D}_{\text{GHZ}}\equiv
\mathcal{D}_\text{$1$-sep}\equiv
\mathcal{D}.
\end{equation}
Let \emph{Class W} be the set $\mathcal{D}_{\text{W}}\setminus\mathcal{D}_{2-\text{sep}}$
and \emph{Class GHZ} be the set $\mathcal{D}_{\text{GHZ}}\setminus\mathcal{D}_{\text{W}}$,
so $\text{Class 1} = \text{Class W}\cup\text{Class GHZ}$.

The key point leading to this classification
was that the new set $\mathcal{D}_{\text{W}}$ introduced above has to be closed.
A different set defined to be the set of states
which can be expressed as a mixture of projectors onto $2$-separable and Class GHZ vectors
would not be closed,
since $\tau(\psi)=0$ characterizes also the Class W vectors 
in the set of $\tau(\psi)\neq0$ Class GHZ vectors.

We note here that in the tripartite case with local dimensions $\tpl{d}=(2,2,d_3)$,
the eight and nine SLOCC classes of pure states (for $d_3=3$ and $d_3\geq4$, respectively)
give rise to a classification for mixed states  \cite{MiyakeVerstraete22dMixClass}
similar to the three-qubit case recalled here.

\subsection{State vectors of four qubits}
\label{subsec:QM.EntMeas.4QBPure}

In the case of four qubits
we have the Hilbert space $\mathcal{H}\equiv\mathcal{H}_{1234}=\mathcal{H}_1\otimes\mathcal{H}_2\otimes\mathcal{H}_3\otimes\mathcal{H}_4$
with $\tpl{d}=(2,2,2,2)$ local dimensions.
Let the four-qubit state vector $\cket{\psi}\in\mathcal{H}$ be expressed
in the computational basis $\cket{ijkl}=\cket{i}\otimes\cket{j}\otimes\cket{k}\otimes\cket{l}$ as
\begin{equation*}
\cket{\psi}=\sum_{i,j,k,l=0}^1\psi^{ijkl}\cket{ijkl}.
\end{equation*}
This case is much more complicated than that of three qubits,
even the SLOCC classes are labelled by continuous parameters.

Let us start with the LSL-invariants, 
which are necessary to characterize the SLOCC classes.
It is known that there are four algebraically independent $\LieGrp{SL}(2,\field{C})^{\times 4}$
invariants \cite{LuqueThibonFourQubit,LevayFourQubitInvGeom} denoted by $H,L,M$ and $D$.
These are quadratic, quartic, quartic and sextic invariants 
of the coefficients ${\psi}^{ijkl}$ respectively.
The invariants $H$, $L$ and $M$ are given by the expressions
\begin{subequations}
\begin{equation}
\label{eq:H}
H(\psi)=\frac12\varepsilon_{ii'}\varepsilon_{jj'}\varepsilon_{kk'}\varepsilon_{ll'}
\psi^{ijkl}\psi^{i'j'k'l'}
=\frac12\bracket{\tilde{\psi}|\psi}
\end{equation}
with the spin-flipped vector
$\bra{\tilde{\psi}}=\varepsilon\otimes\varepsilon\otimes\varepsilon\otimes\varepsilon\cket{\psi}$,
and
\begin{align}
\label{eq:L}
L(\psi)&=
\det\psi_{(12)(34)},\\
\label{eq:M}
M(\psi)&=
\det\psi_{(31)(24)},\\
\label{eq:N}
N(\psi)&=
\det\psi_{(14)(23)},
\end{align}
where $\psi_{(ab)(cd)}$ is a matrix of two indices,
which indices are composed of the binary indices running on subsystems%
\footnote{That is, for example for $\psi^{IJ}_{(14)(23)}$
the capital indices $I=0,1,2,3$ are equivalent to $il=00,01,10,11$ in $\psi^{ijkl}$ respectively,
and so on.}
$ab$ and $cd$,
and then $\det$ is the usual matrix determinant for these $4\times4$ matrices.
The three quantities $L$, $N$ and $M$ above are not linearly independent
because the $L(\psi)+M(\psi)+N(\psi)=0$ equation holds.
However, any two of them are linearly independent.
The sextic generator is
\begin{equation}
D(\psi)=\det B,
\end{equation}
\end{subequations}
given by the $3\times3$ matrix $B$,
which is the coefficient-matrix of a bi-quadratic form
composed of the $\psi^{ijkl}$ coefficients as
\begin{equation*}
\frac12\varepsilon_{jj'}\varepsilon_{kk'} (\psi^{ijkl} x_it_l)  (\psi^{i'j'k'l'} x_{i'}t_{l'})=
\begin{bmatrix}
x_0^2,x_0x_1,x_1^2
\end{bmatrix}
B
\begin{bmatrix}
t_0^2\\
t_0t_1\\
t_1^2
\end{bmatrix}.
\end{equation*}

In the four-qubit case there are continuously many SLOCC classes.
However, there is a classification which concerns not the LSL orbits,
but the orbit-types, 
based on the construction of an LSL canonical form.
In this classification, it turns out that
``four qubits can be entangled in nine different ways'',
up to permutations of the subsystems
\cite{VerstraeteetalSLOCC4QB,ChterentalDjokovicSLOCC4QB}.
Here the different orbit-types are parametrized by complex parameters,
which are in connection with the invariants above.
However, the same value of the invariants is only necessary but not sufficient condition
for two states to belong to the same orbit \cite{ChterentalDjokovicSLOCC4QB}.
A different approach is given in \cite{LiLiSLOCC}.

\section{Summary}
\label{sec:QM.Sum}

We have seen that the structure of entanglement in multipartite systems is rather complex.
We have seen these structures in the examples of some basic systems composed of small number of qubits.
(There are also a small number of other explicit results for few-qutrit systems in the literature.) 
All of these results are coming from ad hoc constructions
based on mathematical coincidences
making it possible to obtain compact useful and manageable formulas.
Unfortunately, these can hardly be generalized, or can not be generalized at all,
simply because the structure itself is very complicated.

In the following chapters we present our results, which are of two kinds.
In chapters \ref{chap:Ferm}, \ref{chap:SepCrit} and \ref{chap:ThreeQB}
we show such ad hoc results of particular quantum systems,
while in chapters \ref{chap:Deg6} and \ref{chap:PartSep}
we show some general results 
working for arbitrary number of subsystems of arbitrary dimensional Hilbert spaces,
with detailed elaboration of the tripartite case.

\chapter{Two-qubit mixed states with fermionic purifications}
\label{chap:Ferm}

As we have seen in section~\ref{subsec:QM.Ent.2Part},
the most plausible method for the quantification of entanglement of mixed states contains an implicit step
which is the minimization over the decompositions in the convex roof construction.
This problem is solved for the case of two-qubit states,
where the result of the minimization can be written in an explicit form,
resulting in the celebrated formula of the Wotters concurrence (\ref{eq:WConc}).
However, this formula is still implicit in another sense:
matrix diagonalization is needed to obtain a formula given by the parameters of the state.
On the other hand, 
matrix diagonalization is needed to obtain the negativity (\ref{eq:neg2QB}) too.

In this chapter, 
we investigate a special $12$-parameter family of two-qubit density matrices \cite{LevayNagyPipekTwoFermions},
for which the Wootters concurrence and the negativity can be expressed in a closed formula
given by the parameters of the state,
and we give a detailed explicit characterization
of the special four-qubit purification of these states.

The material of this chapter covers thesis statement~\ref{statement:ferm}
(page \pageref{statement:ferm}).
\begin{organization}
\item[\ref{sec:Ferm.Rho}]
we present the parametrized family of density matrices we wish to study.
Using suitable local unitary transformations we transform this family to a canonical form.
\item[\ref{sec:Ferm.Meas}]
based on these results 
we calculate the Wootters concurrence 
and the negativity  
(sections~\ref{subsec:Ferm.Meas.Conc} and~\ref{subsec:Ferm.Meas.Neg}). 
We give a formula for the upper and lower bounds of negativity 
for a given concurrence (section~\ref{subsec:Ferm.Meas.ConcNeg}).
The mixeness is also calculated (section~\ref{subsec:Ferm.Meas.Pur}).
\item[\ref{sec:Ferm.Four}]
we consider the special four-qubit purifications of this state.
We analyze the structure of the above quantities
together with further ones characterizing four-qubit entanglement
and discuss how they are related to each other.
In particular we prove that the relevant entanglement measures
associated with the four-qubit state 
satisfy the generalized Coffman-Kundu-Wootters inequality of distributed entanglement. 
For the residual tangle we derive an explicit formula,
containing two from the four algebraically independent four-qubit invariants.
\item[\ref{sec:Ferm.Concl}]
we give a summary and some remarks.
\end{organization}


\section{The density matrix}
\label{sec:Ferm.Rho}

The formula for the Wootters concurrence of two-qubit density matrices, given in (\ref{eq:WConc}),
is written with the eigenvalues of $\sqrt{\sqrt{\varrho}\tilde{\varrho}\sqrt{\varrho}}$,
which are hard to express using the matrix elements of $\varrho$.
However, one can impose some conditions on $\varrho$,
which enables the Wootters concurrence to be calcualted in a closed form.
For example, a two-qubit density matrix reduced from a pure three-qubit state
is at the most of rank two,
which enables the explicit calculation of the Wootters concurrence in
terms of the amplitudes of the original three-qubit state (section \ref{subsec:ThreeQB.Pure.WConc}).
This could be generalized considering two-qubit mixed states reduced from four-qubit pure states.
However, every two-qubit density matrix can be reduced from a pure four-qubit state, 
hence as an extra constraint we impose
an antisymmetry condition on the amplitudes of
state vector 
\begin{subequations}
\label{eq:antipsi}
\begin{equation}
\cket{\psi}=\sum_{i,j,k,l=0}^1\psi^{ijkl}\cket{ijkl}
\in\mathcal{H}=\mathcal{H}_1\otimes\mathcal{H}_2\otimes\mathcal{H}_3\otimes\mathcal{H}_4,
\end{equation}
as
\begin{equation}
\label{eq:anti}
\psi^{ijkl}=-\psi^{klij},
\end{equation}
\end{subequations}
that is, we impose antisymmetry in the first and second \emph{pairs} of indices.

An alternative (and more physical) way is the one of imposing such
constraints on the original Hilbert space $\mathcal{H}\isom\field{C}^{16}$
which renders to have a tensor product structure on one of its six
dimensional subspaces $\mathcal{H}$ of the form
\begin{equation}
\mathcal{H}_{4\wedge4}=(\field{C}^2\otimes \field{C}^2)\wedge (\field{C}^2\otimes \field{C}^2) \subset \mathcal{H},
 \label{eq:tps}
\end{equation}
where $\wedge$ refers to antisymmetrization. 
As we know quantum tensor product structures are observable-induced, 
hence in order to specify this system with a tensor product structure of equation (\ref{eq:tps}) 
we have to specify the experimentally accessible interactions and
measurements that account for the admissible operations we can perform on the system. 
For example we can realize the system 
as a pair of fermions with four single particle states where 
a part of the admissible operations are local unitary transformations of the form
\begin{equation}
\cket{\psi} \qquad\longmapsto\qquad (U\otimes V)\otimes (U\otimes V)\cket{\psi},\qquad 
U,V\in \LieGrp{U}(2). 
\label{eq:local}
\end{equation}
Taken together with equation (\ref{eq:anti}) this transformation rule 
clearly indicates that the $12$ and $34$ subsystems 
form two \emph{indistinguishable} subsystems of fermionic type.
In this sense, the reduced density matrices arising from fermionic states
that are elements of the tensor product structure as shown by equation (\ref{eq:tps})
are called density matrices with \emph{fermionic purifications}.

Let us parametrize the $6$ amplitudes of this normalized four qubit state
$\cket{\psi}$ with the antisymmetry property of equation (\ref{eq:anti}) as
\begin{subequations}
\begin{equation}
\psi^{ijkl}=\frac{1}{2}\bigl(\varepsilon^{ik}A^{jl}+B^{ik}\varepsilon^{jl}\bigr),
\label{eq:4para}
\end{equation}
where $\varepsilon$ is given in (\ref{eq:epsilon}),
and $A$ and $B$ are \emph{symmetric} matrices%
\footnote{
Note that, for example,
$A\in \mathcal{H}_2\otimes\mathcal{H}_4\isom\Lin(\mathcal{H}_4^*\to\mathcal{H}_2)$,
on the other hand, we regard 
$\cc{\varepsilon}\in \mathcal{H}_2\otimes\mathcal{H}_4\isom\Lin(\mathcal{H}_4^*\to\mathcal{H}_2)$,
(the $\varepsilon^{jl}$s are the coefficients of that)
and
$\cc{\sigma_{1,2,3}}\in \mathcal{H}_4^*\otimes\mathcal{H}_4\isom \Lin(\mathcal{H}_4^*\to\mathcal{H}_4^*)$.
Simillarly on the Hilbert spaces $\mathcal{H}_1$ and $\mathcal{H}_3$.}
of the form
\begin{equation}
 \label{eq:szimm}
A = \begin{bmatrix}
z_1-iz_2&-z_3\\
-z_3&-z_1-iz_2
\end{bmatrix}= 
\cc{\varepsilon}(\ve{z}\cc{\boldsymbol{\sigma}}),\qquad
B = \begin{bmatrix}
w_1-iw_2&-w_3\\
-w_3&-w_1-iw_2
\end{bmatrix}=
\cc{\varepsilon}(\ve{w}\cc{\boldsymbol{\sigma}}),
\end{equation}
where $\ve{z},\ve{w}\in \field{C}^3$
and the notation of (\ref{eq:qubit}) is used.
It is straightforward to check that the normalization condition
of the state $\cket{\psi}$ takes the form
\begin{equation}
\label{eq:psinorm}
 \norm{\psi}^2 = \norm{\ve{w}}^2 + \norm{\ve{z}}^2 = 1
\end{equation}
\end{subequations}
The density matrices we wish to study are arising as reduced ones of $\pi=\cket{\psi}\bra{\psi}$ of the form
\begin{equation*}
\varrho := \pi_{12}=\tr_{34}\cket{\psi}\bra{\psi}. 
\end{equation*}
Notice that since the $12$ and $34$ subsystems are
by definition \emph{indistinguishable} we also have
$\varrho=\pi_{12}=\pi_{34}$.

A calculation of the partial trace yields the following explicit form for $\varrho$
\begin{subequations}
\label{eq:rho}
\begin{equation}
\varrho=\frac{1}{4}\bigl( \Id\otimes\Id + \Lambda \bigr),
\end{equation}
where $\Lambda$ is the traceless matrix
\begin{align}
\label{eq:Lambda}
 \Lambda &= \sr{x}\otimes\Id + \Id\otimes\sr{y}
+\sr{w}\otimes\src{z} +\src{w}\otimes\sr{z},\\
\label{eq:xy}
 \ve{x} &= i\ve{w}\times\cc{\ve{w}}, \qquad
 \ve{y} = i\ve{z}\times\cc{\ve{z}}.
\end{align}
\end{subequations}
Notice that $\ve{x}, \ve{y} \in \field{R}^3$, and
$\ve{x}\ve{w}=\ve{x}\cc{\ve{w}}=\ve{y}\ve{z}=\ve{y}\cc{\ve{z}}=0$.
Due to this, and the identities
\begin{equation}
\label{eq:xynorm}
 \norm{\ve{x}}^2 = \norm{\ve{w}}^4 - \abs{\ve{w}^2}^2,\qquad
 \norm{\ve{y}}^2 = \norm{\ve{z}}^4 - \abs{\ve{z}^2}^2,
\end{equation}
it can be checked that $\Lambda$ satisfies the identity
\begin{equation}
 \label{eq:Lambda2}
 \Lambda^2 = \bigl(1-\eta^2\bigr) \Id\otimes\Id,
\end{equation}
where
\begin{equation}
\label{eq:ceta}
 \eta \equiv\abs{ \ve{w}^2-\ve{z}^2 }, \qquad
 0 \leq \eta \leq 1.
\end{equation}
Notice that the quantity $\eta$ is just the Schliemann-measure of entanglement 
for two-fermion systems with $4$ single particle states  
\cite{SchliemannetalTwoFermions,LevayNagyPipekTwoFermions}.
Indeed the density matrix $\varrho$ (with a somewhat different parametrization) 
can alternatively be obtained as a reduced one 
arising from such fermionic systems 
after a convenient global $\LieGrp{U}(4)$ (see in \cite{LevayNagyPipekTwoFermions}), 
and a local $\LieGrp{U}(2)\times\LieGrp{U}(2)$ 
transformation of the form $\Id\otimes {\sigma}_2$.

Now by employing suitable local unitary transformations we would
like to obtain a canonical form for $\varrho$. 
According to equation (\ref{eq:local}) 
the transformations operating on subsystems $12$ or equivalently $34$ 
are of the form $U\otimes V\in \LieGrp{U}(2)\times \LieGrp{U}(2)$.

As a first step let us consider the unitary transformation
\begin{equation*}
 U(\ve{u}, \alpha)= e^{-i\frac{\alpha}{2}\sr{u}}=
 \cos\left(\alpha/2\right)\Id - i\sin\left(\alpha/2\right)\sr{u},
\end{equation*}
which is a spin-$1/2$ representation of an $\LieGrp{SU}(2)$ rotation 
around the axis $\ve{u} \in S^2\subset \field{R}^3$,
($\norm{\ve{u}}^2 = 1$) 
with an angle $\alpha$.
A particular rotation from $\ve{x}$ to $\ve{x}'$ ($\ve{x}'\neq -\ve{x}$) can be written in the form
$U(\ve{u}, \alpha)\sr{x} U(\ve{u}, \alpha)^\dagger = \sr{x'}$
with the parameters 
$\ve{u} = (\ve{x}\times \ve{x}')/(\norm{\ve{x}\times \ve{x}'})$
and
$\cos\alpha = \ve{x}\ve{x}'/\ve{x}\ve{x}$
leading to the transformation operator
\begin{equation*}
 U(\ve{u}, \alpha) = \frac{1}{\sqrt{2\norm{\ve{x}}^2(\norm{\ve{x}}^2 + \ve{x}'\ve{x})}}
\left(\norm{\ve{x}}^2 \Id + (\sr{x})(\sr{x'})\right).\\
\end{equation*}
Employing this, we can rotate the vector $\ve{x}$ and $\ve{y}$ to the direction of the coordinate axis $z$ as 
$U_\ve{x}\sr{x}U_\ve{x}^\dagger = r\sigma_3$,
$V_\ve{y}\sr{y}V_\ve{y}^\dagger = s\sigma_3$.
In this case the matrices are of the form
\begin{subequations}
\begin{equation}
\label{eq:UxVy}
U_\ve{x}:=\frac{1}{\sqrt{2r(r+x_3)}}\bigl(r\Id + \sigma_3(\sr{x})\bigr),\qquad
V_\ve{y}:=\frac{1}{\sqrt{2s(s+y_3)}}\bigl(s\Id + \sigma_3(\sr{y})\bigr),
\end{equation}
with the parameters
\begin{equation}
\label{eq:rs}
 r = \norm{\ve{x}},\qquad
 s = \norm{\ve{y}}.
\end{equation}
\end{subequations}
Moreover, it can be checked that due to the
special form of $\ve{x}$ and $\ve{y}$, the
transformations above rotate the third components of $\ve{w}$ and $\ve{z}$ into zero, 
\begin{subequations}
\begin{align}
\label{eq:wuj}
U_\ve{x}\sr{w}U_\ve{x}^\dagger &= \sr{w'},&\qquad
\ve{w}' = \ve{w} - \frac{\ve{w}\ve{x}'}{r^2+\ve{x}\ve{x}'}(\ve{x} + \ve{x}')
 &= \begin{bmatrix}
w_1-\frac{x_1}{r+x_3}w_3\\
w_2-\frac{x_2}{r+x_3}w_3\\
0
\end{bmatrix},\\
\label{eq:zuj}
V_\ve{y}\sr{z}V_\ve{y}^\dagger &= \sr{z'},&\qquad
 \ve{z}' = \ve{z} - \frac{\ve{z}\ve{y}'}{s^2+\ve{y}\ve{y}'}(\ve{y} + \ve{y}')
 &= \begin{bmatrix}
z_1-\frac{y_1}{s+y_3}z_3\\
z_2-\frac{y_2}{s+y_3}z_3\\
0
\end{bmatrix}.
\end{align}
\end{subequations}

Every $U \in \LieGrp{U}(2)$ unitary transformation 
acting on an arbitrary $\ve{a}\in \field{C}^3$ 
as $U \sr{\ve{a}} U^\dagger = \sr{\ve{a'}} $
preserves $\ve{a}^2$ and $\norm{\ve{a}}^2$,
since $2\ve{a}^2 = \tr(\sr{\ve{a}})(\sr{\ve{a}})$, 
and $2\norm{\ve{a}}^2 = \tr(\sr{\ve{a}})^\dagger(\sr{\ve{a}})$.
Hence
\begin{equation}
\label{eq:wzinvar}
\ve{w}'^2 = \ve{w}^2,\qquad
\ve{z}'^2 = \ve{z}^2,\qquad
\norm{\ve{w}'}^2 = \norm{\ve{w}}^2,\qquad
\norm{\ve{z}'}^2 = \norm{\ve{z}}^2,\qquad
\eta' = \eta
\end{equation}
are invariant under local $\LieGrp{U}(2)\times \LieGrp{U}(2)$ transformations.%
\footnote{The entanglement measure  $\eta$ is also invariant 
under the larger group of $\LieGrp{U}(4)$ transformations.}

Now by employing the local $\LieGrp{U}(2)\times \LieGrp{U}(2)$ transformations
$U_\ve{x}\otimes V_\ve{y}$, the density matrix can be cast to
the form,
\begin{subequations}
\begin{equation}
\label{eq:rhotraf}
\varrho' =
\left(U_\ve{x}\otimes V_\ve{y}\right) \varrho \left(U_\ve{x}^\dagger\otimes V_\ve{y}^\dagger\right)
= \frac{1}{4}\left(\Id + \Lambda' \right),
\end{equation}
where
\begin{equation}
\label{eq:Lamtraf}
 \Lambda' = r\sigma_3\otimes\Id + \Id\otimes s\sigma_3
+\sr{w'}\otimes\src{z'} +\src{w'}\otimes\sr{z'}
\end{equation}
has the special form%
\footnote{For better readability, zeroes in matrices are often denoted with dots.}
\begin{equation}
\label{eq:Xshape}
 \Lambda' = \begin{bmatrix}
\alpha_3 & \cdot & \cdot & \alpha_1-i\alpha_2\\
\cdot & \beta_3 & \beta_1-i\beta_2 & \cdot\\
\cdot & \beta_1+i\beta_2 & -\beta_3 & \cdot\\
\alpha_1+i\alpha_2 & \cdot & \cdot & -\alpha_3
\end{bmatrix}
\end{equation}
with the quantities defined as
\begin{align}
\label{eq:ab}
\vs{\alpha} &= \begin{bmatrix}
 \xi_1 - \xi_2 \\
 \zeta_1 + \zeta_2 \\
 r + s
\end{bmatrix}\in\field{R}^3,& \quad
\vs{\beta}  &= \begin{bmatrix}
 \xi_1 + \xi_2 \\
 \zeta_1 - \zeta_2 \\
 r - s
\end{bmatrix}\in\field{R}^3,\\
\label{eq:abe}
\xi_1 &= w'_1 \cc{z'}_1 + \cc{w'}_1 z'_1,& \quad
\zeta_1 &= w'_2 \cc{z'}_1 + \cc{w'}_2 z'_1, \\
\label{eq:abv}
\xi_2 &= w'_2 \cc{z'}_2 + \cc{w'}_2 z'_2,& \quad
\zeta_2 &= w'_1 \cc{z'}_2 + \cc{w'}_1 z'_2.
\end{align}
\end{subequations}
Thanks to the special X-shape of $\Lambda'$,
we can regard $\varrho'$ as the direct sum of two $2\times 2$ blocks
$1/4(\Id+\sr{\vs{\alpha}})$ and $1/4(\Id+\sr{\vs{\beta}})$.
Having obtained this canonical form of the reduced density matrix
$\varrho$, now we turn to the calculation of the corresponding
entanglement measures.

\section{Measures of entanglement for the density matrix}
\label{sec:Ferm.Meas}

In this section we calculate the Wootters concurrence (\ref{eq:WConc})
and the negativity (\ref{eq:neg2QB})
of this density matrix $\varrho$.
Both of these measures are invariant under the local unitary transformations
which we have used to obtain the canonical form $\varrho'$,
therefore we can calculate them for the canonical form.
Moreover, the steps leading from $\varrho'$ to the measures
leave the X-shape of $\varrho'$ invariant,
so we have to calculate eigenvalues of matrices which are of size $2\times2$ only.
This keeps the symbolical calculations in a relatively easy way.

\subsection{Wootters concurrence}
\label{subsec:Ferm.Meas.Conc}

Let us start with the Wootters concurrence (\ref{eq:WConc})
of the density matrix $\varrho$.
As we have mention in section \ref{subsec:QM.EntMeas.2QBMix},
the the Wootters concurrence is invariant under
$\LieGrp{U}(1)\times\LieGrp{SL}(2,\field{C})\times \LieGrp{SL}(2, \field{C})$, 
so we can use the canonical form $\varrho'$
we have obtained in the previous section via using the transformation 
$U_\ve{x}\otimes V_\ve{y}\in \LieGrp{SU}(2)\times \LieGrp{SU}(2)
\subset \LieGrp{U}(1)\times \LieGrp{SL}(2, \field{C})\times \LieGrp{SL}(2, \field{C})$ 
for its calculation.

The matrix $\varrho'\tilde{\varrho'}$ has the same X-shape as $\varrho'$,
with the blocks
$1/16(\tilde{\alpha}_0\Id+\sr{\tilde{\vs{\alpha}}})$ and
$1/16(\tilde{\beta}_0\Id+\sr{\tilde{\vs{\beta}}})$
where $\tilde{\alpha}_\mu$ and $\tilde{\beta}_\nu$ with $\mu,\nu=0,1,2,3$
are quadratic in $\vs{\alpha}$ and $\vs{\beta}$:
\begin{equation}
\label{eq:abtraff}
\tilde{\alpha}_\mu =
\begin{bmatrix}
1+\alpha_1^2+\alpha_2^2-\alpha_3^2\\
2\alpha_1 -i2\alpha_2\alpha_3\\
2\alpha_2 +i2\alpha_3\alpha_1\\
0
\end{bmatrix} \in \field{C}^4,\qquad
\tilde{\beta}_\nu =
\begin{bmatrix}
1+\beta_1^2+\beta_2^2-\beta_3^2\\
2\beta_1 -i2\beta_2\beta_3\\
2\beta_2 +i2\beta_3\beta_1\\
0
\end{bmatrix} \in \field{C}^4.
\end{equation}
The eigenvalues of the blocks
$1/16\bigl(\tilde{\alpha}_0\Id+\sr{\tilde{\vs{\alpha}}}\bigr)$ and
$1/16\bigl(\tilde{\beta}_0\Id+\sr{\tilde{\vs{\beta}}}\bigr)$ are
$1/16\Bigl(\tilde{\alpha}_0 \pm \sqrt{ \tilde{\vs{\alpha}}^2}\Bigr)$ and
$1/16\Bigl(\tilde{\beta}_0 \pm \sqrt{ \tilde{\vs{\beta}}^2}\Bigr)$,
respectively. 
Now, we can express these with the help of the quantities 
$\vs{\alpha}$, $\vs{\beta}$ of (\ref{eq:abtraff}) 
and get the eigenvalues of $\varrho'\tilde{\varrho}'$ in the form
\begin{equation*}
 \Eigv \bigl( \varrho'\tilde{\varrho}'\bigr) =
\Biggl\{
\frac{1}{16}\left(\sqrt{\alpha_1^2+\alpha_2^2} \pm \sqrt{1-\alpha_3^2}\right)^2,
\frac{1}{16}\left(\sqrt{\beta_1^2+\beta_2^2} \pm \sqrt{1-\beta_3^2}\right)^2
\Biggr\}.
\end{equation*}
Now, using equations (\ref{eq:ab})-(\ref{eq:abv}), 
we have to express these as functions of the original quantities 
$\ve{z}^2$, $\ve{w}^2$, $\norm{\ve{z}}^2$ and $\norm{\ve{w}}^2$. 
Straightforward calculation shows that
\begin{subequations}
\label{eq:jajj}
\begin{align}
\label{eq:jajj.1}
\alpha_1^2+\alpha_2^2 &=
2\norm{\ve{w}'}^2\norm{\ve{z}'}^2 +
\ve{w}'^2{\cc{\ve{z}'}}^2 + {\cc{\ve{w}'}}^2\ve{z}'^2-2rs,\\
\label{eq:jajj.2}
\beta_1^2+\beta_2^2 &=
2\norm{\ve{w}'}^2\norm{\ve{z}'}^2 +
\ve{w}'^2{\cc{\ve{z}'}}^2 + {\cc{\ve{w}'}}^2\ve{z}'^2+2rs,\\
\label{eq:jajj.3}
1-\alpha_3^2 &=
2\norm{\ve{w}'}^2\norm{\ve{z}'}^2 +
\ve{w}'^2{\cc{\ve{w}'}}^2 + \ve{z}'^2{\cc{\ve{z}'}}^2-2rs,\\
\label{eq:jajj.4}
1-\beta_3^2 &=
2\norm{\ve{w}'}^2\norm{\ve{z}'}^2 +
\ve{w}'^2{\cc{\ve{w}'}}^2 + \ve{z}'^2{\cc{\ve{z}'}}^2+2rs.
\end{align}
\end{subequations}
The formulas above are expressed in terms of quantities invariant
under the transformation yielding the canonical form (\ref{eq:wzinvar})
so we can simply omit the primes.
Hence by using equation (\ref{eq:xynorm}) and~(\ref{eq:ceta}) we can establish that
\begin{equation*}
\alpha_1^2+\alpha_2^2 = 1-\alpha_3^2 - \eta^2,\qquad 
\beta_1^2 +\beta_2^2  = 1-\beta_3^2  - \eta^2.
\end{equation*}
For further use, we define the quantities
\begin{equation}
\label{eq:gammapm}
 \gamma_+ := r+s \equiv \alpha_3, \qquad
 \gamma_- := r-s \equiv \beta_3.
\end{equation}
With these, 
the spectrum of $\sqrt{\sqrt{\varrho}\tilde{\varrho}\sqrt{\varrho}}$,
which is the same as
the square root of the eigenvalues of $\varrho\tilde{\varrho}$, is
\begin{equation*}
\Spect\sqrt{\sqrt{\varrho}\tilde{\varrho}\sqrt{\varrho}} = \Biggl\{
\frac{1}{4}\left( \sqrt{1-\gamma_+^2} \pm \sqrt{1-\gamma_+^2-\eta^2} \right),
\frac{1}{4}\left( \sqrt{1-\gamma_-^2} \pm
\sqrt{1-\gamma_-^2-\eta^2} \right) \Biggr\}.
\end{equation*}
The greatest one of these is 
$\lambda^\downarrow_1 = 1/4\left(\sqrt{1-\gamma_-^2} + \sqrt{1-\gamma_-^2-\eta^2} \right)$,
and, after subtracting the others from it,
we get finally the nice formula for the Wootters concurrence (\ref{eq:WConc})
\begin{equation}
 \label{eq:conc}
 \cnvroof{c}(\varrho) = 
\frac{1}{2}\left( \sqrt{1-\eta^2-\gamma_-^2}- \sqrt{1-\gamma_+^2}\right)^+ 
\end{equation}
with the quantities defined in equations (\ref{eq:xynorm}),~(\ref{eq:ceta}),
(\ref{eq:rs}) and (\ref{eq:gammapm}) containing our
basic parameters $\mathbf{ w}$ and $\mathbf{ z}$ of $\varrho$.
One can easily check by the definitions (\ref{eq:gammapm})
that 
\begin{equation}
\label{eq:KleinEnt}
\varrho\in\mathcal{D}_\text{sep}\qquad\Longleftrightarrow\qquad \eta^2 \geq 4rs. 
\end{equation}
Hence the surface dividing the entangled and separable states
in the space of these density matrices
is given by the equation
$\eta^2 = 4rs$,
which is a special deformation of the $\eta = 0$ Klein-quadric \cite{LevayNagyPipekTwoFermions}.
This can also be seen from the (\ref{eq:negat}) formula of negativity,
see in the next subsection.

\subsection{Negativity}
\label{subsec:Ferm.Meas.Neg}

The next entanglement measure which we are able to calculate for $\varrho$ is the negativity (\ref{eq:neg2QB}), 
which is defined in this case 
by the smallest eigenvalue of the partially transposed density matrix 
(section~\ref{subsec:QM.EntMeas.2QBMix}).
Since the eigenvalues of the density matrix together with those of its partial transpose
are invariant under $\LieGrp{U}(2)\times\LieGrp{U}(2)$ transformations,
we can use again the
canonical form $\varrho'$ of equation (\ref{eq:rhotraf}).

Consider the $\varrho'^{\transp_2}$ partial transpose of $\varrho'$ with respect to the second subsystem.
This operation results in the following transformation of $\Lambda'$ of (\ref{eq:Lamtraf})
\begin{equation*}
\Lambda'^{\transp_2} = r\sigma_3\otimes\Id + \Id\otimes s\sigma_3 +\sr{w'}\otimes(\src{z'})^\transp +\src{w'}\otimes(\sr{z'})^\transp,
\end{equation*}
meaning that only $z'_2$ changes to $-z'_2$, thanks to (\ref{eq:Pauli.mx}).
By virtue of this, retaining the (\ref{eq:abe}) and~(\ref{eq:abv}) definitions of
$\xi_1$, $\xi_2$, $\zeta_1$, $\zeta_2$, and redefining
$\vs{\alpha},\vs{\beta}\in \field{R}^3$ of equation (\ref{eq:ab}) as
\begin{equation}
\label{eq:abre}
\vs{\alpha} = \begin{bmatrix}
 \xi_1 + \xi_2 \\
 \zeta_1 - \zeta_2 \\
 r + s
\end{bmatrix}\in\field{R}^3, \qquad\qquad\qquad
\vs{\beta}  = \begin{bmatrix}
 \xi_1 - \xi_2 \\
 \zeta_1 + \zeta_2 \\
 r - s
\end{bmatrix}\in\field{R}^3,
\end{equation}
the calculation proceeds as in section~\ref{subsec:Ferm.Meas.Conc}.
Namely, we can regard $\varrho'^{\transp_2}$ as the direct sum of the blocks
$1/4(\Id+\sr{\vs{\alpha}})$ and $1/4(\Id+\sr{\vs{\beta}})$ again,
and the spectrum of $\varrho^{\transp_2}$ is the union of the spectra of them
\begin{equation*}
\Spect\bigl({\varrho'}^{\transp_2}\bigr) =
\Spect\bigl(\varrho^{\transp_2}\bigr) =
\left\{ \frac{1}{4}\bigl(1 \pm \norm{\vs{\alpha}}\bigr), \frac{1}{4}\bigl(1 \pm \norm{\vs{\beta}}\bigr) \right\}.
\end{equation*}
Then straightforward calculation shows that%
\footnote{With the redefined $\vs{\alpha}$ and $\vs{\beta}$ (\ref{eq:abre}),
only the sign of the term $2rs$ flips in (\ref{eq:jajj.1}) and in (\ref{eq:jajj.2}).
Then the primes can be dropped again, leading to these expressions.} 
\begin{equation*}
\norm{\vs{\alpha}}^2= 1 - \eta^2 + 4rs,\qquad
\norm{\vs{\beta}}^2 = 1 - \eta^2 - 4rs.
\end{equation*}
The former one leads to the smaller eigenvalue with the minus sign.
It follows from the definition (\ref{eq:gammapm})
that $\gamma_+^2 - \gamma_-^2=4rs$,
hence the negativity of $\varrho$ is
\begin{equation}
 \label{eq:negat}
 N(\varrho) = 
    \frac{1}{2} \left( \sqrt{ 1 - \eta^2 -\gamma_-^2 +\gamma_+^2} -1 \right)^+,  
\end{equation}
with the usual quantities given in equations (\ref{eq:ceta}) and~(\ref{eq:rs}).

\subsection{Comparsion of Wootters concurrence and negativity}
\label{subsec:Ferm.Meas.ConcNeg}
For a two-qubit density matrix
we can write the following inequalities between the Wootters concurrence and the negativity
\begin{equation}
\label{eq:Vineq}
 \sqrt{ (1-\cnvroof{c})^2 + {\cnvroof{c}}^2 } - (1-\cnvroof{c}) 
\leq N \leq \cnvroof{c},
\end{equation}
which are known from the papers of Verstraete et.~al.~\cite{VerstraeteetalNegConc,AudenaertetalNegConc}.
Our special case with fermionic correlations may give extra restrictions between 
these quantities, so we can pose the question,
whether we can replace the inequalities in (\ref{eq:Vineq}) by stronger ones.
Indeed, we give stronger bounds on the negativity, 
which are illustrated in figure~\ref{fig:ejjha}.
\begin{figure}[t]
 \setlength{\unitlength}{0.001418440\textwidth}
 \begin{picture}(423,420)
  \put(0,0){\includegraphics[width=0.6\textwidth]{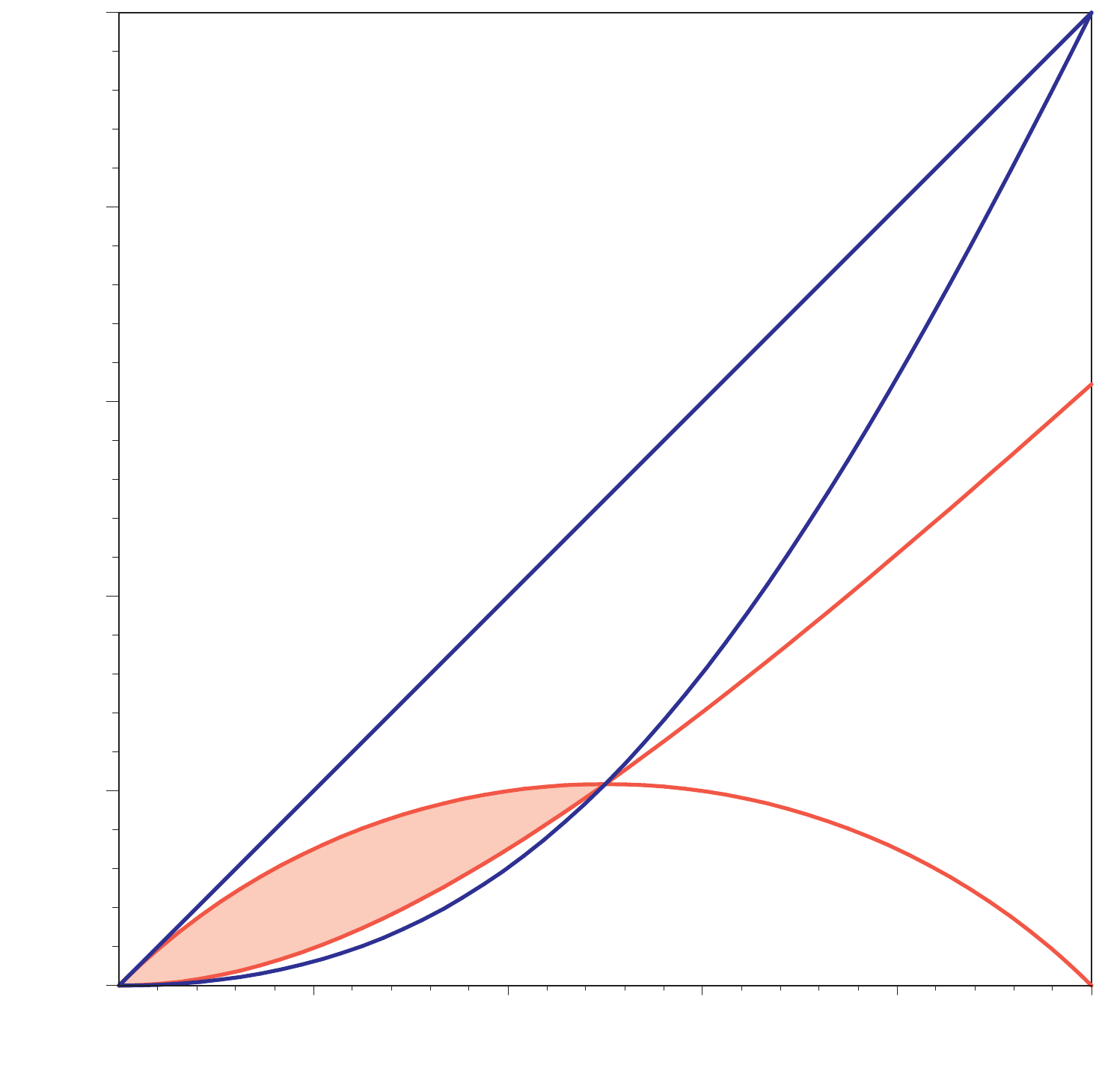}}
  \put(340,280){\makebox(0,0)[r]{\strut{}(\ref{eq:Vineq})}}
  \put(200,125){\makebox(0,0)[r]{\strut{}(\ref{eq:upperb})}}
  \put(145,070){\makebox(0,0)[r]{\strut{}(\ref{eq:lowerb})}}
  \put(10,226){\makebox(0,0)[r]{\strut{}$N(\varrho)$}}
  \put(253,6){\makebox(0,0)[r]{\strut{}$\cnvroof{c}(\varrho)$}}
{\small
  \put(40,34){\makebox(0,0)[r]{\strut{}$0$}}
  \put(130,27){\makebox(0,0)[r]{\strut{}$0.2$}}
  \put(204,27){\makebox(0,0)[r]{\strut{}$0.4$}}
  \put(280,27){\makebox(0,0)[r]{\strut{}$0.6$}}
  \put(354,27){\makebox(0,0)[r]{\strut{}$0.8$}}
  \put(425,27){\makebox(0,0)[r]{\strut{}$1$}}
  
  \put(38,114){\makebox(0,0)[r]{\strut{}$0.2$}}
  \put(38,190){\makebox(0,0)[r]{\strut{}$0.4$}}
  \put(38,264){\makebox(0,0)[r]{\strut{}$0.6$}}
  \put(38,338){\makebox(0,0)[r]{\strut{}$0.8$}}
  \put(40,412){\makebox(0,0)[r]{\strut{}$1$}}
}
 \end{picture}
 \caption{Range of values of negativity for a given Wootters concurrence.
The blue lines represent the bounds (\ref{eq:Vineq}) hold for all two-qubit density matrices,
while the red ones represent the upper (\ref{eq:upperb}) and lower (\ref{eq:lowerb}) bounds
hold for the studied family of density matrices (\ref{eq:rho}).}
 \label{fig:ejjha}
\end{figure}

First, we give a stronger upper bound for the negativity than the one in (\ref{eq:Vineq}).
This upper bound is the following:
\begin{equation}
\label{eq:upperb}
 N(\varrho) \leq \frac{1}{2} \left( \sqrt{ 2 - \bigl(1-2\cnvroof{c}(\varrho)\bigr)^2 } -1 \right),
\end{equation}
see the red line in figure~\ref{fig:ejjha}.
To see this inequality, 
insert equations (\ref{eq:conc}) and~(\ref{eq:negat}) into~(\ref{eq:upperb}),
then, after some algebra, rearrange the terms,
and factorize the sum:
\begin{equation*}
\begin{split}
0
&\leq
-2(1-\eta^2 -\gamma_-^2)
+2\sqrt{1-\eta^2-\gamma_-^2} 
-2\sqrt{1-\gamma_+^2}
+2\sqrt{1-\eta^2-\gamma_-^2}\sqrt{1-\gamma_+^2}\\
&=2\left( \sqrt{1-\eta^2-\gamma_-^2} - \sqrt{1-\gamma_+^2 } \right)
\left( 1 - \sqrt{1-\eta^2-\gamma_-^2} \right).
\end{split}
\end{equation*}
The second parenthesis is obviously nonnegative.
For entangled states $\cnvroof{c}(\varrho)>0$,
and the first parenthesis is proportional to that.
To see that the (\ref{eq:upperb}) upper bound is the tightest,
consider the special case when $\ve{w} = \ve{z}$.
Such states realize the boundary,
so the inequality in (\ref{eq:upperb}) is saturated for such states. 
Indeed, in this case $\eta = 0$, $r=s$, $\gamma_+= 2r$, $\gamma_-= 0$,
leading to 
$\cnvroof{c} = 1/2\bigl(1-\sqrt{1-4r^2}\bigr)$
and
$N = 1/2\bigl(\sqrt{1+4r^2}-1\bigr)$.
These depend only on $r$, wich can be expressed from $\cnvroof{c}$,
thus we can express the negativity of these states with their Wootters concurrence,
and get back the curve of the upper bound.
On the other hand, these states realize the whole boundary,
since $0\leq r=s \leq \norm{w}^2 =\norm{z}^2 = 1/2$,
so these states can arise for all allowed values of Wootters concurrence.
%

Second, we give a stronger lower bound for the negativity than the one in (\ref{eq:Vineq}).
For this, note that the term $1-\eta^2-\gamma_-^2$ appears 
in the Wootters concurrence (\ref{eq:conc})
and also in the negativity (\ref{eq:negat}).
Expressing it from the formula of the Wootters concurrence (\ref{eq:conc}) 
and inserting it into the formula of the negativity (\ref{eq:negat})
leads to
\begin{equation*}
 N(\varrho) = \frac{1}{2} \left(\sqrt{1+(2\cnvroof{c})^2+4\cnvroof{c}\sqrt{1-\gamma_+^2}} -1 \right).
\end{equation*}
Since $0\leq\gamma_+^2\equiv(r+s)^2\leq1$, 
as can be concluded from (\ref{eq:gammapm}), (\ref{eq:psinorm}) and (\ref{eq:xynorm}),
we can obtain a lower and an upper bound for the negativity
for $\gamma_+^2=1$ and $0$:
$ 1/2 \bigl(\sqrt{1+(2\cnvroof{c})^2} -1 \bigr)
 \leq  N(\varrho) \leq
 1/2 \bigr(\sqrt{1+(2\cnvroof{c})^2+4\cnvroof{c}} -1 \bigr)=\cnvroof{c}$.
The upper bound is weaker than the one we have in (\ref{eq:upperb}),
actually this is the upper bound in (\ref{eq:Vineq}) valid for general two-qubit density matrices.
But the lower bound
\begin{equation}
\label{eq:lowerb}
 \frac{1}{2} \left(\sqrt{1+(2\cnvroof{c})^2} -1 \right) \leq  N(\varrho)
\end{equation}
is stronger in the $0\leq \cnvroof{c}\leq1/2$ interval
that the one in (\ref{eq:Vineq}),
see the red line in figure~\ref{fig:ejjha}.
To see that the (\ref{eq:lowerb}) lower bound is the tightest,
we need that $\gamma_+^2=1$ can be realized independently of the Wootters concurrence.
From  (\ref{eq:gammapm}), (\ref{eq:psinorm}) and (\ref{eq:xynorm}),
we have that $\gamma_+^2=1$ if and only if when $\ve{w}^2 = 0$ and $\ve{z}^2=0$.
In this case $\eta = 0$, $r=\norm{\ve{w}}^2$, $s=\norm{\ve{z}}^2$,
and $\gamma_-=\norm{\ve{w}}^2-\norm{\ve{z}}^2$,
leading to
$\cnvroof{c} = 1/2\sqrt{1-\gamma_-^2}$
and
$N = 1/2\Bigl(\sqrt{2+\gamma_-^2}-1\Bigr)$.
These states realize the whole boundary,
since $0\leq \gamma_-^2 = \bigl(2\norm{\ve{w}}^2-1\bigr)^2 \leq 1$,
so these states can arise for all allowed values of the Wootters concurrence.
%

Summarizing we have
\begin{equation}
\label{eq:ineq}
\begin{split}
\sqrt{ (1-\cnvroof{c})^2 + (\cnvroof{c})^2 } - (1-\cnvroof{c}) 
&\leq \frac{1}{2} \left(\sqrt{1+(2\cnvroof{c})^2} -1 \right)\\
&\leq N
\leq \frac{1}{2} \left( \sqrt{ 2 - ( 1-2\cnvroof{c})^2 } -1 \right)
\leq \cnvroof{c}
\end{split}
\end{equation}
where the first and last bounds on the negativity hold 
for general two-qubit mixed states (\ref{eq:Vineq}),
the stronger second (\ref{eq:lowerb}) and third~(\ref{eq:upperb}) bounds hold
for the special family of two-qubit mixed states given in (\ref{eq:rho}).
For these states $0\leq\cnvroof{c}\leq1/2$,
the first inequality holds only in this interval.

As an illustration, we can obtain examples of states of the canonical form, which are realizing the boundaries,
that is, have maximal or minimal negativity for a given Wootters concurrence.
For the upper bound (\ref{eq:upperb}) we need states for which $\ve{w}' = \ve{z}'$
for all $0\leq r=s\leq 1/2$ parameter values.
This can be achieved by varying the relative phase of $w_1'$ and $w_2'$ (having the same magnitude)
and fixing the lengths $\norm{\ve{w}'}^2=\norm{\ve{z}'}^2=1/2$.
For the lower bound (\ref{eq:lowerb}) we need states for which $\ve{w}^2 = \ve{z}^2 = 0$ 
for all $0\leq \gamma_-^2 \leq 1$ parameter values.
This can be achieved by varying the relative lengths $\norm{\ve{w}'}^2$ and $\norm{\ve{z}'}^2$
and fixing the relative phase of $w_1'$ and $w_2'$ and of $z_1'$ and $z_2'$.
Summarizing, we introduce the states depending on two parameters $\vartheta$ and $\varphi$ as follows
\begin{equation}
\label{eq:2param}
\ve{w}'=\frac{\cos\vartheta}{\sqrt2} \begin{bmatrix} 1\\\ee^{i\varphi}\\0 \end{bmatrix},\qquad 
\ve{z}'=\frac{\sin\vartheta}{\sqrt2} \begin{bmatrix} 1\\\ee^{i\varphi}\\0 \end{bmatrix},\qquad 
0\leq2\vartheta,\varphi\leq\frac\pi2,
\end{equation}
leading to the density matrix
\begin{equation*}
\varrho'= \frac14\begin{bmatrix}
1+\sin\varphi & \cdot & \cdot & -i\sin2\vartheta\cos\varphi\\
\cdot & 1+\cos2\vartheta  & 1  & \cdot \\
\cdot & 1  & 1-\cos2\vartheta  & \cdot \\
i\sin2\vartheta\cos\varphi & \cdot & \cdot & 1-\sin\varphi
\end{bmatrix}.
\end{equation*}
For these states $\ve{w}' = \ve{z}'$ if and only if $\vartheta=\pi/4$,
then they have maximal negativity for a given Wootters concurrence (\ref{eq:upperb}),
on the other hand,
$\ve{w}^2 = \ve{z}^2 = 0$ if and only if $\varphi=\pi/2$,
then they have minimal negativity for a given Wootters concurrence (\ref{eq:lowerb}).
In both cases, the free parameter runs through the whole boundary.
In general we have
$r=\cos^2\vartheta\sin\varphi$,
$s=\sin^2\vartheta\sin\varphi$,
$\gamma_+=\sin\varphi$,
$\gamma_-=\cos2\vartheta\sin\varphi$ and
$\eta=\cos2\vartheta\cos\varphi$, leading to
\begin{align*}
\cnvroof{c}(\varrho') &= \frac12\bigl(\sin2\vartheta - \cos\varphi\bigr)^+,\\
 N(\varrho')          &= \frac12\left(\sqrt{\sin^22\vartheta +\sin^2\varphi} -1 \right)^+,
\end{align*}
as can be seen in figure~\ref{fig:ejjha2}.
\begin{figure}[t]
 \setlength{\unitlength}{0.001466992\textwidth}
 \begin{picture}(409,420)
  \put(0,0){\includegraphics[width=0.6\textwidth]{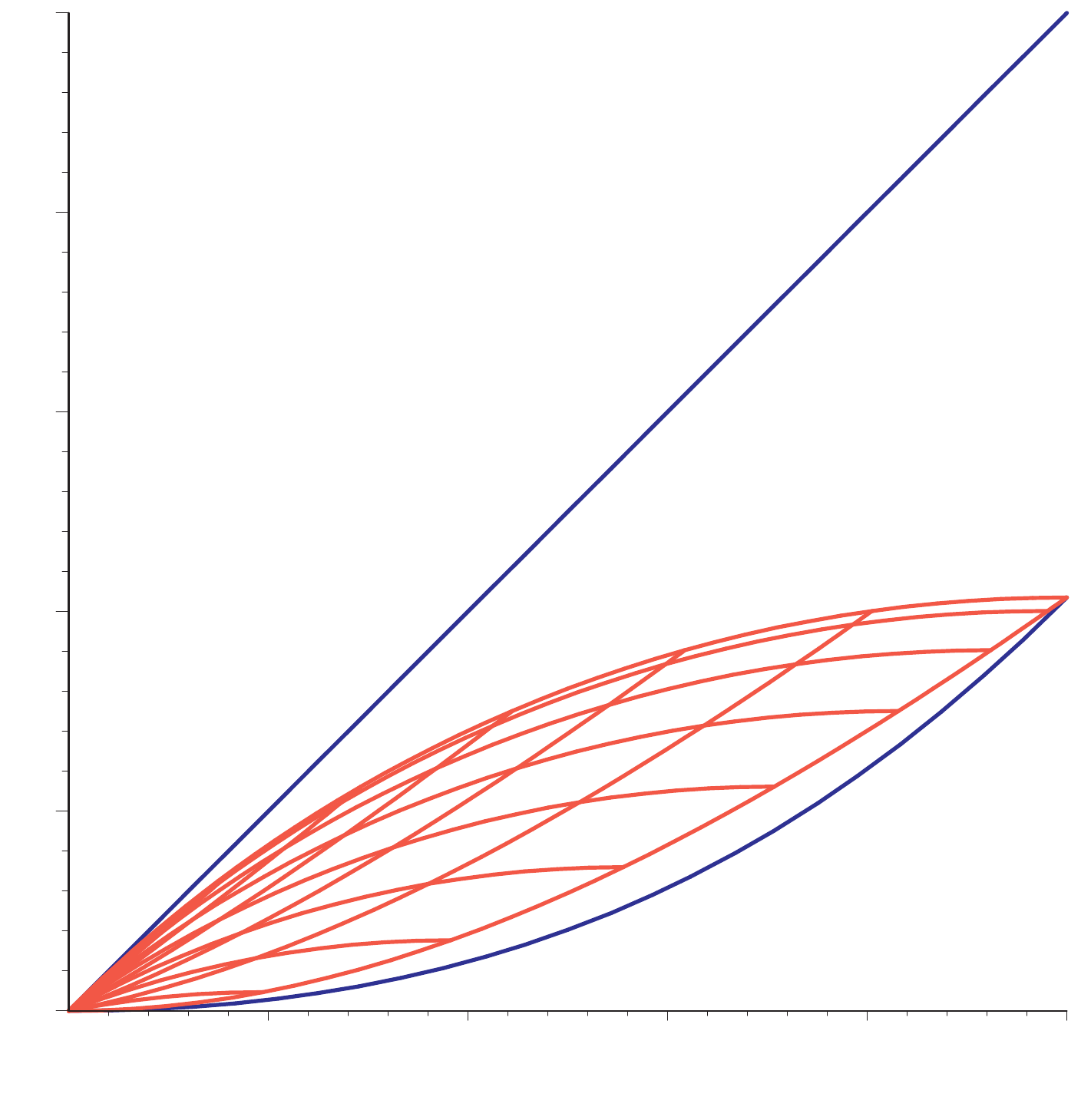}}
  \put(0,226){\makebox(0,0)[r]{\strut{}$N(\varrho)$}}
  \put(240,6){\makebox(0,0)[r]{\strut{}$\cnvroof{c}(\varrho)$}}
{\small
  \put(20,34){\makebox(0,0)[r]{\strut{}$0$}}
  \put(112,28){\makebox(0,0)[r]{\strut{}$0.1$}}
  \put(184,28){\makebox(0,0)[r]{\strut{}$0.2$}}
  \put(260,28){\makebox(0,0)[r]{\strut{}$0.3$}}
  \put(334,28){\makebox(0,0)[r]{\strut{}$0.4$}}
  \put(410,28){\makebox(0,0)[r]{\strut{}$0.5$}}

  \put(20,114){\makebox(0,0)[r]{\strut{}$0.1$}}
  \put(18,190){\makebox(0,0)[r]{\strut{}$0.2$}}
  \put(18,264){\makebox(0,0)[r]{\strut{}$0.3$}}
  \put(17,339){\makebox(0,0)[r]{\strut{}$0.4$}}
  \put(18,414){\makebox(0,0)[r]{\strut{}$0.5$}}
}
 \end{picture}
 \caption{Range of values of negativity for a given concurrence.
The lines with constant $\vartheta$ and $\varphi$ of the two-parameter state (\ref{eq:2param})
are also drawn.}
 \label{fig:ejjha2}
\end{figure}

In (\ref{eq:2param}) we gave a two-parameter submanifold of the states given in (\ref{eq:rho}).
The maximally entangled state is at the parameter value $\vartheta=\pi/4$, $\varphi=\pi/2$,
having $\cnvroof{c}_\text{max} = 1/2$ and $N_\text{max} = \bigl(\sqrt{2} - 1\bigr)/2$.
Now, we obtain all states of the canonical form with maximal entanglement, not only this one lying in this submanifold.
From the (\ref{eq:conc}) formula of Wootters concurrence one can see that
\begin{align*}
\cnvroof{c} = \cnvroof{c}_{\text{max}} = \frac{1}{2} \qquad\Longleftrightarrow\qquad
&\Bigl( \eta^2=0  \quad\text{and}\quad  \gamma_-^2=0  \quad\text{and}\quad  \gamma_+^2=1 \Bigr)\\
\Longleftrightarrow\qquad
&\Bigl( \ve{w}^2=\ve{z}^2  \quad\text{and}\quad  r=s  \quad\text{and}\quad  4r^2=1 \Bigr)\\
\Longleftrightarrow\qquad
&\Bigl( \norm{\ve{w}}^2=\norm{\ve{z}}^2 = \frac{1}{2} \quad\text{and} \quad \ve{w}^2=\ve{z}^2 = 0 \Bigr),
\end{align*}
using (\ref{eq:psinorm}),~(\ref{eq:ceta}),~(\ref{eq:xynorm}) and~(\ref{eq:gammapm}).
One can check that the only parameters of the canonical form satisfying this
are of the form
\begin{equation*}
\ve{w}_\text{max}'=\frac{1}{2} e^{i\delta_1}  \begin{bmatrix} 1\\i\\0 \end{bmatrix},
\qquad
\ve{z}_\text{max}'=\frac{1}{2} e^{i\delta_2}  \begin{bmatrix} 1\\i\\0 \end{bmatrix},
\end{equation*}
leading to the density matrix
\begin{equation}
\varrho'_\text{max} = \frac{1}{4} \begin{bmatrix}
 \frac{3}{2}  & \cdot & \cdot & \cdot\\
  \cdot &  1 & \ee^{i\delta}  & \cdot\\
  \cdot & \ee^{-i\delta} & 1 & \cdot\\
  \cdot & \cdot & \cdot & \frac{1}{2}
  \end{bmatrix}.
\end{equation}
Here $\delta = \delta_1-\delta_2 $ is the only parameter characterizing this
maximally entangled density matrix 
of the canonical form (\ref{eq:rhotraf})
for the family of states given in (\ref{eq:rho}).

\subsection{Mixedness}
\label{subsec:Ferm.Meas.Pur}
Tdegree of mixedness of a density matrix
can be characterized, for example, by
the purity (\ref{eq:purity}),
the participiation ratio (\ref{eq:partratio})
and the concurrence-squared (\ref{eq:conc2}), among other quantites.
As we have seen in section \ref{subsec:QM.EntMeas.2Pure},
the latter one, calculated for $\varrho=\pi_{12}$, measures the entanglement of the pure state $\cket{\psi}$
under the $12|34$ split, if we \emph{do not} consider this state as a fermionic state,
but rather a special four-qubit one.
(The fermionic entanglement is a different story \cite{SchliemannetalTwoFermions,GhirardiMarinattoWeberEntCompQSys,GhirardiMarinattoEntIndist}.)
For our $\varrho$, thanks to the special property (\ref{eq:Lambda2}) of $\Lambda$,
these quantities can easily be calculated, leading to
\begin{subequations}
\begin{align}
 P(\varrho) &= \frac{1}{4} (2-\eta^2),&\qquad
 \frac{1}{4} &\leq P(\varrho) \leq \frac{1}{2},\\
 R(\varrho) &= \frac{4}{2-\eta^2},&\qquad
 2 &\leq R(\varrho) \leq 4,\\
\label{eq:conc12|34}
 C^2(\varrho) &= \frac{1}{3}(2+\eta^2),&\qquad
 \frac{2}{3} &\leq C^2(\varrho) \leq 1,
\end{align}
\end{subequations}
by virtue of equation (\ref{eq:ceta}).
The concurrence-squared $C^2(\varrho)$ never vanishes,
which means that the four-qubit state vector $\cket{\psi}$ with the antisymmetry property (\ref{eq:anti})
is never separable under the $12|34$ split,
which is a manifestation of its fermionic nature.
On the other hand, its entanglement (\ref{eq:conc12|34}) increases with $\eta$,
while the entanglement $\cnvroof{c}(\varrho)$ inside the $12$ (or equivalently $34$) subsystem (\ref{eq:conc})
decreases with that.
Although this may be interesting, 
but this can not lead to a usual monogamy relation (\ref{eq:monogamynQB}) at this point,
(that is, involving Wootters concurrences,)
since $C^2(\varrho)$ is related to a split involving $d=4$ qudits.
Even if there would be some general relations involving $C^2(\varrho)$ 
on multipartite entanglement in the fashion of monogamy, 
there are too many quantities which can not be calculated in a closed form at this point,
namely, the entanglement inside the tripartite subsystems with respect to bipartite and tripartite splits.
Anyway, the usual monogamy relation for four qubits (\ref{eq:monogamynQB})
can be studied, as is done in the next section.

\section{Relating different measures of entanglement}
\label{sec:Ferm.Four}

In this section we would like to discuss the relations of entanglement monogamy 
for multiqubit systems (\ref{eq:monogamynQB})
applied to the four-qubit pure state with the antisymmetry property (\ref{eq:anti}).
During this, we calculate the remaining quantities we need,
and see how all these quantities are related to the
important ones characterizing four-qubit pure state entanglement.
The latter ones are given in section \ref{subsec:QM.EntMeas.4QBPure}.

\subsection{One-qubit subsystems}
\label{sec:Ferm.Four.1qb}

First of all let us notice that the
\begin{subequations}
\label{eq:red1part}
\begin{align}
\pi_1=\tr_{234}\pi = \tr_2 \varrho = \frac{1}{2}(\Id+\sr{x}),\\ 
\pi_2=\tr_{134}\pi = \tr_1 \varrho = \frac{1}{2}(\Id+\sr{y})
\end{align}
reduced density matrices describe the entanglement
properties of subsystems $1$ and $2$ to the rest of the system
described by the four-qubit state $\pi=\cket{\psi}\bra{\psi}$.
Because of the antisymmetry (\ref{eq:anti}) of $\cket{\psi}$,
we also have
\begin{equation}
\pi_3=\pi_1,\qquad \pi_4=\pi_2.
\end{equation}
\end{subequations}
We can characterize
how much these subsystems are entangled with the rest 
with the concurrence-squares (\ref{eq:conc2}) of the states of the subsystems,
\begin{subequations}
\label{eq:conc1part}
\begin{align}
C^2(\pi_1)&=4\det \pi_1 =1-r^2,&\qquad 
C^2(\pi_2)&=4\det \pi_2 =1-s^2,
\intertext{and}
C^2(\pi_3)&=C^2(\pi_1),&\qquad
C^2(\pi_4)&=C^2(\pi_2).
\end{align}
\end{subequations}
The concurrence-squared (\ref{eq:conc2}) ranges from $0$ to $1$ in general,
and in our case the $0\leq C^2(\pi_a)\leq1$ bounds can be saturated
due to (\ref{eq:xynorm}).

\subsection{Other two-qubit subsystems}
\label{sec:Ferm.Four.2qb}
Now turn to the bipartite subsystems.
We already know the Wootters concurrence of the states $\pi_{12}=\pi_{34}=\varrho$
from (\ref{eq:conc}).
A straightforward calculation of the bipartite density matrices 
$\pi_{14}$ and $\pi_{23}$ shows
that they again have the form of equation (\ref{eq:rho})
with the sign of $\ve{w}$ is changed in the first case 
and the vectors $\ve{w}$ and $\ve{z}$ are exchanged in the second.
Since these transformations do not change the value of the Wootters concurrence, we have
\begin{equation}
\label{eq:2concurrences}
{\cnvroof{c}}^2(\pi_{12})={\cnvroof{c}}^2(\pi_{14})={\cnvroof{c}}^2(\pi_{23})={\cnvroof{c}}^2(\pi_{34}).
\end{equation}
Now the only two-qubit density matrices we have not discussed yet are the ones
$\pi_{13}$ and $\pi_{24}$.
Their forms are%
\footnote{Note our convention: $B_{ik}=\cc{(B^{ik})}$, $A_{jl}=\cc{(A^{jl})}$.}
\begin{subequations}
\label{eq:ujdensity}
\begin{align}
(\pi_{13})^{ik}_{\phantom{ik}i'k'}
&=\frac{1}{2}\left(\norm{\ve{z}}^2\varepsilon^{ik}\varepsilon_{i'k'}
+ B^{ik}B_{i'k'}\right),\\
(\pi_{24})^{jl}_{\phantom{jl}j'l'}
&=\frac{1}{2}\left(\norm{\ve{w}}^2\varepsilon^{jl}\varepsilon_{j'l'}
+ A^{jl}A_{j'l'}\right).
\end{align}
\end{subequations}
Recall now that the (\ref{eq:local}) transformation property of  the (\ref{eq:antipsi}) four-qubit state 
gives rise to the corresponding ones for the reduced density matrices
\begin{align*}
\pi_{13}\qquad\longmapsto\qquad (U\otimes U)\pi_{13}(U^\dagger\otimes U^\dagger),\\
\pi_{24}\qquad\longmapsto\qquad (V\otimes V)\pi_{24}(V^\dagger\otimes V^\dagger).
\end{align*}
For $U,V\in \LieGrp{SU}(2)$, due to (\ref{eq:epstraf})  
the tensors occurring in (\ref{eq:ujdensity}) transform as
\begin{align*}
\varepsilon\qquad&\longmapsto\qquad\varepsilon,\\
A\qquad&\longmapsto\qquad VAV^\transp,\\
B\qquad&\longmapsto\qquad UBU^\transp.
\end{align*}
Using the (\ref{eq:szimm}) definition of $A$ we have for example
\begin{equation*}
VAV^\transp=V\varepsilon^*(\ve{z}\boldsymbol{\sigma}^*)V^\transp
=\varepsilon^*V^*(\ve{z}\boldsymbol{\sigma}^*)V^\transp
=\varepsilon^*(V(\cc{\ve{z}}\boldsymbol{\sigma})V^\dagger)^*
=\varepsilon^*(\cc{\ve{z}'}{\boldsymbol{\sigma}})^*
=\varepsilon^*(\ve{z}'\boldsymbol{\sigma}^*),
\end{equation*}
where by choosing $V\equiv V_\ve{y}$ of (\ref{eq:UxVy})
we get the (\ref{eq:zuj}) form for $\ve{z}'$.
Finally these manipulations yield for $\pi_{24}$ the canonical form,
which is of X-shape again,
\begin{equation}
\pi_{24}=\frac{1}{2}\begin{bmatrix}
\kappa_0+\kappa_3 & \cdot & \cdot & \kappa_1-i{\kappa}_2 \\
\cdot & \norm{\ve{w}'}^2 &-\norm{\ve{w}'}^2 & \cdot \\
\cdot &-\norm{\ve{w}'}^2 & \norm{\ve{w}'}^2 & \cdot \\
\kappa_1+i\kappa_2 & \cdot & \cdot & \kappa_0-\kappa_3
\end{bmatrix},
\label{eq:24matrix}
\end{equation}
where
\begin{equation*}
\kappa_\mu = 
\begin{bmatrix}
\norm{\ve{z}'}^2 \\ 
\abs{z_2'}^2-\abs{z_1'}^2\\
-2\Re(z_1'\cc{{z_2'}})\\
-2\Im(z_1'\cc{{z_2'}})
\end{bmatrix}.
\end{equation*}
Notice that
$\kappa_0^2={\kappa}_1^2+{\kappa}_2^2+{\kappa}_3^2=\norm{\ve{z}'}^4=\norm{\ve{z}}^4$,
hence the eigenvalues of $\pi_{24}$ are 
\begin{equation*}
\Spect\pi_{24}=\{\norm{\ve{w}}^2,\norm{\ve{z}}^2,0,0\},
\end{equation*}
that is, this mixed state is of rank two.
The structure of $\pi_{13}$ is similar with the roles of $\ve{w}$ and $\ve{z}$ exchanged.
Following the same steps as in section~\ref{subsec:Ferm.Meas.Conc},
we get for the corresponding squared Wootters concurrences the following expressions
\begin{equation}
\label{eq:ujconc}
{\cnvroof{c}}^2(\pi_{13})=\left( \norm{\ve{z}}^2-\abs{\ve{w}^2}\right)^2,\qquad
{\cnvroof{c}}^2(\pi_{24})=\left( \norm{\ve{w}}^2-\abs{\ve{z}^2}\right)^2.
\end{equation}
With these, we have all the quantities we need 
to write the monogamy relations for this four-qubit system.

\subsection{Four-qubit invariants}
\label{sec:Ferm.Four.FourInv}
Before doing this,
let us now understand the meaning of the invariant $\eta$ from the
four-qubit point of view. 
In section~\ref{subsec:QM.EntMeas.4QBPure},
the independent $\LieGrp{SL}(2,\field{C})^{\times4}$-invariant homogeneous polynomials
were listed.
These polynomials are sufficient for the characterization of four-qubit entanglement in some sense,
and they show how these fermionic state vectors are embedded in the whole Hilbert-space.
A straightforward calculation shows that for the (\ref{eq:antipsi}) four-qubit state we
have $M=D=0$, however,
\begin{equation*}
H=-\frac{1}{2}\bigl(\ve{z}^2+\ve{w}^2\bigr),\qquad 
L=\frac{1}{16}\bigl(\ve{z}^2-\ve{w}^2\bigr)^2,
\end{equation*}
hence, due to (\ref{eq:ceta})
\begin{equation*}
\abs{L} =\frac{1}{16}\eta^2.
\end{equation*}
For convenience we also introduce the quantity
\begin{equation}
\label{eq:csigma}
\sigma:=\bigabs{\ve{w}^2+\ve{z}^2}=2\abs{H}.
\end{equation}
Hence $\eta=\abs{\ve{w}^2-\ve{z}^2}$ and $\sigma=\abs{\ve{w}^2+\ve{z}^2}$
are related to the only nonvanishing four qubit invariants $L$ and $H$.

\subsection{Monogamy of entanglement}
\label{sec:Ferm.Four.Monogamy}
Using the  (\ref{eq:ceta}) and  (\ref{eq:csigma}) definitions of $\eta$ and $\sigma$ and (\ref{eq:xynorm}), 
one can check that
the Wootters concurrences of the $13$ and $24$ subsystems, 
given in (\ref{eq:ujconc}), can be written as
\begin{subequations}
\label{eq:gazos}
\begin{align}
{\cnvroof{c}}^2(\pi_{13})&=s^2+\frac{1}{2}\left(\eta^2+\sigma^2\right)-2\norm{\ve{z}}^2\abs{\ve{w}^2},\\
{\cnvroof{c}}^2(\pi_{24})&=r^2+\frac{1}{2}\left(\eta^2+\sigma^2\right)-2\norm{\ve{w}}^2\abs{\ve{z}^2}.
\end{align}
\end{subequations}
Since ${\cnvroof{c}}^2(\pi_{12})={\cnvroof{c}}^2(\pi_{14})$,
after taking the square of (\ref{eq:conc}) we get
\begin{equation}
{\cnvroof{c}}^2(\pi_{12})+{\cnvroof{c}}^2(\pi_{14})
= 1-r^2-s^2-\frac{1}{2}\eta^2-\sqrt{\bigl(1-\eta^2-{\gamma}_-^2\bigr)\bigl(1-\gamma_+^2\bigr)}.
\end{equation}
Combining this result with equations (\ref{eq:gazos}) and (\ref{eq:conc1part}) we obtain
the equations relating measures of entanglement which are involved in monogamy
\begin{subequations}
\begin{align}
{\cnvroof{c}}^2(\pi_{12})+{\cnvroof{c}}^2(\pi_{13})+{\cnvroof{c}}^2(\pi_{14})+\Sigma_1&=C^2(\pi_1),\\
{\cnvroof{c}}^2(\pi_{12})+{\cnvroof{c}}^2(\pi_{23})+{\cnvroof{c}}^2(\pi_{24})+\Sigma_2&=C^2(\pi_2),
\end{align}
\end{subequations}
where
\begin{subequations}
\label{eq:Sigmas}
\begin{align}
\Sigma_1&=2\norm{\ve{z}}^2\abs{\ve{w}^2}+\sqrt{\left(\frac{1}{2}\sigma^2+p_+\right)\left(\frac{1}{2}\sigma^2+p_-\right)}-\frac{1}{2}\sigma^2,\\
\Sigma_2&=2\norm{\ve{w}}^2\abs{\ve{z}^2}+\sqrt{\left(\frac{1}{2}\sigma^2+p_+\right)\left(\frac{1}{2}\sigma^2+p_-\right)}-\frac{1}{2}\sigma^2,
\end{align}
with
\begin{equation}
p_\pm=2\norm{\ve{z}}^2\norm{\ve{w}}^2\pm \frac{1}{2}\bigl(4rs-\eta^2\bigr).
\end{equation}
\end{subequations}
Notice that by virtue of (\ref{eq:xynorm}) $p_-$ is nonnegative.
Moreover, according to (\ref{eq:KleinEnt}), 
for nonseparable states $\pi_{12},\pi_{14},\pi_{34},\pi_{23}$
we have nonzero Wootters concurrence 
(see in (\ref{eq:2concurrences}) and (\ref{eq:conc}))
hence $\eta^2<4rs$ hence $p_+$ is also nonnegative.
In this case the \emph{residual tangles} $\Sigma_1$ and $\Sigma_2$ as defined by equations (\ref{eq:Sigmas})
are positive as they should be by virtue of (\ref{eq:monogamynQB}),
hence the generalized monogamy inequalities hold
\begin{subequations}
\begin{align}
{\cnvroof{c}}^2(\pi_{12})+{\cnvroof{c}}^2(\pi_{13})+{\cnvroof{c}}^2(\pi_{14})&\leq C^2(\pi_1), \\
{\cnvroof{c}}^2(\pi_{12})+{\cnvroof{c}}^2(\pi_{23})+{\cnvroof{c}}^2(\pi_{24})&\leq C^2(\pi_2).
\end{align}
\end{subequations}
For separable $\pi_{12}$, $\pi_{14}$, $\pi_{34}$, $\pi_{23}$ states
the corresponding Wootters concurrences are zero,
and a calculation shows that the inequalities above in the form 
${\cnvroof{c}}^2(\pi_{13})\leq C^2(\pi_1)$ and
${\cnvroof{c}}^2(\pi_{24})\leq C^2(\pi_2)$ still hold with residual tangles
\begin{subequations}
\begin{align}
\label{eq:Sigmas2}
\Sigma_1&=2\norm{\ve{z}}^2\bigl(\abs{\ve{w}^2} +\norm{\ve{w}}^2\bigr),\\
\Sigma_2&=2\norm{\ve{w}}^2\bigl(\abs{\ve{z}^2} +\norm{\ve{z}}^2\bigr).
\end{align}
\end{subequations}

Equations (\ref{eq:Sigmas}) and~(\ref{eq:Sigmas2}) show the structure of the residual tangle.
Unlike in the well-known three-qubit case 
these quantities among others contain two invariants $\eta$ and $\sigma$ characterizing \emph{four-qubit} correlations.
The role of $\sigma$ 
(which is in connection with $H$, which is permutation-invariant also for general four-qubit states)
is to be compared with the similar role the permutation invariant three-tangle 
$\tau$ plays within the three-qubit context, see in section \ref{subsec:QM.EntMeas.3QBPure}.
An important difference to the three-qubit case is 
that the residual tangles $\Sigma_1$ and $\Sigma_2$ 
seem to be lacking the important entanglement monotone property (\ref{eq:averagePure}).
However, according to a conjecture  \cite{HaromkinaiFourQubits},
the sum $\Sigma_1+\Sigma_2$ could be an entanglement monotone.
We hope that our explicit form will help to settle this issue at least for
our special four-qubit state of equations (\ref{eq:anti}).

\section{Summary and remarks}
\label{sec:Ferm.Concl}

In this chapter we have investigated the structure of 
a $12$ parameter family of two-qubit density matrices 
with fermionic purifications. 
Our starting point was a four-qubit state vector
with a special antisymmetry constraint imposed on its amplitudes (\ref{eq:anti}),
then the density matrices are the bipartite-reduced ones with respect to the $12$ (or $34$) subsystem.
We obtained an explicit form for these bipartite reduced states 
in terms of the $6$ independent complex amplitudes $\ve{w}$ and $\ve{z}$ of the four-qubit states.
Employing local unitary transformations we derived the canonical form for this state. 
This form enabled an explicit calculation for different entanglement measures,
namely the Wootters concurrence and the negativity.
The bounds of the negativity in the Wootters concurrence are also calculated,
and turned out to be much stronger than those for general two-qubit states. 
The quantities occurring in these formulas (and some additional ones) 
are subject to monogamy relations of distributed entanglement 
similar to the ones showing up in the Coffman-Kundu-Wootters relations for three-qubits.
They are characterizing the entanglement trade off between different subsystems.
We have invariants $\eta$ and $\sigma$ describing the intrinsically four-partite correlations,
entanglement measures (Wootters concurrences) keeping track the mixed state entanglement 
of the bipartite subsystems embedded in the four-qubit one,
and the singlepartite concurrences
measuring how much singlepartite subsystems are entangled individually to the rest.
We derived explicit formulas displaying how these important quantities are related.

\begin{remarks}
\item The number of real parameters describing an (unnormalized) two-qubit state is $16$.
Here we treat $12$ of this $16$. 
To our knowledge, there are no explicit results in the literature for such a high number of parameters.
\item The issues of generalized monogamy are of fundamental importance.
First, monogamy is a property of the \emph{measures} we use for quantifying entanglement
rather than that of the entanglement itself. 
For example, the concurrence is monogamous only for qubits but not for subsystems of arbitrary dimensions,
as was mentioned in section \ref{subsec:QM.EntMeas.3QBPure}.
Conversely, however, we can suppose that entanglement itself is monogamous in general,
whatewer it means,
and then we demand that a proper entanglement measure should be monogamous.
What can be known at this time is that there exists a measure which is monogamous for all
Hilbert space dimensions, which is the squashed entanglement (\ref{eq:squashedEnt}) \cite{KoashiWinterMonogamySquashed}.
Unfortunately, it is extremely hard to calculate, even numerically.
The four-qubit state we considered in this chapter is of special form (\ref{eq:anti}),
so it would be interesting 
as to whether the squashed entanglements of subsystems
can be calculated for that.
%
\item Note the structure of these claculations.
The original parameters of the state were $\ve{w}$ and $\ve{z}$.
Then, in the following, every important quantity 
(e.g., $\eta$, $\sigma$, $\gamma_\pm$, $r$, $s$) 
were expressed 
with $\ve{w}^2$, $\ve{z}^2$, $\norm{\ve{w}}^2$ and $\norm{\ve{z}}^2$,
which were invariant under the LU transformation resulting in the canonical form.
These important quantities determined the quantities related to entanglement
${\cnvroof{c}}^2(\pi_{ab})$, $C^2(\pi_{ab})$, $C^2(\pi_a)$ and $\Sigma_{1,2}$,
therefore they are of importance in themselves. 
\item On the other hand, in \cite{Makhlin2qbMixed}
the LU-orbit structure of two-qubit mixed states is completely characterized
by a set of LU-invariant homogeneous polynomials,
which  means that two such states are LU-equivalent 
if and only if all these invariants take the same value for them.
These invariants come from algebraic considerations
and they carry more or less geometrical meaning,
but they say nothing about entanglement.
The whole set consists of $18$ invariants,
given by the coefficients of the states in the $\{\sigma_\mu\otimes\sigma_\nu\}$ basis,
so they are easy to evaluate for the state of fermionic purification we considered (\ref{eq:rho}).
It turns out that most of them vanish, except five ones 
which can be expressed with the important quantities above:
$I_4=r^2$, $I_7=s^2$, $I_{14}=2r^2s^2$,
$I_2=1-\eta^2-r^2-s^2$ and
$I_3=(1-\eta^2-r^2-s^2)^2 - 2r^2s^2$.
Through these,
one might assign some meaning to these invariants $I_{\dots}$ of \cite{Makhlin2qbMixed} 
in the terms of entanglement, that is, using the quantities
${\cnvroof{c}}^2(\pi_{ab})$, $C^2(\pi_{ab})$, $C^2(\pi_a)$ and $\Sigma_{1,2}$.
However, note that such results are obtained only for density matrices of the form (\ref{eq:rho}),
so it can happen that these are valid only for a zero-measured subset of the states.
But these still can be useful, in a converse manner:
\item Our considered states of fermionic purification (\ref{eq:rho})
can serve to be a good tool (toy-model) for testing ideas concerning two-qubit entanglement
(a possible one from which was mentioned in the previous item): 
If some of them do not work for this special case, (for which everything is easy to calculate,)
then those must be discarded.
%
%
%
%
%
%
\end{remarks}

\chapter{\texorpdfstring{All degree $6$ local unitary invariants of multipartite systems}{All degree 6 local unitary invariants of multipartite systems}}
\label{chap:Deg6}

In the previous chapter, we have seen some examples for the occurence of LU-invariant quantities
for the quantification of entanglement.
This was natural, since, as we have seen in section \ref{subsec:QM.Ent.2Part},
the notion of entanglement of a composite quantum system
is invariant under unitary transformations on the subsystems.
Hence everything that can be said about the entanglement of a composite system
can also be said in the terms of LU-invariants.
In this sense, 
the investigation of LU-invariants is a
natural way of studying quantum entanglement.
 
%
Important recent developments in this direction are
the general results of Hero et.~al.~\cite{HWLUA,HWWLUA} and Vrana~\cite{PetiLUA1,PetiLUA23}
on LU-invariant polynomials for pure quantum states.
In~\cite{PetiLUA23}, 
it has been pointed out that the \emph{inverse limit} (in the local dimensions) of algebras of LU-invariant polynomials
of finite dimensional $n$-partite quantum systems is \emph{free,} 
and an \emph{algebraically independent generating set} for that has been given.
This approach using the inverse limit construction is different from the usual,
when the LU orbit structure is investigated first, for given local dimensions, and 
then invariants separating the orbits are being searched for 
\cite{RainsInv,LindenPopescuOnMultipartEnt,LindenetalNonlocalParamsMultipartDensMatrices,GrasslLU,Sudbery3qb,Makhlin2qbMixed,Acinetal3QBPureCanon}.
The structure of algebras of LU-invariant polynomials for given local dimensions is very complicated,
the inverse limit of these~\cite{HWLUA,HWWLUA,PetiLUA23}, however, 
has a remarkably simple structure: it is free~\cite{PetiLUA23},
and an algebraically independent generating set can be given for that.
Moreover, from the results for \emph{pure} states, 
one can also obtain algebraically independent LU-invariant polynomials for \emph{mixed} states~\cite{PetiLUA23}.

In this chapter we give illustrations for these general results 
on LU-invariant polynomials.
In particular, 
we write out explicitly the \emph{linearly independent basis} of the inverse limit of algebras
and single out the members of the \emph{algebraically independent generating set} from them
in the first three graded subspaces of the algebra. 
We give these polynomials in nice index-free formulas for \emph{arbitrary number of subsystems.}

The material of this chapter covers thesis statement~\ref{statement:deg6}
(page \pageref{statement:deg6}).
\begin{organization}
\item[\ref{sec:deg6.luinvs}]
we introduce the general writings of an LU-invariant polynomial
and preclude the appearance of identical ones in a less abstract way than was done originally in~\cite{HWWLUA,PetiLUA23}.
We discuss the cases of pure and mixed quantum states (in section \ref{subsec:deg6.luinvs.pure} and \ref{subsec:deg6.luinvs.mixed}).
\item[\ref{sec:deg6.graphsops}]
following~\cite{HWWLUA,PetiLUA23}, we introduce graphs for the LU-invariant polynomials (in section \ref{sec:deg6.graphsops.invargraphs}).
Then we learn to read off matrix operations
(such as partial trace, matrix product, tensorial product or partial transpose)
from graphs (in section \ref{sec:deg6.graphsops.mxopgraphs}).
If this can be done for a whole graph of an LU-invariant polynomial, 
then we can write an index-free formula for that by these operations.
\item[\ref{sec:deg6.pureinv}]
we give these index-free formulas for pure state invariants of degree $2$, $4$ and $6$ 
(in section \ref{sec:deg6.pureinv.1}, \ref{sec:deg6.pureinv.2}, and \ref{sec:deg6.pureinv.3}).
Using graphs, these formulas can be given for arbitrary number of subsystems.
\item[\ref{sec:deg6.mixinv}]
we discuss the connection of pure and mixed quantum states from another point of view
and we show the formulas for mixed states (in section \ref{sec:deg6.mixinv.1}, \ref{sec:deg6.mixinv.2}, and \ref{sec:deg6.mixinv.3}).
\item[\ref{sec:deg6.alg}]
we give an algorithm for the construction of the 
labelling of different invariant polynomials of degree $6$.
(For degree $2$ and $4$, this task is trivial.)
\item[\ref{sec:deg6.summary}]
we give a summary and some remarks.
\end{organization}

\section{Local unitary invariant polynomials}
\label{sec:deg6.luinvs}

Let us start with the general way of writing of LU-invariant homogeneous polynomials.
As usual, let $\mathcal{H}=\mathcal{H}_1\otimes\dots\otimes\mathcal{H}_n$ be
the Hilbert space of a $n$-partite composite system
of local dimensions $\tpl{d}=(d_1,\dots,d_n)$.
First we consider LU-invariant homogeneous polynomials for pure states,
in which case we give the polynomials in terms of the state vector $\cket{\psi}$ 
rather than in terms of pure states $\pi=\cket{\psi}\bra{\psi}$.
We do this because 
this allows us to handle the labelling of polynomials being different in general,
although it turns out that all of such polynomials can be written also in the terms of pure states.

\subsection{Invariants for pure states}
\label{subsec:deg6.luinvs.pure}
A state vector can be written as
\begin{equation*}
\cket{\psi}=\psi^{i_1\dots i_n}\cket{i_1\dots i_n} \in \mathcal{H},
\end{equation*}
where $\cket{i_j}\in\mathcal{H}_j$ for $i_j=1,\dots,d_j$ is an orthonormal basis for all $j=1,\dots,n$,
and the summation over $i_j=1,\dots,d_j$ is understood.
As usual in the topic of quantum invariants, the norm of $\psi$ does not have to be fixed.

It is well-known (see e.g.~in~\cite{Sudbery3qb}) that the way to get local unitary invariant polynomials is the following.
We write down the term%
\footnote{Note our conventions: $\psi_{i_1\dots i_n}=\cc{(\psi^{i_1\dots i_n})}$.}
$\bigl(\psi^{i_1\dots i_n}\psi_{i'_1\dots i'_n}\bigr)$ $m$ times
(with different indices)
and contract all primed indices with unprimed indices on the same $\mathcal{H}_j$.
A polynomial obtained in this way is of degree $2m$,
that is, degree $m$ in the coefficients and also in their complex conjugates.
This is the only case in which unitary invariants can arise~\cite{PetiLUA23},
so it is convenient to use this natural gradation,
and to call this polynomial of \emph{grade} $m$.
(In the case of mixed states the grade coincides with the degree in the matrix-elements of the density matrix.)
The possible index-contractions on an $\mathcal{H}_j$ are encoded by the elements of $\DscrGrp{S}_m$,
the group of permutations of $m$ letters.
A $\sigma_j\in \DscrGrp{S}_m$ tells us that the primed index of the $l$th term is contracted with the unprimed index of the $\sigma_j(l)$th term,
so there is an index-contraction scheme for all $n$-tuples of permutations 
$\tpls{\sigma}=(\sigma_1,\dots,\sigma_n) \in \DscrGrp{S}_m^n$, written as
\begin{equation}
\label{eq:purinv0}
f_{\tpls{\sigma}}(\psi)=
\psi^{i_1^1\dots i_n^1}\cdots
\psi^{i_1^m\dots i_n^m}
\psi_{i_1^{\sigma_1(1)}\dots i_n^{\sigma_{n}(1)}}\cdots
\psi_{i_1^{\sigma_1(m)}\dots i_n^{\sigma_{n}(m)}},
\end{equation}
where the summation
over $i_j^l=1,\dots,d_j$ for all $j=1,\dots,n$ and $l=1,\dots,m$ is understood.%
\footnote{The lower labels of $i$s refer to the subsystems
and the upper ones refer to the different index-contractions.}

However, different $n$-tuples of permutations can give rise to the same polynomial.
We have the terms $\bigl(\psi^{i_1^l\dots}\psi_{i_1^{\sigma_1(l)}\dots}\bigr)$ $m$ times,
\begin{equation*}
 \Bigl(\psi^{i_1^1\dots}\psi_{i_1^{\sigma_1(1)}\dots}\Bigr)
 \Bigl(\psi^{i_1^2\dots}\psi_{i_1^{\sigma_1(2)}\dots}\Bigr)\dots
 \Bigl(\psi^{i_1^m\dots}\psi_{i_1^{\sigma_1(m)}\dots}\Bigr), 
\end{equation*}
but it makes no difference if we permute the $\psi^{i_1^l\dots}$s or $\psi_{i_1^{\sigma_1(l)}\dots}$s 
among these terms, since, being scalar variables, they commute.
This is equivalent to the relabelling of the indices (in the upper labels),
which can be formulated by the permutations $\alpha,\beta\in \DscrGrp{S}_m$
encoding the permutations of $\psi_{i_1^{\sigma_1(l)}\dots}$s and $\psi^{i_1^l\dots}$s, respectively, as
\begin{align*}
& \Bigl(\psi^{i_1^1\dots}\psi_{i_1^{\sigma_1(1)}\dots}\Bigr)
  \Bigl(\psi^{i_1^2\dots}\psi_{i_1^{\sigma_1(2)}\dots}\Bigr)\dots
  \Bigl(\psi^{i_1^m\dots}\psi_{i_1^{\sigma_1(m)}\dots}\Bigr)\\
&=\Bigl(\psi^{i_1^{\beta(1)}\dots}\psi_{i_1^{\alpha\sigma_1(1)}\dots}\Bigr)
  \Bigl(\psi^{i_1^{\beta(2)}\dots}\psi_{i_1^{\alpha\sigma_1(2)}\dots}\Bigr)\dots
  \Bigl(\psi^{i_1^{\beta(m)}\dots}\psi_{i_1^{\alpha\sigma_1(m)}\dots}\Bigr)\\
&=\Bigl(\psi^{i_1^1\dots}\psi_{i_1^{\alpha\sigma_1\beta^{-1}(1)}\dots}\Bigr)
  \Bigl(\psi^{i_1^2\dots}\psi_{i_1^{\alpha\sigma_1\beta^{-1}(2)}\dots}\Bigr)\dots
  \Bigl(\psi^{i_1^m\dots}\psi_{i_1^{\alpha\sigma_1\beta^{-1}(m)}\dots}\Bigr).
\end{align*}
(Here we have written out only the indices on $\mathcal{H}_1$ to get shorter expressions,
but, obviously, the same $\alpha$ and $\beta$ work on every index running on every $\mathcal{H}_j$.)
Therefore we have
\begin{equation}
f_{(\sigma_1,\dots,\sigma_n)}(\psi)=
f_{(\alpha\sigma_1\beta^{-1},\dots,\alpha\sigma_n\beta^{-1})}(\psi),
\end{equation}
giving rise to an equivalence relation on $\DscrGrp{S}_m^n$:
\begin{equation}
\tpls{\sigma}\sim\tpls{\sigma}' \qquad\defn\qquad 
\exists \alpha,\beta\in \DscrGrp{S}_m:\; \sigma'_j=\alpha\sigma_j\beta^{-1},\; j=1,\dots,n
\end{equation}
and the equivalence classes are denoted by
\begin{equation*}
[\tpls{\sigma}]_\sim=[\sigma_1,\dots,\sigma_n]_\sim
=\{(\alpha\sigma_1\beta^{-1},\dots,\alpha\sigma_n\beta^{-1})\mid \alpha,\beta\in \DscrGrp{S}_m\}.
\end{equation*}
The set of these equivalence classes is the double-cosets of $\DscrGrp{S}_m^n$ by the diagonal action,
denoted by $\Delta \backslash \DscrGrp{S}_m^n /\Delta$,
 where the subgroup $\Delta=\{(\delta,\dots,\delta)\mid \delta\in \DscrGrp{S}_m\}\subseteq \DscrGrp{S}_m^n$.

Thus, the ambiguity arising from the commutativity of the $m$ terms $\psi^{\dots}$ and $\psi_{\dots}$ 
in (\ref{eq:purinv0}) has been handled
by the labelling of the polynomials by the elements of $\Delta \backslash \DscrGrp{S}_m^n /\Delta$.
As a next step, it would be desirable to get one representing element for every equivalence class.
Unfortunately, this can not be done generally, (i.e., for an arbitrary $m$,)
but we can make the equivalence classes smaller by throwing off some of their elements in a general way.
Every equivalence class has elements having the identity permutation $e$ in the last position.
Indeed, we have $\alpha\sigma_n\beta^{-1}=e$ 
in $(\alpha\sigma_1\beta^{-1},\dots,\alpha\sigma_n\beta^{-1})$ if we set $\alpha=\beta\sigma_n^{-1}$, since
\begin{align*}
&\bigl(\alpha\sigma_1\beta^{-1},\dots,\alpha\sigma_{n-1}\beta^{-1},\alpha\sigma_n\beta^{-1}\bigr)\sim\\
&\qquad
 \bigl(\beta\sigma_n^{-1}\sigma_1\beta^{-1},\dots,\beta\sigma_n^{-1}\sigma_{n-1}\beta^{-1},e\bigr)=
 \bigl(\beta\sigma'_1\beta^{-1},\dots,\beta\sigma'_{n-1}\beta^{-1},e\bigr),
\end{align*}
which is actually an orbit of $\DscrGrp{S}_m^{n-1}\times\{e\}$ under the action of simultaneous conjugation.
So it is useful to define another equivalence relation on $\DscrGrp{S}_m^n$:
\begin{equation}
\tpls{\sigma}\approx\tpls{\sigma}' \qquad\defn\qquad 
\exists \beta\in \DscrGrp{S}_m:\; \sigma'_j=\beta\sigma_j\beta^{-1},\; j=1,\dots,n
\end{equation}
and the equivalence classes are denoted by
\begin{equation*}
[\tpls{\sigma}]_\approx=[\sigma_1,\dots,\sigma_n]_\approx
=\{(\beta\sigma_1\beta^{-1},\dots,\beta\sigma_n\beta^{-1})\mid \beta\in \DscrGrp{S}_m\}.
\end{equation*}
The set of these equivalence classes is denoted by $\DscrGrp{S}_m^n/\DscrGrp{S}_m$.
This equivalence is defined on $\DscrGrp{S}_m^{n-1}$ in the same way.
$\DscrGrp{S}_m^{n-1}$ can be injected into $\DscrGrp{S}_m^n$ by
$\imath:\DscrGrp{S}_m^{n-1}\hookrightarrow \DscrGrp{S}_m^n$ as $\imath(\sigma_1,\dots,\sigma_{n-1})=(\sigma_1,\dots,\sigma_{n-1},e)$,
which is compatible with the equivalence $\approx$, but not with $\sim$.
Note that 
\begin{equation*}
\tpls{\sigma}\approx\tpls{\sigma}'  \qquad\Longrightarrow\qquad \tpls{\sigma}\sim\tpls{\sigma}',
\end{equation*}
therefore a $\sim$-equivalence class is the union of disjoint $\approx$-equivalence classes
\begin{equation}
\label{eq:classdecomp}
[\tpls{\sigma}]_\sim=[\tpls{\sigma}^{(1)}]_\approx\cup[\tpls{\sigma}^{(2)}]_\approx\cup\dots
\end{equation}
The elements of $[\tpls{\sigma}]_\sim$ which have $\sigma_n=e$ form one of the $\approx$-equivalence classes of the right-hand side.
This $\approx$-equivalence class (element of $\DscrGrp{S}_m^{n-1}/\DscrGrp{S}_m$) is also suitable for the labelling of the polynomials
instead of the original $\sim$--equivalence class (element of $\Delta \backslash \DscrGrp{S}_m^n /\Delta$).

The meaning of the choice $\alpha\sigma_n\beta^{-1}=e$ is that
the indices on $\mathcal{H}_n$ are contracted \emph{inside}
every term $(\psi^{i_1^l\dots i_n^l}\psi_{i_1^{\sigma_1(l)}\dots i_n^l})$.
This ``couples together'' the pairs of $\psi^{\dots}$ and $\psi_{\dots}$.
The simultaneous conjugation means the permutation of the $m$ terms $(\psi^{\dots}\psi_{\dots})$,
which is the remaining ambiguity arising from the commutativity of these terms.
Note, that we have singled out the last Hilbert space $\mathcal{H}_n$ in this construction.
In the general aspects, it makes no difference which Hilbert space is singled out,
but as we write the pure-state invariants using matrix operations,
it can happen---and usually it will happen---that this freedom manifests itself
in the different writings of the same pure state invariant.

Summing up,
for a composite system of $n$ subsystems,
the LU-invariant polynomial given by $[\sigma_1,\dots,\sigma_{n-1}]_\approx \in \DscrGrp{S}_m^{n-1}/\DscrGrp{S}_m$ is
\begin{equation}
\label{eq:purinv}
f_{[\sigma_1,\dots,\sigma_{n-1}]_\approx}(\psi)=
\psi^{i_1^1\dots i_n^1}\cdots
\psi^{i_1^m\dots i_n^m}
\psi_{i_1^{\sigma_1(1)}\dots i_{n-1}^{\sigma_{n-1}(1)} i_n^1}\cdots
\psi_{i_1^{\sigma_1(m)}\dots i_{n-1}^{\sigma_{n-1}(m)} i_n^m }.
\end{equation}
By the use of the $\DscrGrp{S}_m^{n-1}/\DscrGrp{S}_m$ labelling,
we have got rid of the formal equivalence of polynomials arising from the commutativity of the terms,
and have got a set of LU-invariant polynomials for the elements of the set $\DscrGrp{S}_m^{n-1}/\DscrGrp{S}_m$.
Can it happen that different elements of $\DscrGrp{S}_m^{n-1}/\DscrGrp{S}_m$ give the same polynomial?
Are there linear dependencies among these polynomials?
It is not known in general,
but sometimes there is more to be known:
(\ref{eq:purinv}) gives a linearly independent basis in each $m$ graded subspace of the inverse limit of the algebras~\cite{HWWLUA}.
Moreover,---as the main result of~\cite{PetiLUA23} states,---%
an \emph{algebraically independent generating set} is formed by
the polynomials given in (\ref{eq:purinv}) 
for which the defining $n-1$ permutations \emph{together} act transitively on the set of $m$ labels.
For the algebras of given local dimensions $\tpl{d}=(d_1,\dots,d_n)$, 
the above polynomials form a basis as long as $m\leq d_j$ (for all $j$),
otherwise they become linearly dependent.
The algebraic independency also fails if we restrict ourselves to given local dimensions.
(The algebra of LU-invariant polynomials is usually not even free for given local dimensions.)

\subsection{Invariants for mixed states}
\label{subsec:deg6.luinvs.mixed}

Now consider a mixed quantum state of the $n$-partite composite system.
This state is given by the density operator, acting on $\mathcal{H}$,
written as
\begin{equation*}
\varrho=\varrho^{i_1\dots i_n}_{\phantom{i_1\dots i_n}i'_1\dots i'_n}\cket{i_1\dots i_n}\bra{i'_1\dots i'_n}
\in\mathcal{D}(\mathcal{H}).
\end{equation*}
This density operator, by definition, a positive semidefinite self adjoint operator,
but, as usual in the topic of quantum invariants, the trace of $\varrho$ does not have to be fixed.

The general form of an LU-invariant polynomial is given by a simillar index-contraction scheme,
 encoded by $\tpls{\sigma}=(\sigma_1,\dots,\sigma_n)\in \DscrGrp{S}_m^n $, as in the case of pure states:
\begin{equation}
\label{eq:mixinv0}
f_{\tpls{\sigma}}(\varrho)=
\varrho^{i_1^1\dots i_n^1}_{\phantom{i_1^1\dots i_n^1}i_1^{\sigma_1(1)}\dots i_n^{\sigma_{n}(1)}}\cdots
\varrho^{i_1^m\dots i_n^m}_{\phantom{i_1^m\dots i_n^m}i_1^{\sigma_1(m)}\dots i_n^{\sigma_{n}(m)}},
\end{equation}
where the summation
over $i_j^l=1,\dots,d_j$ for all $j=1,\dots,n$ and $l=1,\dots,m$ is understood again.
(We denote the pure and the mixed state invariants with the same symbol,
the distinction between them is their arguments, which are vectors and matrices, respectively.)

Here we can carry out a similar construction as in the case of pure states,
with one difference.
Namely, the building blocks of the polynomials 
are the $(\varrho^{i_1\dots  i_n}_{\phantom{i_1\dots  i_n}i'_1\dots i'_n})$ matrix-elements of the density operator
instead of the former $(\psi^{i_1\dots i_n}\psi_{i'_1\dots i'_n})$s.
Hence there is no step corresponding to the ``double coset'' construction,
because we can not move the ``two parts'' of $\varrho$ independently as has been done in the case of $\psi^{\dots}\psi_{\dots}$,
since in general $\varrho$ is not of rank one.
This means that we can not relabel the primed and unprimed indices independently.
The possible relabelling is given by a single $\beta\in \DscrGrp{S}_m$, as
\begin{align*}
& \Bigl(\varrho^{i_1^1\dots}_{\phantom{i_1^1\dots}i_1^{\sigma_1(1)}\dots}\Bigr)
  \Bigl(\varrho^{i_1^2\dots}_{\phantom{i_1^2\dots}i_1^{\sigma_1(2)}\dots}\Bigr)\dots
  \Bigl(\varrho^{i_1^m\dots}_{\phantom{i_1^m\dots}i_1^{\sigma_1(m)}\dots}\Bigr)\\
&=\Bigl(\varrho^{i_1^{\beta(1)}\dots}_{\phantom{i_1^{\beta(1)}\dots}i_1^{\beta\sigma_1(1)}\dots}\Bigr)
  \Bigl(\varrho^{i_1^{\beta(2)}\dots}_{\phantom{i_1^{\beta(2)}\dots}i_1^{\beta\sigma_1(2)}\dots}\Bigr)\dots
  \Bigl(\varrho^{i_1^{\beta(m)}\dots}_{\phantom{i_1^{\beta(m)}\dots}i_1^{\beta\sigma_1(m)}\dots}\Bigr)\\
&=\Bigl(\varrho^{i_1^1\dots}_{\phantom{i_1^1\dots}i_1^{\beta\sigma_1\beta^{-1}(1)}\dots}\Bigr)
  \Bigl(\varrho^{i_1^2\dots}_{\phantom{i_1^2\dots}i_1^{\beta\sigma_1\beta^{-1}(2)}\dots}\Bigr)\dots
  \Bigl(\varrho^{i_1^m\dots}_{\phantom{i_1^m\dots}i_1^{\beta\sigma_1\beta^{-1}(m)}\dots}\Bigr).
\end{align*}
Therefore we have
\begin{equation}
f_{(\sigma_1,\dots,\sigma_n)}(\varrho)=
f_{(\beta\sigma_1\beta^{-1},\dots,\beta\sigma_n\beta^{-1})}(\varrho),
\end{equation}
that is, the elements of the orbits in $\DscrGrp{S}_m^n$ under the action of simultaneous conjugation
give the same polynomial.
Let these orbits be denoted by
$[\sigma_1,\dots,\sigma_n]_\approx$,
as before, and the LU-invariant polynomial given by this is
\begin{equation}
\label{eq:mixinv}
f_{[\sigma_1,\dots,\sigma_n]_\approx}(\varrho)=
\varrho^{i_1^1\dots i_n^1}_{\phantom{i_1^1\dots i_n^1}i_1^{\sigma_1(1)}\dots i_n^{\sigma_n(1)}}\cdots
\varrho^{i_1^m\dots i_n^m}_{\phantom{i_1^m\dots i_n^m}i_1^{\sigma_1(m)}\dots i_n^{\sigma_n(m)}}.
\end{equation}

The independence of these follows from the independence of the pure state invariants.
This is because we can obtain the independent mixed state invariants
of the system with local dimensions $\tpl{d}=(d_1,\dots,d_n)$,
if we add a large enough $\mathcal{H}_{n+1}$ Hilbert space,
and calculate the invariants (\ref{eq:purinv}) for a state vector $\cket{\phi}\in \mathcal{H}\otimes\mathcal{H}_{n+1}$.
(See \cite{PetiLUA23} for the abstract construction.)
Since in (\ref{eq:purinv}) we have not permuted the last (this time $n+1$th) indices, 
we can read off the invarians for $\varrho=\tr_{n+1}\cket{\phi}\bra{\phi}$ from (\ref{eq:purinv}).
(If $\dim\mathcal{H}_{n+1}\geq\prod_{j=1}^n\dim\mathcal{H}_j$, then $\varrho$ can be of full rank,
and we can get all $\varrho$ acting on $\mathcal{H}$ in this way.)
Note that if we simply substitute $\varrho$ by a pure state $\cket{\psi}\bra{\psi}$ in (\ref{eq:mixinv}),
then we do not get a linearly independent set of $n$-partite pure state invariants
for all the labels $[\sigma_1,\dots,\sigma_n]_\approx\in \DscrGrp{S}_m^n/\DscrGrp{S}_m$.
However, if we restrict this for the case when $\sigma_n=e$, then we get back the
linearly independent set of pure state invariants 
from the linearly independent set of mixed state ones of a $n$-partite system,
\begin{equation}
\label{eq:purmix}
 f_{[\sigma_1,\dots,\sigma_{n-1}]_\approx}(\psi)
=      f_{[\sigma_1,\dots,\sigma_{n-1},e]_\approx}\bigl(      \cket{\psi}\bra{\psi}\bigr)
\equiv f_{[\sigma_1,\dots,\sigma_{n-1}  ]_\approx}\bigl(\tr_n \cket{\psi}\bra{\psi}\bigr).
\end{equation}

\section{Graphs and matrix operations}
\label{sec:deg6.graphsops}

Having obtained the generally different LU-invariant homogeneous polynomials,
we get the labels of those as equivalence-classes of tuples of permutations,
which encode the structure of the polynomials.
In this section we see that the natural treatment of these, 
hence that of the invariants themselves,
are given in the terms of unlabelled graphs.

\subsection{Graphs of invariants}
\label{sec:deg6.graphsops.invargraphs}
The index-contraction scheme of the LU-invariant polynomials given in the previous section
can be made more expressive by the use of graphs~\cite{HWWLUA,PetiLUA23}.
Let us start with pure state invariants.
For a grade $m$ invariant, given by $\tpls{\sigma}=(\sigma_1,\dots,\sigma_n)\in \DscrGrp{S}_m^n$,
one can draw a graph with $m$ vertices 
with the labels $l=1,\dots,m$.
These vertices represent 
the $m$ terms $(\psi^{i_1^l\dots i_n^l}\psi_{i_1^{\sigma_1(l)}\dots i_n^{\sigma_n(l)}})$
in (\ref{eq:purinv0}).
The edges of the graph are directed and coloured with $n$ different colours.
The edges of the $j$th colour encode
the index contractions on the $j$th Hilbert space
given by the permutation $\sigma_j$ of $(\sigma_1,\dots,\sigma_n)$:
for every $l=1,\dots,m$ there is an edge with head and tail on the $l$th and $\sigma_j(l)$th vertex, respectively,
meaning a contracted $j$th index of the $l$th $\psi^{\dots}$ and the $\sigma_j(l)$th $\psi_{\dots}$.

Some elements of $\DscrGrp{S}_m^n$ give rise to the same polynomials, as was elaborated in the previous section.
How can we tell that story in the language of graphs?
Following the previous section, for a given $\tpls{\sigma}\in \DscrGrp{S}_m^n$,
the elements of the $\sim$-equivalence class $[\tpls{\sigma}]_\sim \in \Delta\backslash \DscrGrp{S}_m^n/\Delta$ give the same invariant.
First we set $\sigma_n=e$,
which means that we select the graphs having a loop of colour $n$ on every vertex.
For a given $\sim$-equivalence class, there are still many graphs of that kind,
and they are given by the elements of the corresponding $\approx$-equivalence class.
How are these graphs related to each other?
A simultaneous conjugation by a $\beta\in \DscrGrp{S}_m$ means the relabelling of the vertices,
that is, the relabelling of the indices (in the upper label) 
of the terms $\bigl(\psi^{i_1^l\dots i_n^l}\psi_{i_1^{\sigma_1(l)}\dots i_n^{\sigma_n(l)}}\bigr)$.
So the elements of a $\approx$-equivalence class give the same graph with all the possible labellings,
and the $\approx$-equivalence class itself gives an \emph{unlabelled} graph.

Since the elements of $\DscrGrp{S}_m^n$ related by simultaneous conjugation give rise to the same unlabelled graph,
the decomposition in (\ref{eq:classdecomp}) shows that there may exist
many unlabelled graphs (many $\approx$-classes) giving rise to the same polynomial defined by a given $\sim$-class.
For example, 
we can set $\sigma_j=e$ for a $j\neq n$,
which results graphs where the edges of colour $j\neq n$ form loops on every vertex.
On the other hand, there may be $\approx$-classes in the given $\sim$-class which does not contain $e$.
All of these graphs give the same polynomial, but it can happen that
some of them can be formulated using matrix operations (in different ways for different graphs)
and some of them not.
(It turns out (see in next section) that every polynomial can be formulated using matrix operations up to $m=3$.)

The case of mixed states is simpler because there are no $\sim$-classes involved.
For a grade $m$ invariant given by $\tpls{\sigma}\in \DscrGrp{S}_m^n$,
the vertices represent the terms $\bigl(\varrho^{i_1^l\dots i_n^l}_{\phantom{i_1^l\dots i_n^l}i_1^{\sigma_1(l)}\dots i_n^{\sigma_n(l)}}\bigl)$,
and only the polynomials given by the elements of the $\approx$-equivalence class $[\tpls{\sigma}]_\approx\in \DscrGrp{S}_m^n/\DscrGrp{S}_m$
are the same by the commutativity of these terms.
This means that we simply omit the labelling of the vertices of the graph given by $\tpls{\sigma}$.

\subsection{Graphs of matrix operations}
\label{sec:deg6.graphsops.mxopgraphs}

The 
$\bigl(\psi^{i_1\dots i_n}\psi_{i'_1\dots i'_n}\bigr)$ and $\bigl(\varrho^{i_1\dots i_n}_{\phantom{i_1\dots i_n}i'_1\dots i'_n}\bigr)$
building blocks
of the polynomials
can be regarded as matrices 
with row and column multiindices being the unprimed and primed ones, respectively.
So we expect that some of the invariant polynomials can be written using only matrix operations,
such as partial trace, matrix products, tensorial products or partial transpose.
How can we read off matrix operations from the graphs corresponding to the invariant polynomials?
This is a difficult question in general and, as we will see, not all graphs can be encoded using matrix operations.
It is more instructive to look at the graphs coming from the matrix operations first,
and then to search for these elementary subgraphs in a general graph coming from a polynomial given by an element of $\DscrGrp{S}_m^n$.

Let us see some matrix operations and their graphs.
The matrix multiplication means 
contraction of the column indices of the first matrix with the row indices of the second matrix,
the trace means contraction of the column indices with the row indices
and the partial transposition means the swap of the given row and column indices.
First consider only the indices belonging to the Hilbert space of only the first subsystem, that is, we have edges of only one colour.
For a general matrix $M$, which is represented by the vertices of the graphs, 
the multiplicaton by itself gives the edge from one vertex to another,
the $r$th power $M^r$ is a chain of edges (without loops),
and the trace of it closes this chain into a loop. (See in the first row of figure~\ref{fig:mxopgraphs}.)
Now let us take into account indices belonging to the second subsystem.
Then $\tr M^r$ is the same loop as before but with doubled edges,
while the partial transposition $\tr(M^{\transp_2})^r$ reverses the loop of the corresponding colour
(second row of figure~\ref{fig:mxopgraphs}).
The partial traces in $\tr \tr_1 (M^r) \tr_1 (M^s)$  make smaller loops on a subsystem
(third row of figure~\ref{fig:mxopgraphs}).
There is a little trick, which is proved to be very useful later: 
$\tr (\tr_1 M^2) (\tr_1 M)=\tr M^2(\Id_1\otimes \tr_1 M)$.
In the language of graphs we just bend the corresponding edge next to the vertex representing $\tr_1 M$,
and we draw a circle on it, representing the identity matrix, which is just contract indices
(last row of figure~\ref{fig:mxopgraphs}).
If the graph is the union of disjoint graphs, then the corresponding polynomial is factorizable, 
since the summations corresponding to the disjoint pieces can be carried out independently.
This almost trivial situation is getting more complicated,
if we take into account the indices of all the subsystems, that is, the edges of all colours.
Examples are shown in the next section.

\begin{figure}[ht]
\includegraphics{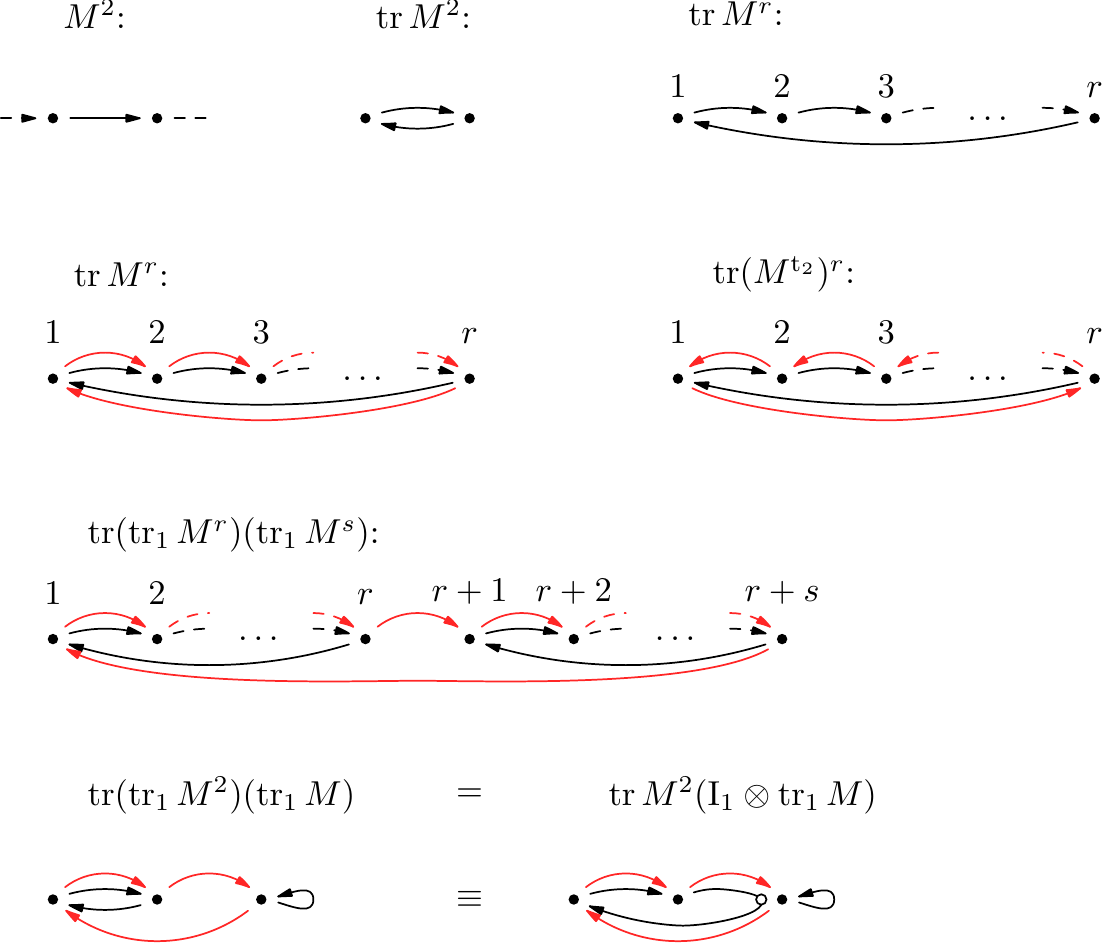}\centering
\caption{Elementary matrix operations represented by graphs.
In the first row: $n=1$, there is only one colour of edges representing the index contractions.
In the other rows: $n=2$, two different colours of edges correspond to the index contractions on the two Hilbert spaces,
black and red on $\mathcal{H}_1$ and $\mathcal{H}_2$, respectively.}\label{fig:mxopgraphs}
\end{figure}

\section{Pure state invariants}
\label{sec:deg6.pureinv}

In the following, we illustrate how a pure state LU-invariant polynomial 
(encoded by $[\tpls{\sigma}]_\sim\in\Delta \backslash \DscrGrp{S}_m^n /\Delta$)
is given by different unlabelled graphs (encoded by $[\tpls{\sigma}]_\approx\in \DscrGrp{S}_m^n/\DscrGrp{S}_m$).
(While an unlabelled graph 
is given by different labelled graphs (index-contraction scheme, encoded by $\tpls{\sigma}\in \DscrGrp{S}_m^n$).)
On the other hand,
different unlabelled graphs give rise to different writings by matrix operations of the same polynomial.
The polynomials are labelled here by the elements of $\DscrGrp{S}_m^{n-1}/\DscrGrp{S}_m$ 
instead of the elements of $\Delta \backslash \DscrGrp{S}_m^n /\Delta$,
which give special unlabelled graphs having loops of colour $n$ on every vertex.
(Sometimes, e.g., in \cite{HWWLUA}, these loops are omitted, and only the first $n-1$-coloured edges are drawn.)
For a permutation $n-1$-tuple $\tpls{\sigma}\in \DscrGrp{S}_m^{n-1}$,
$[\tpls{\sigma}]_\approx\in \DscrGrp{S}_m^{n-1}/\DscrGrp{S}_m$,
and we can write for the corresponding invariant
$[\imath(\tpls{\sigma})]_\sim
=[\imath(\tpls{\sigma})]_\approx\cup
[\tpls{\sigma}^{(2)}]_\approx\cup
[\tpls{\sigma}^{(3)}]_\approx\cup\dots$,
where $\tpls{\sigma}^{(2)},\tpls{\sigma}^{(3)},\dots\in \DscrGrp{S}_m^n$ are
representing elements of $\approx$-classes giving different graphs for the same invariant.

Let us see how these technics work.
As a warm-up, we show for all $n$ the trivial case of $m=1$ and the almost trivial case of $m=2$.
This is followed by the case of $m=3$, which is more interesting
because of the non-Abelian structure of $\DscrGrp{S}_3$.
This is done for all $n$ too.
For $\cket{\psi}\in\mathcal{H}$,
as we have seen, everything can be formulated using the
rank-one density matrix $\pi\equiv\pi_{12\dots n}=\cket{\psi}\bra{\psi}$.
As usual, we denote the reduced density matrices with the label of subsystems
which are not traced out, 
for example $\pi_{2\dots n}=\tr_1 \pi_{12\dots n}$, and so on.

\subsection{\texorpdfstring{Invariant polynomials of grade $m=1$ (degree $2$)}{Invariant polynomials of grade m=1 (degree 2)}}
\label{sec:deg6.pureinv.1}
For $m=1$, we have the trivial $\DscrGrp{S}_1=\{e\}$,
and for all $n$ number of subsystems $[e,e,\dots,e]_\sim=[e,e,\dots,e]_\approx$,
so $\Delta \backslash \DscrGrp{S}_1^n /\Delta\isom \DscrGrp{S}_1^n/\DscrGrp{S}_1\isom \DscrGrp{S}_1$,
meaning only one kind of graphs, having only one vertex.
Every edge---of $n$ different colors for the $n$ subsystems---starts and ends here (figure~\ref{fig:m1k}).
These graphs mean a simple trace, which is the only possible index contraction.
The label of the polynomial is the only one element of $\DscrGrp{S}_1^{n-1}/\DscrGrp{S}_1\isom \DscrGrp{S}_1$, and
\begin{equation}
\label{eq:purinv1k}
f_{[e,\dots,e]_\approx}(\psi) =  \tr\pi_{12\dots n} =\Vert\psi\Vert^2.
\end{equation}
\begin{figure}
\includegraphics{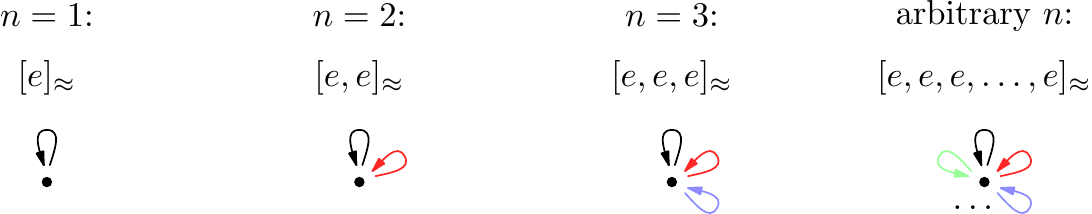}\centering
\caption{Graphs corresponding to the $m=1$ invariant polynomials.
 Black, red, blue and green edges
 represent index-contractions on the first, second, third and last Hilbert spaces, respectively.}\label{fig:m1k}
\end{figure}

\subsection{\texorpdfstring{Invariant polynomials of grade $m=2$ (degree $4$)}{Invariant polynomials of grade m=2 (degree 4)}}
\label{sec:deg6.pureinv.2}
For $m=2$, we have%
\footnote{Denote the permutations with $e=(1)(2)$ and $t=(12)$.} 
$\DscrGrp{S}_2=\{e,t\}$
with the conjugacy-classes $[e]$ and $[t]$,
so the labels of the polynomials are $\DscrGrp{S}_2^{n-1}/\DscrGrp{S}_2\isom \DscrGrp{S}_2^{n-1}$ for all $n$.
On the other hand, $[\tpls{\sigma}]_\sim=[\tpls{\sigma}]_\approx\cup [\overline{\tpls{\sigma}}]_\approx $,
(where $\overline{\tpls{\sigma}}_i= \overline{ {\sigma}_i}$, and $\overline{t}=e$, $\overline{e}=t$)
so there are two kinds of graphs for every polynomial.

For singlepartite system ($n=1$, $\pi\equiv\pi_1$), 
the only polynomial is given by
\begin{equation*}
 {}[e]_\sim=[e]_\approx\cup[t]_\approx.
\end{equation*}
From its graphs (figure~\ref{fig:m2k12}) we have
\begin{equation*}
 f_{[]_\approx}(\psi) = (\tr\pi_1)^2 = \tr\pi_1^2 = \Vert\psi\Vert^4.
\end{equation*}

For bipartite system ($n=2$, $\pi\equiv\pi_{12}$),
there are two linearly independent polynomials. These are given by
\begin{align*}
 {}[e,e]_\sim &=[e,e]_\approx\cup[t,t]_\approx,\\
 {}[t,e]_\sim &=[t,e]_\approx\cup[e,t]_\approx.
\end{align*}
From their graphs (figure~\ref{fig:m2k12}) we have
\begin{align*}
 f_{[e]_\approx}(\psi) &= (\tr\pi_{12})^2 = \tr\pi_{12}^2 = \Vert\psi\Vert^4,\\
 f_{[t]_\approx}(\psi) &= \tr\pi_1^2 = \tr\pi_2^2.
\end{align*}

For tripartite system ($n=3$, $\pi\equiv\pi_{123}$),
there are four linearly independent polynomials. These are given by
\begin{align*}
 {}[e,e,e]_\sim &=[e,e,e]_\approx\cup[t,t,t]_\approx,\\
 {}[e,t,e]_\sim &=[e,t,e]_\approx\cup[t,e,t]_\approx,\\
 {}[t,e,e]_\sim &=[t,e,e]_\approx\cup[e,t,t]_\approx,\\
 {}[t,t,e]_\sim &=[t,t,e]_\approx\cup[e,e,t]_\approx.
\end{align*}
From their graphs we have
\begin{align*}
 f_{[e,e]_\approx}(\psi) &= (\tr\pi_{123})^2 = \tr\pi_{123}^2 = \Vert\psi\Vert^4,\\
 f_{[e,t]_\approx}(\psi) &= \tr\pi_2^2 = \tr\pi_{13}^2,\\
 f_{[t,e]_\approx}(\psi) &= \tr\pi_1^2 = \tr\pi_{23}^2,\\
 f_{[t,t]_\approx}(\psi) &= \tr\pi_{12}^2 = \tr\pi_3^2.
\end{align*}

\begin{figure}
\includegraphics{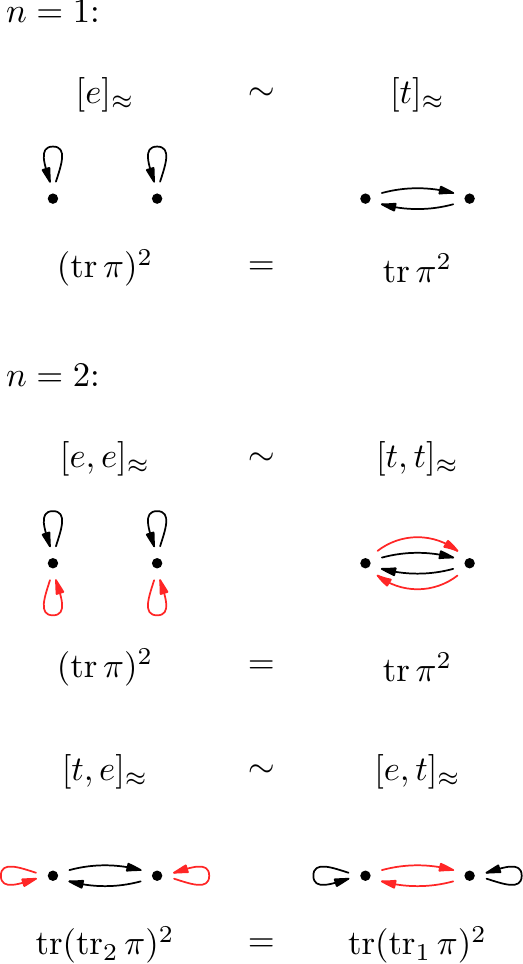}\centering
\caption{Graphs corresponding to the $m=2$ invariant polynomials for $n=1$ and $2$.
 Black and red edges
 represent index-contractions on the first and second Hilbert spaces, respectively.}\label{fig:m2k12}
\end{figure}

The construction of these formulas can easily be generalized to arbitrary number of subsystems.
For this, take a look at the graph on the left of the last line of figure~\ref{fig:m2k12}.
This time, let the 
red lines 
represent the index-contractions on \emph{all} Hilbert spaces
on which $\sigma_j=e$,
and the black lines
represent the index-contractions on \emph{all} Hilbert spaces
on which $\sigma_j=t$.
Thus, we can read off the matrix operations for arbitrary $n$.
The other way of writing the polynomial can be reached by the interchange of the roles of the
black and red lines.
So, for arbitrary number of subsystems ($\pi\equiv\pi_{12\dots n}$)
for the polynomials for $[\tpls{\sigma}]_\sim=[\tpls{\sigma}]_\approx\cup [\overline{\tpls{\sigma}}]_\approx$,
we have
\begin{equation}
\label{eq:purinv2k}
 f_{[\sigma_1,\dots,\sigma_{n-1}]_\approx}(\psi)
= \tr(\tr_{\{n\}\cup\{j\mid \sigma_j=e\}}\pi)^2
= \tr(\tr_{\{  j\mid \sigma_j=t\}}\pi)^2.
\end{equation}
(This was used for qubits in~\cite{Sudbery3qb}.)
The number of these, which is the dimension of the grade $m=2$ subspace of the inverse limit of the algebras, is $2^{n-1}$.
The set of algebraically independent generators
contains all the $m=2$ polynomials from (\ref{eq:purinv2k}), 
except the ones for which there are only $e$s in $[\sigma_1,\dots,\sigma_{n-1}]_\approx$ labelling the polynomial.
(This is the only way for the permutations not to act transitively on the set of $m=2$ labels.)
The number of these is $2^{n-1}-1$.

\subsection{\texorpdfstring{Invariant polynomials of grade $m=3$ (degree $6$)}{Invariant polynomials of grade m=3 (degree 6)}}
\label{sec:deg6.pureinv.3}
For $m=3$, we have%
\footnote{Denote the permutations with $e=(1)(2)(3)$, $t=(12)(3)$ and $s=(123)$.}
$\DscrGrp{S}_3=\{e,s,s^2,t,ts,ts^2\}$
with the conjugacy-classes $[e]=\{e\}$, $[s]=\{s,s^2\}$ and $[t]=\{t,ts,ts^2\}$.
This time, we have no simple general rule for the splitting of a $\sim$-class to $\approx$-classes.

For singlepartite system ($n=1$, $\pi\equiv\pi_1$), 
the only polynomial is biven by
\begin{equation*}
 {}[e]_\sim=[e]_\approx\cup[t]_\approx\cup[s]_\approx.
\end{equation*}
From its graphs (figure~\ref{fig:m3k12}) we have
\begin{equation*}
 f_{[]_\approx}(\psi) = (\tr\pi_1)^3 = \tr\pi_1^2\tr\pi_1 = \tr\pi_1^3 = \Vert\psi\Vert^6.
\end{equation*}

For bipartite system ($n=2$, $\pi\equiv\pi_{12}$), 
there are three linearly independent polynomials. These are given by
\begin{align*}
 {}[e,e]_\sim &=[e,e]_\approx\cup[t,t]_\approx\cup[s,s]_\approx,\\
 {}[t,e]_\sim &=[t,e]_\approx\cup[e,t]_\approx\cup[s,t]_\approx\cup[t,s]_\approx,\\
 {}[s,e]_\sim &=[s,e]_\approx\cup[e,s]_\approx\cup[s,s^2]_\approx\cup[t,ts]_\approx
\end{align*}
(so $\DscrGrp{S}_3^1/\DscrGrp{S}_3\isom \{[e],[s],[t]\}$).
From their graphs (figure~\ref{fig:m3k12}) we have
\begin{align*}
 f_{[e]_\approx}(\psi) &= (\tr\pi_{12})^3 = \tr\pi_{12}^2\tr\pi_{12} = \tr\pi_{12}^3 = \Vert\psi\Vert^6,\\
 f_{[t]_\approx}(\psi) &= (\tr\pi_a)^2\tr\pi_{12} = \tr(\tr_a\pi_{12}^2) \pi_b, \\
 f_{[s]_\approx}(\psi) &= \tr\pi_a^3 = \tr(\pi_{12}^{\transp_a})^3 = \tr(\pi_1\otimes\pi_2)\pi_{12}
\end{align*}
for all $a,b\in \{1,2\}$, $a\neq b$.
\begin{figure}
\includegraphics{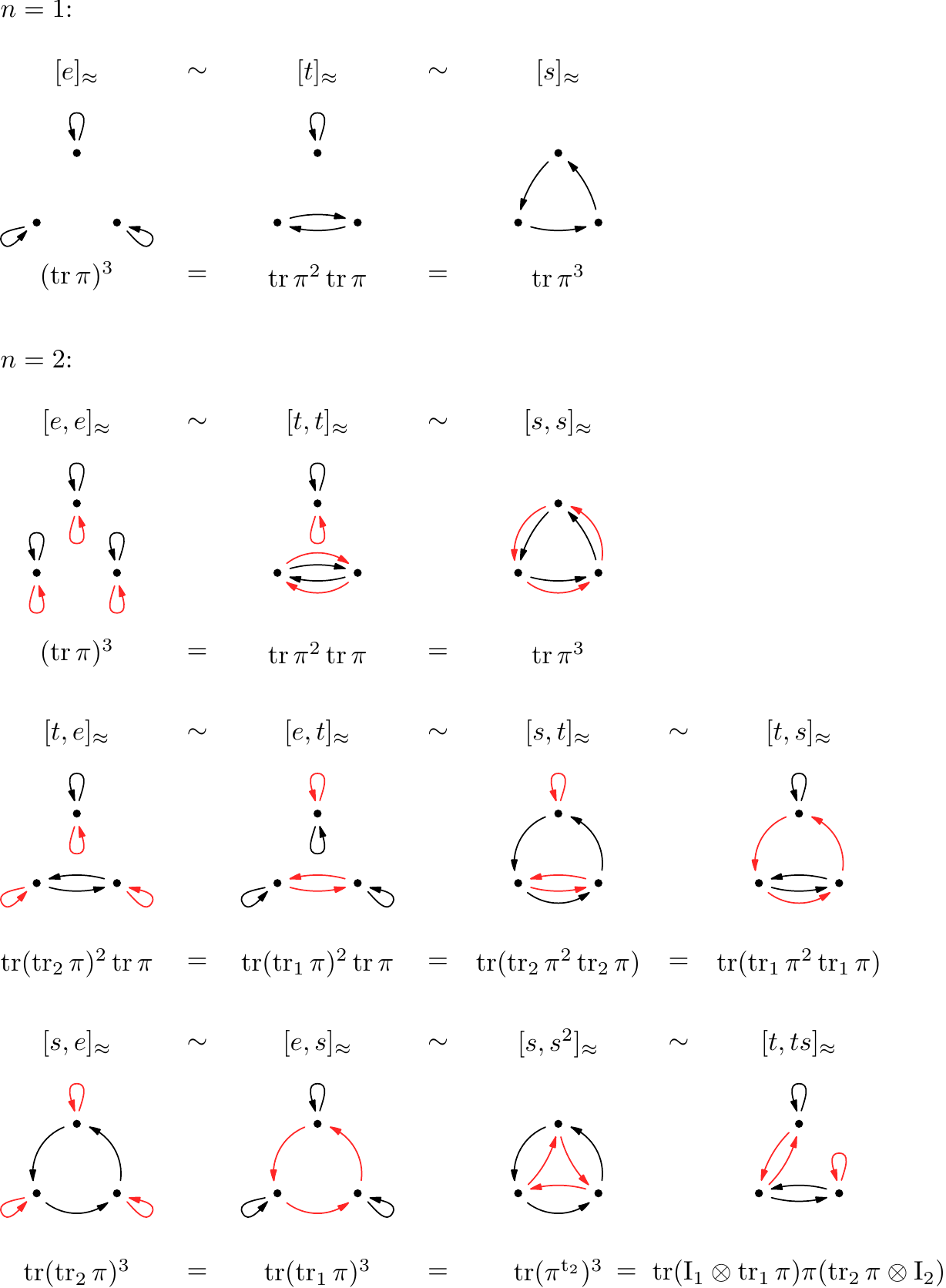}\centering
\caption{Graphs corresponding to the $m=3$ invariant polynomials for $n=1$ and $2$.
 Black and red edges
 represent index-contractions on the first and second Hilbert spaces, respectively.
 The formulas of the polynomials given by matrix operations, which can be read off from the graphs, are also written out.
 For the last one, we have used the trick in the last line of figure~\ref{fig:mxopgraphs} twice.}\label{fig:m3k12}
\end{figure}

For tripartite system ($n=3$, $\pi\equiv\pi_{123}$), 
it turns out that there are eleven linearly independent polynomials. These are given by
\begin{align*}
 {}[e,e,e]_\sim &=[e,e,e]_\approx\cup[t,t,t]_\approx\cup[s,s,s]_\approx,\\
 {}[e,t,e]_\sim &=[e,t,e]_\approx\cup[t,e,t]_\approx\cup[s,t,s]_\approx\cup[t,s,t]_\approx,\\
 {}[t,e,e]_\sim &=[t,e,e]_\approx\cup[e,t,t]_\approx\cup[t,s,s]_\approx\cup[s,t,t]_\approx,\\
 {}[t,t,e]_\sim &=[t,t,e]_\approx\cup[e,e,t]_\approx\cup[s,s,t]_\approx\cup[t,t,s]_\approx,\\
 {}[e,s,e]_\sim &=[e,s,e]_\approx\cup[s,e,s]_\approx\cup[s,s^2,s]_\approx\cup[t,ts,t]_\approx,\\
 {}[s,e,e]_\sim &=[s,e,e]_\approx\cup[e,s,s]_\approx\cup[s^2,s,s]_\approx\cup[ts,t,t]_\approx,\\
 {}[s,s,e]_\sim &=[s,s,e]_\approx\cup[e,e,s]_\approx\cup[s,s,s^2]_\approx\cup[t,t,ts]_\approx,\\
 {}[s,s^2,e]_\sim &=[s,s^2,e]_\approx\cup[s,e,s^2]_\approx\cup[e,s,s^2]_\approx\cup[t,ts,ts^2]_\approx,\\
 {}[t,s,e]_\sim &=[t,s,e]_\approx\cup[t,e,s]_\approx\cup[e,t,ts]_\approx\cup[t,s,s^2]_\approx\cup[s,t,ts]_\approx\cup[s,t,ts^2]_\approx,\\
 {}[s,t,e]_\sim &=[s,t,e]_\approx\cup[e,t,s]_\approx\cup[t,e,ts]_\approx\cup[s,t,s^2]_\approx\cup[t,s,ts]_\approx\cup[t,s,ts^2]_\approx,\\
 {}[t,ts,e]_\sim &=[t,ts,e]_\approx\cup[e,s,t]_\approx\cup[s,e,t]_\approx\cup[s,s^2,t]_\approx\cup[t,ts,s]_\approx\cup[t,ts^2,s]_\approx.
\end{align*}
Here we do not write out all the 49 formulas for the graphs coming from the $\approx$-classes above,
we just show two or three of them for every polynomial.
\begin{align*}
 f_{[e,e]_\approx}(\psi) &= \Vert\psi\Vert^6 = (\tr\pi_{123})^3,\\
 f_{[e,t]_\approx}(\psi) &=  \tr\pi_{123} \tr\pi_2^2 = \tr\pi_{123} \tr\pi_{13}^2,\\
 f_{[t,e]_\approx}(\psi) &=  \tr\pi_{123} \tr\pi_1^2 = \tr\pi_{123} \tr\pi_{23}^2,\\
 f_{[t,t]_\approx}(\psi) &=  \tr\pi_{123} \tr\pi_{12}^2 = \tr\pi_{123} \tr\pi_3^2,\\
 f_{[e,s]_\approx}(\psi) &= \tr\pi_2^3 = \tr\pi_{13}^3,\\
 f_{[s,e]_\approx}(\psi) &= \tr\pi_1^3 = \tr\pi_{23}^3,\\
 f_{[s,s]_\approx}(\psi) &= \tr\pi_{12}^3 = \tr\pi_3^3,\\
 f_{[t,s]_\approx}(\psi) &= \tr(\Id_1\otimes\pi_2)\pi_{12}^2 = \tr(\Id_1\otimes\pi_3)\pi_{13}^2
                 = \tr(\pi_2\otimes\pi_3)\pi_{23}, \\
 f_{[s,t]_\approx}(\psi) &= \tr(\pi_1\otimes\Id_2)\pi_{12}^2 = \tr(\Id_2\otimes\pi_3)\pi_{23}^2
                 =  \tr(\pi_1\otimes\pi_3)\pi_{13},\\
 f_{[t,ts]_\approx}(\psi) &=\tr(\pi_1\otimes\pi_2)\pi_{12}
                  = \tr(\pi_1\otimes\Id_3)\pi_{13}^2 = \tr(\pi_2\otimes\Id_3)\pi_{23}^2,\\
 f_{[s,s^2]_\approx}(\psi) &= \tr(\pi_{ab}^{\transp_a})^3 \qquad \text{for all distinct $a,b\in\{1,2,3\}$}.
\end{align*}
The last one of these is the Kempe-invariant,
which has arisen in the context of hidden nonlocality (section \ref{subsec:QM.EntMeas.3QBPure}).
Kempe has defined this for $\tpl{d}=(2,2,2)$, i.e.,~three qubits,
but it can be written by her definition%
\footnote{See (17) in~\cite{Kempe3qb}.} 
for all $\tpl{d}=(d_1,d_2,d_3)$ three-qudit systems.
It is observed by Sudbery~\cite{Sudbery3qb} that for three qubits the Kempe-invariant
can be expressed as
\begin{equation*}
\begin{split}
f_{[s,s^2]_\approx}(\psi) 
&=3f_{[t,s]_\approx}(\psi)-f_{[e,s]_\approx}(\psi)-f_{[s,s]_\approx}(\psi)\\
&=3f_{[s,t]_\approx}(\psi)-f_{[s,e]_\approx}(\psi)-f_{[s,s]_\approx}(\psi)\\
&=3f_{[t,ts]_\approx}(\psi)-f_{[e,s]_\approx}(\psi)-f_{[s,e]_\approx}(\psi), \qquad \text{for $\tpl{d}=(d_1,d_2,d_3)$}.
\end{split}
\end{equation*}
(This form was given in (\ref{eq:3QBcanonPureLUinvs.I4}).)
However, this is only for qubits: 
If $m\leq d_j$ for all $j$ (so at least qutrits)
then the 11 polynomials listed above are linearly independent.
Another important three-qubit permutation- and LU-invariant polynomial of degree $6$,
which has arisen in twistor-geometric~\cite{PeterGeom3QBEnt} 
and Freudenthal~\cite{BorstenetalFreudenthal3QBEnt} approach of three-qubit entanglement,
is the norm square of the Freudenthal-dual of $\psi$.
It can be written as
\begin{equation*}
6\Vert T(\psi,\psi,\psi)\Vert^2
=4f_{[s,s^2]_\approx}(\psi) +5f_{[e,e]_\approx}(\psi) 
- 3f_{[e,t]_\approx}(\psi)-3f_{[t,e]_\approx}(\psi)-3f_{[t,t]_\approx}(\psi).
\end{equation*}
Note, that this expression is not unique, 
since these $f_{[\sigma_1,\sigma_2]_\approx}$ polynomials are not linearly independent in the case of three qubits.
We will return to this quantity in section \ref{subsec:ThreeQB.Pure.NewInvs}.
 
It is not obvious, but the construction of the grade $m=3$ polynomials
can be generalized to arbitrary number of subsystems.
To do this, consider an invariant given by $\tpls{\sigma}\in \DscrGrp{S}_3^n$, 
where all $\sigma_j\in\{t,ts,ts^2\}$.
This can be seen in figure~\ref{fig:m3k},
\begin{figure}
\includegraphics{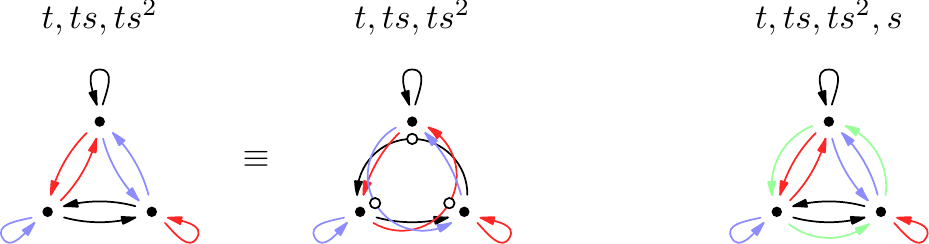}\centering
\caption{Graphs corresponding to the $m=3$ invariant polynomials.
 Black, red, blue and green edges
 represent index-contractions on the Hilbert spaces on which $\sigma_j=t$, $ts$, $ts^2$ and $s$ respectively.
 For the first graph, we show how the trick in the last line of figure~\ref{fig:mxopgraphs} was used three times.}\label{fig:m3k}
\end{figure}
with the evident redefinitions of the meaning of the colours:
let the
black, red and blue edges
represent the index-contractions on \emph{all} Hilbert spaces
on which $\sigma_j=t$, $ts$ and $ts^2$, respectively.
Using the trick in the last line of figure~\ref{fig:mxopgraphs} three times, we have, that
the polynomial is given by
the trace of the product of the three factors, which are
$\Id_{\{j\mid\sigma_j=t\}}\otimes\tr_{\{j\mid\sigma_j=t\}}\pi$,
and other two with $ts$ and $ts^2$.
Note, that the order of these are arbitrary,
since it is related to the relabelling of the vertices of the graph,
and the edge-configuration is invariant for that.
However, if there are some subsystems on which $\sigma_j=s$,
that fixes the order of these terms up to \emph{cyclic} permutation.
It turns out that we have to use the reverse ordering in the product,
the terms $\Id_{\{j\mid\sigma_j=\tau\}}\otimes\tr_{\{j\mid\sigma_j=\tau\}}\pi$
with $\tau=ts^2$ first, then with $\tau=ts$ and then with $\tau=t$,
since they have the fixed points $1$, $2$ and $3$, respectively.
On the subsystems on which $\sigma_j=s$,
the indices are intact (partial traces act only on other subsystems), 
they are contracted in the appropriate way. 
If there are some subsystems on which $\sigma_j=s^2$,
then we use $\pi^{\transp_{\{j\mid\sigma_j=s^2\}}}$ instead of $\pi$ to reduce the situation to the known case.
Similarly, if there are some subsystems on which $\sigma_j=e$,
then we use $\tr_{\{j\mid\sigma_j=e\}}\pi$ instead of $\pi$.
Summing up, 
for arbitrary number of subsystems ($\pi\equiv\pi_{12\dots n}$)
we have the following formula for the $m=3$ polynomials:
\begin{equation}
\label{eq:purinv3k}
f_{[\sigma_1,\dots,\sigma_{n-1}]_\approx}(\psi)=
\;\tr
\prod_{\tau=ts^2,ts,t}
\Bigl(
\Id_{\{j\mid \sigma_j=\tau\}} 
\otimes
\tr_{\{n\}\cup\{j\mid \sigma_j\in\{\tau,e\}\}}\pi^{\transp_{\{j\mid \sigma_j=s^2\}}}
\Bigr),
\end{equation}
where the $\prod$ product symbol means non-commutative product,
in the order of its subscript.
It can be instructive to check that
this gives back the formulas for the special cases $n=1$, $2$ and $3$.

\section{Mixed-state invariants}
\label{sec:deg6.mixinv}

In section~\ref{sec:deg6.luinvs}, we considered the mixed state invariants of $n$ subsystems
as pure state invariants of $n+1$ subsystems.
By considering the graphs of the invariants, given in section~\ref{sec:deg6.graphsops},
we can clarify this from another point of view.

An invariant can be given by $\tpls{\sigma}\in \DscrGrp{S}_m^n$,
this encodes an index-contraction for the matrices of the operators
$\pi=\cket{\psi}\bra{\psi}$ or $\varrho$ for pure or mixed states, respectively.
If $\tpls{\sigma} \not\approx \tpls{\sigma}'$ for $\tpls{\sigma},\tpls{\sigma}'\in \DscrGrp{S}_m^n$,
then they give rise to different polynomials for mixed states,
while it can happen that $\tpls{\sigma} \sim \tpls{\sigma}'$,
so they give rise to the same polynomial for pure states.
In this case, the unlabelled graphs given by $[\tpls{\sigma}]_\approx$ and $[\tpls{\sigma}']_\approx\neq[\tpls{\sigma}]_\approx$
are related to each other
by the independent permutation of the heads and tails of the edges,
while the corresponding operation is
the independent permutation of the coefficients $\psi^{\dots}$ and $\psi_{\dots}$ in (\ref{eq:purinv0}).
This operation is not allowed for mixed states.
In section~\ref{sec:deg6.pureinv},
we have given the decompositions of $\sim$-equivalence classes into $\approx$-equivalence classes
for some grade $m$ and for some $n$ numbers of subsystems,
leading to the different writings of the same polynomial.
For mixed states of $n$ subsystems, these polynomials are not the same anymore.
This offers us a different point of view, which seems to be more natural.
Let us consider the pure state invariants as the special cases of the mixed state invariants
instead of considering the mixed state invariants as pure state invariants of a bigger system.
We have the mixed state formula (\ref{eq:mixinv}) for the set of invariants, 
if we substitute a pure state $\cket{\psi}\bra{\psi}$ into them, then some of them will coincide,
but we can keep this in hand.

For the sake of completeness, we show the mixed state polynomials (\ref{eq:mixinv}) below.
Comparing these formulas with the ones for pure states,
one can see 
how the $n$-partite mixed state invariants are related to the $n+1$-partite pure state ones 
(\ref{eq:purmix}),
or,
how the different $n$-partite mixed state invariants coincide with the same $n$-partite pure state invariants.

Let $\varrho\equiv\varrho_{12\dots n}$ be a density matrix on $\mathcal{H}$.
The polynomials are labelled by $[\tpls{\sigma}]_\approx\in \DscrGrp{S}_m^n/\DscrGrp{S}_m$.

\subsection{\texorpdfstring{Invariant polynomials of grade $m=1$ (degree $1$)}{Invariant polynomials of grade m=1 (degree 1)}}
\label{sec:deg6.mixinv.1}
For $m=1$ we have for all $n$
\begin{equation}
\label{eq:mixinv1k}
f_{[e,\dots,e]_\approx}(\varrho) = \tr\varrho_{12\dots n}.
\end{equation}

\subsection{\texorpdfstring{Invariant polynomials of grade $m=2$ (degree $2$)}{Invariant polynomials of grade m=2 (degree 2)}}
\label{sec:deg6.mixinv.2}
For $m=2$, 
for singlepartite system ($n=1$, $\varrho\equiv\varrho_1$), we have
\begin{align*}
 f_{[e]_\approx}(\varrho) &= (\tr\varrho_1)^2,\\
 f_{[t]_\approx}(\varrho) &= \tr\varrho_{1}^2,
\end{align*}
(for a $d_1=2$ one-qubit system, the determinant is an element of this subspace,
$2\det \varrho = f_{[e]_\approx}(\varrho)-f_{[t]_\approx}(\varrho)$)
for bipartite system ($n=2$, $\varrho\equiv\varrho_{12}$), we have
\begin{align*}
 f_{[e,e]_\approx}(\varrho) &= (\tr\varrho_{12})^2,\\
 f_{[e,t]_\approx}(\varrho) &= \tr\varrho_2^2,\\
 f_{[t,e]_\approx}(\varrho) &= \tr\varrho_1^2,\\
 f_{[t,t]_\approx}(\varrho) &= \tr\varrho_{12}^2,
\end{align*}
for tripartite system ($n=3$, $\varrho\equiv\varrho_{123}$), we have
\begin{align*}
 f_{[e,e,e]_\approx}(\varrho) &= (\tr\varrho_{123})^2,\\
 f_{[e,e,t]_\approx}(\varrho) &= \tr\varrho_3^2,\\
 f_{[e,t,e]_\approx}(\varrho) &= \tr\varrho_2^2,\\
 f_{[e,t,t]_\approx}(\varrho) &= \tr\varrho_{23}^2,\\
 f_{[t,e,e]_\approx}(\varrho) &= \tr\varrho_1^2,\\
 f_{[t,e,t]_\approx}(\varrho) &= \tr\varrho_{13}^2,\\
 f_{[t,t,e]_\approx}(\varrho) &= \tr\varrho_{12}^2,\\
 f_{[t,t,t]_\approx}(\varrho) &= \tr\varrho_{123}^2,
\end{align*}
and for arbitrary number of subsystems ($\varrho\equiv\varrho_{12\dots n}$) we have
\begin{equation}
\label{eq:mixinv2k}
 f_{[\sigma_1,\dots,\sigma_n]_\approx}(\varrho)
=\tr(\tr_{\{j\mid \sigma_j=e\}}\varrho)^2.
\end{equation}
The number of these, which is the dimension of the grade $m=2$ subspace of the inverse limit of algebras, is $2^n$.
The set of algebraically independent generators
contains all the $m=2$ polynomials from (\ref{eq:mixinv2k}),
except the ones for which there are only $e$s in $[\sigma_1,\dots,\sigma_n]_\approx$ labelling the polynomial.
The number of these is $2^n-1$.

\subsection{\texorpdfstring{Invariant polynomials of grade $m=3$ (degree $3$)}{Invariant polynomials of grade m=3 (degree 3)}}
\label{sec:deg6.mixinv.3}
For $m=3$,
for singlepartite system ($n=1$, $\varrho\equiv\varrho_1$), we have
\begin{align*}
 f_{[e]_\approx}(\varrho) &= (\tr\varrho_1)^3,\\
 f_{[t]_\approx}(\varrho) &= \tr\varrho_1 \tr\varrho_1^2,\\
 f_{[s]_\approx}(\varrho) &= \tr\varrho_1^3,
\end{align*}
(for a $d_1=3$ one-qutrit system, the determinant is an element of this subspace,
$6\det \varrho = f_{[e]_\approx}(\varrho)-3f_{[t]_\approx}(\varrho)+2f_{[s]_\approx}(\varrho)$)
for bipartite system ($n=2$, $\varrho\equiv\varrho_{12}$), we have
\begin{align*}
 f_{[e,e]_\approx}(\varrho) &= (\tr\varrho_{12})^3,\\
 f_{[e,t]_\approx}(\varrho) &= \tr\varrho_{12} \tr\varrho_2^2,\\
 f_{[t,e]_\approx}(\varrho) &= \tr\varrho_{12} \tr\varrho_1^2,\\
 f_{[t,t]_\approx}(\varrho) &= \tr\varrho_{12} \tr\varrho_{12}^2,\\
 f_{[e,s]_\approx}(\varrho) &= \tr\varrho_2^3,\\
 f_{[s,e]_\approx}(\varrho) &= \tr\varrho_1^3,\\
 f_{[s,s]_\approx}(\varrho) &= \tr\varrho_{12}^3,\\
 f_{[t,s]_\approx}(\varrho) &= \tr(\Id_1\otimes\varrho_2)\varrho_{12}^2,\\
 f_{[s,t]_\approx}(\varrho) &= \tr(\varrho_1\otimes\Id_2)\varrho_{12}^2,\\
 f_{[t,ts]_\approx}(\varrho)&= \tr(\varrho_1\otimes\varrho_2)\varrho_{12},\\
 f_{[s,s^2]_\approx}(\varrho) &= \tr(\varrho_{12}^{\transp_1})^3 = \tr(\varrho_{12}^{\transp_2})^3,
\end{align*}
and for arbitrary number of subsystems ($\varrho\equiv\varrho_{12\dots n}$) we have
\begin{equation}
\label{eq:mixinv3k}
f_{[\sigma_1,\dots,\sigma_n]_\approx}(\varrho)=
\;\tr
\prod_{\tau=ts^2,ts,t}
\Bigl(
\Id_{\{j\mid \sigma_j=\tau\}}  \otimes
\tr_{\{j\mid \sigma_j\in\{\tau,e\}\}}\varrho^{\transp_{\{j\mid \sigma_j=s^2\}}}
\Bigr),
\end{equation}
where the $\prod$ product symbol means non-commutative product,
in the order of its subscript.
This gives back the formulas for the special cases $n=1$ and $2$.

\section{\texorpdfstring{Algorithm for $\DscrGrp{S}_3^r/\DscrGrp{S}_3$}{Algorithm for S3r/S3}}
\label{sec:deg6.alg}
The formula in (\ref{eq:purinv3k}) gives the grade $m=3$ invariant polynomials for 
a $(\sigma_1,\dots,\sigma_{n-1}) \in \DscrGrp{S}_3^{n-1}$ $n-1$-tuple of permutations,
but the linearly independent ones are labelled by $[\sigma_1,\dots,\sigma_{n-1}]_\approx \in \DscrGrp{S}_3^{n-1}/\DscrGrp{S}_3$.
Since the group-structure of $\DscrGrp{S}_3$ is not too complicated,
we can give an algorithm
to construct exactly one representative element 
$\tpls{\sigma}$ for all orbits $[\tpls{\sigma}]_\approx$,
i.e., to construct the elements of $\DscrGrp{S}_3^r/\DscrGrp{S}_3$.
The choice $r=n-1$ and $r=n$ gives the labels for pure and mixed state invariants, respectively.

Again, $\DscrGrp{S}_3=\{e,s,s^2,t,ts,ts^2\}$, $s=(123)$, $t=(12)(3)$,
and its conjugacy classes are $[e]=\{e\}$, $[s]=\{s,s^2\}$, $[t]=\{t,ts,ts^2\}$.
First, we write the conjugation table, that is, for $\beta,\gamma\in \DscrGrp{S}_3$,
\begin{equation*}
\beta\gamma\beta^{-1}:\qquad
\begin{array}{l||l|ll|lll}
\beta & e   & s   & s^2 & t & ts & ts^2\\
\hline
\hline
e   & e  & s   & s^2 & t   & ts  & ts^2\\
s   & e  & s   & s^2 & ts  & ts^2& t   \\
s^2 & e  & s   & s^2 & ts^2& t   & ts  \\
t   & e  & s^2 & s   & t   & ts^2& ts  \\
ts  & e  & s^2 & s   & ts^2& ts  & t   \\
ts^2& e  & s^2 & s   & ts  & t   & ts^2\\
\end{array}
\end{equation*}

For every position of the list $(\sigma_1,\dots,\sigma_r) \in \DscrGrp{S}_3^r$
there is a conjugacy class $[\sigma_j]$ of $\DscrGrp{S}_3$, which remains unchanged under simultaneous conjugation.
For a given $(\sigma_1,\dots,\sigma_r)$, we would like to single out one representative element
in the orbit of simultaneous conjugation.
To do this, we examine the orbits.
\begin{itemize}
\item If $\sigma_j=e$ for all $j$, we have the trivial orbit of length $1$.
Besides this case, we do not have to deal with the positions in which $e$s occur,
since $\sigma_j=e$ remains unchanged under simultaneous conjugation.
\item If $\sigma_j\in[s]$ (besides $e$) for all $j$, 
then we can choose the element of $[\tpls{\sigma}]_\approx$ which has $s$ in the first position in which an element of $[s]$ occurs.
These orbits are of length $2$.
\item If $\sigma_j\in[t]$ (besides $e$) for all $j$, then we have two kinds of orbits.
If $\sigma_j$ is the same for all $j$ for which $\sigma_j\in[t]$,
then we can choose the element which has $t$ in the first position in which an element of $[t]$ occurs.
These orbits are of length $3$.
On the other hand, 
if there are at least two different $\sigma_j$s for which $\sigma_j\in[t]$,
then it is not enough to fix only one position.
It can be checked by the conjugation table above
that for the ordered pairs of different elements of $[t]$
there exists \emph{exactly one} permutation which brings them into $(t,ts)$ by simultaneous conjugation.
So we can uniquely choose the elements which have
$t$ in the first position in which an element of $[t]$ occurs and
$ts$ in the first position in which a different element of $[t]$ occurs.
These orbits are of length $6$.
\item If both $[s]$ and $[t]$ occur (besides $e$), 
then we have to fix two positions again.
It can be checked from the conjugation table above
that for every pair given by the elements of $[s]\times[t]$
there exists \emph{exactly one} permutation which brings it into $(s,t)$ by simultaneous conjugation.
So we can uniquely choose the element which has
$s$ in the first position in which an element of $[s]$ occurs and
$t$ in the first position in which an element of $[t]$ occurs.
These orbits are of length $6$.
\end{itemize}

With the help of the observations above, we can formulate the following algorithm
generating $\DscrGrp{S}_3^r/\DscrGrp{S}_3$, that is, the labels of the polynomials.
\begin{enumerate}
\item For every position of the list $(\sigma_1,\dots,\sigma_r) \in \DscrGrp{S}_3^r$,
assign one of the conjugacy-classes of $\DscrGrp{S}_3$.
Do this in all possible ways, and apply the following steps for all of them.
\item Write $e$ into all positions to which $[e]$ has been assigned.
\item Take the first of the positions to which $[s]$ has been assigned, and write $s$ there. 
To the others of such positions, write either $s$ or $s^2$ in all possible ways.
\item If there is no position with $[s]$, then
take the first of the positions to which $[t]$ has been assigned, and write $t$ there.
To the following of such positions, write either $t$ or $ts$ in all possible ways,
but after the occurrence of the first $ts$, write either $t$, $ts$ or $ts^2$ in all possible ways.
On the other hand,
if there is at least one position with $[s]$, then
take the first of the positions to which $[t]$ has been assigned, and write $t$ there. 
To the others of such positions, write either $t$, $ts$ or $ts^2$ in all possible ways.
\end{enumerate}

What is the number of the labels obtained in this way?
This could be found by the use of some combinatorics, but we do not have to follow that way.
If the local dimensions $3\leq d_j$, 
then the elements of $\DscrGrp{S}_3^{n-1}/\DscrGrp{S}_3$ label the 
linearly independent grade $m=3$ invariants,
and their number, the dimension of the grade $m=3$ subspace of the inverse limit of the algebras
is given in~\cite{HWLUA,PetiLUA23}.
For $m=3$ pure state invariants, this is
\begin{equation*}
\abs{\DscrGrp{S}_3^{n-1}/\DscrGrp{S}_3} = 6^{n-2}+3^{n-2}+2^{n-2},
\end{equation*}
while for mixed state invariants, this is
\begin{equation*}
\abs{\DscrGrp{S}_3^n/\DscrGrp{S}_3} = 6^{n-1}+3^{n-1}+2^{n-1}
\end{equation*}
(see also in~\cite{oeisA074528}).
One can easily check that 
the set of algebraically independent generators
contains all the $m=3$ polynomials from (\ref{eq:purinv3k}) or~(\ref{eq:mixinv3k}),
except the ones for which, using the labelling algorithm above,
there are only $e$s and $t$s in $[\tpls{\sigma}]_\approx$ labelling the polynomial.
(This is the way for the permutations not to act transitively on the set of $m=3$ labels.)
The number of these is 
$6^{n-2}+3^{n-2}+2^{n-2}-2^{n-1}=6^{n-2}+3^{n-2}-2^{n-2}$ for pure states, and
$6^{n-1}+3^{n-1}-2^{n-1}$ for mixed states.

\section{Summary and remarks}
\label{sec:deg6.summary}
In this chapter we have written out explicitly
the LU-invariant polynomials for pure and mixed states,
given in (\ref{eq:purinv}) and~(\ref{eq:mixinv}), for grades $m=1,2,3$.
This was done for arbitrary number of subsystems of arbitrary dimensions.
New results are given by the nice compact formulas of grade $m=3$ invariants for pure (\ref{eq:purinv3k}) and mixed (\ref{eq:mixinv3k}) quantum states,
and in the algorithm generating the different equivalence classes of permutation $n$-tuples of $\DscrGrp{S}_3$
under simultaneous conjugation, given in section~\ref{sec:deg6.alg}.
The latter is necessary to eliminate identical polynomials.
Connections between pure and mixed state invariant polynomials have been illustrated as well.
These results are obtained by the use of graphs corresponding to the polynomials \cite{HWWLUA,PetiLUA23}.

\begin{remarks}
\item The key point and new feature in this topic,
which is not our result, is the independency \cite{HWWLUA,PetiLUA23}. 
The polynomials in (\ref{eq:purinv}) and~(\ref{eq:mixinv})
give the \emph{linearly independent} basis of the $m$ graded subspace of the algebra of LU-invariant polynomials
if $m\leq d_j$ for all $j$ \cite{HWWLUA},
and some of them (the ones for which the defining permutations together act transitively)
become an \emph{algebraically independent} generating set 
in the inverse limit of algebras, that is, if $d_j\to\infty$ for all $j$ \cite{PetiLUA23}.
This independency result shows the power of the elegant approach using the inverse limit of the algebras of LU-invariant polynomials.
\item However, for a given $\tpl{d}=(d_1,\dots,d_n)$ system,
it seems to be usual~\cite{LindenPopescuOnMultipartEnt,LindenetalNonlocalParamsMultipartDensMatrices}
that it is not enough to use only the polynomials of maximal degree $2m$, 
where $m\leq d_j$ for all $j$, for the separation of the LU-orbits.
(According to the relatively simple case of $\tpl{d}=(2,2,2)$ three qubits,
where it is known~\cite{Sudbery3qb,Acinetal3QBPureCanon},
that we need an $m=3$, an $m=4$ and an $m=6$ invariant polynomial beyond the $m\leq2$ ones, 
namely the Kempe invariant, the three-tangle, and the Grassl-invariant respectively.)
If $m\nleq d_j$ for a $j$,
the generators given in (\ref{eq:purinv}) and~(\ref{eq:mixinv}) will not be linearly independent,
and the algebraic relations between them exhibit a complicated structure.
\item But maybe this structure is not too complicated if we find the right point of view.
At this time we have some unpublished results
about linearly independent polynomials for given $\tpl{d}=(d_1,\dots,d_n)$ local dimensions.
\item Note, that
(first) the same degree of the pure state invariants in the coefficients and in their complex conjugate,
(second) the much simpler labelling of the mixed state invariants than that of the pure ones,
(third) considering the pure state invariants as the special cases of the mixed ones,
seem to stress that 
the density matrices are the natural objects in the topic of unitary invariants
instead of the state vectors.
This approach is widely supported by the whole machinery of quantum physics,
where the elements of the lattice of the subspaces of the Hilbert space
are often regarded to be more fundamental than the elements of the Hilbert space themselves (chapter \ref{chap:QM}).
\item The illustrating polynomials given in this chapter
could have been written in a convenient index-free form using 
partial trace, matrix product, tensorial product and partial transpose
for grade $m=1$, $2$ and $3$.
However, we note that
it can happen that a grade $m\geq4$ invariant polynomial
can not be written by using these operations only.
At this time, we can not formulate general necessary and sufficient conditions for this,
but we can give an enlightening example.
For the use of matrix operations, we have to write down the matrices one after the other,
this fixes the order of the vertices in some sense.
The partial traces form loops of edges.
If we can find an ordering of the vertices (up to cyclic permutations),
in which these loops of every colours contain only adjacent points with respect to this ordering,
then the matrix-operations can be written for the entire polynomial.
This situation can be seen in the third row of figure~\ref{fig:mxopgraphs}.
After some drawing, one can check that 
there is no such an ordering of the vertices for the graph in figure~\ref{fig:imp},
which seems to be the most simple exapmle for such a situation.
\begin{figure}\centering
\includegraphics{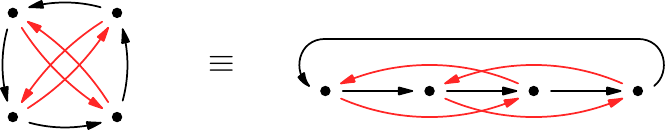}
\caption{An example for an $m=4$, $n=2$ mixed state LU-invariant polynomial 
which can not be written by the considered simple matrix operations.}\label{fig:imp}
\end{figure}
%
\item Note that these polynomials, 
although they give us a basis in the terms of which the entanglement can be described,
but they still do not measure entanglement in the sense of section \ref{subsec:QM.EntMeas.EntMeas}.
A very important research direction is 
finding such linear-combinations (or other, more complicated functions) of these polynomials 
for which (\ref{eq:averagePure}) or (\ref{eq:meas}) hold,
which are therefore proper entanglement measures.
\item Since the convex roof extensions of polynomials can be known to be semi-algebraic functions \cite{PetiPriv,ChenDjokovicSemialg},
if we have a pure state LU-invariant polynomial (expressed in the basis of pure state polynomials (\ref{eq:purinv}))
which is entanglement monotone (\ref{eq:averagePure}),
then its convex-roof extension is an entanglement monotone (\ref{eq:averageConvRoof}) LU-invariant semi-algebraic function,
which can be expressed in the terms of the basis of mixed state polynomials (\ref{eq:mixinv}).
A very important research direction would be the understanding of convex-roof extension 
in terms of these LU-invariant polynomials, if that is possible.
\item The toy-model in the previous chapter could give us hints of this.
For the two-qubit mixed states of fermionic purification (\ref{eq:rho}) we have that
\begin{align*}
\eta^2&= 2(\tr\varrho)^2-4\tr\varrho^2 = 2f_{[e,e]_\approx}(\varrho) -4 f_{[t,t]_\approx}(\varrho),\\
r^2&=2\tr\varrho_1^2 - (\tr\varrho_1)^2 = 2f_{[t,e]_\approx}(\varrho) - f_{[e,e]_\approx}(\varrho),\\
s^2&=2\tr\varrho_2^2 - (\tr\varrho_2)^2 = 2f_{[e,t]_\approx}(\varrho) - f_{[e,e]_\approx}(\varrho),
\end{align*}
so we can express the Wootters concurrence (\ref{eq:conc}) (which is itself a convex roof extension (\ref{eq:WConc}))
and also the negativity (\ref{eq:neg})
with the basis elements (\ref{eq:mixinv}).
Of course, again, 
we can be sure only in that the formulas obtained in this way hold in the zero-measure subset of states defined by (\ref{eq:rho}).
\end{remarks}

\chapter{Separability criteria for mixed three-qubit states}
\label{chap:SepCrit}

In chapter \ref{chap:Ferm} we investigated some measures of entanglement
for a special family of two-qubit mixed states.
A good entanglement measure vanishes for separable states,
in this sense it provides a necessary criterion of separability.
This idea will return in the next chapter,
but now we discuss some other methods for the decision of separability,
which are easier to calculate
and work also in such cases 
in which we do not even have the possibility of evaluating a measure,
or we do not have any measures at all.

A mixed state is separable if it can be decomposed to an ensemble of separable pure states,
(section~\ref{subsec:QM.Ent.2Part}).
Such a decomposition is not unique, 
and it is difficult to decide 
whether for a given density operator such a decomposition exists at all.
One can make some observations for separable pure states
which can be extended to mixed states with the help of convex calculus.
The \emph{separability criteria} obtained in this way are \emph{necessary but not sufficient} ones.
(Or equivalently sufficient but not necessary \emph{criteria of entanglement.})
A widely known example is the partial transposition criterion of Peres (section~\ref{subsec:QM.Ent.2Part}).
On the other hand one can construct \emph{necessary and sufficient} criteria 
using sophisticated mathematical methods,
dealing, for example, with witness operators, positive but not completely positive maps 
or semidefinite programming (sections \ref{subsec:QM.Ent.2Part}, \ref{subsec:QM.Ent.NPart}). 
Unfortunately these criteria are difficult to use for general density matrices
due to the rapidly growing complexity of the problem,
and only the necessary but not sufficient criteria are used in practice.

In this chapter, we give a comprehensive survey 
of the necessary but not sufficient criteria of separability of mixed quantum states.
We review and compare the criteria known from the literature
and give a case study of a special two-parameter class of three-qubit density matrices.
The form of these density matrices is simple enough
to calculate explicitly the set of states for which these criteria hold.

The material of this chapter covers thesis statement \ref{statement:sepcrit}
(page \pageref{statement:sepcrit}).
\begin{organization}
\item[\ref{sec:SepCrit.Rho}]
we introduce the parametrized permutation-invariant family of \emph{three-qubit} density matrices, our concern,
and make some observations 
about the separability class structure of \emph{permutation-invariant} three-qubit mixed states.
After having set the stage, in the next sections we investigate some criteria for separability classes.
\item[\ref{sec:SepCrit.2Part}]
we consider our quantum-state as a $\tpl{d}=(2,4)$ qubit-qudit system
and we recall and use some \emph{bipartite} separability criteria,
namely the \emph{majorization} and the \emph{entropy criteria},
which are related to the notion of mixedness of the subsystems
(sections~\ref{subsec:SepCrit.2Part.Maj} and~\ref{subsec:SepCrit.2Part.Entr}),
the \emph{partial transposition} and the \emph{reduction criteria},
which are particular cases of the positive map criterion
(sections~\ref{subsec:SepCrit.2Part.PPT} and~\ref{subsec:SepCrit.2Part.Red}),
and the \emph{reshuffling criterion}, which 
in addition to the partial transposition criterion
is the other one of the two independent permutation criteria for bipartite systems 
(section~\ref{subsec:SepCrit.2Part.Resh}).
\item[\ref{sec:SepCrit.3Part}]
we consider our quantum-state as a proper $\tpl{d}=(2,2,2)$ three-qubit system
and investigate some \emph{tripartite} criteria for separability classes.
We recall the \emph{permutation criteria} for permutation-invariant three-qubit case
giving rise to \emph{another reshuffling criterion} (section~\ref{subsec:SepCrit.3Part.Perm}).
Then we use \emph{quadratic Bell inequalities}
(section~\ref{subsec:SepCrit.3Part.Spin}),
and other criteria using \emph{swap operators} 
(section~\ref{subsec:SepCrit.3Part.Hub})
and \emph{explicit expressions of matrix elements}
(section~\ref{subsec:SepCrit.3Part.Matrix}).
The latter makes it possible to determine a set of entangled states of positive partial transpose.
\item[\ref{sec:SepCrit.Class1}] 
we investigate some other aspects of entanglement.
First, the \emph{W and GHZ classes of fully entangled mixed states} (section~\ref{subsec:SepCrit.Class1.WGHZ}),
then the entanglement of two-qubit subsystems
by the calculation of the \emph{Wootters concurrence} (section~\ref{subsec:SepCrit.Class1.Woott}). 
\item[\ref{sec:SepCrit.Concl}]
we give a summary and some remarks.
\end{organization}

\section{A symmetric family of mixed three-qubit states}
\label{sec:SepCrit.Rho}

In this chapter we investigate three-qubit states,
so we have the Hilbert space $\mathcal{H}=\mathcal{H}_1\otimes\mathcal{H}_2\otimes\mathcal{H}_3$
of local dimensions $\tpl{d}=(2,2,2)$.
Let $\varrho\in\mathcal{D}(\mathcal{H})$ be the mixture of
the GHZ state (\ref{eq:GHZ}), 
the W state (\ref{eq:W}) and
the white noise (\ref{eq:whitenoise}),
\begin{equation}
\label{eq:lo}
\varrho=d\frac{1}{8}\Id\otimes\Id\otimes\Id
+g\cket{\text{GHZ}}\bra{\text{GHZ}}
+w\cket{\text{W}}\bra{\text{W}},
\end{equation}
where $0\leq d,g,w\leq1$ real numbers are the weights
characterizing the mixture, that is, $d+g+w=1$.
In the following sections we plot the subsets of states for which the separability criteria hold on the $g$-$w$-plane,
that is, we project the probability-simplex onto the $d=0$ plane.
A point on this plane determines the third coordinate as $d=1-g-w$.
Sometimes it is convenient to use the rescaled parameters
$\td=d/8$, $\tg=g/2$, $\tw=w/3$.

The \emph{GHZ-W mixture} ($d=0$ line) is well studied.
The three tangle (\ref{eq:tau}) 
with its convex roofs, 
the Wootters concurrences (\ref{eq:WConc}) 
the singlepartite concurrences (\ref{eq:pureLUinvs3.sa}) with their convex roofs,
and the mixed-state extension of CKW inequality (\ref{eq:CKWmonogamy}) 
were given for this mixture in the paper of Lohmayer et.~al.~\cite{MixedThreetangle}.
These results give an upper bound for values of these quantities on the whole simplex defined in equation (\ref{eq:lo}),
since if $\cnvroof{f}(\varrho_{g,w})=\min\sum_ip_if(\psi_i)$ where the minimum is taken over all decomposition
$\sum_ip_i\cket{\psi_i}\bra{\psi_i}=\varrho_{g,w}$,
and $\varrho_{d,g,w}=d\Id\otimes\Id\otimes\Id/8+(1-d)\varrho_{g/(1-d),w/(1-d)}$,
and $f(\psi)=0$ on product states,
then $\cnvroof{f}(\varrho_{d,g,w})\leq(1-d)\cnvroof{f}(\varrho_{g/(1-d),w/(1-d)})$.

The white noise can be regarded in some sense
as the ``center'' of the set of density matrices.
Mixing a state with white noise
is the way to investigate the effect of environmental decoherence~\cite{GuhneTothEntDet}.
A noisy state is usually of full rank,
so methods for density matrices of low rank
(like range criterion \cite{HorodeckiRangeCrit},
or finding optimal decompositions with respect to some pure-state measures)
usually fail for such states.

On the other hand, there are exact results for the \emph{GHZ-white noise mixture} ($w=0$ line).
In~\cite{DurCiracTarrach3QBMixSep} D\"ur, Cirac and Tarrach,
using their results about a special class of GHZ-diagonal states,
have shown
that $\varrho$ is fully separable \emph{if and only if} $0\leq g\leq1/5$.
Moreover, it follows from their observations that
if the state is separable under a bipartition then it is fully separable,
so Class 2.8 is empty for these states (section \ref{subsec:QM.Ent.NPart}).
In~\cite{GuhneSevinckCrit} G\"uhne and Seevinck
gives \emph{necessary and sufficient} condition of tripartite entanglement for GHZ-diagonal states,
which contain the noisy GHZ state:
for $1/5< g\leq3/7$ the state is $2$-separable, yet inseparable under bipartitions, that is, in Class 2.1,
and for $3/7< g\leq1$ the state is fully entangled, that is, in Class 1.
Unfortunately there are no such results for other subsets of the simplex given in equation (\ref{eq:lo}).

The noisy GHZ-W mixture given in equation (\ref{eq:lo}) is clearly a \emph{permutation invariant} one,
hence the reduced density matrices of $\varrho$ are all of the same form,
$\varrho_{12}=\varrho_{23}=\varrho_{31}$
and
$\varrho_1=\varrho_2=\varrho_3$.
The explicit forms of these matrices are given in equations (\ref{eq:MxR23}) and~(\ref{eq:MxR1}).
What can we say about the tripartite separability-classes given in section~\ref{subsec:QM.Ent.NPart}
for permutation-invariant three-qubit states \emph{in general}?
Clearly, if a permutation-invariant state is in $\mathcal{D}_{\alpha_2}$ for \emph{a particular} $\alpha_2$,
then it is in $\mathcal{D}_{\alpha_2}$ for \emph{every} $\alpha_2$.
So permutation-invariant states can not be in Classes 2.2-2.7,
we have to investigate separability criteria only for Class 2.1, Class 2.8 and Class 3 (figure~\ref{fig:3partpi}).
\begin{figure}
 \includegraphics{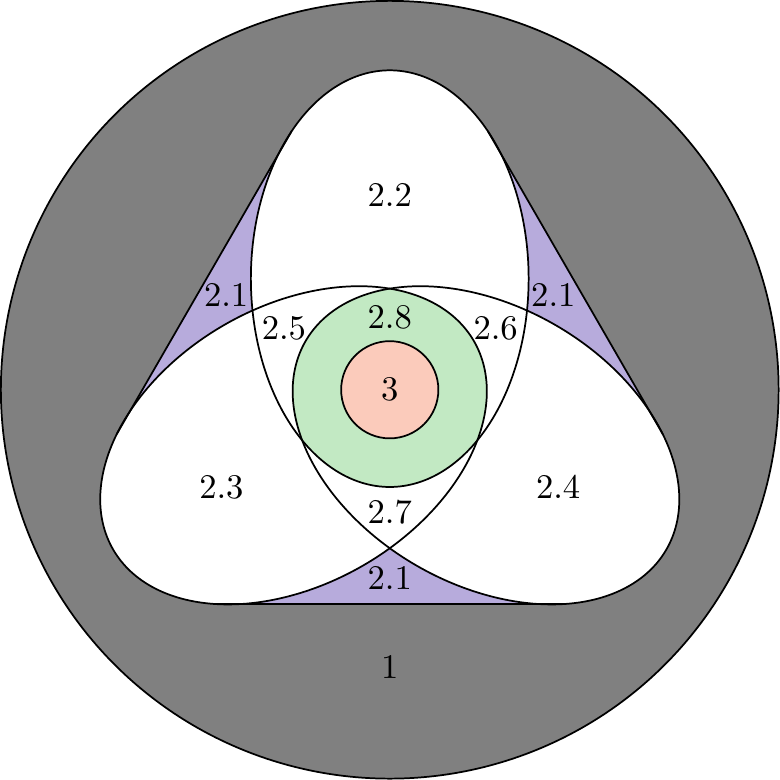}
 \caption{Separability classes for three subsystems.
The tinted subsets of the diagram can contain permutation-invariant states.}
\label{fig:3partpi}
\end{figure}
(Note that the $2$-separability of a permutation-invariant tripartite state
does not mean that there exists a decomposition containing only permutation-invariant members
(bosonic separability problem).
Moreover, it turns out that if the latter holds then the state must be the white noise,%
\footnote{To see this, write out a member of the decomposition
with the help of the $\sigma_i$ Pauli-matrices and $x_i$, $y_j$ real coefficients
as
$\varrho_1\otimes\varrho_{23}
=1/2(\Id+\sum_ix_i\sigma_i)\otimes1/4(\Id\otimes\Id+\sum_jy_j\sigma_j\otimes\sigma_j)
=1/8(\Id\otimes\Id\otimes\Id+\sum_ix_i\sigma_i\otimes\Id\otimes\Id+\sum_jy_j\Id\otimes\sigma_j\otimes\sigma_j
+\sum_{ij}x_iy_j\sigma_i\otimes\sigma_j\otimes\sigma_j)$
which can be permutation-invariant if and only if $x_i=0$, $y_j=0$.} 
so for permutation-invariant tripartite states in Classes 2.1 and 2.8,
there does not exist a decomposition containing \emph{only} permutation-invariant members.)

The remaining question is whether the nonempty classes could contain permutation-invariant states in general.
Class 1 and Class 3 are clearly nonempty for permutation-invariant states,
and for Classes 2.1 and 2.8 we show explicit examples.
For Class 2.1, let us consider
the maximally bipartite entangled Bell-state (\ref{eq:B}).
The uniform mixture of the rank one projectors to the subspaces
$\cket{0}_1\otimes\cket{\text{B}}_{23}$,
$\cket{0}_2\otimes\cket{\text{B}}_{31}$ and
$\cket{0}_3\otimes\cket{\text{B}}_{12}$
gives a state
which is by construction a permutation-invariant $2$-separable one, having the matrix%
\footnote{For better readability, zeroes in matrices are often denoted with dots.}
\begin{subequations}
\begin{equation}
\frac16\begin{bmatrix}
 3   &\cdot&\cdot& 1   &\cdot& 1   & 1   &\cdot\\
\cdot&\cdot&\cdot&\cdot&\cdot&\cdot&\cdot&\cdot\\
\cdot&\cdot&\cdot&\cdot&\cdot&\cdot&\cdot&\cdot\\
 1   &\cdot&\cdot& 1   &\cdot&\cdot&\cdot&\cdot\\
\cdot&\cdot&\cdot&\cdot&\cdot&\cdot&\cdot&\cdot\\
 1   &\cdot&\cdot&\cdot&\cdot& 1   &\cdot&\cdot\\
 1   &\cdot&\cdot&\cdot&\cdot&\cdot& 1   &\cdot\\
\cdot&\cdot&\cdot&\cdot&\cdot&\cdot&\cdot&\cdot
\end{bmatrix}\;\in\text{Class 2.1}.
\end{equation}
It can be easily checked that its partial transpose is not positive,
(see in (\ref{eq:critPeres}), 
this is enough to be checked for only one subsystem because of the permutation-invariance of the state)
so it is not $\alpha_2$-separable for any $2$-partite split,
hence it is in Class 2.1.
An example for a permutation-invariant state in Class 2.8
is given in equation (14) of \cite{Acinetal3QBMixClass} with $0<a=b=1/c$, having the matrix
\begin{equation} 
\frac1{2+3\left(a+\frac1a\right)}\begin{bmatrix}
 1   &\cdot&\cdot&\cdot&\cdot&\cdot&\cdot& 1   \\
\cdot& a   &\cdot&\cdot&\cdot&\cdot&\cdot&\cdot\\
\cdot&\cdot& a   &\cdot&\cdot&\cdot&\cdot&\cdot\\
\cdot&\cdot&\cdot&\frac1a&\cdot&\cdot&\cdot&\cdot\\
\cdot&\cdot&\cdot&\cdot& a   &\cdot&\cdot&\cdot\\
\cdot&\cdot&\cdot&\cdot&\cdot&\frac1a&\cdot&\cdot\\
\cdot&\cdot&\cdot&\cdot&\cdot&\cdot&\frac1a&\cdot\\
 1   &\cdot&\cdot&\cdot&\cdot&\cdot&\cdot& 1   \\
\end{bmatrix}\;\in\text{Class 2.8 if and only if $a\neq1$}.
\end{equation}
\end{subequations}
This state is entangled if and only if $a\neq1$,
and it is in $\mathcal{D}_{\alpha_2}$ for all $\alpha_2$ \cite{Acinetal3QBMixClass}.

If we consider the $\tpl{d}=(2,2,2)$ three-qubit system
as a $\tpl{d}=(2,4)$ qubit-qudit system then some well-known and easy-to-use \emph{bipartite} separability criteria
give rise to separability criteria for $\bigcap_{\alpha_2}\mathcal{D}_{\alpha_2}$,
hence for the union of Class 2.8 and Class 3.
(This one is also a convex set since it is the intersection of convex ones.)
First we investigate these criteria.

\section{Bipartite separability criteria}
\label{sec:SepCrit.2Part}

In this section we consider our system as a $\tpl{d}=(2,4)$ qubit-qudit system
(with Hilbert-spaces $\mathcal{H}_A=\mathcal{H}_1$ and $\mathcal{H}_B=\mathcal{H}_{23}$)
and investigate some criteria of $1|23$-separability
which means the union of Classes 2.8 and 3.
To do this,
we will need the spectra of the density matrix $\varrho$ given in equation (\ref{eq:lo}) and of its marginals,
having the matrices
\begin{subequations}
\begin{align}
\label{eq:MxR}
\varrho=&\begin{bmatrix}
 \td+\tg & \cdot   & \cdot   & \cdot   & \cdot   & \cdot   & \cdot   & \tg     \\
 \cdot   & \td+\tw & \tw     & \cdot   & \tw     & \cdot   & \cdot   & \cdot   \\
 \cdot   & \tw     & \td+\tw & \cdot   & \tw     & \cdot   & \cdot   & \cdot   \\
 \cdot   & \cdot   & \cdot   & \td     & \cdot   & \cdot   & \cdot   & \cdot   \\
 \cdot   & \tw     & \tw     & \cdot   & \td+\tw & \cdot   & \cdot   & \cdot   \\
 \cdot   & \cdot   & \cdot   & \cdot   & \cdot   & \td     & \cdot   & \cdot   \\
 \cdot   & \cdot   & \cdot   & \cdot   & \cdot   & \cdot   & \td     & \cdot   \\
 \tg     & \cdot   & \cdot   & \cdot   & \cdot   & \cdot   & \cdot   & \td+\tg 
\end{bmatrix},\\
\label{eq:MxR23}
\varrho_{23}=&\begin{bmatrix}
2\td+\tg+\tw & \cdot   & \cdot   & \cdot   \\
 \cdot   &2\td+\tw & \tw     & \cdot   \\
 \cdot   & \tw     &2\td+\tw & \cdot   \\
 \cdot   & \cdot   & \cdot   &2\td+\tg  
\end{bmatrix},\\
\label{eq:MxR1}
\varrho_1=&\begin{bmatrix}
4\td+\tg+2\tw & \cdot   \\
 \cdot   &4\td+\tg+\tw   \\
\end{bmatrix}.
\end{align}
\end{subequations}
Due to the special structure of $\varrho$,
finding the eigenvalues of these matrices is not a difficult task.
It turns out that all of the eigenvalues are linear in the parameters $g$ and $w$,
and 
\begin{subequations}
\begin{align}
\label{eq:spectRho}
\Spect\varrho&=
\begin{aligned}[t]\bigl\{
\;\td+2\tg &=(3+21g-3w)/24,\\
\td+3\tw   &=(3-3g+21w)/24,\\
\td        &=(3-3g-3w)/24\quad\text{(6 times)}\bigr\},
\end{aligned}\\
\label{eq:spectRho23}
\Spect\varrho_{23}&=
\begin{aligned}[t]\bigl\{
\qquad 
2\td+2\tw    &=(6-6g+10w)/24,\\
2\td+\tg+\tw &=(6+6g+2w)/24,\\
2\td+\tg     &=(6+6g-6w)/24,\\
2\td         &=(6-6g-6w)/24 \quad \bigr\},
\end{aligned}\\
\label{eq:spectRho1}
\Spect\varrho_{1}&=
\begin{aligned}[t]\bigl\{\;
4\td+\tg+2\tw &=(12+4w)/24\\
4\td+\tg+\tw  &=(12-4w)/24 \;\bigr\}.
\end{aligned}
\end{align}
\end{subequations}
Here and in the following, we give expressions with and without $\tilde\;$.
This is because the expressions with the quantities $\td,\tg,\tw$ are much simpler 
and still expressive
as they refer to the original mixing weights,
on the other hand we plot in the $g,w$ coordinates.

\subsection{Majorization criterion}
\label{subsec:SepCrit.2Part.Maj}

Now we turn to the \emph{majorization criterion} for bipartite systems.
It has been found by Nielssen and Kempe \cite{NielssenKempeMaj}, and it states that
\emph{for a separable state the whole system is more disordered than any of its subsystems,} that is,
\begin{equation}
\label{eq:critMaj}
\varrho\quad\text{separable}\qquad\Longrightarrow\qquad 
\varrho\preceq\varrho_A \quad\text{and}\quad \varrho\preceq\varrho_B,
\end{equation}
where the comparsion of disorderness is given by majorization, see in section~\ref{subsec:QM.QuantSys.Mixedness}.
The right-hand side~of (\ref{eq:critMaj}) can also be true for entangled states,
but if it does not hold then the state must be entangled.

Let us see what the majorization criterion states about the noisy GHZ-W mixture $\varrho$ given by equation (\ref{eq:lo}).
We can write out the right-hand side~of (\ref{eq:critMaj}) explicitly
using the spectra given by equations (\ref{eq:spectRho})-(\ref{eq:spectRho1}),
then we have to decide whether the inequalities in (\ref{eq:major}) hold.
For this we have to order the eigenvalues of the density matrices in non-increasing order.
These orderings depend on the ranges of the parameters
and it turns out that we have to distinguish between four cases.
These cases are as follows:
$0\leq g\leq2/3w$,
$2/3w\leq g\leq w$,
$w\leq g\leq4/3w$ and
$4/3w\leq g\leq1$.
It also turns out that in all of these cases
every inequality of (\ref{eq:major}) holds except three ones.
These are as follows:
\begin{equation} 
\label{eq:tablicsku}
\begin{split} 
\varrho&\in\bigcap_{\alpha_2}\mathcal{D}_{\alpha_2}\qquad\Longrightarrow\qquad\\
&\left\{\begin{array}{c|cc|c} 
\text{case} & \text{(i)} & \text{(ii)} & \text{(iii)} \\ 
\hline 
0\leq g\leq2/3w & w\leq3/11-3/11g & w\leq 1-3g      & w\leq9/17+3/17g\\ 
2/3w\leq g\leq w& w\leq3/19+3/19g & w\leq 1-3g      & w\leq9/17+3/17g\\ 
w\leq g\leq4/3w & 3g-3/5\leq w    & w\leq 1-3g      & 3g-9/7\leq w \\ 
4/3w\leq g\leq1 & 3g-3/5\leq w    & w\leq3/11-3/11g & 3g-9/7\leq w
\end{array} \right.
\end{split}
\end{equation}
where in columns (i) and (ii) are the first two inequalities of (\ref{eq:major})
(that is, for $k=1,2$)
written on $\varrho\preceq\varrho_{23}$,
and in column (iii)  is the first inequality of (\ref{eq:major})
written on $\varrho\preceq\varrho_1$ in all of the four cases.
We can make the inequalities of (\ref{eq:tablicsku}) expressive with the help of figure~\ref{fig:Maj}.
\begin{figure}
 \setlength{\unitlength}{0.001428571\textwidth}
 \begin{picture}(420,415)
  \put(0,0){\includegraphics[width=0.6\textwidth]{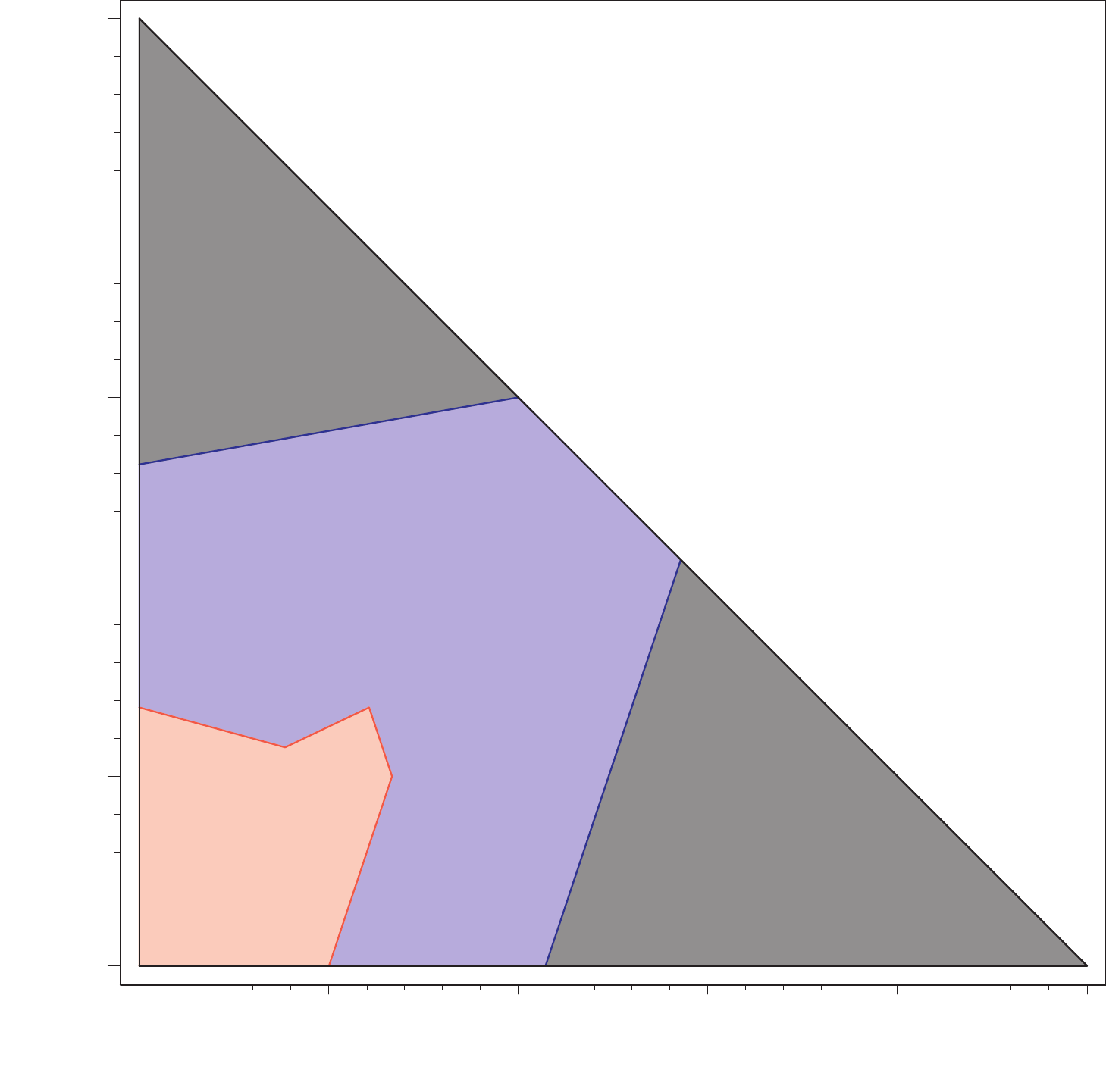}}
  \put(10,226){\makebox(0,0)[r]{\strut{}$w$}}
  \put(235,6){\makebox(0,0)[r]{\strut{}$g$}}
{\small
  \put(56,27){\makebox(0,0)[r]{\strut{}$0$}}
  \put(133,27){\makebox(0,0)[r]{\strut{}$0.2$}}
  \put(205,27){\makebox(0,0)[r]{\strut{}$0.4$}}
  \put(277,27){\makebox(0,0)[r]{\strut{}$0.6$}}
  \put(349,27){\makebox(0,0)[r]{\strut{}$0.8$}}
  \put(415,27){\makebox(0,0)[r]{\strut{}$1$}}
  \put(38,46){\makebox(0,0)[r]{\strut{}$0$}}
  \put(38,118){\makebox(0,0)[r]{\strut{}$0.2$}}
  \put(38,190){\makebox(0,0)[r]{\strut{}$0.4$}}
  \put(38,262){\makebox(0,0)[r]{\strut{}$0.6$}}
  \put(38,334){\makebox(0,0)[r]{\strut{}$0.8$}}
  \put(40,406){\makebox(0,0)[r]{\strut{}$1$}}
}
 \end{picture}
 \caption{Majorization criterion for the state (\ref{eq:lo}) on the $g$-$w$-plane.
(Red domain: $\varrho\preceq\varrho_1$ and $\varrho\preceq\varrho_{23}$,
blue domain: $\varrho\preceq\varrho_1$ and $\varrho\npreceq\varrho_{23}$,
grey domains: $\varrho\npreceq\varrho_1$ and $\varrho\npreceq\varrho_{23}$.)}
 \label{fig:Maj}
\end{figure}
It can be seen that in our case $\varrho\preceq\varrho_{23}$ implies $\varrho\preceq\varrho_1$,
so the bigger subsystem (the trace map on smaller subsystem) gives the stronger condition.
(This is not true in general. One can find a permutation invariant three-qubit state
where $\varrho\preceq\varrho_1$ and $\varrho\preceq\varrho_{23}$ can hold independently.)

The right-hand side~of (\ref{eq:critMaj}) holds for states of parameters in the red domain of figure~\ref{fig:Maj},
so it contains Classes 2.8 and 3.
On the other hand, states of parameters in the blue or grey domain are in Classes 2.1 or 1,
but there can also be such states in the red domain.
Moreover, \emph{the union of Classes 2.8 and 3 is a convex set inside the red domain.}
In the following we consider some other criteria in order to decrease the area of the red domain.
In this way we can identify more states to be in Classes 2.1 or 1.
But before this, we can make an interesting observation here.
One can check that
\emph{for the GHZ-white noise mixture} ($w=0$ line) the majorization criterion
$\varrho\preceq\varrho_1$ and $\varrho\preceq\varrho_{23}$
is necessary and sufficient for full-separability,
moreover, the criterion
$\varrho\preceq\varrho_1$ and $\varrho\npreceq\varrho_{23}$
is necessary and sufficient for Class 2.1,
and the criterion
$\varrho\npreceq\varrho_1$ and $\varrho\npreceq\varrho_{23}$
is necessary and sufficient for Class 1.
(See section~\ref{sec:SepCrit.Rho} for summary of known exact results on the GHZ-white noise mixture.)
Hence the condition of tripartite entanglement
is the violation of both majorization of (\ref{eq:critMaj}) for the GHZ-white noise mixture.

\subsection{Entropy criterion}
\label{subsec:SepCrit.2Part.Entr}

Now we can turn to the \emph{entropy criterion} for bipartite density matrices
\cite{Horodecki2EntCrit,Horodecki3EntCrit,TerhalEntCrit,VollbrechtWolfEntCrit}.
This is an entropy-based reformulation of the statement
``for a separable state the whole system is more disordered than its subsystems'', that is,
\begin{equation}
\label{eq:critEntr}
\varrho\quad\text{separable}\qquad\Longrightarrow\qquad 
S_q(\varrho)\geq S_q(\varrho_A) \quad\text{and}\quad S_q(\varrho)\geq S_q(\varrho_B).
\end{equation}
The right-hand side~of (\ref{eq:critEntr}) can also be true for entangled states,
but if it does not hold then the state must be entangled.
Here, we can use both R{\'e}nyi (\ref{eq:qRenyi}) and Tsallis~(\ref{eq:qTsallis}) entropies,
this is why we dropped the superscript of $S_q$,
since both kinds of generalized entropies lead to the same condition if $q\neq1$:
\begin{equation*}
\varrho\quad\text{separable}\qquad\Longrightarrow\qquad 
\tr \varrho^q \geq \tr \varrho_A^q \quad\text{and}\quad \tr \varrho^q \geq \tr \varrho_B^q.
\end{equation*}
Note that for $2\leq q$ integer parameters, the condition is expressed 
in the terms of the basic LU-invariant homogeneous polynomials given in the previous section.
The entropy criterion follows from the majorization criterion,
since the generalized entropies are \emph{Schur concave} functions on the set of probability distributions
(section \ref{subsec:QM.QuantSys.Mixedness}),
but, historically, entropic criteria for some particular $q$ parameters were proved first.
Therefore \emph{the entropy criterion can not be stronger than the majorization criterion.}
In the following we illustrate this with the state $\varrho$ given in equation (\ref{eq:lo})
for some particular choice of $q$.

The rank of $\varrho$, $\varrho_{23}$ and $\varrho_1$
can be determined easily due to the simple form of the spectra in equations (\ref{eq:spectRho})-(\ref{eq:spectRho1}).
Hence the entropy criterion for Hartley entropy (\ref{eq:qHartley}) can be readily examined.
First, $\rk\varrho=8$ if and only if $d\neq0$.
The right-hand side~of (\ref{eq:critEntr}) holds for these states.
It is true for all states that $\rk\varrho_1=2$.
On the $w=1-g$ line ($d=0$) we have to make distinction between the pure and mixed cases.
If $g=1$ (pure GHZ state) or $w=1$ (pure W state) then $\rk\varrho_{23}=2$ and $\rk\varrho=1$,
hence for this case $S_0(\varrho)\ngeq S_0(\varrho_1)$ and $S_0(\varrho)\ngeq S_0(\varrho_{23})$.
For the nontrivial mixtures of GHZ and W states $\rk\varrho_{23}=3$ and $\rk\varrho=2$
hence $S_0(\varrho)\geq S_0(\varrho_1)$ but $S_0(\varrho)\ngeq S_0(\varrho_{23})$.
So we can establish that \emph{the entropy criterion in the limit $q\to0$
(quantum-Hartley entropy)
is too weak, it identifies only the GHZ-W mixture to be entangled in the simplex.}

\begin{figure}
 \setlength{\unitlength}{0.001428571\textwidth}
 \begin{picture}(420,415)
  \put(0,0){\includegraphics[width=0.6\textwidth]{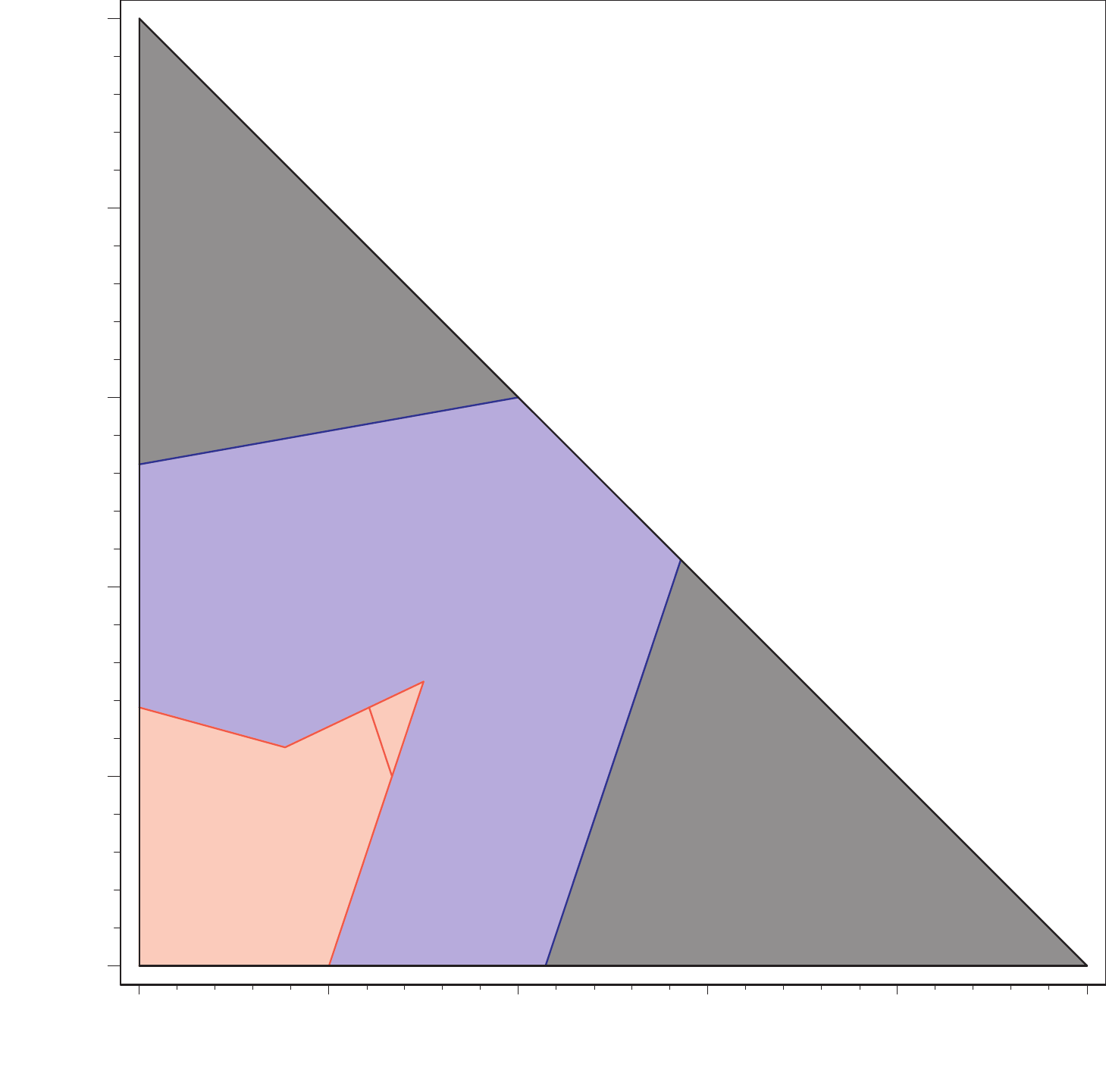}}
  \put(10,226){\makebox(0,0)[r]{\strut{}$w$}}
  \put(235,6){\makebox(0,0)[r]{\strut{}$g$}}
{\small
  \put(56,27){\makebox(0,0)[r]{\strut{}$0$}}
  \put(133,27){\makebox(0,0)[r]{\strut{}$0.2$}}
  \put(205,27){\makebox(0,0)[r]{\strut{}$0.4$}}
  \put(277,27){\makebox(0,0)[r]{\strut{}$0.6$}}
  \put(349,27){\makebox(0,0)[r]{\strut{}$0.8$}}
  \put(415,27){\makebox(0,0)[r]{\strut{}$1$}}
  \put(38,46){\makebox(0,0)[r]{\strut{}$0$}}
  \put(38,118){\makebox(0,0)[r]{\strut{}$0.2$}}
  \put(38,190){\makebox(0,0)[r]{\strut{}$0.4$}}
  \put(38,262){\makebox(0,0)[r]{\strut{}$0.6$}}
  \put(38,334){\makebox(0,0)[r]{\strut{}$0.8$}}
  \put(40,406){\makebox(0,0)[r]{\strut{}$1$}}
}
 \end{picture}
 \caption{Entropy criterion in the limit $q\to\infty$ for the state (\ref{eq:lo}) on the $g$-$w$-plane.
(Red domain: $S_\infty(\varrho) \geq S_\infty(\varrho_1)$ and $S_\infty(\varrho) \geq S_\infty(\varrho_{23})$,
blue domain:       $S_\infty(\varrho) \geq S_\infty(\varrho_1)$ and $S_\infty(\varrho)\ngeq S_\infty(\varrho_{23})$,
grey domains:      $S_\infty(\varrho)\ngeq S_\infty(\varrho_1)$ and $S_\infty(\varrho)\ngeq S_\infty(\varrho_{23})$.)}
 \label{fig:Sinf}
\end{figure}

Consider now the entropy criterion in the $q\to\infty$ limit.
This can easily be done because the inequalities of the right-hand side~of (\ref{eq:critEntr})
are the same as the ones in the (i)th and (iii)th column of (\ref{eq:tablicsku}),
which are written on the maximal eigenvalues.
Hence in this case we have fewer restrictions,
and one can see in figure~\ref{fig:Sinf} that
the right-hand side~of (\ref{eq:critEntr}) indeed holds for more states than the right-hand side~of~(\ref{eq:critMaj})
in the case of the majorization criterion.
Hence \emph{the entropy criterion in the $q\to\infty$ limit
(quantum-Chebyshev entropy)
identifies a little bit fewer state to be entangled
than the majorization criterion in our case.}

Increasing $q$ from $0$ to $\infty$ one can see in figure~\ref{fig:S}
how the borderlines of the domains of the entropy criterion
shrink to the ones in figure~\ref{fig:Sinf}.
It is not true in general
that if $H_q(\tpl{p})\leq H_q(\tpl{q})$ and $q\geq q'$ then $H_{q'}(\tpl{p})\leq H_{q'}(\tpl{q})$.
For these particular spectra it seems like that the domains of smaller $q$s would contain the domains of larger $q$s,
but for the large values of $q$ one can see that this is not true (inset in figure \ref{fig:S}).
However, no line can cross the border of the domain of majorization criterion,
since the entropy criterion can not be stronger than the majorization criterion.


\begin{figure}
 \setlength{\unitlength}{0.001428571\textwidth}
 \begin{picture}(420,415)
  \put(0,0){\includegraphics[width=0.6\textwidth]{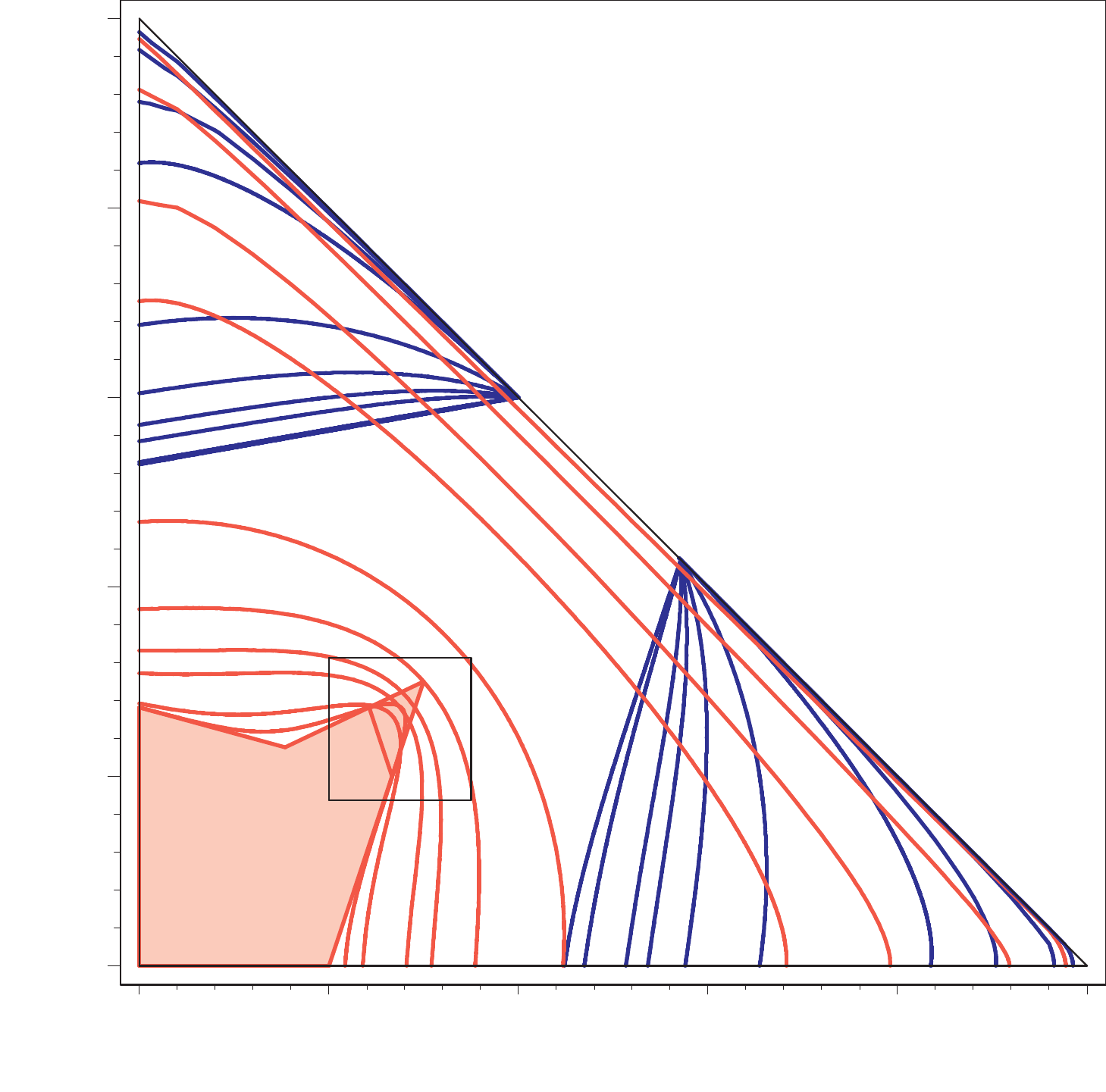}}
  \put(206,208){\includegraphics[width=0.3\textwidth]{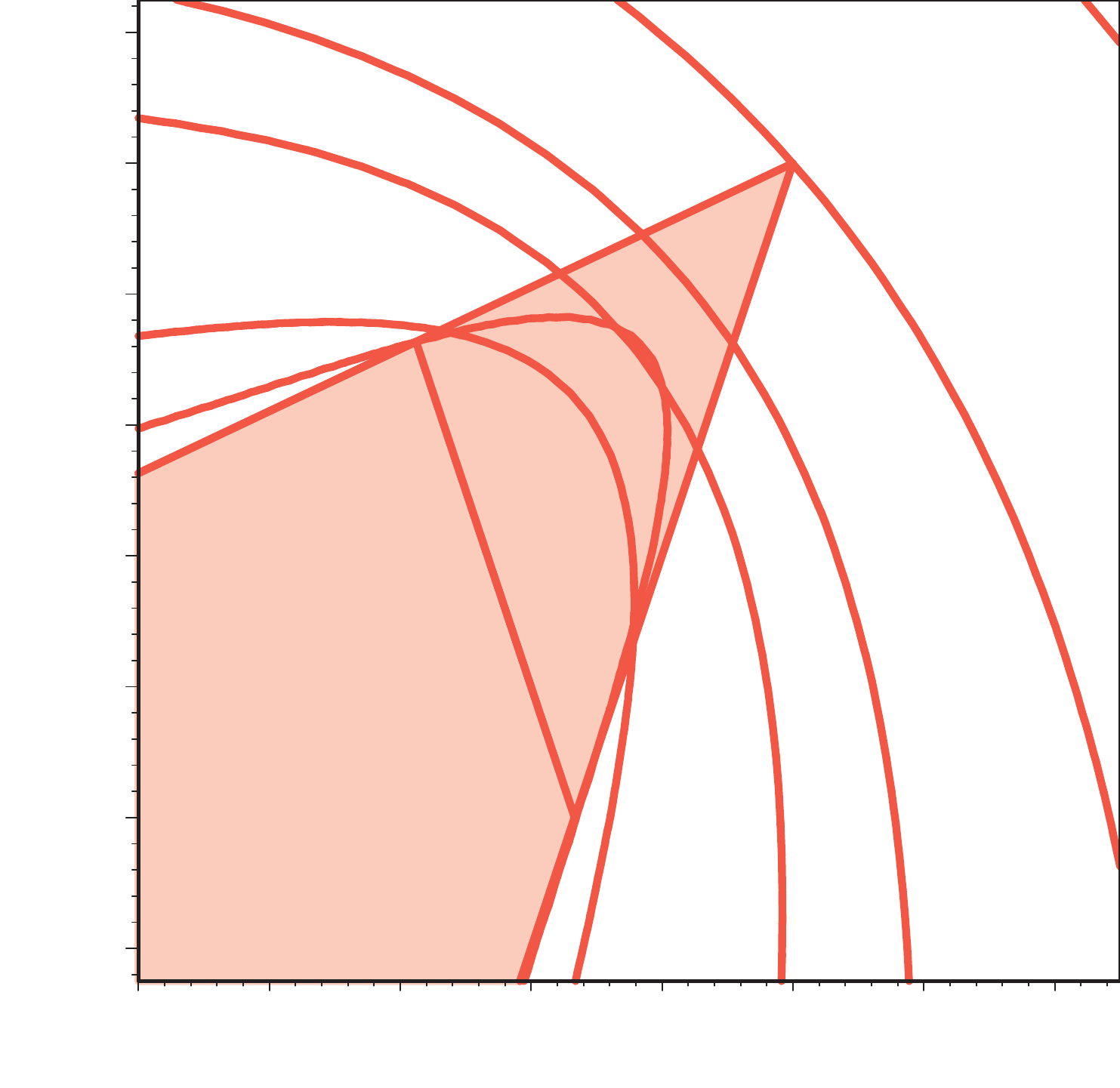}}
  \put(10,226){\makebox(0,0)[r]{\strut{}$w$}}
  \put(235,6){\makebox(0,0)[r]{\strut{}$g$}}
{\small
  \put(56,27){\makebox(0,0)[r]{\strut{}$0$}}
  \put(133,27){\makebox(0,0)[r]{\strut{}$0.2$}}
  \put(205,27){\makebox(0,0)[r]{\strut{}$0.4$}}
  \put(277,27){\makebox(0,0)[r]{\strut{}$0.6$}}
  \put(349,27){\makebox(0,0)[r]{\strut{}$0.8$}}
  \put(415,27){\makebox(0,0)[r]{\strut{}$1$}}
  \put(38,46){\makebox(0,0)[r]{\strut{}$0$}}
  \put(38,118){\makebox(0,0)[r]{\strut{}$0.2$}}
  \put(38,190){\makebox(0,0)[r]{\strut{}$0.4$}}
  \put(38,262){\makebox(0,0)[r]{\strut{}$0.6$}}
  \put(38,334){\makebox(0,0)[r]{\strut{}$0.8$}}
  \put(40,406){\makebox(0,0)[r]{\strut{}$1$}} }
{\scriptsize
  \put(249,219){\makebox(0,0)[r]{\strut{}$0.2$}}
  \put(292,219){\makebox(0,0)[r]{\strut{}$0.24$}}
  \put(340,219){\makebox(0,0)[r]{\strut{}$0.28$}}
  \put(390,219){\makebox(0,0)[r]{\strut{}$0.32$}}
  \put(228,257){\makebox(0,0)[r]{\strut{}$0.2$}}
  \put(226,306){\makebox(0,0)[r]{\strut{}$0.24$}}
  \put(227,355){\makebox(0,0)[r]{\strut{}$0.28$}}
  \put(227,404){\makebox(0,0)[r]{\strut{}$0.32$}}
}
 \end{picture}
 \caption{Entropy criterion for
$q=1/4, 2/4, 3/4, 1, 2, 3, 4, 5, 10, 20$ for the state (\ref{eq:lo}) on the $g$-$w$-plane.
(Red  curves: the border of the domain $S_q(\varrho)\geq S_q(\varrho_1)$ and $S_q(\varrho) \geq S_q(\varrho_{23})$,
blue curves: the border of the domain $S_q(\varrho)\geq S_q(\varrho_1)$ and $S_q(\varrho)\ngeq S_q(\varrho_{23})$.
Red domain: copied from figure~\ref{fig:Sinf} of the $q\to\infty$ case.)}
 \label{fig:S}
\end{figure}

Although the entropic criteria maybe plausible and motivated physically,
since those are related to the mixedness and entropies,
but note that spectral properties are not sufficient for the detection of entanglement.
An example for this can be given by the following two density matrices:
\begin{equation*}
\frac13\begin{bmatrix}
 1    & \cdot & \cdot & \cdot \\
\cdot &  1    &  1    & \cdot \\
\cdot &  1    &  1    & \cdot \\
\cdot & \cdot & \cdot & \cdot
\end{bmatrix},\qquad \qquad
\frac13\begin{bmatrix}
 2    & \cdot & \cdot & \cdot \\
\cdot & \cdot & \cdot & \cdot \\
\cdot & \cdot & \cdot & \cdot \\
\cdot & \cdot & \cdot &  1   
\end{bmatrix}.
\end{equation*}
Both of them have the same spectrum, 
and the states of the subsystems have the same spectrum as well,
so entropy and majorization criteria give the same for both of them.
But the first one is the state of the bipartite subsystem of the W state (\ref{eq:W}), 
and it is entangled,
while the second one is diagonal, hence separable.

\subsection{Partial transposition criterion}
\label{subsec:SepCrit.2Part.PPT}

We recall now the \emph{partial transposition criterion} of Peres 
(section~\ref{subsec:QM.Ent.2Part}).
It considers the positivity of the partially transposed bipartite density matrix,
and for our case when $d_A=2$ and $d_B=4$ it is
\begin{equation}
\label{eq:critPPT}
\varrho\quad\text{separable}\qquad\Longrightarrow\qquad \varrho^{\transp_A} \geq 0.
\end{equation}
The partial transposition criterion is the consequence of the \emph{positive maps criterion} \cite{HorodeckiPosMapWitness}.
It states that
\begin{equation}
\label{eq:critPosMaps}
\varrho\quad\text{separable}\qquad\Longleftrightarrow\qquad
\bigl(\Phi\otimes\Id\bigr)(\varrho) \geq 0\quad
\text{for all positive $\Phi\in\Lin\bigl(\Lin(\mathcal{H}_1)\bigr)$}.
\end{equation}
This is a necessary and sufficient criterion, but we are not able to check it for all $\Phi$.
But we can consider a particular class of positive maps to obtain necessary but not sufficient criteria.
For example for $\Phi(\omega)=\omega^\transp$ we get back the partial transposition criterion.%
\footnote{Note that for completely positive maps $\bigl(\Phi\otimes\Id\bigr)(\varrho) \geq 0$
holds by definition, so it is enough to consider only positive but not completely positive maps.}

Let us apply the partial transposition criterion to the state $\varrho$ of equation (\ref{eq:lo}),
which results the matrix
\begin{equation}
\label{eq:MxRT1}
\varrho^{\transp_1}=\begin{bmatrix}
 \td+\tg & \cdot   & \cdot   & \cdot   & \cdot   & \tw     & \tw     & \cdot   \\
 \cdot   & \td+\tw & \tw     & \cdot   & \cdot   & \cdot   & \cdot   & \cdot   \\
 \cdot   & \tw     & \td+\tw & \cdot   & \cdot   & \cdot   & \cdot   & \cdot   \\
 \cdot   & \cdot   & \cdot   & \td     & \tg     & \cdot   & \cdot   & \cdot   \\
 \cdot   & \cdot   & \cdot   & \tg     & \td+\tw & \cdot   & \cdot   & \cdot   \\
 \tw     & \cdot   & \cdot   & \cdot   & \cdot   & \td     & \cdot   & \cdot   \\
 \tw     & \cdot   & \cdot   & \cdot   & \cdot   & \cdot   & \td     & \cdot   \\
 \cdot   & \cdot   & \cdot   & \cdot   & \cdot   & \cdot   & \cdot   & \td+\tg 
\end{bmatrix}.
\end{equation}
The spectrum of $\varrho^{\transp_1}$ can easily be calculated due to its block-structure, leading to
\begin{equation}
\label{eq:spectRhoPT}
\Spect\varrho^{\transp_1}=
\begin{aligned}[t]\bigl\{\;
\td+\tw/2\pm\sqrt{4\tg^2+\tw^2}/2 &=(3-3g+w\pm4\sqrt{9g^2+w^2})/24,\\ 
\td+\tg/2\pm\sqrt{\tg^2+8\tw^2}/2 &=(3+3g-3w\pm2\sqrt{9g^2+32w^2})/24,\\
\td+\tg &=(3+9g-3w)/24,\\
\td+2\tw &=(3-3g+13w)/24,\\
\td &=(3-3g-3w)/24\quad\text{(2 times)}\;\bigr\}.
\end{aligned}
\end{equation}
Only the lower-sign version of the first two pairs of eigenvalues can be less than zero
hence we get two inequalities for the positivity of $\varrho^{\transp_1}$:
\begin{subequations}
\begin{align}
\varrho\in\bigcap_{\alpha_2}\mathcal{D}_{\alpha_2}\qquad\Longrightarrow\qquad 
\label{eq:critPPT1}
&\left\{
\begin{aligned}
0&\leq \td^2+\td\tw-\tg^2\\
0&\leq -135g^2- 15w^2- 6gw-18g+ 6w+9,
\end{aligned}
\right.\\
\label{eq:critPPT2}
&\left\{
\begin{aligned}
0&\leq \td^2+\td\tg-2\tw^2\\
0&\leq - 27g^2-119w^2-18gw+18g-18w+9.
\end{aligned}
\right.
\end{align}
\end{subequations}
Each inequality of these holds inside an ellipse.
These ellipses intersect nontrivially
and in the intersection the right-hand side~of (\ref{eq:critPPT}) holds.
(Red curves in figure~\ref{fig:PPTRed}.)
The parameter values $g=1/5$ and $w=(24\sqrt2-9)/119=0.209589\dots$ are
the bounds for the union of Class 2.8 and 3
for the GHZ-white noise ($w=0$) and the W-white noise ($g=0$) mixtures respectively.

\begin{figure}
 \setlength{\unitlength}{0.001428571\textwidth}
 \begin{picture}(420,415)
  \put(0,0){\includegraphics[width=0.6\textwidth]{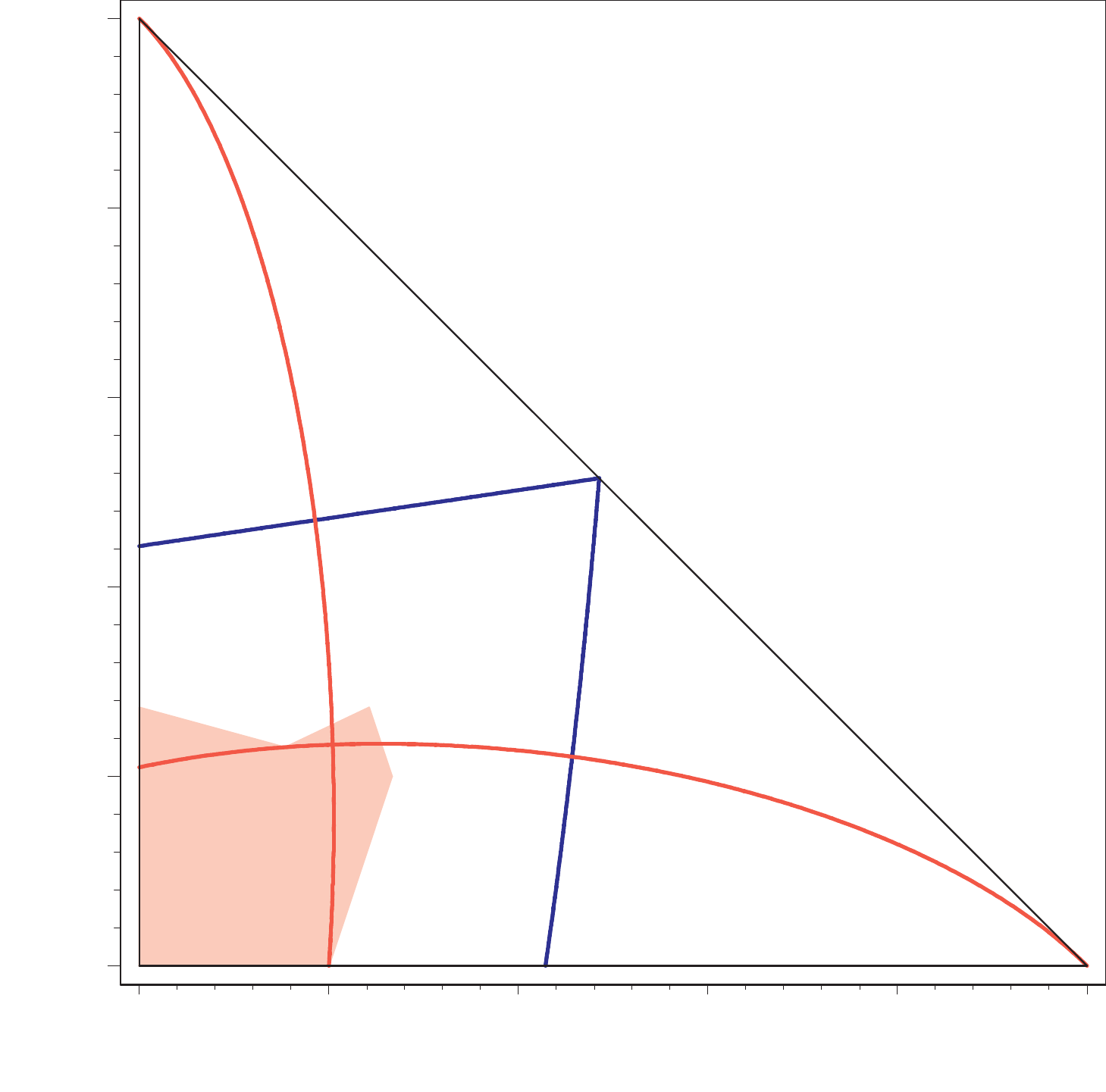}}
  \put(10,226){\makebox(0,0)[r]{\strut{}$w$}}
  \put(235,6){\makebox(0,0)[r]{\strut{}$g$}}
{\small
  \put(56,27){\makebox(0,0)[r]{\strut{}$0$}}
  \put(133,27){\makebox(0,0)[r]{\strut{}$0.2$}}
  \put(205,27){\makebox(0,0)[r]{\strut{}$0.4$}}
  \put(277,27){\makebox(0,0)[r]{\strut{}$0.6$}}
  \put(349,27){\makebox(0,0)[r]{\strut{}$0.8$}}
  \put(415,27){\makebox(0,0)[r]{\strut{}$1$}}
  \put(38,46){\makebox(0,0)[r]{\strut{}$0$}}
  \put(38,118){\makebox(0,0)[r]{\strut{}$0.2$}}
  \put(38,190){\makebox(0,0)[r]{\strut{}$0.4$}}
  \put(38,262){\makebox(0,0)[r]{\strut{}$0.6$}}
  \put(38,334){\makebox(0,0)[r]{\strut{}$0.8$}}
  \put(40,406){\makebox(0,0)[r]{\strut{}$1$}}
}
  \put(120,180){\makebox(0,0)[r]{\strut{}(\ref{eq:critPPT1})}}
  \put(200,110){\makebox(0,0)[r]{\strut{}(\ref{eq:critPPT2})}}
  \put(215,170){\makebox(0,0)[r]{\strut{}(\ref{eq:critRed3})}}
  \put(190,205){\makebox(0,0)[r]{\strut{}(\ref{eq:critRed4})}}
 \end{picture}
 \caption{Partial transposition and reduction criteria for the state (\ref{eq:lo}) on the $g$-$w$-plane.
(Inequalities (\ref{eq:critPPT1}) and~(\ref{eq:critPPT2}) of partial transposition criterion hold inside the intersection of the red ellipses.
Blue curves: the borders of domains inside the additional inequalities (\ref{eq:critRed3}) and~(\ref{eq:critRed4}) of reduction criterion hold.
Red domain: copied from figure~\ref{fig:Maj} of majorization criterion.)}
 \label{fig:PPTRed}
\end{figure}

The partial transposition criterion states that
if a state is in Classes 2.8 or 3 then its parameters are inside the intersection of the ellipses,
but there can also be states of Classes 2.1 or 1 in this domain.
On the other hand the states must be in Classes 2.1 or 1 for parameters outside.
\emph{The inequalities of (\ref{eq:critPPT1}) and~(\ref{eq:critPPT2}) are strong in detection of GHZ and W states, respectively.}
In figure~\ref{fig:PPTRed} we have also plotted the corresponding domain of the majorization criterion.
(One can check that the only intersection-points of the borderlines
of the corresponding domains of the two criteria
are $(g=2/13,w=3/13)$ and $(g=1/5,w=0)$.
This criterion is also a necessary and sufficient one for the
full separability of the $w=0$ GHZ-white noise mixture.)
It can be seen that
\emph{the partial transposition criterion
gives stronger condition than the majorization criterion,}
it identifies more state to be in Classes 2.1 or 1.
Hence the majorization criterion can not identify entangled states of positive partial transpose (PPTES)
on the simplex defined in equation (\ref{eq:lo}).

\subsection{Reduction criterion}
\label{subsec:SepCrit.2Part.Red}

The next one of the examined criteria is the \emph{reduction criterion}~\cite{HorodeckiRedCrit,CerfRedCrit}.
It states that
\begin{equation}
\label{eq:critRed}
\varrho\quad\text{separable}\qquad\Longrightarrow\qquad 
\varrho_A\otimes\Id_B-\varrho\geq0 \quad\text{and}\quad \Id_A\otimes\varrho_B-\varrho\geq0.
\end{equation}
This is the consequence of the positive maps criterion (\ref{eq:critPosMaps})
for the particular positive map $\Phi(\omega)=(\tr\omega)\Id - \omega$.
The importance of this criterion is that
its violation is sufficient criterion of \emph{distillability}~\cite{HorodeckiRedCrit}.
It is known that
\emph{the reduction criterion can not be stronger than the partial transposition criterion
and they are equivalent for qubit-qudit systems}~\cite{HorodeckiRedCrit}.
Since our state $\varrho$ defined in equation (\ref{eq:lo}) is the permutation invariant one of three qubits
considered as a $\tpl{d}=(2,4)$ qubit-qudit system,
the equivalence of these two criteria means that
some kind of pure state entanglement between $1$ and $23$ can be distilled out
from every state of non-positive partial transpose.
In other words \emph{in the simplex defined by equation (\ref{eq:lo})
there are no bound entangled $\tpl{d}=(2,4)$ states of non-positive partial transpose,}
while the entangled states of positive partial transpose are all bound entangled,
which is a general result \cite{Horodecki3BoundEnt}.

We can illustrate the equivalence of the partial transposition and reduction criteria.
To do this we have to examine the positivity of the matrices
$\Id_1\otimes\varrho_{23}-\varrho$ and
$\varrho_1\otimes\Id_{23}-\varrho$, which are of the form
\begin{subequations}
\begin{equation}
\label{eq:MxRRI123}
\Id_1\otimes\varrho_{23}-\varrho=\begin{bmatrix}
 \td+\tw & \cdot   & \cdot   & \cdot   & \cdot   & \cdot   & \cdot   &-\tg     \\
 \cdot   & \td     & \cdot   & \cdot   &-\tw     & \cdot   & \cdot   & \cdot   \\
 \cdot   & \cdot   & \td     & \cdot   &-\tw     & \cdot   & \cdot   & \cdot   \\
 \cdot   & \cdot   & \cdot   & \td+\tg & \cdot   & \cdot   & \cdot   & \cdot   \\
 \cdot   &-\tw     &-\tw     & \cdot   & \td+\tg & \cdot   & \cdot   & \cdot   \\
 \cdot   & \cdot   & \cdot   & \cdot   & \cdot   & \td+\tw & \tw     & \cdot   \\
 \cdot   & \cdot   & \cdot   & \cdot   & \cdot   & \tw     & \td+\tw & \cdot   \\
-\tg     & \cdot   & \cdot   & \cdot   & \cdot   & \cdot   & \cdot   & \td 
\end{bmatrix},
\end{equation}
{\setlength{\mathindent}{0pt}
\begin{multline}
\label{eq:MxRR1I23}
\varrho_1\otimes\Id_{23}-\varrho=\\
\begin{bmatrix}
3\td+2\tw    & \cdot       & \cdot       & \cdot       & \cdot       & \cdot       & \cdot       &-\tg     \\
 \cdot       &3\td+\tg+\tw &-\tw         & \cdot       &-\tw         & \cdot       & \cdot       & \cdot   \\
 \cdot       &-\tw         &3\td+\tg+\tw & \cdot       &-\tw         & \cdot       & \cdot       & \cdot   \\
 \cdot       & \cdot       & \cdot       &3\td+\tg+2\tw& \cdot       & \cdot       & \cdot       & \cdot   \\
 \cdot       &-\tw         &-\tw         & \cdot       &3\td+\tg     & \cdot       & \cdot       & \cdot   \\
 \cdot       & \cdot       & \cdot       & \cdot       & \cdot       &3\td+\tg+\tw & \cdot       & \cdot   \\
 \cdot       & \cdot       & \cdot       & \cdot       & \cdot       & \cdot       &3\td+\tg+\tw & \cdot   \\
-\tg         & \cdot       & \cdot       & \cdot       & \cdot       & \cdot       & \cdot       &3\td+\tw
\end{bmatrix}.
\end{multline}}\noindent
\end{subequations}
Since $(\tr\omega)\Id-\omega = (\varepsilon\omega\varepsilon^\dagger)^\transp$ for $2\times2$ matrices
it turns out that
\begin{subequations}
\begin{align}
\label{eq:spectRhoI123}
\Spect(\Id_1\otimes\varrho_{23}-\varrho)&=\Spect\varrho^{\transp_1},
\intertext{while}
\label{eq:spectRho1I23}
\Spect(\varrho_1\otimes\Id_{23}-\varrho)&= 
\begin{aligned}[t]\bigl\{
3\td+3\tw/2\pm\sqrt{4\tg^2+\tw^2}/2 &=(9-9g+3w\pm4\sqrt{9g^2+w^2})/24, \\
3\td+\tg\pm\sqrt{2}\tw &=(9+3g-9w\pm8\sqrt{2}w)/24, \\
3\td+\tg+2\tw &=(9+3g+7w)/24\quad\text{(2 times)}, \\
3\td+\tg+\tw &=(9+3g-w)/24\quad\text{(2 times)}\;\bigr\}.
\end{aligned}
\end{align}
\end{subequations}
For $\Id_1\otimes\varrho_{23}-\varrho\geq0$ we have the same conditions as in equations (\ref{eq:critPPT1})-(\ref{eq:critPPT2})
of the partial transposition criterion.
The additional inequalities arise from the lower-sign version of the first two eigenvalues of $\varrho_1\otimes\Id_{23}-\varrho$,
leading to the criteria
\begin{subequations}
\begin{align}
\varrho\in\bigcap_{\alpha_2}\mathcal{D}_{\alpha_2}\qquad\Longrightarrow\qquad 
\label{eq:critRed3}
&\left\{
\begin{aligned}
0&\leq 9\td^2-\tg^2+9\td\tw+2\tw^2\\
0&\leq -63g^2- 7w^2- 54gw-162g+ 54w+81,
\end{aligned}
\right.\\
\label{eq:critRed4}
&\left\{
\begin{aligned}
0&\leq 3\td+\tg-\sqrt{2}\tw\\
0&\leq 3g-(9+8\sqrt{2})w+9.
\end{aligned}
\right.
\end{align}
\end{subequations}
The first one of them is true outside a hyperbola,
the second one is true under a line.~(Blue curves in figure~\ref{fig:PPTRed}.)

It can be seen that the last two inequalities (\ref{eq:critRed3})-(\ref{eq:critRed4}) do not restrict the
ones in equations (\ref{eq:critPPT1})-(\ref{eq:critPPT2}),
as it has to be,
and because of (\ref{eq:spectRhoI123})
\emph{the reduction criterion and the partial transposition criterion
hold for the same states of the GHZ-W-white noise mixture.}
Here we get the stronger condition for the map $\Phi(\omega)=(\tr\omega)\Id - \omega$
acting on the smaller subsystem.
We can also observe that
the inequalities of (\ref{eq:critRed3}) and~(\ref{eq:critRed4}) are good in detection of GHZ and W state respectively,
but not so good as the ones of partial transposition criterion.
However, one can check that on the $w=0$ GHZ-white noise mixture
the reduction criterion
$\Id_1\otimes\varrho_{23}-\varrho\geq0$ and $\varrho_1\otimes\Id_{23}-\varrho\geq0$
is necessary and sufficient for full-separability,
the criterion
$\Id_1\otimes\varrho_{23}-\varrho\ngeq0$ and $\varrho_1\otimes\Id_{23}-\varrho\geq0$
is necessary and sufficient for Class 2.1,
and the criterion
$\Id_1\otimes\varrho_{23}-\varrho\ngeq0$ and $\varrho_1\otimes\Id_{23}-\varrho\ngeq0$
is necessary and sufficient for Class 1
in the same fashion as in the majorization criterion of section~\ref{subsec:SepCrit.2Part.Maj}.

\subsection{Reshuffling criterion}
\label{subsec:SepCrit.2Part.Resh}

The \emph{reshuffling criterion} is independent of the partial transposition criterion,
so it can detect entangled states of positive partial transpose.
It states that
\begin{equation}
\label{eq:critReshuff}
\varrho\quad\text{separable}\qquad\Longrightarrow\qquad 
\norm{R(\varrho)}_{\tr} \leq 1,
\end{equation}
where the \emph{trace-norm} is $\norm{M}_{\tr}=\tr\sqrt{M^\dagger M}$,
and the \emph{reshuffling map} $R$ is defined on matrix elements as
$[R(\varrho)]^{i\phantom{i'}j\phantom{j'}}_{\phantom{i}i'\phantom{j}j'}=\varrho^{ij}_{\phantom{ij}i'j'}$.

The four nonzero singular values of the $4\times16$ reshuffled density matrix
\setcounter{MaxMatrixCols}{16}
\begin{equation}
\label{eq:MxRR24}
R(\varrho)=\begin{bmatrix}
 \td+\tg & \cdot   & \cdot   & \cdot   & \cdot   & \td+\tw & \tw     & \cdot   & \cdot   & \tw     & \td+\tw & \cdot   & \cdot   & \cdot   & \cdot   & \td     \\
 \cdot   & \cdot   & \cdot   & \tg     & \tw     & \cdot   & \cdot   & \cdot   & \tw     & \cdot   & \cdot   & \cdot   & \cdot   & \cdot   & \cdot   & \cdot   \\
 \cdot   & \tw     & \tw     & \cdot   & \cdot   & \cdot   & \cdot   & \cdot   & \cdot   & \cdot   & \cdot   & \cdot   & \tg     & \cdot   & \cdot   & \cdot   \\
 \td+\tw & \cdot   & \cdot   & \cdot   & \cdot   & \td     & \cdot   & \cdot   & \cdot   & \cdot   & \td     & \cdot   & \cdot   & \cdot   & \cdot   & \td+\tg 
\end{bmatrix},
\end{equation}
that is, the square root of the nonnegative eigenvalues of $R(\varrho)^\dagger R(\varrho)$ are
\begin{equation}
\Spect\sqrt{R(\varrho)R(\varrho)^\dagger} =\left\{
\frac12\sqrt{p_1\pm2\sqrt{p_2}},
\sqrt{\tg^2+2\tw^2},
\sqrt{\tg^2+2\tw^2}
\right\},
\end{equation}
where
\begin{equation*}
\begin{split}
p_1=& 16\td^2 +4\tg^2 +10\tw^2 +8\td\tg +12\td\tw,\\
p_2=& 64\td^4 +9\tw^4 +64\td^3\tg +96\td^3\tw +12\td\tw^3 \\
& +16\td^2\tg^2 +40\td^2\tw^2 +4\tg^2\tw^2 
 +80\td^2\tg\tw +16\td\tg^2\tw +24\td\tg\tw^2.
\end{split}
\end{equation*}
The sum of them is less or equal than $1$ inside a curve of high degree
which can be seen in figure~\ref{fig:Resh} (red curve).
States of Classes 2.8 and 3 must be inside this curve,
states outside this curve must belong to Classes 2.1 or 1,
but one can see that \emph{this criterion does not restrict the partial transposition criterion,}
it does not detect PPTESs
in the GHZ-W-white noise mixture (\ref{eq:lo}).

\begin{figure}
 \setlength{\unitlength}{0.001428571\textwidth}
 \begin{picture}(420,415)
  \put(0,0){\includegraphics[width=0.6\textwidth]{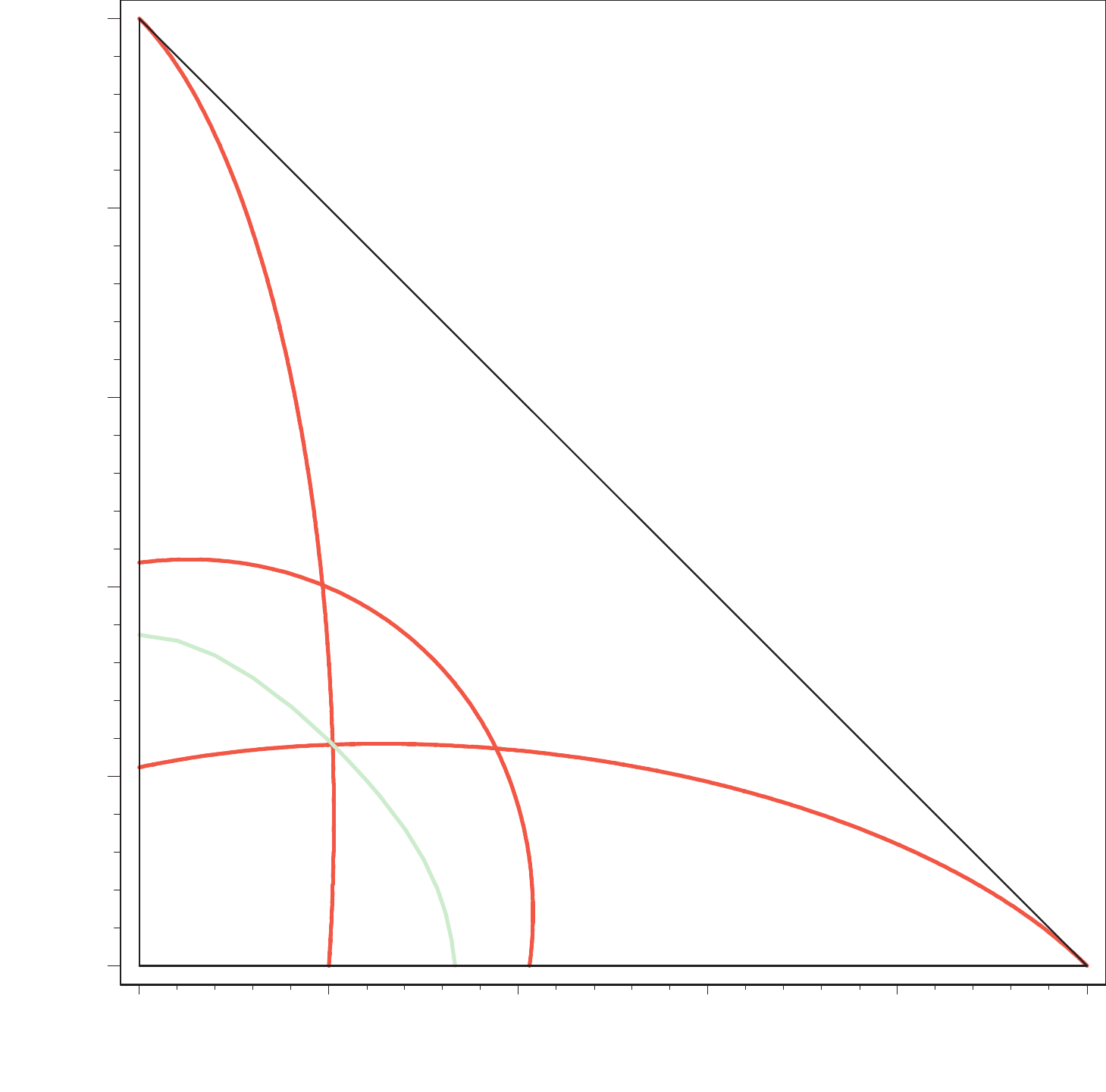}}
  \put(10,226){\makebox(0,0)[r]{\strut{}$w$}}
  \put(235,6){\makebox(0,0)[r]{\strut{}$g$}}
{\small
  \put(56,27){\makebox(0,0)[r]{\strut{}$0$}}
  \put(133,27){\makebox(0,0)[r]{\strut{}$0.2$}}
  \put(205,27){\makebox(0,0)[r]{\strut{}$0.4$}}
  \put(277,27){\makebox(0,0)[r]{\strut{}$0.6$}}
  \put(349,27){\makebox(0,0)[r]{\strut{}$0.8$}}
  \put(415,27){\makebox(0,0)[r]{\strut{}$1$}}
  \put(38,46){\makebox(0,0)[r]{\strut{}$0$}}
  \put(38,118){\makebox(0,0)[r]{\strut{}$0.2$}}
  \put(38,190){\makebox(0,0)[r]{\strut{}$0.4$}}
  \put(38,262){\makebox(0,0)[r]{\strut{}$0.6$}}
  \put(38,334){\makebox(0,0)[r]{\strut{}$0.8$}}
  \put(40,406){\makebox(0,0)[r]{\strut{}$1$}}
}
  \put(110,260){\makebox(0,0)[r]{\strut{}(\ref{eq:critPPT1})}}
  \put(280,100){\makebox(0,0)[r]{\strut{}(\ref{eq:critPPT2})}}
 \end{picture}
 \caption{Reshuffling criteria for the state (\ref{eq:lo}) on the $g$-$w$-plane.
(Red  curve: reshuffling criterion for $\tpl{d}=(2,4)$ system as in section~\ref{subsec:SepCrit.2Part.Resh},
green curve: reshuffling criterion for $\tpl{d}=(2,2,2)$ system as in section~\ref{subsec:SepCrit.3Part.Perm}.
We have also copied the borderlines of the domains in which (\ref{eq:critPPT1}) and~(\ref{eq:critPPT2}) of the partial transposition criterion hold
from figure~\ref{fig:PPTRed}.
The inequalities hold on the side of the curves containing the origin.)}
 \label{fig:Resh}
\end{figure}

\section{Tripartite separability criteria}
\label{sec:SepCrit.3Part}

In this section we consider the state given in equation (\ref{eq:lo})
as the state of a proper $\tpl{d}=(2,2,2)$ three-qubit system
and investigate some general $3$-qubit $k$-separability criteria.

\subsection{Permutation criterion}
\label{subsec:SepCrit.3Part.Perm}

First consider the \emph{permutation criterion} in general, which is given in \cite{HorodeckiPerm}.
Note that the reshuffling and the partial transpose of a density matrix are
nothing else than the permutations of the local matrix indices.
Moreover, since
the trace norm is the sum of the absolute values of the eigenvalues for hermitian matrices
and the trace is invariant under partial transposition
it turns out that $\varrho^{\transp_1}\geq0$ if and only if $\norm{\varrho^{\transp_1}}_{\tr}=1$.
So the partial transposition criterion (\ref{eq:critPPT})
and the reshuffling criterion (\ref{eq:critReshuff}) can be formulated in the same fashion.
Moreover, this can be done for $n$ subsystems in a general way~\cite{HorodeckiPerm}.

Let $\sigma\in \DscrGrp{S}_{2n}$ be a permutation of the $2n$ matrix indices
and let $\Phi_\sigma$ the map realizing this index permutation.
On elementary tensors 
 it permutes the factors
$\Phi_\sigma(\varphi_1\otimes\varphi_2\otimes\dots\otimes\varphi_{2n})
=\varphi_{\sigma(1)}\otimes\varphi_{\sigma(2)}\otimes\dots\otimes\varphi_{\sigma(2n)}$,
where $\varphi_i$ is an element of $\mathcal{H}_j$ or $\mathcal{H}_j^*$,
this is why we have not used the bra-ket notations.
If we apply this to a density operator 
\begin{equation*}
\varrho=\sum \varrho^{i_1i_2\dots i_n}_{\phantom{i_1i_2\dots i_n}i_{n+1}i_{n+2}\dots i_{2n}}
\cket{i_1i_2\dots i_n}\bra{i_{n+1}i_{n+2}\dots i_{2n}},
\end{equation*}
which is regarded as an element of
$\Lin(\mathcal{H}_1\otimes\mathcal{H}_2\otimes\dots\otimes\mathcal{H}_n)\isom
\mathcal{H}_1\otimes\mathcal{H}_2\otimes\dots\otimes\mathcal{H}_n\otimes
\mathcal{H}_1^*\otimes\mathcal{H}_2^*\otimes\dots\otimes\mathcal{H}_n^*$,
then the resulting operator is not a linear transformation of a particular Hilbert space in general.%
\footnote{It is not unique, hence important to specify, how the tensor factors give rise to the linear operator,
because the permutation criterion is formulated via the use of an operator norm
of operators mapping between different Hilbert spaces.
Here we adopt the convention that if we have a tensor of $2n$ factors,
then it represents a linear map from the dual of the second $n$ factors to the first $n$ factors.}
For example, for the permutation $\sigma=(35)\in \DscrGrp{S}_6$ in the $n=3$ case,
the reshuffled state $\Phi_{(35)}(\varrho)\in
\mathcal{H}_1\otimes\mathcal{H}_2\otimes\mathcal{H}_2^*\otimes
\mathcal{H}_1^*\otimes\mathcal{H}_3\otimes\mathcal{H}_3^*$,
so it is regarded as an element of 
$\Lin(\mathcal{H}_1\otimes\mathcal{H}_3^*\otimes\mathcal{H}_3\to\mathcal{H}_1\otimes\mathcal{H}_2\otimes\mathcal{H}_2^*)$.
Because of these,
in the \emph{general} writing of the reshuffling of a density operator
\begin{equation*}
[\Phi_\sigma(\varrho)]_{i_{\sigma(1)}i_{\sigma(2)}\dots i_{\sigma(n)};i_{\sigma(n+1)}i_{\sigma(n+2)}\dots i_{\sigma(2n)}}
=\varrho_{i_1i_2\dots i_n,i_{n+1}i_{n+2}\dots i_{2n}},
\end{equation*}
we can not distinguish between upper and lower indices independently from the defining permutation $\sigma$,
although we do that for \emph{particular} $\sigma$ permutations.

Now the \emph{permutation criterion} states that
\begin{equation}
\label{eq:critPerm}
\varrho\quad\text{fully separable}\qquad\Longrightarrow\qquad 
\norm{\Phi_\sigma(\varrho)}_{\tr} \leq 1,\quad \forall \sigma\in \DscrGrp{S}_{2n}.
\end{equation}
(This is easy to prove if we note that the trace-norm is sub-additive, hence convex, 
so it is enough to prove $\norm{\Phi_\sigma(\pi)}_{\tr} \leq 1$ for the fully separable pure state $\pi$.
For these we can use that $\norm{M\otimes M'}_{\tr}\leq \norm{M}_{\tr}\norm{M'}_{\tr}$,
and $\norm{\varphi\otimes\varphi'}_{\tr}=1$ for normalized vectors.)
The permutation criterion gives $\abs{S_{2n}}=(2n)!$ criteria
but not all of them are inequivalent.
It is known \cite{HorodeckiPerm} that
\emph{for two subsystems, every criteria given by the permutation criterion turn out to be equivalent
either the partial transposition criterion or the reshuffling criterion,}
which were used in the previous section.
In~\cite{ClarissePerm}, Clarisse has shown that there are only six inequivalent criteria
in the case of three subsystems, which are the
three singlepartite transpositions ($\Phi_{(14)}$, $\Phi_{(25)}$ and $\Phi_{(36)}$) 
and three bipartite reshufflings ($\Phi_{(35)}$, $\Phi_{(34)}$ and $\Phi_{(24)}$).
For our \emph{permutation-invariant} three-qubit system
all the singlepartite transpositions give the same condition,
which we have already investigated in section~\ref{subsec:SepCrit.2Part.PPT}.
On the other hand, all the bipartite reshufflings give another condition,
which is a new one.

So let $R'=\Phi_{(35)}$ the map implementing the reshuffling of the $2$ and $3$ subsystems,
that is,
 $[R'(\varrho)]^{ij\phantom{j'i'}k}_{\phantom{ij}j'i'\phantom{k}k'}=\varrho^{ijk}_{\phantom{ijk}i'j'k'}$,
resulting in the matrix
\begin{equation}
\label{eq:MxRR222}
R'(\varrho)=\begin{bmatrix}
 \td+\tg & \cdot   & \cdot   & \td+\tw & \cdot   & \cdot   & \tw     & \cdot   \\
 \cdot   & \cdot   & \tw     & \cdot   & \cdot   & \tg     & \cdot   & \cdot   \\
 \cdot   & \tw     & \cdot   & \cdot   & \tw     & \cdot   & \cdot   & \cdot   \\
 \td+\tw & \cdot   & \cdot   & \td     & \cdot   & \cdot   & \cdot   & \cdot   \\
 \cdot   & \tw     & \cdot   & \cdot   & \td+\tw & \cdot   & \cdot   & \td     \\
 \tw     & \cdot   & \cdot   & \cdot   & \cdot   & \cdot   & \cdot   & \cdot   \\
 \cdot   & \cdot   & \tg     & \cdot   & \cdot   & \cdot   & \cdot   & \cdot   \\
 \cdot   & \cdot   & \cdot   & \cdot   & \td     & \cdot   & \cdot   & \td+\tg 
\end{bmatrix}.\\
\end{equation}
With this, we have to calculate the eigenvalues of the matrix
$R'(\varrho)^\dagger R'(\varrho)$ for the two-parameter state $\varrho$ given in equation (\ref{eq:lo}).
This $8\times8$ matrix can be transformed
by simultaneous row-column permutation
into blockdiagonal form
consisting of three blocks of the sizes $3\times3$, $3\times3$ and $2\times2$.
However, the forms of the $g,w$-depending eigenvalues of the $3\times3$ blocks are still too complicated,
so we only plot the border of the domain in which the criterion (\ref{eq:critPerm}) holds
(green curve in figure~\ref{fig:Resh}).

The condition $\norm{R'(\varrho)}_{\tr} \leq 1$ holds inside the green curve
in figure~\ref{fig:Resh}.
This figure suggests that
this reshuffling does not give stricter condition for full separability
than the partial transposition criterion, hence it can not identify PPTESs.
However, we can not be sure in this
 due to the difficult computation of $\norm{R'(\varrho)}_{\tr}$.
Fully separable states must be enclosed by the curves belonging to (\ref{eq:critPPT1})-(\ref{eq:critPPT2}) of partial transposition criterion,
states outside this domain must belong to Classes 2.8, 2.1 or 1.
However, in section~\ref{subsec:SepCrit.2Part.PPT} the partial transposition criterion has yielded condition for Classes 2.8 and 3,
so we can conclude that states outside this domain must belong to Classes 2.1 or 1,
the Class 2.8 is completely restricted into this domain.


\subsection{Quadratic Bell inequalities}
\label{subsec:SepCrit.3Part.Spin}

In~\cite{SeevinckUffinkMixSep} Seevinck and Uffink introduced a systematic way
to obtain necessary criteria of separability for all the separability-classes of an $n$-qubit system,
based on the quadratic Bell inequalities of two-qubit systems (section \ref{subsec:QM.Ent.2Part}).
Their new criteria generalize some previously known multipartite criteria, 
such as Laskowski-{\.Z}ukowski criterion (necessary for $k$-separability) \cite{LaskowskiZukowskiGenBell},
Mermin-type separability inequalities (necessary for $k$-separability) 
\cite{UffinkQuadBellMultipartEnt,GisinBechmannBellineqNqubits,SeevinckSvetlichnyBellPartSep,NagataetalSepTestsMultipart,CollinsetalBellNpart,RoyMultipartSepExpStrongerLHVM},
Fidelity-criterion (necessary for $2$-separability) \cite{SeevinckUffinkTripart,ZhengetalFidelity}
(which is also known as projection based witness \cite{TothGuhneEntDetStab})
and D{\"u}r-Cirac depolarization criterion (necessary for $\alpha_k$-separability) \cite{DurCiracTarrach3QBMixSep,DurCiracTarrachBMixSep}.
We consider the three-qubit case and get criteria for Class 2.1, Class 2.8 and Class 3
given in section~\ref{subsec:QM.Ent.NPart}.

The method of Seevinck and Uffink is formulated in a recursive way 
in terms of \emph{three orthogonal spin-observables} on each subsystem,
$(X^{(1)},Y^{(1)},Z^{(1)})$.
Here the superscript $(1)$ denotes that these are single-qubit operators.
Let $I^{(1)}$ denote the $2\times2$ identity operator, $I^{(1)}=\Id$.
From the $(X^{(1)},Y^{(1)},Z^{(1)},I^{(1)})$ one-qubit observables acting on the subsystems $2$ and $3$
one can form
two sets of two-qubit observables 
$(X_x^{(2)},Y_x^{(2)},Z_x^{(2)},I_x^{(2)})$.
Here the superscript $(2)$ denotes that these are two-qubit operators
and $x=0,1$ refers to the two sets, which are
\begin{align}
\label{eq:spin2}
X_0^{(2)}&=\frac12\left(X^{(1)}\otimes X^{(1)} - Y^{(1)}\otimes Y^{(1)}\right),&\quad
X_1^{(2)}&=\frac12\left(X^{(1)}\otimes X^{(1)} + Y^{(1)}\otimes Y^{(1)}\right),\notag\\
Y_0^{(2)}&=\frac12\left(Y^{(1)}\otimes X^{(1)} + X^{(1)}\otimes Y^{(1)}\right),&\quad
Y_1^{(2)}&=\frac12\left(Y^{(1)}\otimes X^{(1)} - X^{(1)}\otimes Y^{(1)}\right),\notag\\
Z_0^{(2)}&=\frac12\left(Z^{(1)}\otimes I^{(1)} + I^{(1)}\otimes Z^{(1)}\right),&\quad
Z_1^{(2)}&=\frac12\left(Z^{(1)}\otimes I^{(1)} - I^{(1)}\otimes Z^{(1)}\right),\notag\\
I_0^{(2)}&=\frac12\left(I^{(1)}\otimes I^{(1)} + Z^{(1)}\otimes Z^{(1)}\right),&\quad
I_1^{(2)}&=\frac12\left(I^{(1)}\otimes I^{(1)} - Z^{(1)}\otimes Z^{(1)}\right).
\end{align}
(Note that $I_x^{(2)}$s are \emph{not} identity operators.)
From these two-qubit observables
and the one-qubit ones acting on subsystem $1$
one can form
four sets of three-qubit observables acting on the full system
$(X_x^{(3)},Y_x^{(3)},Z_x^{(3)},I_x^{(3)})$.
Here the superscript $(3)$ denotes that these are three-qubit operators
and $x=0,1,2,3$ refers to the four sets, which are
\begin{align}
\label{eq:spin3}
X_y^{(3)}    &=\frac12\left(X^{(1)}\otimes X^{(2)}_{y/2} - Y^{(1)}\otimes Y^{(2)}_{y/2}\right),&\quad
X_{y+1}^{(3)}&=\frac12\left(X^{(1)}\otimes X^{(2)}_{y/2} + Y^{(1)}\otimes Y^{(2)}_{y/2}\right),\notag\\
Y_y^{(3)}    &=\frac12\left(Y^{(1)}\otimes X^{(2)}_{y/2} + X^{(1)}\otimes Y^{(2)}_{y/2}\right),&\quad
Y_{y+1}^{(3)}&=\frac12\left(Y^{(1)}\otimes X^{(2)}_{y/2} - X^{(1)}\otimes Y^{(2)}_{y/2}\right),\notag\\
Z_y^{(3)}    &=\frac12\left(Z^{(1)}\otimes I^{(2)}_{y/2} + I^{(1)}\otimes Z^{(2)}_{y/2}\right),&\quad
Z_{y+1}^{(3)}&=\frac12\left(Z^{(1)}\otimes I^{(2)}_{y/2} - I^{(1)}\otimes Z^{(2)}_{y/2}\right),\notag\\
I_y^{(3)}    &=\frac12\left(I^{(1)}\otimes I^{(2)}_{y/2} + Z^{(1)}\otimes Z^{(2)}_{y/2}\right),&\quad
I_{y+1}^{(3)}&=\frac12\left(I^{(1)}\otimes I^{(2)}_{y/2} - Z^{(1)}\otimes Z^{(2)}_{y/2}\right),
\end{align}
for $y=0,2$.
(Again, $I_x^{(3)}$s are \emph{not} identity operators.)

Now for particular $\alpha_2$, investigating some relations among the expectation-values of these operators
with respect to the state $\varrho$,
one can get some nontrivial inequalities
valid for all $\varrho\in\mathcal{D}_{\alpha_2}$.
From these, one can form inequalities valid for a given separability class of section~\ref{subsec:QM.Ent.NPart}.
Here we recall these criteria for the classes we need to deal with
\cite{SeevinckUffinkMixSep} 
\begin{equation}
\label{eq:critSU2}
\varrho\in\mathcal{D}_\text{$2$-sep}\qquad\Longrightarrow\qquad
\sqrt{\bracket{X_x^{(3)}}^2+\bracket{Y_x^{(3)}}^2}\leq\sum_{y\neq x}\sqrt{\bracket{I_y^{(3)}}^2-\bracket{Z_y^{(3)}}^2}
\end{equation}
\begin{equation}
\label{eq:critSU283}
\varrho\in\bigcap_{\alpha_2}\mathcal{D}_{\alpha_2} \qquad\Longrightarrow\qquad
\max_x\Bigl\{\bracket{X_x^{(3)}}^2+\bracket{Y_x^{(3)}}^2\Bigr\}
\leq \min_x\Bigl\{\bracket{I_x^{(3)}}^2-\bracket{Z_x^{(3)}}^2\Bigr\}\leq1/4
\end{equation}
and
\begin{equation}
\label{eq:critSU3}
\varrho\in\mathcal{D}_\text{$3$-sep}\qquad\Longrightarrow\qquad
\max_x\Bigl\{\bracket{X_x^{(3)}}^2+\bracket{Y_x^{(3)}}^2\Bigr\}
\leq\min_x\Bigl\{\bracket{I_x^{(3)}}^2-\bracket{Z_x^{(3)}}^2\Bigr\}\leq1/16,
\end{equation}
all of them are given for $x=0,1,2,3$.
One has to do optimization of the local spin observables $(X^{(1)},Y^{(1)},Z^{(1)})$
to get violation of the respective inequality for a given state.

In the following we will consider some special
measurement-settings when the observables
$(X^{(1)},Y^{(1)},Z^{(1)})$ are the same for each subsystem.
Writing out explicitly $(X_x^{(3)},Y_x^{(3)},Z_x^{(3)},I_x^{(3)})$,
one can see that \emph{for a permutation-invariant state}
the squares of the expectation values are the same for $x=1,2,3$, that is,
$\bracket{X_1^{(3)}}^2=\bracket{X_2^{(3)}}^2= \bracket{X_3^{(3)}}^2 $,
and the same for $Y_x^{(3)}$s, $Z_x^{(3)}$s and $I_x^{(3)}$s.
Hence we have to consider merely the $x=0,1$ indices.

First consider the usual Pauli matrices (\ref{eq:Pauli.mx})
\begin{align*}
&\text{Setting I:}& \quad (X^{(1)},Y^{(1)},Z^{(1)})=(\sigma_1,\sigma_2,\sigma_3)\quad\text{for each subsystem}.
\end{align*}
The inequalities (\ref{eq:critSU2})-(\ref{eq:critSU3}) can be written
as relatively simple expressions in the matrix elements \cite{SeevinckUffinkMixSep}:
\begin{equation}
\label{eq:critSU2.1g}
\begin{split}
\varrho\in&\mathcal{D}_\text{$2$-sep}\qquad\Longrightarrow\\
&\begin{aligned}[t]
\abs{\varrho^{000}_{\phantom{000}111}}&\leq 
 \sqrt{\varrho^{110}_{\phantom{110}110}\varrho^{001}_{\phantom{001}001}}
+\sqrt{\varrho^{101}_{\phantom{101}101}\varrho^{010}_{\phantom{010}010}}
+\sqrt{\varrho^{011}_{\phantom{011}011}\varrho^{100}_{\phantom{100}100}},\\
\abs{\varrho^{110}_{\phantom{110}001}}&\leq
 \sqrt{\varrho^{000}_{\phantom{000}000}\varrho^{111}_{\phantom{111}111}}
+\sqrt{\varrho^{101}_{\phantom{101}101}\varrho^{010}_{\phantom{010}010}}
+\sqrt{\varrho^{011}_{\phantom{011}011}\varrho^{100}_{\phantom{100}100}},\\
\abs{\varrho^{101}_{\phantom{101}010}}&\leq
 \sqrt{\varrho^{110}_{\phantom{110}110}\varrho^{001}_{\phantom{001}001}}
+\sqrt{\varrho^{000}_{\phantom{000}000}\varrho^{111}_{\phantom{111}111}}
+\sqrt{\varrho^{011}_{\phantom{011}011}\varrho^{100}_{\phantom{100}100}},\\
\abs{\varrho^{011}_{\phantom{011}100}}&\leq
 \sqrt{\varrho^{110}_{\phantom{110}110}\varrho^{001}_{\phantom{001}001}}
+\sqrt{\varrho^{101}_{\phantom{101}101}\varrho^{010}_{\phantom{010}010}}
+\sqrt{\varrho^{000}_{\phantom{000}000}\varrho^{111}_{\phantom{111}111}},
\end{aligned}
\end{split}
\end{equation}
\begin{equation}
\label{eq:critSU283.1g}
\begin{split}
\varrho\in&\bigcap_{\alpha_2}\mathcal{D}_{\alpha_2}\qquad\Longrightarrow\\
&\begin{aligned}
\max&\left\{
\abs{\varrho^{000}_{\phantom{000}111}}^2,
\abs{\varrho^{110}_{\phantom{110}001}}^2,
\abs{\varrho^{101}_{\phantom{101}010}}^2,
\abs{\varrho^{011}_{\phantom{011}100}}^2\right\}\\
\leq&\min\left\{
\varrho^{000}_{\phantom{000}000}\varrho^{111}_{\phantom{111}111},
\varrho^{110}_{\phantom{110}110}\varrho^{001}_{\phantom{001}001},
\varrho^{101}_{\phantom{101}101}\varrho^{010}_{\phantom{010}010},
\varrho^{011}_{\phantom{011}011}\varrho^{100}_{\phantom{100}100}\right\}
\leq1/16
\end{aligned}
\end{split}
\end{equation}
and
\begin{equation}
\label{eq:critSU3.1g}
\begin{split}
\varrho\in&\mathcal{D}_\text{$3$-sep}\qquad\Longrightarrow\\
&\begin{aligned}
\max&\left\{
\abs{\varrho^{000}_{\phantom{000}111}}^2,
\abs{\varrho^{110}_{\phantom{110}001}}^2,
\abs{\varrho^{101}_{\phantom{101}010}}^2,
\abs{\varrho^{011}_{\phantom{011}100}}^2\right\}\\
\leq&\min\left\{
\varrho^{000}_{\phantom{000}000}\varrho^{111}_{\phantom{111}111},
\varrho^{110}_{\phantom{110}110}\varrho^{001}_{\phantom{001}001},
\varrho^{101}_{\phantom{101}101}\varrho^{010}_{\phantom{010}010},
\varrho^{011}_{\phantom{011}011}\varrho^{100}_{\phantom{100}100}\right\}
\leq1/64.
\end{aligned}
\end{split}
\end{equation}
Let us consider another two special measurement settings, which are
\begin{align*}
&\text{Setting II:}&  \quad (X^{(1)},Y^{(1)},Z^{(1)})=(\sigma_2,\sigma_3,\sigma_1)\quad\text{for each subsystem},\\
&\text{Setting III:}& \quad (X^{(1)},Y^{(1)},Z^{(1)})=(\sigma_3,\sigma_1,\sigma_2)\quad\text{for each subsystem}.
\end{align*}
The inequalities of (\ref{eq:critSU2})-(\ref{eq:critSU3}) written for these two settings
are much more complicated expressions in symbolic matrix elements
than the ones in (\ref{eq:critSU2.1g})-(\ref{eq:critSU3.1g}).
But for the state $\varrho$ given in equation (\ref{eq:lo})
it is not too difficoult to write out these inequalities explicitly.
It turns out that for each of these three settings
the $x=1$ inequality of (\ref{eq:critSU2}),
the second inequality of (\ref{eq:critSU283}) and
the second inequality of (\ref{eq:critSU3})
hold for all the parameter values of the simplex.
Because of this, the criteria hold for Class 3
are not stricter than the ones for the union of Class 2.8 and Class 3.
The remaining inequalities for the three measurement settings are as follows:
\begin{figure}
 \setlength{\unitlength}{0.001428571\textwidth}
 \begin{picture}(420,415)
  \put(0,0){\includegraphics[width=0.6\textwidth]{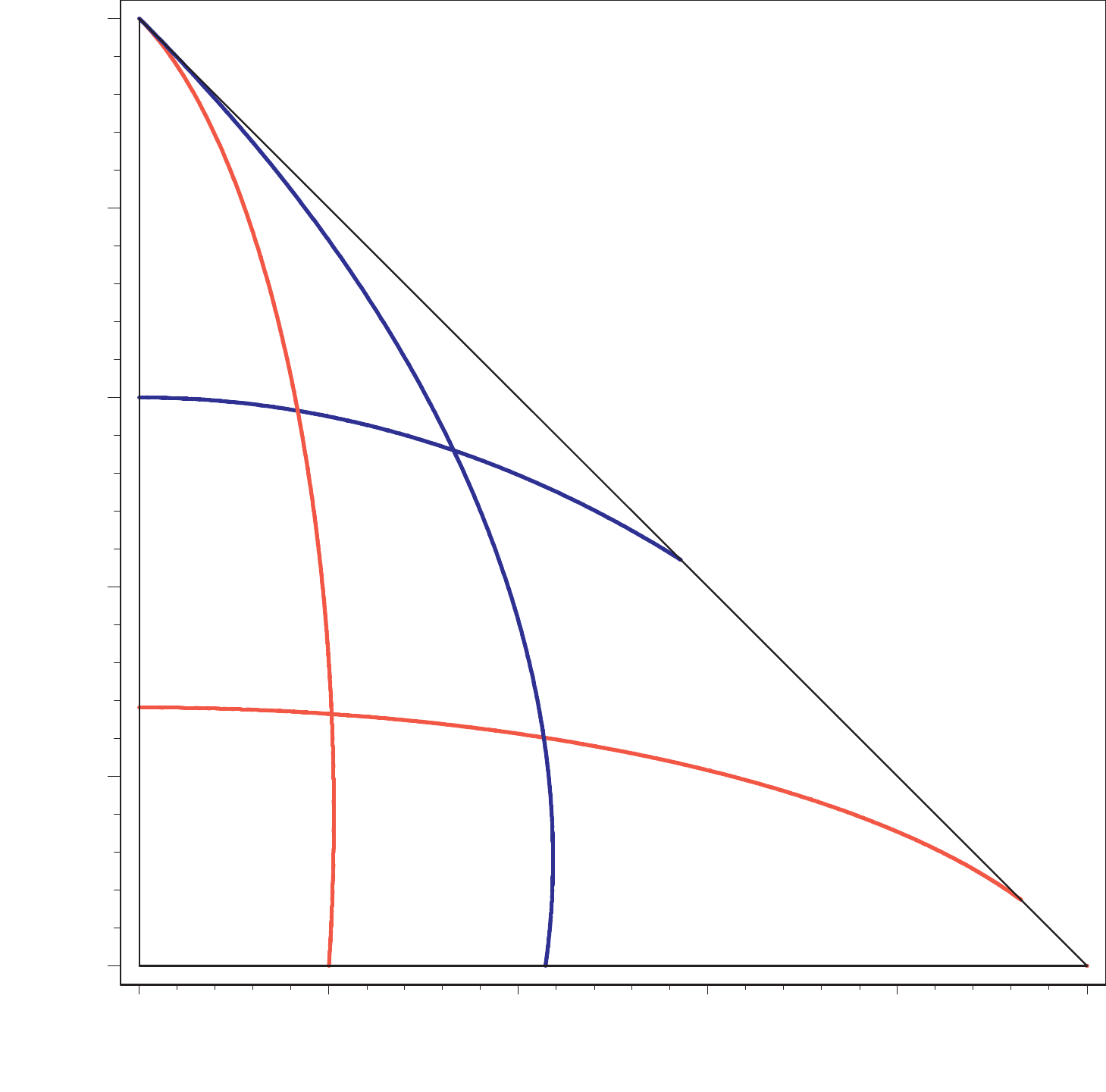}}
  \put(10,226){\makebox(0,0)[r]{\strut{}$w$}}
  \put(235,6){\makebox(0,0)[r]{\strut{}$g$}}
{\small
  \put(56,27){\makebox(0,0)[r]{\strut{}$0$}}
  \put(133,27){\makebox(0,0)[r]{\strut{}$0.2$}}
  \put(205,27){\makebox(0,0)[r]{\strut{}$0.4$}}
  \put(277,27){\makebox(0,0)[r]{\strut{}$0.6$}}
  \put(349,27){\makebox(0,0)[r]{\strut{}$0.8$}}
  \put(415,27){\makebox(0,0)[r]{\strut{}$1$}}
  \put(38,46){\makebox(0,0)[r]{\strut{}$0$}}
  \put(38,118){\makebox(0,0)[r]{\strut{}$0.2$}}
  \put(38,190){\makebox(0,0)[r]{\strut{}$0.4$}}
  \put(38,262){\makebox(0,0)[r]{\strut{}$0.6$}}
  \put(38,334){\makebox(0,0)[r]{\strut{}$0.8$}}
  \put(40,406){\makebox(0,0)[r]{\strut{}$1$}}
}
  \put(120,180){\makebox(0,0)[r]{\strut{}(\ref{eq:critSU283.1})}}
  \put(185,125){\makebox(0,0)[r]{\strut{}(\ref{eq:critSU283.2})}}
  \put(195,175){\makebox(0,0)[r]{\strut{}(\ref{eq:critSU2.1})}}
  \put(165,235){\makebox(0,0)[r]{\strut{}(\ref{eq:critSU2.2})}}
 \end{picture}
 \caption{Criteria on spin-observables for the state (\ref{eq:lo}) on the $g$-$w$-plane.
(Red curves: the border of domains inside equations (\ref{eq:critSU2.1}) and~(\ref{eq:critSU2.2}) hold,
blue curves: the border of domains inside equations (\ref{eq:critSU283.1}) and~(\ref{eq:critSU283.2}) hold.
The inequalities hold on the side of the curves containing the origin.)}
 \label{fig:Spin}
\end{figure}
\begin{subequations}
\begin{align}
\varrho\in\mathcal{D}_\text{$2$-sep}\qquad\Longrightarrow\qquad
\label{eq:critSU2.1}
\text{I.}&\left\{
\begin{aligned}
\tg&\leq3\sqrt{\td(\td+\tw)}\\
0&\leq -7g^2 -6gw -15w^2 -18g +6w +9,
\end{aligned}
\right.\\
\label{eq:critSU2.2}
\text{II.}&\left\{
\begin{aligned}
3\tw&\leq\sqrt{(8\td+\tw)(8\td+4\tg+\tw)}\\
0&\leq -9g^2 -5w^2 -12w+9,
\end{aligned}
\right.\\
\label{eq:critSU2.3}
\text{III.}&\left\{
\begin{aligned}
\sqrt{4\tg^2+81\tw^2}&\leq3(8\td+2\tg+\tw)\\
0&\leq - g^2 -5w^2 -12w+9,
\end{aligned}
\right.
\end{align}
\end{subequations}
and
\begin{subequations}
\begin{align}
\varrho\in\bigcap_{\alpha_2}\mathcal{D}_{\alpha_2}\qquad\Longrightarrow\qquad
\label{eq:critSU283.1}
\text{I.}&\left\{
\begin{aligned}
\tg^2&\leq\td(\td+\tw)\\
0&\leq -45g^2 -2gw -5w^2 -6g +2w +3,
\end{aligned}
\right.\\
\label{eq:critSU283.2}
\text{II.}&\left\{
\begin{aligned}
81\tw^2&\leq(8\td+\tw)(8\td+4\tg+\tw)\\
0&\leq -9g^2 -77w^2 -12w+9,
\end{aligned}
\right.\\
\label{eq:critSU283.3}
\text{III.}&\left\{
\begin{aligned}
4\tg^2+81\tw^2&\leq(8\td+2\tg+\tw)^2\\
0&\leq -9g^2 -77w^2 -12w+9,
\end{aligned}
\right.
\end{align}
\end{subequations}
Clearly, the inequality of (\ref{eq:critSU2.3}) is weaker than the one of~(\ref{eq:critSU2.2}),
the inequality of (\ref{eq:critSU283.3}) is the same as the one of~(\ref{eq:critSU283.2}).
Moreover,
    the inequality of (\ref{eq:critSU283.1}) is the same as the one of~(\ref{eq:critPPT1}) of partial transposition criterion,
but the inequality of (\ref{eq:critSU283.2}) is strictly weaker than the other one of partial transposition criterion.
\emph{So these settings does not give stricter conditions for Classes 2.8 and 3 than partial transposition criterion,}
however, we get criteria for biseparability for the first time.
In figure~\ref{fig:Spin} we show the borderlines of the domains of the criteria belonging to Settings I and II.
These inequalities restrict Classes 2.1, 2.8 and 3 to be inside the domain enclosed by the blue curves
and Classes 2.8 and 3 to be inside the domain enclosed by the red curves.
\emph{We can conclude that Settings I and II are strong in detection of GHZ and W state respectively.}
One can check that for the $w=0$ GHZ-white noise mixture
the inequalities of (\ref{eq:critSU2.1}) and~(\ref{eq:critSU283.1}) of Setting I hold
if and only if the state is fully separable,
(\ref{eq:critSU2.1}) is violated but (\ref{eq:critSU283.1}) holds
if and only if the state is in Class 2.1
and
both of them are violated
if and only if the state is fully entangled.
For the $g=0$ W-white noise mixture,
if $3/11<w$ then $\varrho$ is in Class 2.1 or Class 1,
and if $3/5<w$ then $\varrho$ is fully entangled.

However, there are infinitely many criteria depending on the measurement settings
and we do not have a method to find a set of settings leading to the strictest criterion.
We have tried some other randomly chosen settings
which can be used to reduce the area where these criteria hold.
We could not find settings that give stronger criteria on the $w=0$ or $g=0$ axes of the simplex than Settings I and II, respectively.
We have found settings that exclude states from the corresponding classes,
but these states are far from these axes,
and we have not found settings which give stronger conditions for Classes 2.8 and 3 than the partial transposition criterion.
We have found settings by which the condition for biseparability can be strengthened,
but these conditions are just a little bit stronger far from the axes than the ones in section~\ref{subsec:SepCrit.3Part.Matrix}.

\subsection{Criteria on matrix elements}
\label{subsec:SepCrit.3Part.Hub}

In a recent paper \cite{HuberkCrit},
Gabriel et.~al.~have given criterion for $k$-separability,
based on their previously derived framework for the detection of biseparability \cite{HuberCrit}.
It turns out that for the noisy GHZ-W mixture given in equation (\ref{eq:lo})
these criteria give the same results as the ones of quadratic Bell-inequalities, given in the previous section,
but these criteria have the advantage that they can be used \emph{in the same form} not only for qubits,
but for subsystems of arbitrary, even different dimensions.
To our knowledge, these were the first such criteria of $k$-separability.

Consider some permutation operators acting on $\mathcal{H}\otimes\mathcal{H}$,
that is, on the two copies of the $n$-partite Hilbert space 
$\mathcal{H}=\mathcal{H}_1\otimes\mathcal{H}_2\otimes\dots\otimes\mathcal{H}_n$.
Let $P_a$s be the operators which swap the $a$th subsystems of the two copies,
that is,
$P_a\cket{i_1i_2\dots i_n}\otimes\cket{j_1j_2\dots j_n}=
\cket{i_1i_2\dots i_{a-1}j_ai_{a+1} \dots i_n}\otimes\cket{j_1j_2\dots j_{a-1}i_aj_{a+1} \dots j_n}$
where $\{\cket{i_a}\}$ is basis in $\mathcal{H}_a$.
Now for a composite subsystem $K\subseteq L$ having the Hilbert space 
$\mathcal{H}_K=\otimes_{a\in K}\mathcal{H}_a$ let $P_K=\prod_{a\in K}P_a$.
The key fact is that for pure states $\pi$,
if the state of that subsystem can be separated from the rest of the state
then the corresponding $P_K$ leaves the two copies of the state invariant, $P_K(\pi\otimes\pi)P_K^\dagger=\pi\otimes\pi$.
With this and convexity arguments, one can get the following criteria for $k$-separability \cite{HuberkCrit}
\begin{equation}
\label{eq:critHub}
\varrho\in\mathcal{D}_{k-\text{sep}}\qquad\Longrightarrow\qquad
\sqrt{\bra{\phi}(\varrho\otimes\varrho)P_\text{tot}\cket{\phi}}
\leq\sum_i\left(\prod_{r=1}^k\bra{\phi}P_{L_r^{(i)}}(\varrho\otimes\varrho)P^\dagger_{L_r^{(i)}}\cket{\phi}\right)^{1/(2k)},
\end{equation}
where $\cket{\phi}\in\mathcal{H}\otimes\mathcal{H}$ is a \emph{fully separable} vector,
and the total swap operator is $P_\text{tot}=\prod_{a=1}^NP_a$.
Here $i$ runs over all posible $k$-partite splits $\alpha_k^{(i)}=L_1^{(i)}|L_2^{(i)}|\dots|L_k^{(i)}$.

The inequality in (\ref{eq:critHub}) is written on the matrix elements of $\varrho$
determined by the separable detection-vector $\cket{\phi}$.
For a given state, optimization on $\cket{\phi}$
is needed to achieve the violation of the right-hand side of (\ref{eq:critHub}).

To apply these criteria to the noisy GHZ-W mixture given in equation (\ref{eq:lo})
we have to choose a suitable detection-vector $\cket{\phi}$.
It turns out that
$\cket{\phi_{\text{GHZ}}}=\cket{000111}$ and its Hadamard-transformed (\ref{eq:Hadamard}) version
$\cket{\phi_{\text{W}}}=H^{\otimes6}\cket{\phi_{\text{GHZ}}}$
are good choices for states in the vicinity of GHZ and W states respectively,
as observed in~\cite{HuberkCrit}.
With these two vectors we get the same criteria
for $2$-separability as the ones in (\ref{eq:critSU2.1})   and~(\ref{eq:critSU2.2})  respectively,
and
for $3$-separability as the ones in (\ref{eq:critSU283.1}) and~(\ref{eq:critSU283.2})  respectively.
(These were obtained by the criteria on spin observables in the previous section.)
However, (\ref{eq:critSU283.1}) and~(\ref{eq:critSU283.2})
are conditions not only for Class 3, but for the union of Classes 2.8 and 3,
so in this sense the quadratic Bell inequalities are a bit stronger.

We can not be sure that the detection-vectors above give
the strongest conditions at least for the noisy GHZ and noisy W states.
However, it is an interesting observation
that the Hadamard transformation relates
not only the two ``strong'' detection-vectors $\cket{\phi_{\text{GHZ}}}$ and $\cket{\phi_{\text{W}}}$
but also the two ``strong'' measurement-settings of the previous section (by the transformation $\sigma_i\mapsto H\sigma_iH^\dagger$):
Setting I.~$(\sigma_1,\sigma_2,\sigma_3)$ and $(\sigma_3,-\sigma_2,\sigma_1)$, which is equivalent
\footnote{This equivalence holds only for \emph{permutation-invariant} three-qubit states,
when the three sets of observables $(X^{(1)},Y^{(1)},Z^{(1)})$ are the \emph{same for each subsystem.}
In this case one can check
that the quantities $\bracket{X_x^{(3)}}^2+\bracket{Y_x^{(3)}}^2$ for $x=0,1,2,3$
are invariant under the
transformation
$(X^{(1)},Y^{(1)},Z^{(1)}) \leftrightarrow (Y^{(1)},X^{(1)},Z^{(1)})$
and
$(X^{(1)},Y^{(1)},Z^{(1)}) \leftrightarrow (X^{(1)},-Y^{(1)},Z^{(1)})$.
These can be seen by writing out the definitions given in equations (\ref{eq:spin3}).}
to Setting II.

We have tried some other randomly chosen detection-vectors
which can be used to reduce the area where the criteria hold,
and we get the same observations as at the end of the previous section.
One can strengthen the conditions only far from the $w=0$ or $g=0$ axes of the simplex,
we have not found detection-vectors which give stronger condition for full-separability than the partial transposition criterion,
and we have found settings by which the condition for biseparability can be strengthened,
but these conditions are just a little bit stronger far from the axes than the ones in section~\ref{subsec:SepCrit.3Part.Matrix}.

\subsection{Criteria on matrix elements -- a different approach}
\label{subsec:SepCrit.3Part.Matrix}

In~\cite{GuhneSevinckCrit} G\"uhne and Seevinck have given
some further biseparability and full-separability criteria
on the matrix elements.
The idea is that one can derive identities of matrix elements of pure separable states,
then these identities can be extended to inequalities on mixed states by a convexity argument.
For example, if $\cket{\psi}$ is separable under the $1|23$ split,
then $\cket{\psi}=\cket{\psi_1}\otimes\cket{\psi_{23}}$ has factorized coefficients $\psi^{ijk}=\psi_1^i\psi_{23}^{jk}$.
The pure state $\pi=\cket{\psi}\bra{\psi}$
has then factorizable matrix elements 
$\pi^{ijk}_{\phantom{ijk}i'j'k'}= \psi_1^i\psi_{23}^{jk} \psi_{1,i'}\psi_{23,j'k'}
\equiv \psi_1^i\psi_{23}^{jk}\cc{(\psi_1^{i'})}\cc{(\psi_{23}^{j'k'})}$
for which one can immediately check that
$\abs{\pi^{000}_{\phantom{000}111}} =\sqrt{\pi^{011}_{\phantom{011}011}\pi^{100}_{\phantom{100}100}}$ holds.
Now, we just need that these expressions behave well under
convex combination of the  $\pi^{ijk}_{\phantom{ijk}i'j'k'}$ matrix elements,
that is, the square root of the product of two diagonal (hence nonnegative) matrix elements is concave, 
while the absolute value is convex.
Similar reasonings lead to the following criteria:
\begin{subequations}
\begin{align}
\notag
\varrho\in\mathcal{D}_\text{$2$-sep}\qquad\Longrightarrow\quad&\\ 
\label{eq:critGS2a}
\abs{\varrho^{000}_{\phantom{000}111}}&\leq
\sqrt{\varrho^{110}_{\phantom{110}110}\varrho^{001}_{\phantom{001}001}}+
\sqrt{\varrho^{101}_{\phantom{101}101}\varrho^{010}_{\phantom{010}010}}+
\sqrt{\varrho^{011}_{\phantom{011}011}\varrho^{100}_{\phantom{100}100}},\\
\label{eq:critGS2b}
\begin{split}
 \abs{\varrho^{001}_{\phantom{001}010}}
+\abs{\varrho^{001}_{\phantom{001}100}}
+\abs{\varrho^{010}_{\phantom{010}100}}&\leq
\sqrt{\varrho^{000}_{\phantom{000}000}\varrho^{011}_{\phantom{011}011}}+
\sqrt{\varrho^{000}_{\phantom{000}000}\varrho^{101}_{\phantom{101}101}}+
\sqrt{\varrho^{000}_{\phantom{000}000}\varrho^{110}_{\phantom{110}110}}\\
&\quad+\left(\varrho^{001}_{\phantom{001}001}+\varrho^{010}_{\phantom{010}010}+\varrho^{100}_{\phantom{100}100}\right)/2.
\end{split}
\end{align}
\end{subequations}
The criterion (\ref{eq:critGS2a}) is necessary and sufficient for GHZ-diagonal states
and can also be obtained as a special case
of the criteria of section~\ref{subsec:SepCrit.3Part.Spin} (equation (\ref{eq:critSU2.1g})).
However, this criterion---and the others in this section---arises from a quite different approach
as the one in (\ref{eq:critSU2.1g}).
The criterion in (\ref{eq:critGS2b}) is independent of the first one
and it is quite strong in detection of W state mixed with white noise.

Of course these and the following inequalities
can be written on local unitary transformed density matrices,
and optimization under local unitaries might be necessary,
but this can lead to very complicated expressions in the original matrix elements.
An advantage of the method recalled in the previous section is that
it handles the matrix indices through the detection vector $\cket{\phi}$.

By the use of similar reasoning, criteria for full-separability%
\footnote{Criteria for $\mathcal{D}_{\alpha_k}$ can be obtained as well.}
can also be obtained \cite{GuhneSevinckCrit}, which are
\begin{subequations}
\begin{align}
\notag
\varrho\in\mathcal{D}_\text{$3$-sep}\qquad\Longrightarrow\quad&\\
\label{eq:critGS3.1g}
\abs{\varrho^{000}_{\phantom{000}111}}&\leq \left(
\varrho^{001}_{\phantom{001}001}\varrho^{010}_{\phantom{010}010}
\varrho^{011}_{\phantom{011}011}\varrho^{100}_{\phantom{100}100}
\varrho^{101}_{\phantom{101}101}\varrho^{110}_{\phantom{110}110}\right)^{1/6},\\
\label{eq:critGS3.2g}
\abs{\varrho^{001}_{\phantom{001}010}}+\abs{\varrho^{001}_{\phantom{001}100}}+\abs{\varrho^{010}_{\phantom{010}100}}&\leq
\sqrt{\varrho^{000}_{\phantom{000}000}\varrho^{011}_{\phantom{011}011}}+
\sqrt{\varrho^{000}_{\phantom{000}000}\varrho^{101}_{\phantom{101}101}}+
\sqrt{\varrho^{000}_{\phantom{000}000}\varrho^{110}_{\phantom{110}110}}.
\end{align}
\end{subequations}
Criterion (\ref{eq:critGS3.1g}) is
necessary and sufficient for GHZ state mixed with white noise,
and (\ref{eq:critGS3.2g}) is violated in the vicinity of the W state.
Moreover, one can obtain other conditions from (\ref{eq:critGS3.1g}) by making substitutions as follows.
Consider a fully separable state vector $\cket{\psi}=\cket{\psi_1}\otimes\cket{\psi_2}\otimes\cket{\psi_3}$,
which has factorizable coefficients $\psi^{ijk}=\psi_1^i\psi_2^j\psi_3^k$.
Then the diagonal elements of the pure state $\pi=\cket{\psi}\bra{\psi}$
are $\pi^{ijk}_{\phantom{ijk}ijk}=\abs{\psi_1^i}^2\abs{\psi_2^j}^2\abs{\psi_3^k}^2$,
leading to the identities
\begin{equation*} 
  \pi^{ijk}_{\phantom{ijk}ijk}\pi^{i'j'k'}_{\phantom{i'j'k'}i'j'k'}
=\pi^{ijk'}_{\phantom{ijk'}ijk'}\pi^{i'j'k}_{\phantom{i'j'k}i'j'k}
=\pi^{ij'k}_{\phantom{ij'k}ij'k}\pi^{i'jk'}_{\phantom{i'jk'}i'jk'}
=\pi^{i'jk}_{\phantom{i'jk}i'jk}\pi^{ij'k'}_{\phantom{ij'k'}ij'k'}.
\end{equation*}
So these kinds of substitutions can be done for the $\varrho^{ijk}_{\phantom{ijk}ijk}$ diagonal
 matrix elements on the right-hand side of (\ref{eq:critGS3.1g}).
Moreover,
the right-hand side of the inequality of (\ref{eq:critGS3.1g})
can be written as
\begin{equation*}
\left((\varrho^{001}_{\phantom{001}001})^2(\varrho^{010}_{\phantom{010}010})^2(\varrho^{011}_{\phantom{011}011})^2(\varrho^{100}_{\phantom{100}100})^2(\varrho^{101}_{\phantom{101}101})^2(\varrho^{110}_{\phantom{110}110})^2\right)^{1/12},
\end{equation*}
and with these substitutions
we can obtain a third power of four matrix elements under the $12$th root.
So we can get expressions of four matrix elements
on the right-hand side~of (\ref{eq:critGS3.1g}),
for example $\left( \varrho^{001}_{\phantom{001}001}\varrho^{010}_{\phantom{010}010}\varrho^{100}_{\phantom{100}100}\varrho^{111}_{\phantom{111}111}\right)^{1/4}$.
With the substitutions above one can get $28$ different inequalities for (\ref{eq:critGS3.1g})
with an expression of sixth order under the sixth root on the right-hand side
and $12$ different ones with an expression of fourth order under the fourth root.
For permutation-invariant states we have
$\varrho^{001}_{\phantom{001}001}=\varrho^{010}_{\phantom{010}010}=\varrho^{100}_{\phantom{100}100}$ and
$\varrho^{110}_{\phantom{110}110}=\varrho^{101}_{\phantom{101}101}=\varrho^{011}_{\phantom{011}011}$,
and for our case (\ref{eq:MxR}) it also holds that $\varrho^{000}_{\phantom{000}000}=\varrho^{111}_{\phantom{111}111}$,
so the number of different inequalities reduces to $8$ and $5$ respectively.
The right-hand sides of inequality (\ref{eq:critGS3.1g}) which are different for permutation-invariant matrices
are as follows:
\begin{align*}
&\left( (\varrho^{000}_{\phantom{000}000})^3(\varrho^{111}_{\phantom{111}111})^3 \right)^{1/6},\\
&\left( \varrho^{001}_{\phantom{001}001}  \varrho^{010}_{\phantom{010}010}  \varrho^{011}_{\phantom{011}011}  \varrho^{100}_{\phantom{100}100}  \varrho^{101}_{\phantom{101}101}  \varrho^{110}_{\phantom{110}110}  \right)^{1/6},\\
&\left( \varrho^{000}_{\phantom{000}000}  (\varrho^{001}_{\phantom{001}001})^2(\varrho^{110}_{\phantom{110}110})^2\varrho^{111}_{\phantom{111}111}  \right)^{1/6},\\
&\left( (\varrho^{000}_{\phantom{000}000})^2\varrho^{001}_{\phantom{001}001}  \varrho^{110}_{\phantom{110}110}  (\varrho^{111}_{\phantom{111}111})^2\right)^{1/6},\\
&\left( \varrho^{000}_{\phantom{000}000}  \varrho^{001}_{\phantom{001}001}  \varrho^{010}_{\phantom{010}010}  \varrho^{100}_{\phantom{100}100}  (\varrho^{111}_{\phantom{111}111})^2\right)^{1/6},\\
&\left( (\varrho^{000}_{\phantom{000}000})^2\varrho^{011}_{\phantom{011}011}  \varrho^{101}_{\phantom{101}101}  \varrho^{110}_{\phantom{110}110}  \varrho^{111}_{\phantom{111}111}  \right)^{1/6},\\
&\left( (\varrho^{001}_{\phantom{001}001})^2\varrho^{010}_{\phantom{010}010}  \varrho^{100}_{\phantom{100}100}  \varrho^{110}_{\phantom{110}110}  \varrho^{111}_{\phantom{111}111}\right)^{1/6},\\
&\left( \varrho^{000}_{\phantom{000}000}  \varrho^{001}_{\phantom{001}001}  \varrho^{011}_{\phantom{011}011}  \varrho^{101}_{\phantom{101}101}  (\varrho^{110}_{\phantom{110}110})^2\right)^{1/6}
\end{align*}
and
\begin{align*}
&\left( (\varrho^{000}_{\phantom{000}000})^2(\varrho^{111}_{\phantom{111}111})^2 \right)^{1/4},\\
&\left( \varrho^{010}_{\phantom{010}010}  \varrho^{011}_{\phantom{011}011}  \varrho^{100}_{\phantom{100}100}  \varrho^{101}_{\phantom{101}101}   \right)^{1/4},\\
&\left( \varrho^{000}_{\phantom{000}000}  \varrho^{001}_{\phantom{001}001}  \varrho^{110}_{\phantom{110}110}  \varrho^{111}_{\phantom{111}111}   \right)^{1/4},\\
&\left( \varrho^{001}_{\phantom{001}001}  \varrho^{010}_{\phantom{010}010}  \varrho^{100}_{\phantom{100}100}  \varrho^{111}_{\phantom{111}111}   \right)^{1/4},\\
&\left( \varrho^{000}_{\phantom{000}000}  \varrho^{011}_{\phantom{011}011}  \varrho^{101}_{\phantom{101}101}  \varrho^{110}_{\phantom{110}110}   \right)^{1/4}.
\end{align*}
It turns out that the strongest conditions for the noisy GHZ-W mixture
can be given by the last one of these
and with the original one in (\ref{eq:critGS3.1g}).
(We could also make some substitutions
in the right-hand side~of (\ref{eq:critGS3.2g}) but these would not give stronger conditions than the original one.)

Writing out the criteria of biseparability and full separability we get:
\begin{subequations}
\begin{align}
\label{eq:critGS2.1}
\varrho\in\mathcal{D}_\text{$2$-sep}\quad\Longrightarrow\qquad 
\tg&\leq 3\sqrt{\td(\td+\tw)} ,\\
\label{eq:critGS2.2}
\tw&\leq \sqrt{(\td+\tg)\td}+(\td+\tw)/2
\end{align}
\end{subequations}
and
\begin{subequations}
\begin{align}
\label{eq:critGS3.1}
\varrho\in\mathcal{D}_\text{$3$-sep}\quad\Longrightarrow\qquad 
  \tg&\leq ((\td+\tw)\td)^{1/2},\\
\label{eq:critGS3.11}
\tg&\leq ((\td+\tg)\td^3)^{1/4},\\
\label{eq:critGS3.2}
\tw&\leq \sqrt{(\td+\tg)\td}.
\end{align}
\end{subequations}
(See in equation (\ref{eq:MxR}).)
Clearly, the biseparability condition of (\ref{eq:critGS2.1}) is the same as the one of~(\ref{eq:critSU2.1}) of the criterion on spin-observables
but condition of (\ref{eq:critGS2.2}) is strictly stronger than the one of~(\ref{eq:critSU2.2}).
(On the $g=0$ noisy W state it gives the bound $9/17$, which is the strongest for these states---to our knowledge.)
The full-separability condition of (\ref{eq:critGS3.1}) is the same as the one of~(\ref{eq:critSU283.1}) of the criterion on spin-observables
(and the one of (\ref{eq:critPPT1}) of partial transposition criterion as well)
but the condition of (\ref{eq:critGS3.2}) is weaker than
    the one of (\ref{eq:critSU283.2}) of the criterion on spin-observables.
Hence at this point these criteria are
stronger for biseparability but weaker for full separability
than the criteria on spin-observables for our state.
But we have another full-separability condition, (\ref{eq:critGS3.11}), which
can be stronger in a region than the ones based on the partial transposition criterion.
The states of parameters in this region are entangled ones of positive partial transpose,
no pure state entanglement can be distilled from them.
The borders of the domains in which these conditions hold
and the region of PPTESs can be seen in figure~\ref{fig:Mxe}.
\begin{figure}
 \setlength{\unitlength}{0.001428571\textwidth}
 \begin{picture}(420,415)
  \put(0,0){\includegraphics[width=0.6\textwidth]{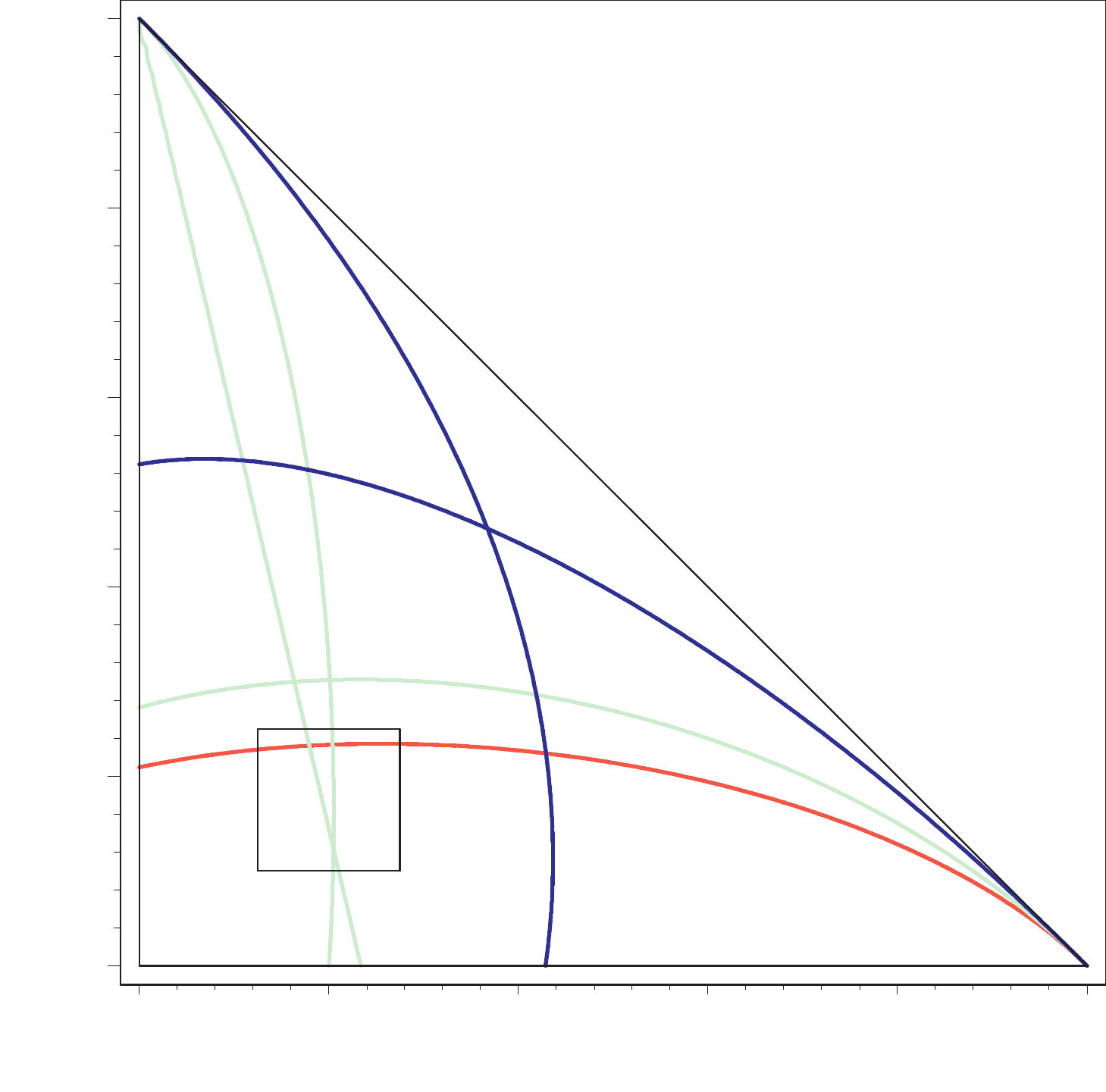}}
  \put(10,226){\makebox(0,0)[r]{\strut{}$w$}}
  \put(235,6){\makebox(0,0)[r]{\strut{}$g$}}
{\small
  \put(56,27){\makebox(0,0)[r]{\strut{}$0$}}
  \put(133,27){\makebox(0,0)[r]{\strut{}$0.2$}}
  \put(205,27){\makebox(0,0)[r]{\strut{}$0.4$}}
  \put(277,27){\makebox(0,0)[r]{\strut{}$0.6$}}
  \put(349,27){\makebox(0,0)[r]{\strut{}$0.8$}}
  \put(415,27){\makebox(0,0)[r]{\strut{}$1$}}
  \put(38,46){\makebox(0,0)[r]{\strut{}$0$}}
  \put(38,118){\makebox(0,0)[r]{\strut{}$0.2$}}
  \put(38,190){\makebox(0,0)[r]{\strut{}$0.4$}}
  \put(38,262){\makebox(0,0)[r]{\strut{}$0.6$}}
  \put(38,334){\makebox(0,0)[r]{\strut{}$0.8$}}
  \put(40,406){\makebox(0,0)[r]{\strut{}$1$}}
}
  \put(206,208){\includegraphics[width=0.3\textwidth]{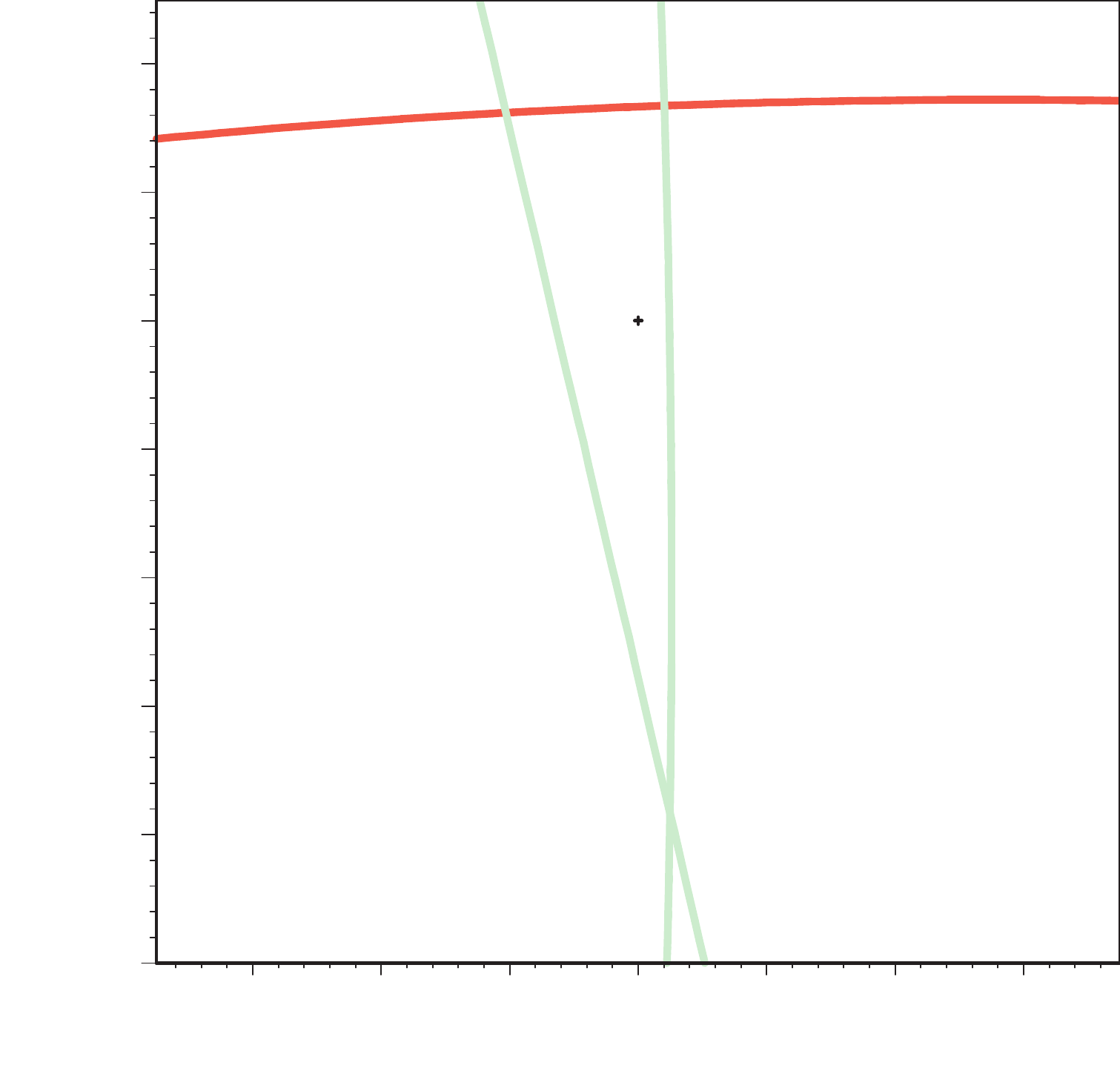}}
{\scriptsize
  \put(288,219){\makebox(0,0)[r]{\strut{}$0.16$}}
  \put(334,219){\makebox(0,0)[r]{\strut{}$0.2$}}
  \put(384,219){\makebox(0,0)[r]{\strut{}$0.24$}}

  \put(230,249){\makebox(0,0)[r]{\strut{}$0.12$}}
  \put(230,297){\makebox(0,0)[r]{\strut{}$0.16$}}
  \put(230,346){\makebox(0,0)[r]{\strut{}$0.2$}}
  \put(230,394){\makebox(0,0)[r]{\strut{}$0.24$}}
}
  \put(110,280){\makebox(0,0)[r]{\strut{}(\ref{eq:critGS3.1})}}
  \put(100,200){\makebox(0,0)[r]{\strut{}(\ref{eq:critGS3.11})}}
  \put(265,130){\makebox(0,0)[r]{\strut{}(\ref{eq:critGS3.2})}}
  \put(275,100){\makebox(0,0)[r]{\strut{}(\ref{eq:critPPT2})}}
  \put(120,310){\makebox(0,0)[r]{\strut{}(\ref{eq:critGS2.1})}}
  \put(260,160){\makebox(0,0)[r]{\strut{}(\ref{eq:critGS2.2})}}
 \end{picture}
 \caption{Criteria on matrix elements for the state (\ref{eq:lo}) on the $g$-$w$-plane.
(Green curves:
the borders of domains inside equations (\ref{eq:critGS3.1})-(\ref{eq:critGS3.2}) hold,
blue curves:
the borders of domains inside equations (\ref{eq:critGS2.1})-(\ref{eq:critGS2.2}) hold.
Red curve:
the border of domain inside equation (\ref{eq:critPPT2}) of partial transposition criterion hold,
copied from figure~\ref{fig:PPTRed}.
The point $g=1/5$, $w=1/5$ is also shown.
The inequalities hold on the side of the curves containing the origin.)}
 \label{fig:Mxe}
\end{figure}
One can also show a representing matrix of the region of PPTESs determined by (\ref{eq:critPPT1}),~(\ref{eq:critPPT2})
and the violation of (\ref{eq:critGS3.11}).
It is easy to check that the state of parameters $(g=1/5,w=1/5)$ is contained by this set
and the explicit form of (\ref{eq:lo}) for this point is
\begin{equation} 
\label{eq:PPTESlo}
\varrho_{g=1/5,w=1/5}=\frac1{120}\begin{bmatrix}
 21  &\cdot&\cdot&\cdot&\cdot&\cdot&\cdot& 12  \\
\cdot& 17  &  8  &\cdot&  8  &\cdot&\cdot&\cdot\\
\cdot&  8  & 17  &\cdot&  8  &\cdot&\cdot&\cdot\\
\cdot&\cdot&\cdot&  9  &\cdot&\cdot&\cdot&\cdot\\
\cdot&  8  &  8  &\cdot& 17  &\cdot&\cdot&\cdot\\
\cdot&\cdot&\cdot&\cdot&\cdot&  9  &\cdot&\cdot\\
\cdot&\cdot&\cdot&\cdot&\cdot&\cdot&  9  &\cdot\\
 12  &\cdot&\cdot&\cdot&\cdot&\cdot&\cdot& 21  \\
\end{bmatrix}.
\end{equation}

\section{Tripartite entanglement}
\label{sec:SepCrit.Class1}
Now we discuss some issues different from the previous ones related to separability criteria.
Namely, we investigate Class 1 of three-qubit entanglement,
and show that entangled two-qubit subsystems arise only in this class.

\subsection{W and GHZ classes of three-qubit entanglement}
\label{subsec:SepCrit.Class1.WGHZ}

As we have seen in section \ref{subsec:QM.EntMeas.3QBPure},
a fully entangled three-qubit \emph{pure} state can be either of Class GHZ or of Class W
in the sense of SLOCC, that is,
vectors of these two different types can not be transformed into each other by local invertible operations.
These fully entangled vectors $\cket{\psi}$ can be classified by the $\tau(\psi)$ three-tangle (\ref{eq:tau}),
as $\tau(\psi)\neq0$ exactly for the GHZ-type vectors, see in table \ref{tab:SLOCC3Pure}.

In section~\ref{subsec:QM.EntMeas.3QBMix} we recalled a classification of \emph{mixed} three-qubit states
related to the pure-state SLOCC classes,
given by Ac{\'i}n et.~al.~\cite{Acinetal3QBMixClass}.
They have shown that Class 1 of fully entangled states
can be naturally divided into two subsets, namely Class W and Class GHZ.
Since Class GHZ is the set of states for the mixing of which GHZ-type vector is required,
the convex-roof extension of the three-tangle $\tau$
is a good indicator for Class GHZ, 
as $\cnvroof{\tau}(\varrho)\neq0$ exactly for Class GHZ.

\begin{figure}
 \setlength{\unitlength}{0.001428571\textwidth}
 \begin{picture}(420,415)
  \put(0,0){\includegraphics[width=0.6\textwidth]{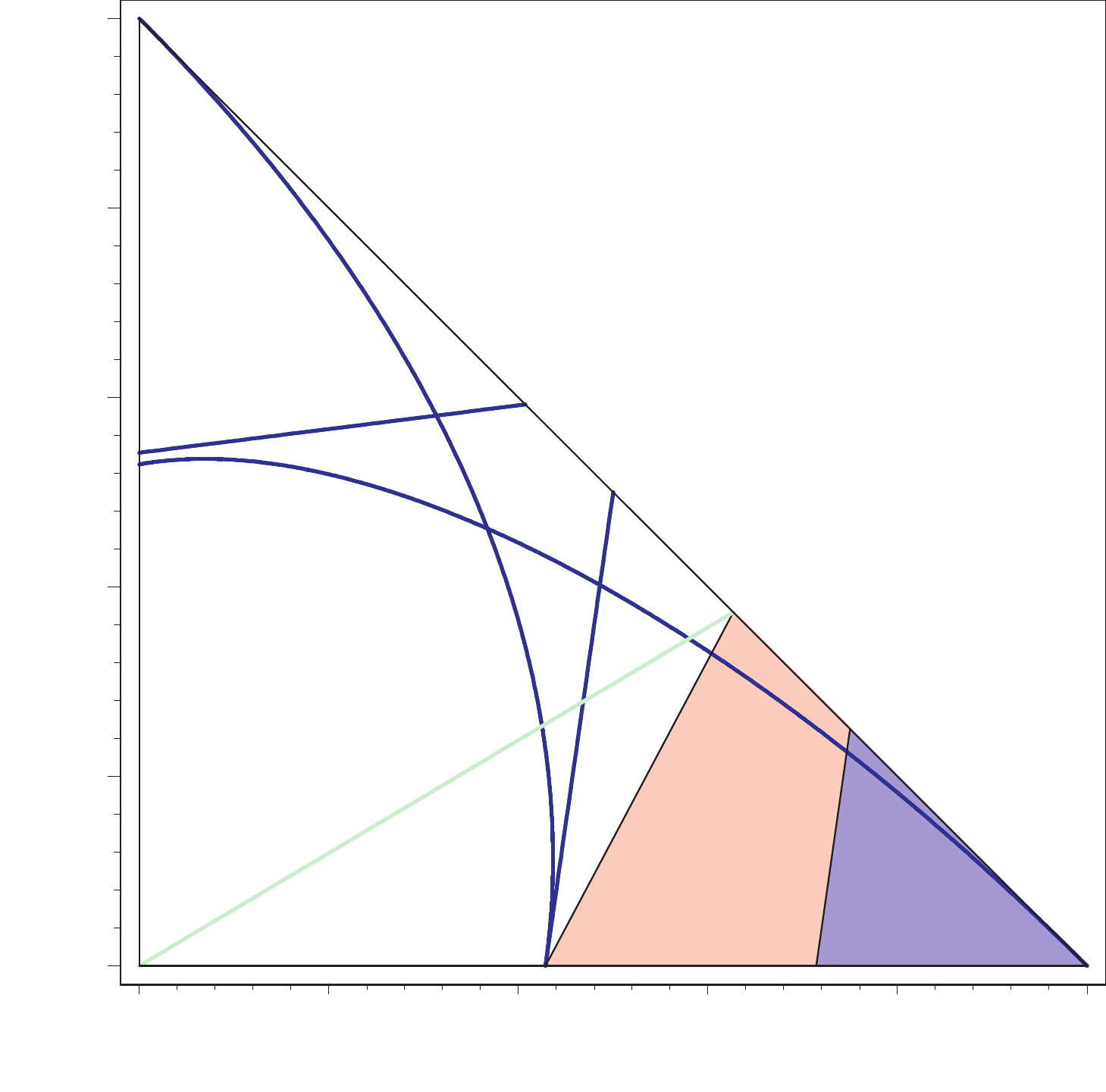}}
  \put(10,226){\makebox(0,0)[r]{\strut{}$w$}}
  \put(235,6){\makebox(0,0)[r]{\strut{}$g$}}
{\small
  \put(56,27){\makebox(0,0)[r]{\strut{}$0$}}
  \put(133,27){\makebox(0,0)[r]{\strut{}$0.2$}}
  \put(205,27){\makebox(0,0)[r]{\strut{}$0.4$}}
  \put(277,27){\makebox(0,0)[r]{\strut{}$0.6$}}
  \put(349,27){\makebox(0,0)[r]{\strut{}$0.8$}}
  \put(415,27){\makebox(0,0)[r]{\strut{}$1$}}
  \put(38,46){\makebox(0,0)[r]{\strut{}$0$}}
  \put(38,118){\makebox(0,0)[r]{\strut{}$0.2$}}
  \put(38,190){\makebox(0,0)[r]{\strut{}$0.4$}}
  \put(38,262){\makebox(0,0)[r]{\strut{}$0.6$}}
  \put(38,334){\makebox(0,0)[r]{\strut{}$0.8$}}
  \put(40,406){\makebox(0,0)[r]{\strut{}$1$}}
}
  \put(130,265){\makebox(0,0)[r]{\strut{}(\ref{eq:critWW1})}}
  \put(280,230){\makebox(0,0)[r]{\strut{}(\ref{eq:critWW2})}}
  \put(355,80){\makebox(0,0)[r]{\strut{}(\ref{eq:critWGHZ})}}
  \put(160,80){\makebox(0,0)[r]{\strut{}(\ref{eq:critGHZ2})}}
  \put(120,310){\makebox(0,0)[r]{\strut{}(\ref{eq:critGS2.1})}}
  \put(150,210){\makebox(0,0)[r]{\strut{}(\ref{eq:critGS2.2})}}
 \end{picture}
 \caption{Criteria on tripartite entanglement classes for the state (\ref{eq:lo}) on the $g$-$w$-plane.
(Blue straight lines: the borders of domains inside equations (\ref{eq:critWW1})-(\ref{eq:critWW2}) hold,
blue curves of second order: the borders of domains inside equations (\ref{eq:critGS2.1})-(\ref{eq:critGS2.2}) hold,
copied from figure~\ref{fig:Mxe}.
equation (\ref{eq:critWGHZ}) holds inside the blue domain,
and the border of Class GHZ is inside the red domain.
The inequalities hold on the side of the curves containing the origin.
equation (\ref{eq:critGHZ2}) holds under the green line.)}
 \label{fig:WGHZ}
\end{figure}

A different kind of method to determine to which class a given mixed state belongs is
the use of witness operators (section \ref{subsec:QM.Ent.2Part}).
In~\cite{Acinetal3QBMixClass} there have been given some witnesses
for $\mathcal{D}_{\text{W}}$ and $\mathcal{D}_{\text{GHZ}}$, namely,
\begin{equation}
W_{\text{GHZ}}=\frac34\Id\otimes\Id\otimes\Id-\cket{\text{GHZ}}\bra{\text{GHZ}}
\end{equation}
can detect $\mathcal{D}_{\text{GHZ}}$ and
\begin{subequations}
\begin{align}
W_{\text{W}_1}&=\frac23\Id\otimes\Id\otimes\Id-\cket{\text{W}}\bra{\text{W}},\\
W_{\text{W}_2}&=\frac12\Id\otimes\Id\otimes\Id-\cket{\text{GHZ}}\bra{\text{GHZ}}
\end{align}
\end{subequations}
can detect $\mathcal{D}_{\text{W}}$.
With these we have
\begin{equation}
\varrho\in \mathcal{D}_{\text{W}}\qquad\Longrightarrow\quad
\label{eq:critWGHZ}
0\leq\tr W_{\text{GHZ}}\varrho=(20\td-2\tg+9\tw)/4
 =(5-7g+w)/8
\end{equation}
and if the inequality is violated then $\varrho\in\text{Class GHZ}$,
as well as
\begin{subequations}
\begin{align}
\varrho\in \mathcal{D}_{2-\text{sep}}\quad\Longrightarrow\quad
\label{eq:critWW1}
0&\leq\tr W_{\text{W}_1}\varrho=(13\td+4\tg-3\tw)/3
 =(13+3g-21w)/24,\\
\label{eq:critWW2}
0&\leq\tr W_{\text{W}_2}\varrho=(6\td-2\tg+3\tw)/2
 =(3-7g+w)/8
\end{align}
\end{subequations}
and if either of the inequalities is violated then $\varrho\in\text{Class 1}$.
In figure~\ref{fig:WGHZ} we plot the lines on which these inequalities are saturated.
It can be checked that (\ref{eq:critWW1}) and~(\ref{eq:critWW2}) give
weaker conditions for biseparability than (\ref{eq:critGS2.1}) and~(\ref{eq:critGS2.2}) of the previous section.
We can conclude that all the states in the blue domain belong to Class GHZ,
and the biseparable states are enclosed by the blue curves,
however, both type of fully entangled states can be here too.

The equality in (\ref{eq:critWGHZ}) gives an ``upper bound'' for the border of Class GHZ
(blue domain in figure~\ref{fig:WGHZ}).
Fortunately, we have a possibility to give also a ``lower bound'' for that,
thanks to the results of Lohmayer et.~al.~\cite{MixedThreetangle}.
They have studied the GHZ-W mixture ($d=0$, $w=1-g$) and
they have found that there exists a decomposition of projectors onto vectors of vanishing three-tangle
if and only if $0\leq g\leq g_0=4\cdot2^{1/3}/(3+4\cdot2^{1/3})=0.626851\dots$,
hence for these parameters the convex-roof extension $\cnvroof{\tau}$ of the three-tangle is zero.
If we mix the states of this interval with white noise
then the three-tangle remains zero and neither of these states can belong to Class GHZ.
So we can state that
\begin{equation}
\label{eq:critGHZ2}
\varrho\in\mathcal{D}_{\text{GHZ}}\qquad\Longrightarrow\qquad
w<\frac3{4\cdot2^{1/3}}g,
\end{equation}
which holds under the green line of figure~\ref{fig:WGHZ}.
This condition is quite weak, but we can make it stronger.
Recall that on the $w=0$ line (noisy GHZ state)
$\varrho\in\text{Class 1}$ if and only if $3/7< g\leq1$ (section~\ref{sec:SepCrit.Rho}).
So the convexity of $\mathcal{D}_{\text{W}}$
restricts Class GHZ to be inside the triangle defined by the vertices
$(g=3/7,w=0)$, $(g=1,w=0)$ and $(g=g_0,w=1-g_0)$
(union of tinted domains in figure~\ref{fig:WGHZ}).
So we can conclude that
all the states in the blue domain belong to Class GHZ,
and \emph{the border of Class GHZ is in the red domain of figure~\ref{fig:WGHZ}.}

\subsection{Wootters concurrence}
\label{subsec:SepCrit.Class1.Woott}

The Wootters concurrence (\ref{eq:WConc}) measures the entanglement inside the two-qubit subsystems.
Let us calculate that for the two-qubit reduced state $\varrho_{23}$, which is given in (\ref{eq:MxR23}).
Since the spin-flip for two-qubit density matrices means
transpose with respect to the antidiagonal
then multiplication of neither diagonal nor antidiagonal entries by $-1$,
one can easily get that
\begin{equation}
\label{eq:spectRho23W}
\Eigv\bigl(\tilde{\varrho}_{23}\varrho_{23}\bigr)=
\begin{aligned}[t]\bigl\{\qquad\qquad\qquad
4(\td+\tw)^2&= 4(3-3g+5w)^2/24^2,\\
(2\td+\tg)(2\td+\tg+\tw)
&= 12(1+g-w)(3+3g+w)/24^2,\\
(2\td+\tg)(2\td+\tg+\tw)
&= 12(1+g-w)(3+3g+w)/24^2,\\
4\td^2&= 36(1-g-w)^2/24^2\qquad\qquad\qquad\bigr\}.
\end{aligned}
\end{equation}
Clearly, the last eigenvalue is the smallest one.
If $4(\td+\tw)^2\leq(2\td+\tg)(2\td+\tg+\tw)$
then $\lambda^\downarrow_1=\lambda^\downarrow_2$ hence $\cnvroof{c}(\varrho_{23})=0$.
If $4(\td+\tw)^2\geq(2\td+\tg)(2\td+\tg+\tw)$,
then $\lambda^\downarrow_2=\lambda^\downarrow_3$ and $\cnvroof{c}(\varrho_{23})$ can be nonzero.
It turns out that the Wootters concurrence is
\begin{equation}
\label{eq:Conc}
\cnvroof{c}(\varrho_{23})=2\tw-2\sqrt{(2\td+\tg)(2\td+\tg+\tw)} 
=\frac{2}{3}w-\frac{1}{2\sqrt{3}}\sqrt{(1+g-w)(3+3g+w)}
\end{equation}
if $0\leq \tw^2 - (2\td+\tg)(2\td+\tg+\tw)
=(-9g^2+19w^2+6gw-18g+6w-9)/12^2$,
otherwise $\cnvroof{c}(\varrho_{23})=0$ (figure~\ref{fig:Woott}).
It takes its maximum $2/3$ in $(g=0,w=1)$, that is, for pure W-state.
For the GHZ-W mixture ($d=0$) the result is calculated also in~\cite{MixedThreetangle}.
On the other hand, it can be checked by (\ref{eq:critWW1}) and (\ref{eq:critGHZ2}) 
that all states in this simplex which have entangled two-qubit subsystems
are in Class W.

\begin{figure}
 \setlength{\unitlength}{0.001369863014\textwidth}
 \begin{picture}(365,502)
  \put(0,0){\includegraphics[width=0.5\textwidth]{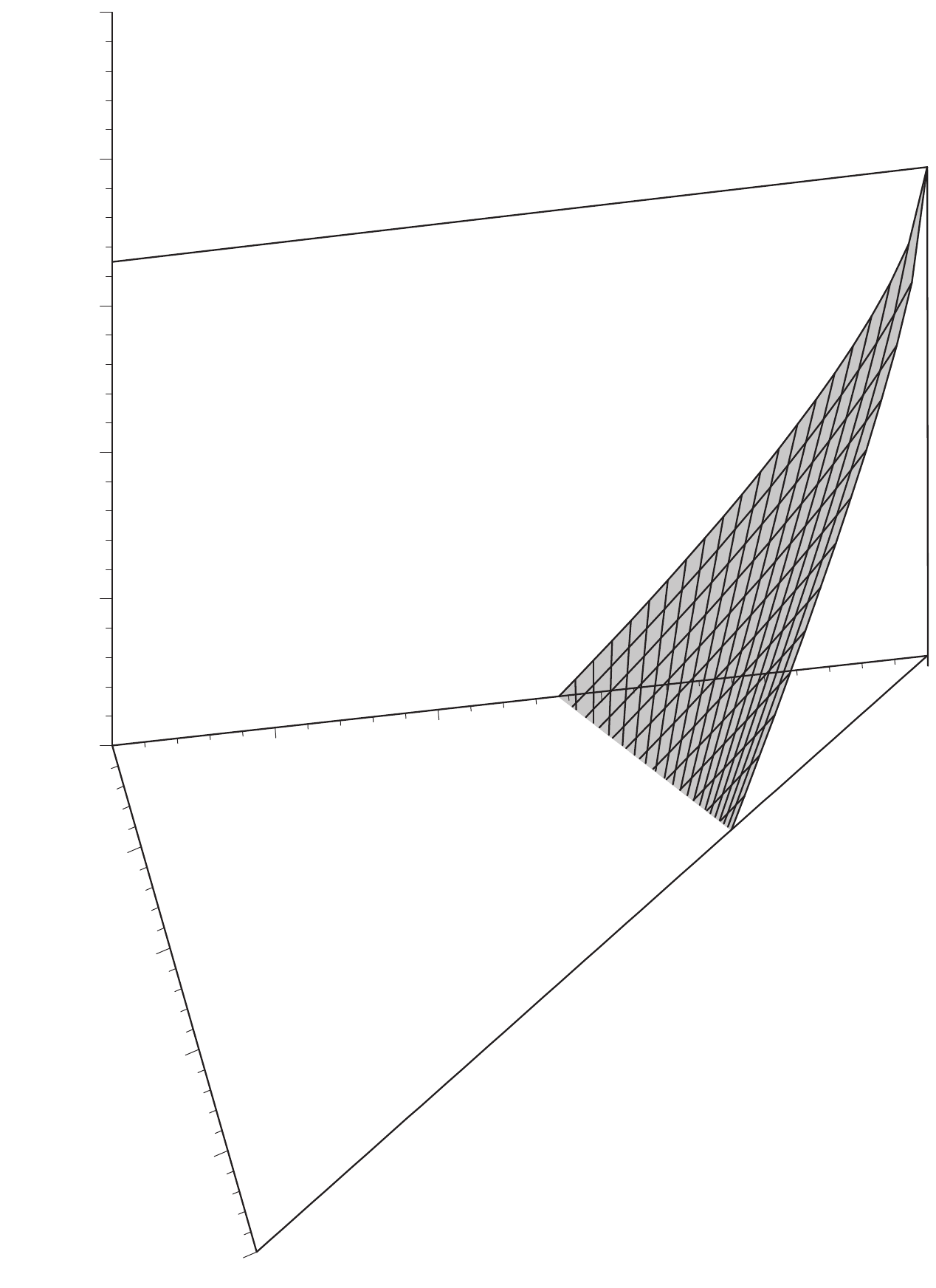}}
  \put(10,350){\makebox(0,0)[r]{\strut{}$\cnvroof{c}(\varrho_{23})$}}
  \put(30,90){\makebox(0,0)[r]{\strut{}$g$}}
  \put(210,195){\makebox(0,0)[r]{\strut{}$w$}}
{\small
  \put(50,160){\makebox(0,0)[r]{\strut{}$0.2$}}
  \put(60,122){\makebox(0,0)[r]{\strut{}$0.4$}}
  \put(70,83){\makebox(0,0)[r]{\strut{}$0.6$}}
  \put(80,44){\makebox(0,0)[r]{\strut{}$0.8$}}
  \put(90,4){\makebox(0,0)[r]{\strut{}$1$}}

  \put(118,205){\makebox(0,0)[r]{\strut{}$0.2$}}
  \put(180,212){\makebox(0,0)[r]{\strut{}$0.4$}}
  \put(244,219){\makebox(0,0)[r]{\strut{}$0.6$}}
  \put(308,226){\makebox(0,0)[r]{\strut{}$0.8$}}
  \put(366,233){\makebox(0,0)[r]{\strut{}$1$}}

  \put(38,209){\makebox(0,0)[r]{\strut{}$0$}}
  \put(37,266){\makebox(0,0)[r]{\strut{}$0.2$}}
  \put(37,323){\makebox(0,0)[r]{\strut{}$0.4$}}
  \put(37,380){\makebox(0,0)[r]{\strut{}$0.6$}}
  \put(37,437){\makebox(0,0)[r]{\strut{}$0.8$}}
  \put(40,495){\makebox(0,0)[r]{\strut{}$1$}}
}
 \end{picture}
 \caption{Wootters concurrence of $\varrho_{23}$ on the $g$-$w$-plane (\ref{eq:Conc}).}
 \label{fig:Woott}
\end{figure}

\section{Summary and remarks}
\label{sec:SepCrit.Concl}

In this chapter we have investigated the noisy GHZ-W mixture
and demonstrated some necessary but not sufficient criteria
for different classes of separability.
With these criteria we can restrict these classes into some domains of the 2-dimension simplex.
It has turned out that
the strongest conditions was
(\ref{eq:critPPT1}), (\ref{eq:critGS3.11}) and~(\ref{eq:critPPT2}) for full separability,
(\ref{eq:critPPT1}) and (\ref{eq:critPPT2}) for the union of Classes 2.8 and 3 and
(\ref{eq:critGS2.1}) and (\ref{eq:critGS2.2}) for biseparability.
These have been obtained from
the partial transposition criterion of Peres \cite{PeresCrit}
and the criteria of G\"uhne and Seevinck \cite{GuhneSevinckCrit} dealing with matrix elements.
Only these latter criteria have turned out to be strong enough
to reveal a set of entangled states of positive partial transpose.
(The set of these states can be given by the conditions of
(\ref{eq:critPPT1}), (\ref{eq:critPPT2}) and~(\ref{eq:critGS3.11}).
An example is given in equation (\ref{eq:PPTESlo}).)
We have also investigated the W and GHZ classes of fully entangled states
and we have given restrictions for Class GHZ.

\begin{remarks}
\item 
Some parts of some bipartite separability criteria have proved to be necessary and sufficient
for separability classes of the GHZ-white noise mixture.
(These are the majorization criterion and the entropy criterion in the $\alpha\to\infty$ limit,
and reduction criterion.)
This is interesting because, e.g., the majorization criterion, to our knowledge, does not state anything about a density matrix
which is majorized by only either of its subsystems.
We do not think that this would be more than a coincidence,
however, the GHZ state is a very special one
so it is an interesting question
as to whether this can be generalized to the $n$-qubit noisy GHZ state,
or to some kinds of generalized GHZ states.
\item Another interesting coincidence was that
the two settings and measurement vectors strong in detection of GHZ and W states
are related by local unitary Hadamard transformation
in the criteria on spin-observables of Seevinck and Uffink (section~\ref{subsec:SepCrit.3Part.Spin})
and also in the criteria of Gabriel et.~al.~(section~\ref{subsec:SepCrit.3Part.Hub}).
The transformation on settings and measurement vectors
can also be written on the state $\varrho\mapsto (H^{\otimes3})^\dagger\varrho H^{\otimes3}$,
which means that we can use the same measurements
on the transformed state
for the detection of W state as for the detection of GHZ state.
The interesting point here is that
this transformation, being local unitary, does not make a W state from a GHZ state.
\item The issue of entangled two-qubit subsystems (section~\ref{subsec:SepCrit.Class1.Woott})
seems to be interesting.
Entangled two-qubit subsystems arise only in Class W, which is probably the consequence of the special mixture,
but it would be interesting to find some criteria for entangled two-qubit subsystems.
We will return to this issue in the next chapters,
and, as side results, we give conditions for this
for pure states (section \ref{subsec:ThreeQB.Pure.WConc})
and for mixed states (section \ref{subsec:PartSep.Threepart.Examples}).
\end{remarks}

\chapter{Partial separability classification}
\label{chap:PartSep}

In section \ref{subsec:QM.Ent.NPart} we have seen how the different partially separable pure states
give rise to a classification of mixed states by $\alpha_k$-separable and $k$-separable states \cite{SeevinckUffinkMixSep},
written out in the case of three subsystems.
Then in the previous chapter we gave some illustrations for restrictions of some of these classes
within a two-parameter simplex of permutation-invariant three-qubit mixed states.

In this chapter we show that this classification
does not take the problem of partial separability of mixed states in the full detail.
We extend this classification, moreover, 
we give necessary and sufficient criteria for the classes.
We call our extended classification \emph{PS classification}, which stands for Partial Separability,
because this classification is complete in the sense of partial separability,
that is, it utilizes all the possible combinations of different kinds of partially separable pure states.
We get this finding using the point of view that
a state is a mixture of an ensemble of pure states,
which leads us to a set of necessary and sufficient criteria for the classes.
We have seen in section \ref{subsec:QM.EntMeas.2Mix} that in the bipartite case,
where a state---either pure or mixed---can be either separable or entangled,
the vanishing of the convex roof extension of local entropies of pure states
is a \emph{necessary and sufficient criterion} of separability.
For us, this is the archetype of the general method for the detection of convex subsets by convex roof extensions.
However, for more-than-two-partite systems,
the partial separability properties have a complicated structure,
and, to our knowledge, this method was not used.
Instead of that,
the usual approach is the use of witness operators, as was done originally for three-qubit systems (section~\ref{sec:SepCrit.Class1}) 
or other \emph{necessary but not sufficient criteria} for the detection of convex subsets,
some of them were reviewed in chapter~\ref{chap:SepCrit}.

The material of this chapter covers thesis statement \ref{statement:partsep}
(page \pageref{statement:partsep}).
\begin{organization}
\item[\ref{sec:PartSep.ThreePart}]
we elaborate the PS classification for tripartite mixed states.
We define the PS subsets (section~\ref{subsec:PartSep.ThreePart.PSsubsets})
and PS classes (section~\ref{subsec:PartSep.Threepart.PSclasses}),
and we give some examples for states which are contained in classes different only under the PS classification 
(section~\ref{subsec:PartSep.Threepart.Examples}).
Then we give necessary and sufficient criteria for the identification of the PS classes 
and obtain the functions by which these criteria can be formulated (section~\ref{subsec:PartSep.ThreePart.Indicators}).
\item[\ref{sec:PartSep.Gen}]
we generalize the construction for the case of arbitrary number of subsystems of arbitrary dimensions.
We work out the labelling of the PS subsets (section~\ref{subsec:PartSep.Gen.PSsubsets})
along with that of the PS classes,
and give a general conjecture about their non-emptiness (section~\ref{subsec:PartSep.Gen.PSclasses}).
Then we construct
the functions identifying the PS subsets and classes
with the minimal requirements (section~\ref{subsec:PartSep.Gen.Indicators}),
and also with stronger requirements leading to entanglement-monotone functions
(section~\ref{subsec:PartSep.Gen.monIndicators}).
\item[\ref{sec:PartSep.Sum}]
we give a summary and some remarks.
\end{organization}

\section{Partial separability of tripartite mixed states}
\label{sec:PartSep.ThreePart}

Here we introduce the PS classification for three subsystems.
We have already seen the main concept in section \ref{subsec:QM.Ent.NPart},
first given in \cite{DurCiracTarrach3QBMixSep,DurCiracTarrachBMixSep},
then used and extended in \cite{SeevinckUffinkMixSep,Acinetal3QBMixClass}, 
that we define a density matrix to be the element of a class
according to whether it can or can not be mixed by the use of
pure states of some given kinds. 

\subsection{PS subsets}
\label{subsec:PartSep.ThreePart.PSsubsets}

We have $\mathcal{H}=\mathcal{H}_1\otimes\mathcal{H}_2\otimes\mathcal{H}_3$ with
arbitrary $\tpl{d}=(d_1,d_2,d_3)$ local dimensions.
Let us introduce some convenient notations
for the subsets in the set of extremal points $\mathcal{P}$ of $\mathcal{D}$
given by unit vectors
\footnote{
Remember our convention:
The letters $a$, $b$ and $c$ are variables
taking their values in the set of labels $L=\{1,2,3\}$.
When these variables appear in a formula,
they form a partition of $\{1,2,3\}$,
so they take always different values
and \emph{the formula is understood for all the different values of these variables automatically.}
Although, sometimes a formula is symmetric under the interchange of two such variables
in which case we keep only one of the identical formulas.}
\begin{subequations}
\label{eq:PSPsets}
\begin{align}
\mathcal{P}_{1|2|3} &=\bigl\{\pi\in\mathcal{P}\;\big\vert\; \pi=\pi_1\otimes\pi_2\otimes\pi_3 \bigr\},\\
\mathcal{P}_{a|bc}  &=\bigl\{\pi\in\mathcal{P}\;\big\vert\; \pi=\pi_a\otimes\pi_{bc}\bigr\},\\
\mathcal{P}_{123}   &\equiv\mathcal{P}.
\end{align}
\end{subequations}
We call these \emph{pure PS subsets}, and they are closed and \emph{contain} each other in a hierarchic way,
which is illustrated in figure~\ref{fig:PS3incl}.
The meaning of these sets is that their elements can contain \emph{``at most''} a given entanglement.
We can also introduce the \emph{disjoint} subsets given by unit vectors of different partial separability
\begin{subequations}
\label{eq:PSPclasses}
\begin{align}
\mathcal{Q}_{1|2|3} &=\bigl\{\pi\in\mathcal{P}\;\big\vert\; \pi=\pi_1\otimes\pi_2\otimes\pi_3 \bigr\} = \mathcal{P}_{1|2|3},\\
\mathcal{Q}_{a|bc}  &=\bigl\{\pi\in\mathcal{P}\;\big\vert\; \pi=\pi_a\otimes\pi_{bc},\; \pi_{bc}\neq\pi_b\otimes\pi_c \bigr\} 
= \mathcal{P}_{a|bc}\setminus \mathcal{P}_{1|2|3},\\
\mathcal{Q}_{123}   &=\bigl\{\pi\in\mathcal{P}\;\big\vert\; \pi\neq \pi_a\otimes\pi_{bc} \bigr\} 
= \mathcal{P}_{123} \setminus \bigl(\mathcal{P}_{1|23} \cup \mathcal{P}_{2|13} \cup \mathcal{P}_{3|12}\bigr).
\end{align}
We call these \emph{pure PS classes},
and they cover $\mathcal{P}$ entirely,
$\mathcal{P}=\mathcal{Q}_{1|2|3}\cup\mathcal{Q}_{1|23}\cup\mathcal{Q}_{2|13}\cup\mathcal{Q}_{3|12}\cup\mathcal{Q}_{123}$.
\end{subequations}
Except $\mathcal{Q}_{1|2|3}$, none of the above sets are closed.
The meaning of these sets is that their elements contain \emph{exactly} a given entanglement.

The notion of $k$-separability and $\alpha_k$-separability, 
given in \cite{SeevinckUffinkMixSep} and recalled in section \ref{subsec:QM.Ent.NPart},
can be formulated now as the convex hulls of some of the sets (\ref{eq:PSPclasses}).
The \emph{$3$-separable states} ($\mathcal{D}_\text{$3$-sep}$), 
or equivalently $1|2|3$-separable states ($\mathcal{D}_{1|2|3}$) 
can be mixed from the pure states of $\mathcal{Q}_{1|2|3}$, that is, they are fully separable.
The \emph{$a|bc$-separable states} ($\mathcal{D}_{a|bc}$)
can be written in the form $\sum_ip_i\varrho_{a,i}\otimes\varrho_{bc,i}$,
($\varrho_{a,i}\in\mathcal{D}(\mathcal{H}_a)$,
$\varrho_{bc,i}\in\mathcal{D}(\mathcal{H}_b\otimes\mathcal{H}_c)$,)
where we demand only the split between $a$ and $bc$,
but split between $b$ and $c$ can also occur in the pure state decompositions,
so they can be mixed from the pure states of $\mathcal{Q}_{1|2|3}$ and $\mathcal{Q}_{a|bc}$,
that is, of $\mathcal{P}_{a|bc}$.
The \emph{$2$-separable states} ($\mathcal{D}_\text{$2$-sep}$) 
are of the form $\sum_ip_i\varrho_{a_i,i}\otimes\varrho_{b_ic_i,i}$,
so they can be mixed from the pure states of $\mathcal{Q}_{1|2|3}$,
$\mathcal{Q}_{1|23}$, $\mathcal{Q}_{2|13}$ and $\mathcal{Q}_{3|12}$,
that is, of $\mathcal{P}_{1|23}$, $\mathcal{P}_{2|13}$ and $\mathcal{P}_{3|12}$.
These states are also of relevance,
since although they are not separable under any $a|bc$ split,
but there is no need of genuine tripartite entangled pure state to mix them \cite{SeevinckUffinkMixSep}.
From the point of view of convex hulls of extremal points,
it can be seen better than originally in \cite{SeevinckUffinkMixSep} that we can define
three new partial separability sets ``between'' the $a|bc$-separable
and $2$-separable ones.
For example, the \emph{$2|13$-$3|12$-separable states} ($\mathcal{D}_{2|13,3|12}$)
are the states which
can be mixed from the pure states of $\mathcal{Q}_{1|2|3}$, $\mathcal{Q}_{2|13}$, and $\mathcal{Q}_{3|12}$,
that is, of $\mathcal{P}_{2|13}$, and $\mathcal{P}_{3|12}$.
States of this kind are also of relevance,
since there is no need of $1|23$-separable pure states to mix them,
that is, entanglement within $23$ subsystem.
Beyond these, 
we use the set of \emph{$123$-separable states} ($\mathcal{D}_{123}$),
or equivalently $1$-separable ($\mathcal{D}_\text{$1$-sep}$),
which is equal to the full set of states ($\mathcal{D}$).
Summarizing, we have the following 
\emph{PS subsets} in $\mathcal{D}$
arising as convex hulls of pure states of given kinds:
\begin{subequations}
\label{eq:PSsets}
\begin{align}
\label{eq:PSsets.1|2|3}
\mathcal{D}_{1|2|3} &= \Conv\bigl(\mathcal{P}_{1|2|3}\bigr)\equiv\mathcal{D}_\text{$3$-sep},\\
\mathcal{D}_{a|bc}  &= \Conv\bigl(\mathcal{P}_{a|bc}\bigr),\\
\label{eq:PSsets.bcacab}
\mathcal{D}_{b|ac,c|ab}  &= \Conv \bigl(\mathcal{P}_{b|ac}\cup\mathcal{P}_{c|ab}\bigr),\\
\label{eq:PSsets.2-sep}
\mathcal{D}_{1|23,2|13,3|12}  &= \Conv \bigl(\mathcal{P}_{1|23}\cup\mathcal{P}_{2|13}\cup\mathcal{P}_{3|12}\bigr)
\equiv\mathcal{D}_\text{$2$-sep},\\
\label{eq:PSsets.123}
\mathcal{D}_{123}  &= \Conv \bigl(\mathcal{P}_{123}\bigr)
\equiv\mathcal{D}_\text{$1$-sep}\equiv\mathcal{D}.
\end{align}
\end{subequations}
These sets are convex by construction,
and they contain each other in a hierarchic way,
which is illustrated in figure~\ref{fig:PS3incl}.
\begin{figure}
 \includegraphics{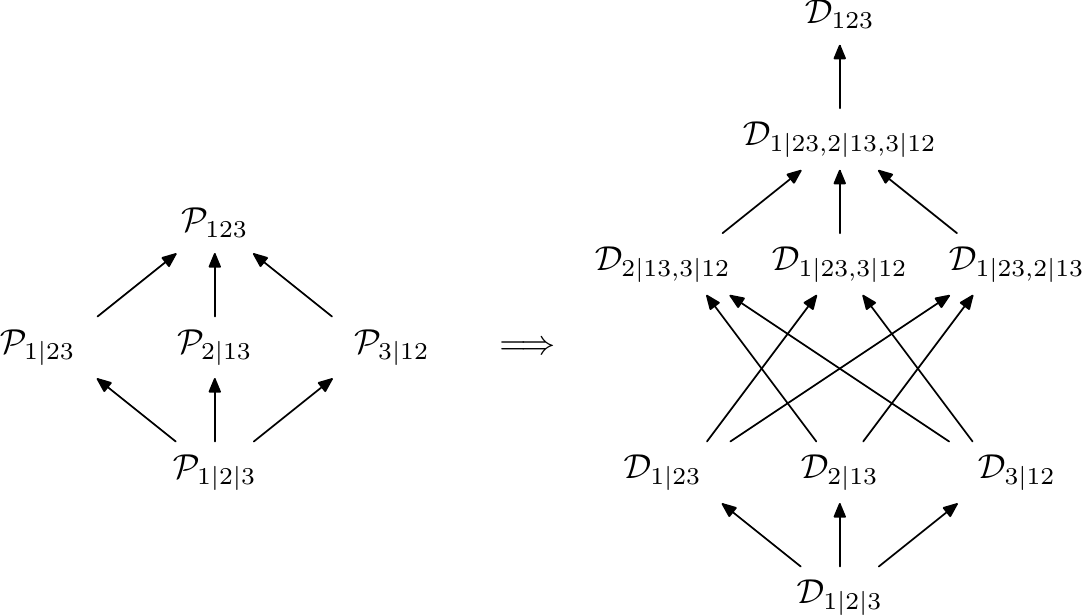}
 \caption{Inclusion hierarchy of the pure and mixed PS sets $\mathcal{P}_{\dots}$ and $\mathcal{D}_{\dots}$ given in (\ref{eq:PSPsets}) and (\ref{eq:PSsets}).}
\label{fig:PS3incl}
\end{figure}

From an abstract point of view,
we form the convex hulls of \emph{closed} sets \cite{Acinetal3QBMixClass},
and the convex hulls of \emph{all the possible closed sets}
arising from the unions of the $\mathcal{Q}_{\dots}$ sets (\ref{eq:PSPclasses}) of extremal points
are listed in (\ref{eq:PSsets}) above.
We mean the PS classification
involving the PS subsets (\ref{eq:PSsets.1|2|3})-(\ref{eq:PSsets.123})
to be \emph{complete} in \emph{this} sense.
As special, non-complete cases, 
we get back the classification involving only the sets 
$\mathcal{D}_\text{$k$-sep}$ and $\mathcal{D}_{\alpha_k}$ 
(for any $k$-partite split $\alpha_k$)
obtained by Seevinck and Uffink \cite{SeevinckUffinkMixSep},
and also the classification involving only the sets $\mathcal{D}_{\alpha_k}$
obtained by D\"ur, Cirac and Tarrach \cite{DurCiracTarrach3QBMixSep,DurCiracTarrachBMixSep}.

\subsection{PS classes}
\label{subsec:PartSep.Threepart.PSclasses}

Now we determine the \emph{PS classes} of tripartite mixed states.
The abstract definition of these classes \cite{SeevinckUffinkMixSep}
is that they are the possible non-trivial intersections of the 
$\mathcal{D}_{\dots}$ convex subsets listed in (\ref{eq:PSsets}).
Since we want to deal also with the sets $\mathcal{D}_{b|ac,c|ab}$,
we can not draw an expressive ``onion-like'' figure as is shown in figure \ref{fig:3part} 
for the sets $\mathcal{D}_{1|2|3}$, $\mathcal{D}_{a|bc}$ and $\mathcal{D}_\text{$2$-sep}$.
We have to proceed in a formal manner.

If we have the sets $A_1,A_2,\dots,A_n$,
all of their possible intersections can be constructed
as the intersections for each $i$ the set $A_i$ or its complement $\cmpl{A}_i$.
We have $9$ PS subsets $\mathcal{D}_{\dots}$,
so we can formally write $2^9=512$ possible intersections in this way.
If $B\subseteq A$ then $B\cap \cmpl{A}=\emptyset$,
so some intersecions are automatically empty
(\emph{``empty by construction''}),
and, 
using the inclusion hierarchy of PS subsets in figure~\ref{fig:PS3incl},
we write only the intersections which are ``not empty by construction.''
The number of these turns out to be only $20$.
(Again, if $B\subseteq A$ then $B\cap A=B$ and $\cmpl{B}\cap \cmpl{A}=\cmpl{A}$,
so we can write these $20$ classes as intersection sequences
much shorter than $9$ terms.)
Since the appearance of the $\mathcal{D}_{b|ac,c|ab}$-type sets in the intersections
makes the meaning of the classes a little bit involved,
we write out the list of the PS classes with detailed explanations.
We list the classes in table~\ref{tab:PS3Classes}.

First, the class
\begin{subequations}
\label{eq:PSClasses}
\begin{equation}
\mathcal{C}_3=\mathcal{D}_{1|2|3}
\end{equation}
is the set of fully separable states.
(This is Class $3$ in section \ref{subsec:QM.Ent.NPart}.)

Then come the $18$ classes of $2$-separable entangled states,
that is, the subsets in $\mathcal{D}_{1|23,2|13,3|12}\setminus\mathcal{D}_{1|2|3}$.
The class
\begin{equation}
\mathcal{C}_{2.8}=\cmpl{\mathcal{D}_{1|2|3}}\cap\mathcal{D}_{1|23}\cap\mathcal{D}_{2|13}\cap\mathcal{D}_{3|12}
\equiv\bigl(\mathcal{D}_{1|23}\cap\mathcal{D}_{2|13}\cap\mathcal{D}_{3|12}\bigr)\setminus\mathcal{D}_{1|2|3}
\end{equation}
is the set of states 
which can    be written as $1|23$-separable states
(that is, convex combinations of $\mathcal{Q}_{1|2|3}$ and $\mathcal{Q}_{1|23}$ pure states,
the formation is not unique)
and can also be written as $2|13$-separable states
and can also be written as $3|12$-separable states
but can not  be written as $1|2|3$-separable states.
The existence of such states was counterintuitive,
since for pure states, if a tripartite pure state is separable under any $a|bc$ bipartition then it is fully separable.
For mixed states, however, explicit examples can be constructed \cite{BennettetalUPB,Acinetal3QBMixClass},
which can be written in the form $\sum_ip_i\varrho_{a,i}\otimes\varrho_{bc,i}$ for any $a|bc$ bipartition,
but can not be written in the form $\sum_ip_i\varrho_{1,i}\otimes\varrho_{2,i}\otimes\varrho_{3,i}$.
Alternatively, we can say that
states of this class can not be mixed without the use of bipartite entanglement,
but they can be mixed by the use of bipartite entanglement
within only one bipartite subsystem, it does not matter which one.
(This is Class $2.8$ in section \ref{subsec:QM.Ent.NPart}.)
The next three classes are
\begin{equation}
\mathcal{C}_{2.7.a}=\cmpl{\mathcal{D}_{a|bc}}\cap\mathcal{D}_{b|ac}\cap\mathcal{D}_{c|ab}
\equiv\bigl(\mathcal{D}_{b|ac}\cap\mathcal{D}_{c|ab}\bigr)\setminus\mathcal{D}_{a|bc}.
\end{equation}
For eample, $\mathcal{C}_{2.7.1}$
is the set of states which can be written as $2|13$-separable states
and can also be written as $3|12$-separable states
but can not  be written as $1|23$-separable states.
Alternatively, we can say that
states of this class can not be mixed by the use of bipartite entanglement within only the $12$ subsystem,
but they can be mixed by the use of bipartite entanglement
within either the $23$ or the $13$ subsystems, 
both of them are equally suitable.
(These three classes are Classes $2.7$, $2.6$, and $2.5$ in section \ref{subsec:QM.Ent.NPart}.)
The next three classes are
\begin{equation}
\mathcal{C}_{2.6.a}=\mathcal{D}_{a|bc}\cap\cmpl{\mathcal{D}_{b|ac}}\cap\cmpl{\mathcal{D}_{c|ab}}\cap\mathcal{D}_{b|ac,c|ab}
\equiv \mathcal{D}_{a|bc}\cap\bigl[\mathcal{D}_{b|ac,c|ab}\setminus\bigl(\mathcal{D}_{b|ac}\cup\mathcal{D}_{c|ab} \bigr)  \bigr].
\end{equation}
For eample, $\mathcal{C}_{2.6.1}$
is the set of states which can be written as $1|23$-separable states
and can also be written as states of a new kind,
where the state can be written as
$2|13$-$3|12$-separable states which are neither $2|13$-separable nor $3|12$-separable.
And this is the novelty here.
Alternatively, we can say that
to mix a state of this class
we need bipartite entanglement 
either within the $23$ subsystem,
or within both of the $12$ and $13$ subsystems.
(The latter seems like a roundabout connecting the $2$ and $3$ subsystems through the $1$ subsystem.)
The next three classes are
\begin{equation}
\mathcal{C}_{2.5.a}=\mathcal{D}_{a|bc}\cap\cmpl{\mathcal{D}_{b|ac}}\cap\cmpl{\mathcal{D}_{c|ab}}\cap\cmpl{\mathcal{D}_{b|ac,c|ab}}
\equiv\mathcal{D}_{a|bc}\cap\cmpl{\mathcal{D}_{b|ac,c|ab}}
\equiv\mathcal{D}_{a|bc}\setminus\mathcal{D}_{b|ac,c|ab}.
\end{equation}
For eample, $\mathcal{C}_{2.5.1}$
is the set of states which can be written as $1|23$-separable states
but can not  be written as $2|13$-$3|12$-separable states.
Alternatively, we can say that
states of this class 
can not be mixed by the use of bipartite entanglement only within both of the $13$ and $23$ subsystems, contrary to $\mathcal{C}_{2.6.1}$.
(The roundabout does not exist here.)
(The unions $\mathcal{C}_{2.6.a}\cup\mathcal{C}_{2.5.a}=\mathcal{D}_{a|bc}\cap\cmpl{\mathcal{D}_{b|ac}}\cap\cmpl{\mathcal{D}_{c|ab}}$
are Classes $2.4$, $2.3$, and $2.2$ in section \ref{subsec:QM.Ent.NPart}.)
\begin{table}
\begin{tabu}{X[2,c]||X[1,c]|X[1,c]X[1,c]X[1,c]|X[1,c]X[1,c]X[1,c]|X[1,c]|X[1,c]||X[3,c]X[3,c]}
\hline
\begin{sideways}PS Class\end{sideways} & 
\begin{sideways}$\mathcal{D}_{1|2|3}$\end{sideways}  & 
\begin{sideways}$\mathcal{D}_{a|bc}$\end{sideways}  & 
\begin{sideways}$\mathcal{D}_{b|ac}$\end{sideways}  & 
\begin{sideways}$\mathcal{D}_{c|ab}$\end{sideways}  & 
\begin{sideways}$\mathcal{D}_{b|ac,c|ab}$\end{sideways}  & 
\begin{sideways}$\mathcal{D}_{a|bc,c|ab}$\end{sideways}  &
\begin{sideways}$\mathcal{D}_{a|bc,b|ac}$\end{sideways}  &
\begin{sideways}$\mathcal{D}_{1|23,2|13,3|12}$\end{sideways}  &
\begin{sideways}$\mathcal{D}_{123}$\end{sideways}  &
\begin{sideways}in \cite{SeevinckUffinkMixSep}\end{sideways}  &
\begin{sideways}in \cite{DurCiracTarrachBMixSep}\end{sideways}  \\
\hline
\hline
$\mathcal{C}_3$           & $\subset$  & $\subset$  & $\subset$  & $\subset$  & $\subset$  & $\subset$  & $\subset$  & $\subset$  & $\subset$  & 3       & 5       \\
\hline
$\mathcal{C}_{2.8}$       & $\nsubset$ & $\subset$  & $\subset$  & $\subset$  & $\subset$  & $\subset$  & $\subset$  & $\subset$  & $\subset$  & 2.8     & 4       \\
$\mathcal{C}_{2.7.a}$     & $\nsubset$ & $\nsubset$ & $\subset$  & $\subset$  & $\subset$  & $\subset$  & $\subset$  & $\subset$  & $\subset$  & 2.7,6,5 & 3.3,2,1 \\
$\mathcal{C}_{2.6.a}$     & $\nsubset$ & $\subset$  & $\nsubset$ & $\nsubset$ & $\subset$  & $\subset$  & $\subset$  & $\subset$  & $\subset$  & 2.4,3,2 & 2.3,2,1 \\
$\mathcal{C}_{2.5.a}$     & $\nsubset$ & $\subset$  & $\nsubset$ & $\nsubset$ & $\nsubset$ & $\subset$  & $\subset$  & $\subset$  & $\subset$  & 2.4,3,2 & 2.3,2,1 \\
$\mathcal{C}_{2.4}$       & $\nsubset$ & $\nsubset$ & $\nsubset$ & $\nsubset$ & $\subset$  & $\subset$  & $\subset$  & $\subset$  & $\subset$  & 2.1     & 1 \\
$\mathcal{C}_{2.3.a}$     & $\nsubset$ & $\nsubset$ & $\nsubset$ & $\nsubset$ & $\nsubset$ & $\subset$  & $\subset$  & $\subset$  & $\subset$  & 2.1     & 1 \\
$\mathcal{C}_{2.2.a}$     & $\nsubset$ & $\nsubset$ & $\nsubset$ & $\nsubset$ & $\subset$  & $\nsubset$ & $\nsubset$ & $\subset$  & $\subset$  & 2.1     & 1 \\
$\mathcal{C}_{2.1}$       & $\nsubset$ & $\nsubset$ & $\nsubset$ & $\nsubset$ & $\nsubset$ & $\nsubset$ & $\nsubset$ & $\subset$  & $\subset$  & 2.1     & 1 \\
\hline
$\mathcal{C}_1$           & $\nsubset$ & $\nsubset$ & $\nsubset$ & $\nsubset$ & $\nsubset$ & $\nsubset$ & $\nsubset$ & $\nsubset$ & $\subset$  & 1       & 1 \\
\hline
\end{tabu}
\bigskip
\caption{PS classes of tripartite mixed states.
Additionally, we show the 
classifications obtained by
Seevinck and Uffink \cite{SeevinckUffinkMixSep},
and D\"ur, Cirac and Tarrach \cite{DurCiracTarrachBMixSep}.} 
\label{tab:PS3Classes}
\end{table}
The next class is
\begin{equation}
\begin{split}
\mathcal{C}_{2.4}=&\cmpl{\mathcal{D}_{1|23}}\cap\cmpl{\mathcal{D}_{2|13}}\cap\cmpl{\mathcal{D}_{3|12}}
\cap\mathcal{D}_{2|13,3|12}\cap\mathcal{D}_{1|23,3|12}\cap\mathcal{D}_{1|23,2|13}\\
\equiv&\bigl(\mathcal{D}_{2|13,3|12}\cap\mathcal{D}_{1|23,3|12}\cap\mathcal{D}_{1|23,2|13}\bigr)
\setminus\bigl(\mathcal{D}_{1|23}\cup\mathcal{D}_{2|13}\cup\mathcal{D}_{3|12}\bigr)\\
\equiv&\bigl[\mathcal{D}_{2|13,3|12}\setminus\bigl(\mathcal{D}_{2|13}\cup\mathcal{D}_{3|12} \bigr)  \bigr]\\
  &\cap\bigl[\mathcal{D}_{1|23,3|12}\setminus\bigl(\mathcal{D}_{1|23}\cup\mathcal{D}_{3|12} \bigr)  \bigr]\\
  &\cap\bigl[\mathcal{D}_{1|23,2|13}\setminus\bigl(\mathcal{D}_{1|23}\cup\mathcal{D}_{2|13} \bigr)  \bigr],
\end{split}
\end{equation}
which is the set of states 
which can be mixed by the use of bipartite entanglement within any two bipartite subsystems,
but can not be mixed by the use of bipartite entanglement within only one bipartite subsystem.
The next three classes are
\begin{equation}
\begin{split}
\mathcal{C}_{2.3.a}=&\cmpl{\mathcal{D}_{a|bc}}\cap\cmpl{\mathcal{D}_{b|ac,c|ab}}\cap
\mathcal{D}_{a|bc,c|ab}\cap\mathcal{D}_{a|bc,b|ac}\\
\equiv&\Bigl[\bigl[\mathcal{D}_{a|bc,c|ab}\setminus\bigl(\mathcal{D}_{c|ab}\cup\mathcal{D}_{a|bc} \bigr)  \bigr]
\cap   \bigl[\mathcal{D}_{a|bc,b|ac}\setminus\bigl(\mathcal{D}_{a|bc}\cup\mathcal{D}_{b|ac} \bigr)  \bigr]\Bigr]
\setminus\mathcal{D}_{b|ac,c|ab}.
\end{split}
\end{equation}
For eample, $\mathcal{C}_{2.3.1}$
is the set of states
which can be mixed by the use of bipartite entanglement within the $23$ subsystem
together with bipartite entanglement within either $12$ or $13$ subsystems,
but can not be mixed by the use of bipartite entanglement within $12$ \emph{and} $13$ subsystems only.
(Note that mixing by the use of only one kind of bipartite entanglement has already been excluded.)
The next three classes are
\begin{equation}
\mathcal{C}_{2.2.a}=\mathcal{D}_{b|ac,c|ab}\cap\cmpl{\mathcal{D}_{a|bc,c|ab}}\cap\cmpl{\mathcal{D}_{a|bc,b|ac}}
\equiv\mathcal{D}_{b|ac,c|ab}\setminus\bigl(\mathcal{D}_{a|bc,c|ab}\cup\mathcal{D}_{a|bc,b|ac}\bigr).
\end{equation}
For eample, $\mathcal{C}_{2.3.1}$
is the set of states
which can be mixed by the use of bipartite entanglement within both of $12$ and $13$ subsystems together,
but can not be mixed by the use of bipartite entanglement within $23$ subsystem
together with bipartite entanglement within only one of $12$ or $13$ subsystems.
The next class is
\begin{equation}
\begin{split}
\mathcal{C}_{2.1}&=\cmpl{\mathcal{D}_{2|13,3|12}}\cap\cmpl{\mathcal{D}_{1|23,3|12}}\cap\cmpl{\mathcal{D}_{1|23,2|13}}\cap\mathcal{D}_{1|23,2|13,3|12}\\
&\equiv\mathcal{D}_{1|23,2|13,3|12}\setminus\bigl(\mathcal{D}_{2|13,3|12}\cup\mathcal{D}_{1|23,3|12}\cup \mathcal{D}_{1|23,2|13}\bigr),
\end{split}
\end{equation}
which is the set of states
which can be mixed by the use of bipartite entanglement within all the three bipartite subsystems,
but can not be mixed by the use of bipartite entanglement within only two (or one) bipartite subsystems.
(The union
$\mathcal{C}_{2.4}
\cup\mathcal{C}_{2.3.1}\cup\mathcal{C}_{2.3.2}\cup\mathcal{C}_{2.3.3}
\cup\mathcal{C}_{2.2.1}\cup\mathcal{C}_{2.2.2}\cup\mathcal{C}_{2.2.3}
\cup\mathcal{C}_{2.1}= \mathcal{D}_{1|23,2|13,3|12}\setminus\bigl(\mathcal{D}_{1|23}\cup\mathcal{D}_{2|13}\cup\mathcal{D}_{3|12}\bigr)$
is Class $2.1$ in section \ref{subsec:QM.Ent.NPart}.)

Then comes the class of states containing tripartite entanglement
\begin{equation}
\mathcal{C}_1=\cmpl{\mathcal{D}_{1|23,2|13,3|12}}\cap\mathcal{D}_{123}
\equiv\mathcal{D}_{123}\setminus\mathcal{D}_{1|23,2|13,3|12},
\end{equation}
which is the set of states which can not be mixed
without the use of some tripartite entangled pure states.
(This is Class $1$ in section \ref{subsec:QM.Ent.NPart}.)
\end{subequations}

Except $\mathcal{C}_3$,
the $\mathcal{C}_\text{\dots}$ PS classes above are neither convex nor closed,
but, by construction, they cover $\mathcal{D}$ entirely.
Unfortunately, we can not draw an onion-like figure 
illustrating these classes like the one in figure \ref{fig:3part}, 
(maybe it could be drawn in $3$ dimensions),
we only summarize these $1+18+1$ classes in table~\ref{tab:PS3Classes}.
The non-emptiness of the PS classes above 
is not obvious,
since it depends on the arrangement of different kinds of extremal points.
(We know only that they are not empty \emph{by construction}.)
This issue has not been handled yet,
but experiences in the geometry of mixed states \cite{BengtssonZyczkowski} 
suggest that 
arrangement of different kinds of extremal points 
leading to some empty classes
would be very implausible.
In the next subsection, moreover, we show that some of the new classes are non-empty.

\subsection{Examples}
\label{subsec:PartSep.Threepart.Examples}

Here we collect some facts about the non-emptiness of the PS classes given in (\ref{eq:PSClasses}).
The classes given by Seevinck and Uffink (section \ref{subsec:QM.Ent.NPart}) are non-empty, 
which are $\mathcal{C}_{3}$,
$\mathcal{C}_{2.8}$,
$\mathcal{C}_{2.7.a}$,
the unions $\mathcal{C}_{2.6.a}\cup\mathcal{C}_{2.5.a}$,
the union $\mathcal{C}_{2.4}
\cup\mathcal{C}_{2.3.1}\cup\mathcal{C}_{2.3.2}\cup\mathcal{C}_{2.3.3}
\cup\mathcal{C}_{2.2.1}\cup\mathcal{C}_{2.2.2}\cup\mathcal{C}_{2.2.3}
\cup\mathcal{C}_{2.1}$,
and $\mathcal{C}_{1}$ \cite{SeevinckUffinkMixSep}.
On the other hand, the pure sets (\ref{eq:PSPclasses}) are contained in the following classes:
$\mathcal{Q}_{1|2|3}\subset\mathcal{C}_3$,
$\mathcal{Q}_{a|bc} \subset\mathcal{C}_{2.5.a}$,
$\mathcal{Q}_{123}  \subset\mathcal{C}_{1}$,
so we have additionally that $\mathcal{C}_{2.5.a}$ is non-empty.
In the next paragraphs,
we construct states contained in classes $\mathcal{C}_{2.2.a}$ and $\mathcal{C}_{2.1}$.
This justifies the use of $b|ac$-$c|ab$-separable sets in the classification,
since we can distinguish between $\mathcal{C}_{2.2.a}$ and $\mathcal{C}_{2.1}$ by the use of these,
although the nonemptiness of
$\mathcal{C}_{2.6.a}$,
$\mathcal{C}_{2.4}$, and 
$\mathcal{C}_{2.3.a}$
has not been shown yet.

From the point of view of ``mixtures of extremal points'', 
it is easy to check that
the bipartite subsystems are separable for states in some PS subsets as follows:
\begin{align*}
\varrho&\in\mathcal{D}_{1|2|3}
&\Longrightarrow&
&\text{$\varrho_{23}$ separable and $\varrho_{13}$ separable and $\varrho_{12}$ separable}\\
\varrho&\in\mathcal{D}_{a|bc}
&\Longrightarrow&
&\text{\phantom{$\varrho_{23}$ separable and } $\varrho_{ac}$ separable and $\varrho_{ab}$ separable}\\
\varrho&\in\mathcal{D}_{b|ac,c|ab}
&\Longrightarrow& 
&\text{$\varrho_{bc}$ separable \phantom{and $\varrho_{13}$ separable and $\varrho_{12}$ separable}}
\end{align*}
Unfortunately, the reverse implications are not true. 
For example, in the case of qubits, for the standard GHZ state (\ref{eq:GHZ}),
all bipartite subsystems are separable although $\cket{\text{GHZ}}\bra{\text{GHZ}}\notin\mathcal{D}_{1|2|3}$.
But the negations of the implications above turn out to be useful:
\begin{align*}
\varrho&\notin\mathcal{D}_{1|2|3}
&\Longleftarrow&
&\text{$\varrho_{23}$ entangled or $\varrho_{13}$ entangled or $\varrho_{12}$ entangled}\\
\varrho&\notin\mathcal{D}_{a|bc}
&\Longleftarrow&
&\text{\phantom{$\varrho_{23}$ entangled or } $\varrho_{ac}$ entangled or $\varrho_{ab}$ entangled}\\
\varrho&\notin\mathcal{D}_{b|ac,c|ab}
&\Longleftarrow& 
&\text{$\varrho_{bc}$ entangled \phantom{or $\varrho_{13}$ entangled or $\varrho_{12}$ entangled}}
\end{align*}
In the following we give examples for three-\emph{qubit} systems.
In this case, the entanglement of two-qubit subsystems can easily be checked
for example using the Peres-Horodecki criterion (\ref{eq:critPeres6}).

Now, take a $\varrho\in\mathcal{D}_{2|13,3|12}$. 
Then $\varrho_{23}$ is always separable,
but if both  $\varrho_{12}$ and $\varrho_{13}$ are entangled, then 
by the above observations we have that
$\varrho\notin\mathcal{D}_{1|23}$,
$\varrho\notin\mathcal{D}_{2|13}$, $\varrho\notin\mathcal{D}_{3|12}$,
moreover,
$\varrho\notin\mathcal{D}_{1|23,2|13}$ and
$\varrho\notin\mathcal{D}_{1|23,3|12}$.
This singles out exactly one class from table~\ref{tab:PS3Classes}, namely $\mathcal{C}_{2.2.1}$.
So if we can mix a state $\varrho$ from $\mathcal{Q}_{1|2|3}$, $\mathcal{Q}_{b|ac}$ and $\mathcal{Q}_{c|ab}$,
for which $\varrho_{ab}$ and $\varrho_{ac}$ are entangled,
then $\varrho\in\mathcal{C}_{2.2.a}$.
For example, such a state is the uniform mixture of projectors to the 
$\cket{0}_b\otimes\cket{\text{B}}_{ac}$ and
$\cket{0}_c\otimes\cket{\text{B}}_{ab}$
vectors, that is,
\begin{equation*}
\frac12 \cket{0}\bra{0}_b\otimes\cket{\text{B}}\bra{\text{B}}_{ac}+ 
\frac12 \cket{0}\bra{0}_c\otimes\cket{\text{B}}\bra{\text{B}}_{ab}
\in\mathcal{C}_{2.2.a}, 
\end{equation*}
where $\cket{\text{B}}$ is the usual Bell state (\ref{eq:B}).

Now, take a $\varrho\in\mathcal{D}_\text{$2$-sep}$. 
Then if the states of all the two-qubit subsystems are entangled, then
by the above observations we have
$\varrho\notin\mathcal{D}_{1|23}$,
$\varrho\notin\mathcal{D}_{2|13}$, $\varrho\notin\mathcal{D}_{3|12}$,
moreover,
$\varrho\notin\mathcal{D}_{2|13,3|12}$, 
$\varrho\notin\mathcal{D}_{1|23,3|12}$ and
$\varrho\notin\mathcal{D}_{1|23,2|13}$.
This singles out exactly one class from table~\ref{tab:PS3Classes}, namely $\mathcal{C}_{2.1}$.
So if we can mix a state $\varrho$ from $\mathcal{Q}_{1|2|3}$, $\mathcal{Q}_{1|23}$, $\mathcal{Q}_{2|13}$ and $\mathcal{Q}_{3|12}$,
whose all two-qubit subsystems are entangled,
then $\varrho\in\mathcal{C}_{2.1}$.
For example, such a state is the uniform mixture of the previous example with
the projector to the vector $\cket{1}_a\otimes\cket{\text{B}}_{bc}$, that is,
\begin{equation*}
 \frac14 \cket{0}\bra{0}_b\otimes\cket{\text{B}}\bra{\text{B}}_{ac} 
+\frac14 \cket{0}\bra{0}_c\otimes\cket{\text{B}}\bra{\text{B}}_{ab}
+\frac12 \cket{1}\bra{1}_a\otimes\cket{\text{B}}\bra{\text{B}}_{bc}
\in\mathcal{C}_{2.1}. 
\end{equation*}

\subsection{Indicator functions for tripartite systems}
\label{subsec:PartSep.ThreePart.Indicators}

Now we give necessary and sufficient conditions for the PS subsets.
To do this, we have to define nonnegative functions on pure states which vanish exactly for the pure states
from which the PS subsets (\ref{eq:PSsets}) are mixed.
Our basic quantities are the local entropies,
which vanish if and only if the given subsystems can be separated from the rest of the system (section \ref{subsec:QM.EntMeas.2Pure}).
This is the only property we need, 
so both R{\'e}nyi (\ref{eq:qRenyi}) and Tsallis~(\ref{eq:qTsallis}) entropies are equally suitable.
(Hence we drop the labels in the superscript, and write only $S_q$.)
Then, after some experimenting, we can define the following set of functions on pure states
\begin{subequations}
\label{eq:PS3PureInd}
\begin{align}
\label{eq:PS3PureInd.1|2|3}
f_{1|2|3}(\pi)          &= f_{1|23}(\pi) + f_{2|13}(\pi) + f_{3|12}(\pi),\\
\label{eq:PS3PureInd.a|bc}
f_{a|bc}(\pi)           &= S_q(\pi_a),\\
f_{b|ac,c|ab}(\pi)      &= f_{b|ac}(\pi)f_{c|ab}(\pi),\\
f_{1|23,2|13,3|12}(\pi) &= f_{1|23}(\pi)f_{2|13}(\pi)f_{3|12}(\pi),
\end{align}
\end{subequations}
which functions vanish for the given pure states
\begin{subequations}
\label{eq:PS3VanishingPure}
\begin{align}
\pi&\in\mathcal{P}_{1|2|3}&
\quad&\Longleftrightarrow&\quad f_{1|2|3}(\pi)&=0,\\
\pi&\in\mathcal{P}_{a|bc}&
\quad&\Longleftrightarrow&\quad f_{a|bc}(\pi)&=0,\\
\pi&\in\mathcal{P}_{b|ca}\cup\mathcal{P}_{c|ab}&
\quad&\Longleftrightarrow&\quad f_{b|ac,c|ab}(\pi)&=0,\\
\pi&\in\mathcal{P}_{1|23}\cup\mathcal{P}_{2|13}\cup\mathcal{P}_{3|12}&
\quad&\Longleftrightarrow&\quad f_{1|23,2|13,3|12}(\pi)&=0,
\end{align}
\end{subequations}
leading to conditions for the pure PS classes
as can be seen in table \ref{tab:PS3Pclasses}.
\begin{table}
\begin{tabu}{X[c]||X[c]|X[c]X[c]X[c]|X[c]X[c]X[c]|X[c]}
\hline
\begin{sideways}Class\end{sideways} &
\begin{sideways}$f_{1|2|3}(\pi)$\end{sideways} &
\begin{sideways}$f_{1|23}(\pi)$\end{sideways} &
\begin{sideways}$f_{2|13}(\pi)$\end{sideways} &
\begin{sideways}$f_{3|12}(\pi)$\end{sideways} & 
\begin{sideways}$f_{2|13,3|12}(\pi)$\end{sideways} &
\begin{sideways}$f_{1|23,3|12}(\pi)$\end{sideways} &
\begin{sideways}$f_{1|23,2|13}(\pi)$\end{sideways} &
\begin{sideways}$f_{1|23,2|13,3|12}(\pi)$\end{sideways}  \\
\hline 
\hline
$\mathcal{Q}_{1|2|3}$     & $=0$     & $=0$     & $=0$     & $=0$     & $=0$     & $=0$     & $=0$     & $=0$        \\
\hline
$\mathcal{Q}_{1|23}$      & $>0$     & $=0$     & $>0$     & $>0$     & $>0$     & $=0$     & $=0$     & $=0$        \\
$\mathcal{Q}_{2|13}$      & $>0$     & $>0$     & $=0$     & $>0$     & $=0$     & $>0$     & $=0$     & $=0$        \\
$\mathcal{Q}_{3|12}$      & $>0$     & $>0$     & $>0$     & $=0$     & $=0$     & $=0$     & $>0$     & $=0$        \\
\hline
$\mathcal{Q}_{123}$       & $>0$     & $>0$     & $>0$     & $>0$     & $>0$     & $>0$     & $>0$     & $>0$       \\ 
\hline
\end{tabu}
\bigskip
\caption{Partial separability classes of tripartite pure states
identified by the vanishing of the pure state indicator functions
given in (\ref{eq:PS3PureInd}).}
\label{tab:PS3Pclasses}
\end{table}
We call the functions obeying (\ref{eq:PS3VanishingPure}) \emph{pure state indicator functions} for the tripartite case.
We will give the exact definition of indicator functions for the general $n$-partite case later 
(in section~\ref{subsec:PartSep.Gen.Indicators}),
until that point we just use this name 
for nonnegative functions having these vanishing properties.

Now, generalizing (\ref{eq:convroof.discr}), it is easy to prove the following 
necessary and sufficient conditions for the PS subsets (\ref{eq:PSsets})
given by the convex roof extension (\ref{eq:cnvroofext}) of the indicator functions~(\ref{eq:PS3PureInd}):
\begin{subequations}
\label{eq:PS3VanishingMix}
\begin{align}
\varrho&\in\mathcal{D}_{1|2|3}&
\quad&\Longleftrightarrow&\quad \cnvroof{f}_{1|2|3}(\varrho)&=0,\\
\varrho&\in\mathcal{D}_{a|bc}&
\quad&\Longleftrightarrow&\quad \cnvroof{f}_{a|bc}(\varrho)&=0,\\
\varrho&\in\mathcal{D}_{b|ac,c|ab}&
\quad&\Longleftrightarrow&\quad \cnvroof{f}_{b|ac,c|ab}(\varrho)&=0,\\
\varrho&\in\mathcal{D}_{1|23,2|13,3|12}&
\quad&\Longleftrightarrow&\quad \cnvroof{f}_{1|23,2|13,3|12}(\varrho)&=0.
\end{align}
\end{subequations}
To see the \textit{$\Rightarrow$ implications,}
observe that all the $\mathcal{D}_\text{\dots}$ PS subsets are the convex hulls of
such pure states (\ref{eq:PSsets}) for which the given functions vanish 
(\ref{eq:PS3VanishingPure}).
Since these functions can take only nonnegative values,
the minimum in the convex roof extension is zero.
To see the \textit{$\Leftarrow$ implications,}
note that
if the convex roof extension of a nonnegative function vanishes
then there exists a decomposition for pure states for which the function vanishes.
Again, the vanishing of a given function
singles out the pure states (\ref{eq:PS3VanishingPure})
from which the states of the given $\mathcal{D}_\text{\dots}$ PS subset can be mixed (\ref{eq:PSsets}).

The necessary and sufficient conditions for the PS subsets (\ref{eq:PS3VanishingMix})
yields necessary and sufficient conditions for the PS classes,
and we can fill out table~\ref{tab:PS3Mix}
for the identification of the PS classes of table~\ref{tab:PS3Classes}, given for mixed states,
similar to table~\ref{tab:PS3Pclasses}, given for pure states.
Because of their vanishing properties, we call the convex roof extension of pure indicator functions
\emph{mixed indicator functions}.

Note that the convex roof extension is a non-linear operation,
$\cnvroof{(f+f')}\neq \cnvroof{f}+\cnvroof{{f'}}$.
But an inequality holds,
for example,  
$f_{1|2|3} = f_{1|23} + f_{2|13} + f_{3|12}$ and
$\cnvroof{f}_{1|2|3} = \cnvroof{(f_{1|23} + f_{2|13} + f_{3|12})}\geq \cnvroof{f}_{1|23} + \cnvroof{f}_{2|13} + \cnvroof{f}_{3|12}$,
so $\cnvroof{f}_{1|2|3}$ can be non-zero even if $\cnvroof{f}_{1|23}$, $\cnvroof{f}_{2|13}$ and $\cnvroof{f}_{3|12}$ are all zero.
This is why we could identify $20$ classes of mixed states
by the use of the convex roof extension of functions
which identify only $5$ classes of pure states.
On the other hand, if a classification does not involve all the PS subsets,
then, through (\ref{eq:PS3VanishingMix}), 
we have to use only some of the indicator functions,
for example, $f_{1|2|3}$, $f_{a|bc}$ and $f_{1|23,2|13,3|12}$ for the classification obtained by Seevinck and Uffink \cite{SeevinckUffinkMixSep},
or $f_{1|2|3}$ and $f_{a|bc}$ for the classification obtained by D\"ur, Cirac and Tarrach \cite{DurCiracTarrachBMixSep}.

The structure of the formulas (\ref{eq:PS3PureInd}) give us a hint 
for the generalization for arbitrary number of subsystems of arbitrary dimensions: 
We just have to play a game with 
statements like ``being zero'', with the logical connectives ``and'' and ``or'',
parallel to the addition and multiplication,
and also parallel to the set-theoretical inclusion, union and intersection.
This will be carried out in the next section,
after the construction of the general definitions of PS classification.

\begin{table}
\begin{tabu}{X[c]||X[c]|X[c]X[c]X[c]|X[c]X[c]X[c]|X[c]}
\hline
\begin{sideways}PS Class\end{sideways} &
\begin{sideways}$\cnvroof{f}_{1|2|3}(\varrho)$\end{sideways} &
\begin{sideways}$\cnvroof{f}_{1|23}(\varrho)$\end{sideways} &
\begin{sideways}$\cnvroof{f}_{2|13}(\varrho)$\end{sideways} &
\begin{sideways}$\cnvroof{f}_{3|12}(\varrho)$\end{sideways} &
\begin{sideways}$\cnvroof{f}_{2|13,3|12}(\varrho)$\end{sideways} &
\begin{sideways}$\cnvroof{f}_{1|23,3|12}(\varrho)$\end{sideways} &
\begin{sideways}$\cnvroof{f}_{1|23,2|13}(\varrho)$\end{sideways} &
\begin{sideways}$\cnvroof{f}_{1|23,2|13,3|12}(\varrho)$\end{sideways}  \\
\hline
\hline
$\mathcal{C}_3$           & $=0$     & $=0$     & $=0$     & $=0$     & $=0$     & $=0$     & $=0$     & $=0$ \\
\hline
$\mathcal{C}_{2.8}$       & $>0$     & $=0$     & $=0$     & $=0$     & $=0$     & $=0$     & $=0$     & $=0$ \\
$\mathcal{C}_{2.7.a}$     & $>0$     & $>0$     & $=0$     & $=0$     & $=0$     & $=0$     & $=0$     & $=0$ \\
$\mathcal{C}_{2.6.a}$     & $>0$     & $=0$     & $>0$     & $>0$     & $=0$     & $=0$     & $=0$     & $=0$ \\
$\mathcal{C}_{2.5.a}$     & $>0$     & $=0$     & $>0$     & $>0$     & $>0$     & $=0$     & $=0$     & $=0$ \\
$\mathcal{C}_{2.4}$       & $>0$     & $>0$     & $>0$     & $>0$     & $=0$     & $=0$     & $=0$     & $=0$ \\
$\mathcal{C}_{2.3.a}$     & $>0$     & $>0$     & $>0$     & $>0$     & $>0$     & $=0$     & $=0$     & $=0$ \\
$\mathcal{C}_{2.2.a}$     & $>0$     & $>0$     & $>0$     & $>0$     & $=0$     & $>0$     & $>0$     & $=0$ \\
$\mathcal{C}_{2.1}$       & $>0$     & $>0$     & $>0$     & $>0$     & $>0$     & $>0$     & $>0$     & $=0$ \\
\hline
$\mathcal{C}_1$           & $>0$     & $>0$     & $>0$     & $>0$     & $>0$     & $>0$     & $>0$     & $>0$ \\
\hline
\end{tabu}
\bigskip
\caption{PS classes of tripartite mixed states given in table~\ref{tab:PS3Classes},
identified by the vanishing of the mixed indicator functions
(convex roof extension of the indicator functions (\ref{eq:PS3PureInd})).}
\label{tab:PS3Mix}
\end{table}

\section{Generalizations: Partial separability of multipartite systems}
\label{sec:PartSep.Gen}

In the previous section,
we have followed a didactic treatment
in order to illustrate the main concepts.
Now it is high time to turn to abstract definitions
to handle the PS classification and criteria
for an arbitrary number of subsystems.

For $n$-partite systems, the set of the labels of the singlepartite subsystems is $L=\{1,2,\dots,n\}$.
Let $\alpha=L_1|L_2|\dots|L_k$ denote a $k$-partite split,
that is a partition of the labels
into $k$ disjoint non-empty sets $L_r$,
where $L_1\cup L_2\cup\dots\cup L_k=L$.
For two partitions $\beta$ and $\alpha$,
$\beta$ is contained%
\footnote{Instead of ``$\beta$ is contained in $\alpha$'',
it is sometimes said that ``$\beta$ is finer than $\alpha$'', or equivalently,
``$\alpha$ is coarser than $\beta$''.}
 in $\alpha$,
denoted as $\beta\preceq\alpha$,
if $\alpha$ can be obtained from $\beta$ by joining some (maybe neither) of the parts of $\beta$.
This defines a partial order on the partitions.
(It is easy to see from the definition that 
$\alpha\preceq\alpha$ (reflexivity),
if $\gamma\preceq\beta$ and $\beta\preceq\alpha$ then $\gamma\preceq\alpha$ (transitivity),
if $\beta\preceq\alpha$ and $\alpha\preceq\beta$ then $\alpha=\beta$ (antisymmetry).)
For example, for the tripartite case $1|2|3\preceq a|bc\preceq 123$.
Since 
there is a greatest and a smallest element,
which are the full $n$-partite split and the trivial partition without split, respectively,
$1|2|\dots|n\preceq\alpha\preceq12\dots n$,
the set of partitions of $L$ for $\preceq$ forms a bounded lattice.

\subsection{PS subsets in general}
\label{subsec:PartSep.Gen.PSsubsets}

The first point is the generalization of the PS subsets $\mathcal{D}_\text{\dots}$.
Let $\mathcal{Q}_\alpha\equiv\mathcal{Q}_\alpha(\mathcal{H})\subseteq\mathcal{P}(\mathcal{H})$ be the \emph{pure PS class of $\alpha$-separable states},
that is, the set of pure states which are separable under the partition $\alpha=L_1|L_2|\dots|L_k$
but not separable under any $\beta\prec\alpha$.
Then the \emph{pure PS subset of $\alpha$-separable states} is
\begin{subequations}
\begin{equation}
\mathcal{P}_\alpha = \bigcup_{\beta\preceq\alpha}\mathcal{Q}_\beta,
\end{equation}
which is a special case 
of the  \emph{pure PS subset of $\vs{\alpha}$-separable states}
\begin{equation}
\mathcal{P}_{\vs{\alpha}} = \bigcup_{\alpha\in\vs{\alpha}} \mathcal{P}_\alpha
\equiv \bigcup_{\alpha\in\vs{\alpha}} \bigcup_{\beta\preceq\alpha}\mathcal{Q}_\beta,
\end{equation}
\end{subequations}
with the \emph{label} $\vs{\alpha}$ being an arbitrary \emph{set} of partitions.
Then the \emph{PS subset of $\alpha$-separable states} is
\begin{subequations}
\label{eq:genDsets}
\begin{equation}
\label{eq:genDsets.alpha}
\mathcal{D}_\alpha = \Conv \mathcal{P}_\alpha 
\equiv \Conv \bigcup_{\beta\preceq\alpha}\mathcal{Q}_\beta,
\end{equation}
which is a special case 
of the \emph{PS subsets of $\vs{\alpha}$-separable states}
\begin{equation}
\label{eq:genDsets.alphal}
\mathcal{D}_{\vs{\alpha}} = \Conv \mathcal{P}_{\vs{\alpha}}
\equiv \Conv \bigcup_{\alpha\in\vs{\alpha}} \mathcal{P}_\alpha
\equiv \Conv \bigcup_{\alpha\in\vs{\alpha}} \bigcup_{\beta\preceq\alpha}\mathcal{Q}_\beta
\equiv \Conv \bigcup_{\alpha\in\vs{\alpha}} \mathcal{D}_{\alpha},
\end{equation}
\end{subequations}
with the \emph{label} $\vs{\alpha}$.
(In the writing we omit the $\{\dots\}$ set-brackets, as was seen, e.g., in (\ref{eq:PSsets.bcacab}).)
The set of $k$-separable states $\mathcal{D}_\text{$k$-sep}$ arises as a special case
where the $\alpha$ elements of $\vs{\alpha}$ are all the possible $k$-partite splits.
Note that in general the $\alpha\in\vs{\alpha}$ partitions are not required to be $k$-partite splits for the same $k$.
This freedom can not be seen in the case of three subsystems.

The $\mathcal{Q}_\beta$ sets are not closed if and only if $\beta$ is not the full $n$-partite split $1|2|\dots|n$,
but $\mathcal{P}_\alpha=\cup_{\beta\preceq\alpha}\mathcal{Q}_\beta$ is closed,
so the sets $\mathcal{D}_{\vs{\alpha}}$ are closed, and convex by construction.
Note that different $\vs{\alpha}$ labels can give rise to the same $\mathcal{D}_{\vs{\alpha}}$ sets,
in other words,
the $\vs{\alpha}\mapsto \mathcal{D}_{\vs{\alpha}}$ ``labelling map'' defined by (\ref{eq:genDsets.alphal})
is surjective but not injective.
For the full PS classification we need all the possible \emph{different} $\mathcal{D}_{\vs{\alpha}}$ sets.
Because of the non-trivial structure of the lattice of partitions,
obtaining all the different PS sets is also a non-trivial task.
We can not provide a closed formula for that, but only an algorithm.
Before we do this, we need some constructions.

First, observe that
if $\beta\preceq\alpha$ then $\mathcal{D}_\beta\subseteq\mathcal{D}_\alpha$,
(from definition (\ref{eq:genDsets.alpha}), and the transitivity of $\preceq$)
from which it follows that 
for the labels $\vs{\beta}$ and $\vs{\alpha}$,
if for every $\beta\in\vs{\beta}$ there is an $\alpha\in\vs{\alpha}$ for which $\beta\preceq\alpha$ 
then $\mathcal{D}_{\vs{\beta}}\subseteq\mathcal{D}_{\vs{\alpha}}$.
(From definition (\ref{eq:genDsets.alphal}). Later we will prove the reverse too.)
These observations motivate the extension of 
$\preceq$ from the partitions to the labels as
\begin{equation}
\label{eq:labelreldef}
\vs{\beta}\preceq\vs{\alpha}
\qquad\defn\qquad
\forall \beta\in\vs{\beta}, \exists \alpha\in\vs{\alpha}:\; \beta\preceq\alpha.
\end{equation}
Note that, at this point,
the relation $\preceq$ on the labels is not a partial order,
only the reflexivity and the transitivity properties hold.
The antisymmetry property fails, which is the consequence of
that the definition (\ref{eq:labelreldef}) was motivated by the inclusion of the PS sets,
and different $\vs{\alpha}$s can lead to the same PS set.
Independently of this problem, which will be handled later, the following is true.
\begin{prop} For $\vs{\alpha},\vs{\beta}$ labels
\begin{equation}
\label{eq:relationEquiv}
\vs{\beta}\preceq\vs{\alpha} \qquad\Longleftrightarrow\qquad
\mathcal{D}_{\vs{\beta}}\subseteq\mathcal{D}_{\vs{\alpha}}.
\end{equation}
\end{prop}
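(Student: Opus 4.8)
The plan is to prove the equivalence in two directions, the forward direction being essentially the observation already made in the text, and the backward direction requiring a genuine argument using the structure of extremal points.

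For the \textbf{$\Rightarrow$ direction}, I would argue as follows. Suppose $\vs{\beta}\preceq\vs{\alpha}$. By (\ref{eq:genDsets.alphal}), $\mathcal{D}_{\vs{\beta}}=\Conv\bigcup_{\beta\in\vs{\beta}}\mathcal{D}_{\beta}$, so since the convex hull is monotone it suffices to show $\mathcal{D}_{\beta}\subseteq\mathcal{D}_{\vs{\alpha}}$ for each $\beta\in\vs{\beta}$. By the definition (\ref{eq:labelreldef}), for each such $\beta$ there is an $\alpha\in\vs{\alpha}$ with $\beta\preceq\alpha$. Then I use the chain of implications: $\beta\preceq\alpha$ implies $\mathcal{Q}_{\gamma}\subseteq\mathcal{P}_{\alpha}$ for every $\gamma\preceq\beta$ (because $\gamma\preceq\beta\preceq\alpha$ by transitivity of $\preceq$ on partitions), hence $\mathcal{P}_{\beta}=\bigcup_{\gamma\preceq\beta}\mathcal{Q}_{\gamma}\subseteq\mathcal{P}_{\alpha}\subseteq\mathcal{P}_{\vs{\alpha}}$, and therefore $\mathcal{D}_{\beta}=\Conv\mathcal{P}_{\beta}\subseteq\Conv\mathcal{P}_{\vs{\alpha}}=\mathcal{D}_{\vs{\alpha}}$. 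This completes the forward direction.

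For the \textbf{$\Leftarrow$ direction}, which I expect to be the main obstacle, the key is to recover partition-theoretic information from the set inclusion. Suppose $\mathcal{D}_{\vs{\beta}}\subseteq\mathcal{D}_{\vs{\alpha}}$; I want to show $\vs{\beta}\preceq\vs{\alpha}$, i.e.\ for each $\beta\in\vs{\beta}$ there is $\alpha\in\vs{\alpha}$ with $\beta\preceq\alpha$. Fix $\beta\in\vs{\beta}$ and pick any pure state $\pi\in\mathcal{Q}_{\beta}$, i.e.\ a pure state separable under $\beta$ but under no strictly finer partition. Such a $\pi$ is an extremal point of $\mathcal{D}$, and $\pi\in\mathcal{P}_{\beta}\subseteq\mathcal{P}_{\vs{\beta}}\subseteq\mathcal{D}_{\vs{\beta}}\subseteq\mathcal{D}_{\vs{\alpha}}=\Conv\mathcal{P}_{\vs{\alpha}}$. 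Since $\pi$ is extremal in $\mathcal{D}$ and $\mathcal{P}_{\vs{\alpha}}\subseteq\mathcal{P}(\mathcal{H})$ is a set of pure states (all of which are extremal points of $\mathcal{D}$) whose convex hull contains $\pi$, the only way $\pi$ can be a convex combination of members of $\mathcal{P}_{\vs{\alpha}}$ is $\pi\in\mathcal{P}_{\vs{\alpha}}$ itself (an extremal point of a convex set that lies in the convex hull of a subset of the extremal points must be one of them; here one uses that $\mathcal{P}_{\vs{\alpha}}$ is closed, so the convex hull behaves well and Milman's partial converse to Krein--Milman, or a direct argument, applies). Hence $\pi\in\mathcal{P}_{\vs{\alpha}}=\bigcup_{\alpha\in\vs{\alpha}}\bigcup_{\gamma\preceq\alpha}\mathcal{Q}_{\gamma}$, so $\pi\in\mathcal{Q}_{\gamma}$ for some $\gamma\preceq\alpha$ with $\alpha\in\vs{\alpha}$. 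But the pure PS classes $\mathcal{Q}_{\gamma}$ are pairwise disjoint (each pure state has a unique coarsest partition under which it fails to be separable, equivalently a unique finest partition under which it is separable), and we already have $\pi\in\mathcal{Q}_{\beta}$, so $\gamma=\beta$. Therefore $\beta\preceq\alpha$ for this $\alpha\in\vs{\alpha}$, which is exactly what we needed. Running this over all $\beta\in\vs{\beta}$ gives $\vs{\beta}\preceq\vs{\alpha}$.

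The delicate point to get right in the write-up is the extremality step: one must justify carefully that a pure state in $\Conv\mathcal{P}_{\vs{\alpha}}$ actually lies in $\mathcal{P}_{\vs{\alpha}}$. This rests on two facts already available in the excerpt's setup --- that every $\pi=\cket{\psi}\bra{\psi}$ is an extremal point of $\mathcal{D}(\mathcal{H})$, and that $\mathcal{P}_{\vs{\alpha}}$ is closed (being a finite union of the closed sets $\mathcal{P}_{\alpha}$) --- together with the standard convex-geometry fact that an extreme point of a convex set which belongs to the closed convex hull of a subset $S$ of that set must belong to the closure of $S$. Since here $S=\mathcal{P}_{\vs{\alpha}}$ is already closed, we get $\pi\in\mathcal{P}_{\vs{\alpha}}$ directly. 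With this lemma in hand, the disjointness of the $\mathcal{Q}_{\gamma}$ finishes everything cleanly. I would present the two directions as the two halves of the proof, invoking (\ref{eq:genDsets}) and (\ref{eq:labelreldef}) explicitly and flagging the extremality argument as the one nontrivial ingredient.
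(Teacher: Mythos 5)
Your proof is correct and follows essentially the same route as the paper: the decisive ingredients in both are the extremality of rank-one projectors (so that a pure state lying in $\Conv\mathcal{P}_{\vs{\alpha}}$ must belong to $\mathcal{P}_{\vs{\alpha}}$ itself), the disjointness of the pure PS classes $\mathcal{Q}_{\gamma}$, and the reflexivity/transitivity of $\preceq$ on partitions. The only difference is organizational --- the paper runs a single chain of equivalences while you split the two directions and argue the converse pointwise on a chosen $\pi\in\mathcal{Q}_{\beta}$ --- which is logically the same argument.
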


\begin{proof}
This can be shown in the following steps:
\begin{equation*}
\begin{split}
\mathcal{D}_{\vs{\beta}}\subseteq\mathcal{D}_{\vs{\alpha}}
\qquad&\overset{\text{(i)}}{\Longleftrightarrow}\qquad
          \Conv \bigcup_{\beta\in\vs{\beta}}   \bigcup_{\delta\preceq\beta }\mathcal{Q}_\delta
\subseteq \Conv \bigcup_{\alpha\in\vs{\alpha}} \bigcup_{\gamma\preceq\alpha}\mathcal{Q}_\gamma\\
&\overset{\text{(ii)}}{\Longleftrightarrow}\qquad
          \bigcup_{\beta\in\vs{\beta}}   \bigcup_{\delta\preceq\beta }\mathcal{Q}_\delta
\subseteq \bigcup_{\alpha\in\vs{\alpha}} \bigcup_{\gamma\preceq\alpha}\mathcal{Q}_\gamma\\
&\overset{\text{(iii)}}{\Longleftrightarrow}\qquad
\forall \beta\in\vs{\beta}, \forall \delta\preceq\beta, \exists \alpha\in\vs{\alpha}: \delta\preceq\alpha\\
&\overset{\text{(iv)}}{\Longleftrightarrow}\qquad
\forall \beta\in\vs{\beta}, \exists \alpha\in\vs{\alpha}:\; \beta\preceq\alpha\\
&\overset{\text{(v)}}{\Longleftrightarrow}\qquad
\vs{\beta}\preceq\vs{\alpha}.
\end{split}
\end{equation*}

The \textit{equivalences (i) and (v)} are by definition (\ref{eq:genDsets.alphal}) and~(\ref{eq:labelreldef}), respectively.

\textit{Equivalence (ii)} is the only one where 
it comes into the picture that the story is about quantum states.
The $\overset{\text{(ii)}}{\Longleftarrow}$ implication holds, since it is true in general that
$\Conv B \subseteq \Conv A \Leftarrow B \subseteq A$.
But to the $\overset{\text{(ii)}}{\Longrightarrow}$ implication 
we have to use some special properties coming from geometry.
Obviously, for the extremal points
\begin{equation*}
     \Extr\Conv \bigcup_{\beta\in\vs{\beta}}   \bigcup_{\delta\preceq\beta}\mathcal{Q}_\delta
\subseteq \Conv \bigcup_{\beta\in\vs{\beta}}   \bigcup_{\delta\preceq\beta}\mathcal{Q}_\delta
\subseteq \Conv \bigcup_{\alpha\in\vs{\alpha}} \bigcup_{\gamma\preceq\alpha }\mathcal{Q}_\gamma,
\end{equation*}
so $\pi\in\Extr\Conv\bigcup_{\beta\in\vs{\beta}} \bigcup_{\delta\preceq\beta}\mathcal{Q}_\delta$
is also the element of $\Conv \bigcup_{\alpha\in\vs{\alpha}} \bigcup_{\gamma\preceq\alpha  }\mathcal{Q}_\gamma$.
But $\pi$ is a projector of rank one,
so it is extremal also in $\Conv \bigcup_{\alpha\in\vs{\alpha}} \bigcup_{\gamma\preceq\alpha  }\mathcal{Q}_\gamma$.
This holds for all such $\pi$,
so we have that
\begin{equation*}
          \Extr\Conv\bigcup_{\beta\in\vs{\beta}}   \bigcup_{\delta\preceq\beta}\mathcal{Q}_\delta
\subseteq \Extr\Conv\bigcup_{\alpha\in\vs{\alpha}} \bigcup_{\gamma\preceq\alpha }\mathcal{Q}_\gamma.
\end{equation*}
Any $A$ sets of projectors of rank one 
have the property that $A=\Extr\Conv A$,
that is, they are all extremal points of their convex hulls,
which leads to the $\overset{\text{(ii)}}{\Longrightarrow}$ implication.

\textit{Equivalence (iii)} comes from set algebra.
To see the $\overset{\text{(iii)}}{\Longrightarrow}$ implication,
we note that the $\mathcal{Q}_{\dots}$ sets are disjoint,
so every $\mathcal{Q}_\delta$ on the left-hand side of the inclusion
appears on the right-hand side as a $\mathcal{Q}_\gamma$,
which means that $\forall \beta\in\vs{\beta}$, $\forall \delta\preceq\beta$, $\exists \alpha\in\vs{\alpha}$ so that $\delta\preceq\alpha$.
To see the $\overset{\text{(iii)}}{\Longleftarrow}$ implication,
from the condition $\exists\alpha$ so that $\mathcal{Q}_\delta\subseteq \bigcup_{\gamma\preceq\alpha  }\mathcal{Q}_\gamma$,
but for different $\delta\preceq\beta$s there may exist different $\alpha$s,
so $\bigcup_{\delta\preceq\beta}\mathcal{Q}_\delta\subseteq \bigcup_{\alpha\in\vs{\alpha}}\bigcup_{\gamma\preceq\alpha  }\mathcal{Q}_\gamma$,
which holds for all $\beta\in\vs{\beta}$.

\textit{Equivalence (iv)} comes from the properties of partial ordering.
First, $\preceq$ is reflexive on partitions, that is, $\beta\preceq\beta$,
so the $\overset{\text{(iv)}}{\Longrightarrow}$ implication follows from the $\delta=\beta$ choice.
On the other hand,
$\preceq$ is transitive on partitions, which is just the $\overset{\text{(iv)}}{\Longleftarrow}$ implication:
for all $\delta$,
if $\delta\preceq\beta$ and $\beta\preceq\alpha$ then $\delta\preceq\alpha$.
\end{proof}

Again, note that the relations $\preceq$ and $\subseteq$
are defined on non-isomorphic sets,
so (\ref{eq:relationEquiv}) does not contradict the fact
that the latter is a partial order while the former is not.

The next step is to define 
those labels for which $\preceq$ is a partial order.
A label $\vs{\alpha}$ is called \emph{proper label}, if
\begin{equation}
\label{eq:properlabel}
\forall \alpha,\alpha'\in\vs{\alpha},\; \alpha\neq\alpha'\qquad\Longrightarrow\qquad \alpha\npreceq\alpha'.
\end{equation}
With this, we have the following.
\begin{prop}
On the set of proper labels, the relation $\preceq$ defined in (\ref{eq:labelreldef})
is a partial order.
\end{prop}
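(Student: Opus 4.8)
The claim is that $\preceq$ restricted to the proper labels is a partial order, i.e.\ that reflexivity, transitivity, and antisymmetry hold. Reflexivity and transitivity are already available: the plan is to note that, by~(\ref{eq:relationEquiv}), $\vs\beta\preceq\vs\alpha$ is equivalent to $\mathcal D_{\vs\beta}\subseteq\mathcal D_{\vs\alpha}$, and $\subseteq$ is reflexive and transitive; alternatively one can quote that these two properties of $\preceq$ on labels were already observed to hold (just after~(\ref{eq:labelreldef})), independently of properness. So the only thing that genuinely needs work is antisymmetry: I must show that if $\vs\alpha,\vs\beta$ are proper labels with $\vs\beta\preceq\vs\alpha$ and $\vs\alpha\preceq\vs\beta$, then $\vs\alpha=\vs\beta$.

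For antisymmetry, the key step is a pinning argument on individual partitions. Suppose $\vs\beta\preceq\vs\alpha$ and $\vs\alpha\preceq\vs\beta$. Take any $\alpha\in\vs\alpha$. By $\vs\alpha\preceq\vs\beta$ there is $\beta\in\vs\beta$ with $\alpha\preceq\beta$; by $\vs\beta\preceq\vs\alpha$ there is $\alpha'\in\vs\alpha$ with $\beta\preceq\alpha'$. By transitivity of $\preceq$ on partitions, $\alpha\preceq\alpha'$. Now invoke properness of $\vs\alpha$: since $\alpha,\alpha'\in\vs\alpha$ and $\alpha\preceq\alpha'$, the contrapositive of~(\ref{eq:properlabel}) forces $\alpha=\alpha'$. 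Then $\alpha\preceq\beta\preceq\alpha'=\alpha$, so by antisymmetry of $\preceq$ on partitions $\alpha=\beta$, hence $\alpha\in\vs\beta$. This shows $\vs\alpha\subseteq\vs\beta$; by the symmetric argument (swapping the roles of $\vs\alpha$ and $\vs\beta$, using properness of $\vs\beta$) we get $\vs\beta\subseteq\vs\alpha$, hence $\vs\alpha=\vs\beta$.

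Thus the whole proof is short: state that reflexivity and transitivity are inherited from $\preceq$ on partitions (or from $\subseteq$ via~(\ref{eq:relationEquiv})), and then run the three-partition chain $\alpha\preceq\beta\preceq\alpha'$ twice, once collapsing $\alpha=\alpha'$ by properness and once with the roles reversed. The only mild subtlety — and the place I would be most careful — is keeping the quantifier bookkeeping straight in the definition~(\ref{eq:labelreldef}): the existential $\beta$ produced from $\alpha$ need not itself lie in $\vs\alpha$, which is exactly why a second application of the relation is needed to bring us back into $\vs\alpha$ where properness can be applied. There is no hard obstacle here; it is a routine verification once the pinning trick is set up.

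\textbf{Proof.}
Reflexivity and transitivity of $\preceq$ on the set of all labels were already observed after~(\ref{eq:labelreldef}) (and also follow from~(\ref{eq:relationEquiv}), since $\subseteq$ is reflexive and transitive), so in particular they hold on proper labels. It remains to prove antisymmetry. Let $\vs\alpha$ and $\vs\beta$ be proper labels with $\vs\beta\preceq\vs\alpha$ and $\vs\alpha\preceq\vs\beta$. Pick any $\alpha\in\vs\alpha$. By $\vs\alpha\preceq\vs\beta$ there is a $\beta\in\vs\beta$ with $\alpha\preceq\beta$, and by $\vs\beta\preceq\vs\alpha$ there is an $\alpha'\in\vs\alpha$ with $\beta\preceq\alpha'$. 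Transitivity of $\preceq$ on partitions gives $\alpha\preceq\alpha'$. Since $\alpha,\alpha'\in\vs\alpha$ and $\vs\alpha$ is a proper label, the contrapositive of~(\ref{eq:properlabel}) yields $\alpha=\alpha'$. Hence $\alpha\preceq\beta\preceq\alpha'=\alpha$, and antisymmetry of $\preceq$ on partitions gives $\alpha=\beta\in\vs\beta$. As $\alpha\in\vs\alpha$ was arbitrary, $\vs\alpha\subseteq\vs\beta$. Exchanging the roles of $\vs\alpha$ and $\vs\beta$ (and using that $\vs\beta$ is proper) gives $\vs\beta\subseteq\vs\alpha$, so $\vs\alpha=\vs\beta$. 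Therefore $\preceq$ is a partial order on the set of proper labels.
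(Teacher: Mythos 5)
Your proof is correct and follows essentially the same route as the paper's: reflexivity and transitivity are inherited from $\preceq$ on partitions, and antisymmetry is proved by the chain $\alpha\preceq\beta\preceq\alpha'$, collapsed via properness and then via antisymmetry on partitions, yielding the two inclusions. The only cosmetic difference is that you start from an arbitrary element of $\vs{\alpha}$ where the paper starts from an arbitrary element of $\vs{\beta}$ (so you first apply properness of $\vs{\alpha}$ rather than of $\vs{\beta}$); the argument is otherwise identical.
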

\begin{proof}
\textit{Reflexivity on labels:}
We need that $\vs{\alpha}\preceq\vs{\alpha}$, 
so by definition (\ref{eq:labelreldef})
$\forall\alpha\in\vs{\alpha}$ $\exists \alpha'\in\vs{\alpha}$
for which $\alpha\preceq\alpha'$.
This holds for the $\alpha'=\alpha$ choice,
since $\preceq$ is reflexive on partitions, $\alpha\preceq\alpha$.

\textit{Transitivity on labels:}
Suppose that $\vs{\beta}\preceq\vs{\alpha}$ and $\vs{\gamma}\preceq\vs{\beta}$,
so by definition (\ref{eq:labelreldef})
$\forall\gamma\in\vs{\gamma}$ $\exists \beta\in\vs{\beta}$
for which $\gamma\preceq\beta$,
and
for this $\beta$ $\exists \alpha\in\vs{\alpha}$
for which $\beta\preceq\alpha$.
Since $\preceq$ is transitive on partitions, we have that
$\forall\gamma\in\vs{\gamma}$ $\exists \alpha\in\vs{\alpha}$
for which $\gamma\preceq\alpha$,
which is $\vs{\gamma}\preceq\vs{\alpha}$  by definition (\ref{eq:labelreldef}).

\textit{Antisymmetry on proper labels:}
Let $\vs{\alpha}$ and $\vs{\beta}$ be proper labels.
Suppose that $\vs{\beta}\preceq\vs{\alpha}$ and $\vs{\alpha}\preceq\vs{\beta}$,
so by definition (\ref{eq:labelreldef})
$\forall\beta\in\vs{\beta}$ $\exists \alpha\in\vs{\alpha}$
for which $\beta\preceq\alpha$,
and
for this $\alpha$ $\exists \beta'\in\vs{\beta}$
for which $\alpha\preceq\beta'$.
Since $\preceq$ is transitive on partitions, we have that
$\beta\preceq\beta'$.
This can be true only if $\beta=\beta'$,
since $\vs{\beta}$ is a proper label,
so we have that $\beta\preceq\alpha$ and $\alpha\preceq\beta$.
Since $\preceq$ is antisymmetric on partitions, we have that
$\forall\beta\in\vs{\beta}$, $\exists \alpha\in\vs{\alpha}$
for which $\alpha=\beta$, which means that $\vs{\beta}\subseteq\vs{\alpha}$.
(The labels are \emph{sets}.)
Interchanging the roles of $\vs{\alpha}$ and $\vs{\beta}$,
we have that $\vs{\alpha}\subseteq\vs{\beta}$.
Since $\subseteq$ is antisymmetric on sets, we have that
$\vs{\beta}=\vs{\alpha}$.
\end{proof}

A corollary is that
the set of proper labels for $\preceq$ forms a bounded lattice,
its greatest and smallest elements are the one-element labels 
of full $n$-partite split and the trivial partition without split, respectively,
$1|2|\dots|n\preceq\vs{\alpha}\preceq12\dots n$.

Is it true that every PS subset can be labelled by proper label?
And different proper labels lead to different PS subsets?
In other words,
is the $\vs{\alpha}\mapsto \mathcal{D}_{\vs{\alpha}}$ ``labelling map''
from the \emph{set of proper labels} to the set of PS subsets
an isomorphism?
The injectivity is the $\Leftarrow$ implication from the next observation.
\begin{prop}
For $\vs{\alpha}$, $\vs{\beta}$ proper labels
\begin{equation}
\label{eq:propInj}
\vs{\beta}=\vs{\alpha}
\qquad\Longleftrightarrow\qquad
\mathcal{D}_{\vs{\beta}}=\mathcal{D}_{\vs{\alpha}}.
\end{equation}
\end{prop}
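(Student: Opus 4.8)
The plan is to prove the equivalence in~(\ref{eq:propInj}) by reducing it to the already-established machinery of Proposition~(\ref{eq:relationEquiv}) together with the antisymmetry of $\preceq$ on proper labels. The $\Rightarrow$ direction is trivial: if $\vs{\beta}=\vs{\alpha}$ as sets of partitions, then $\mathcal{D}_{\vs{\beta}}=\mathcal{D}_{\vs{\alpha}}$ directly from the definition~(\ref{eq:genDsets.alphal}), since the defining expression $\Conv\bigcup_{\alpha\in\vs{\alpha}}\bigcup_{\beta\preceq\alpha}\mathcal{Q}_\beta$ depends only on the underlying set $\vs{\alpha}$.

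For the $\Leftarrow$ direction, I would argue as follows. Suppose $\mathcal{D}_{\vs{\beta}}=\mathcal{D}_{\vs{\alpha}}$. Then in particular $\mathcal{D}_{\vs{\beta}}\subseteq\mathcal{D}_{\vs{\alpha}}$ and $\mathcal{D}_{\vs{\alpha}}\subseteq\mathcal{D}_{\vs{\beta}}$. Applying Proposition~(\ref{eq:relationEquiv}) to each inclusion yields $\vs{\beta}\preceq\vs{\alpha}$ and $\vs{\alpha}\preceq\vs{\beta}$. Now invoke the antisymmetry of $\preceq$ on the set of proper labels, which was established in the previous proposition: from $\vs{\beta}\preceq\vs{\alpha}$ and $\vs{\alpha}\preceq\vs{\beta}$ together with the fact that $\vs{\alpha}$ and $\vs{\beta}$ are proper labels, we conclude $\vs{\beta}=\vs{\alpha}$. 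This closes the chain.

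In short, the proof is essentially a one-line deduction once the two preceding propositions are in hand: $\mathcal{D}_{\vs{\beta}}=\mathcal{D}_{\vs{\alpha}}$ unpacks into a two-sided inclusion, Proposition~(\ref{eq:relationEquiv}) translates this into a two-sided $\preceq$ relation, and antisymmetry on proper labels finishes it. I do not anticipate a genuine obstacle here, since all the real work was done earlier — the geometric input (rank-one projectors are extremal in any convex hull containing them, hence $A=\Extr\Conv A$ for sets of pure states) was already used to establish equivalence~(ii) inside Proposition~(\ref{eq:relationEquiv}), and antisymmetry was verified directly. If I wanted to make the write-up self-contained I would simply restate that $\mathcal{D}_{\vs{\beta}}=\mathcal{D}_{\vs{\alpha}}$ is symmetric in its arguments, so without loss of generality one only needs one inclusion to recover one $\preceq$, and then repeat. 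The only thing to be careful about is to explicitly note that we are working within the \emph{set of proper labels}, because antisymmetry fails for general labels — that caveat is the whole reason the statement is phrased for proper labels, and it should be flagged in the proof. Together with surjectivity of the labelling map onto the PS subsets (which would be argued separately, by showing every $\vs{\alpha}$ can be replaced by a proper label defining the same set via discarding partitions dominated by others), this proposition establishes that the $\vs{\alpha}\mapsto\mathcal{D}_{\vs{\alpha}}$ map restricted to proper labels is a bijection, hence an order isomorphism of lattices by~(\ref{eq:relationEquiv}).
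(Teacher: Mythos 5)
Your proof is correct and follows essentially the same route as the paper: both reduce $\mathcal{D}_{\vs{\beta}}=\mathcal{D}_{\vs{\alpha}}$ to the two-sided inclusion, translate each inclusion into $\preceq$ via the preceding equivalence~(\ref{eq:relationEquiv}), and finish with antisymmetry of $\preceq$ on proper labels. The paper merely writes this as a single chain of three equivalences rather than splitting into two directions.
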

\begin{proof}
Let $\vs{\alpha}$ and $\vs{\beta}$ be proper labels.
The above can be shown in the following steps:
\begin{equation*}
\begin{split}
\mathcal{D}_{\vs{\beta}}=\mathcal{D}_{\vs{\alpha}}
\qquad&\overset{\text{(i)}}{\Longleftrightarrow}\qquad
\mathcal{D}_{\vs{\beta}}\subseteq\mathcal{D}_{\vs{\alpha}}
\;\;\text{and}\;\;
\mathcal{D}_{\vs{\alpha}}\subseteq\mathcal{D}_{\vs{\beta}}\\
&\overset{\text{(ii)}}{\Longleftrightarrow}\qquad
\vs{\beta}\preceq\vs{\alpha}
\;\;\text{and}\;\;
\vs{\alpha}\preceq\vs{\beta}\\
&\overset{\text{(iii)}}{\Longleftrightarrow}\qquad
\vs{\beta}=\vs{\alpha}.
\end{split}
\end{equation*}
\textit{Equivalence (i)} is the antisymmetry of $\subseteq$ on sets,
\textit{equivalence (ii)} is (\ref{eq:relationEquiv}) on labels,
\textit{equivalence (iii)} is the antisymmetry of $\preceq$ on proper labels.
\end{proof}
If $\vs{\beta}$ is a label, 
then we can obtain a unique proper label from that
if we drop every $\beta\in\vs{\beta}$ for which there is a $\beta'\in\vs{\beta}$ for which $\beta\preceq\beta'$.
The remaining partitions form a proper label which we denote $\vs{\alpha}$,
and the partitions which have been dropped out form a label which we denote $\vs{\gamma}$.
Then $\vs{\beta}=\vs{\alpha}\vs{\gamma}$, which means the union of labels $\vs{\alpha}$ and $\vs{\gamma}$.
(We omit the union sign too.)
Our next observation is useful for this case.
\begin{prop}
For the general labels $\vs{\alpha}$ and $\vs{\gamma}$,
\begin{equation}
\label{eq:propPart}
\vs{\gamma}\preceq\vs{\alpha}
\qquad\Longleftrightarrow\qquad
\mathcal{D}_{\vs{\alpha}\vs{\gamma}}=\mathcal{D}_{\vs{\alpha}},
\end{equation}
\end{prop}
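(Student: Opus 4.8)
The plan is to prove \eqref{eq:propPart} by reducing it to the already-established equivalences \eqref{eq:relationEquiv} and the definition \eqref{eq:labelreldef}, together with elementary properties of the partial order $\preceq$ on partitions. The key observation is that $\mathcal{D}_{\vs{\alpha}\vs{\gamma}}=\mathcal{D}_{\vs{\alpha}}$ is an \emph{equality} of sets, which by the antisymmetry of $\subseteq$ is equivalent to the two inclusions $\mathcal{D}_{\vs{\alpha}}\subseteq\mathcal{D}_{\vs{\alpha}\vs{\gamma}}$ and $\mathcal{D}_{\vs{\alpha}\vs{\gamma}}\subseteq\mathcal{D}_{\vs{\alpha}}$. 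Since $\vs{\alpha}\subseteq\vs{\alpha}\vs{\gamma}$ as sets, the first inclusion holds automatically (it is the monotonicity of the labelling map, already implicit in \eqref{eq:relationEquiv} applied to $\vs{\alpha}\preceq\vs{\alpha}\vs{\gamma}$, which follows from reflexivity of $\preceq$ on partitions). So the whole content of \eqref{eq:propPart} is that $\mathcal{D}_{\vs{\alpha}\vs{\gamma}}\subseteq\mathcal{D}_{\vs{\alpha}}$ is equivalent to $\vs{\gamma}\preceq\vs{\alpha}$.

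First I would apply \eqref{eq:relationEquiv} with $\vs{\beta}=\vs{\alpha}\vs{\gamma}$ to get
$\mathcal{D}_{\vs{\alpha}\vs{\gamma}}\subseteq\mathcal{D}_{\vs{\alpha}}\Leftrightarrow \vs{\alpha}\vs{\gamma}\preceq\vs{\alpha}$. Then I would unfold $\vs{\alpha}\vs{\gamma}\preceq\vs{\alpha}$ via the definition \eqref{eq:labelreldef}: this says that for every $\delta\in\vs{\alpha}\vs{\gamma}$ there is an $\alpha\in\vs{\alpha}$ with $\delta\preceq\alpha$. Splitting the quantifier over the union, this is the conjunction of ``$\forall\alpha'\in\vs{\alpha}$, $\exists\alpha\in\vs{\alpha}$: $\alpha'\preceq\alpha$'' and ``$\forall\gamma\in\vs{\gamma}$, $\exists\alpha\in\vs{\alpha}$: $\gamma\preceq\alpha$''. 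The first conjunct is just $\vs{\alpha}\preceq\vs{\alpha}$, which always holds by reflexivity (witnessed by $\alpha=\alpha'$), and the second conjunct is precisely $\vs{\gamma}\preceq\vs{\alpha}$ by \eqref{eq:labelreldef}. Hence $\vs{\alpha}\vs{\gamma}\preceq\vs{\alpha}\Leftrightarrow\vs{\gamma}\preceq\vs{\alpha}$, and chaining the equivalences gives \eqref{eq:propPart}.

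I do not anticipate a genuine obstacle here; the statement is a bookkeeping corollary of the two propositions already proved. The only point requiring a little care is the clean handling of the quantifier over the union $\vs{\alpha}\vs{\gamma}$ — making sure that the ``$\vs{\alpha}$-part'' of the condition $\vs{\alpha}\vs{\gamma}\preceq\vs{\alpha}$ is vacuously satisfied and therefore can be dropped without loss. It is also worth stating explicitly, for the reader's benefit, that $\vs{\alpha}\vs{\gamma}$ need not be a proper label even when $\vs{\alpha}$ is, so \eqref{eq:relationEquiv} is invoked here in its general-label form (which is exactly how it was proved) rather than in the proper-label form \eqref{eq:propInj}. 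Finally, I would remark that the intended application is: starting from an arbitrary label $\vs{\beta}=\vs{\alpha}\vs{\gamma}$ in which $\vs{\gamma}$ collects the partitions dominated by something in $\vs{\alpha}$, \eqref{eq:propPart} shows $\mathcal{D}_{\vs{\beta}}=\mathcal{D}_{\vs{\alpha}}$, so that \emph{every} PS subset is labelled by its (unique) associated proper label, which together with \eqref{eq:propInj} establishes that the labelling map restricted to proper labels is a lattice isomorphism onto the PS subsets.
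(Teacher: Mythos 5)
Your proof is correct and follows essentially the same route as the paper's: reduce the set equality to two inclusions by antisymmetry of $\subseteq$, translate each via (\ref{eq:relationEquiv}) into a relation on labels (with $\vs{\alpha}\preceq\vs{\alpha}\vs{\gamma}$ holding automatically), and then observe that $\vs{\alpha}\vs{\gamma}\preceq\vs{\alpha}$ splits as a conjunction whose $\vs{\alpha}$-part is vacuous by reflexivity, leaving exactly $\vs{\gamma}\preceq\vs{\alpha}$. Your explicit remark that (\ref{eq:relationEquiv}) is invoked in its general-label form (since $\vs{\alpha}\vs{\gamma}$ need not be proper) is a worthwhile clarification that the paper leaves implicit.
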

\begin{proof}
This can be shown in the following steps:
\begin{equation*}
\begin{split}
\mathcal{D}_{\vs{\alpha}\vs{\gamma}}=\mathcal{D}_{\vs{\alpha}}
\qquad&\overset{\text{(i)}}{\Longleftrightarrow}\qquad
\mathcal{D}_{\vs{\alpha}\vs{\gamma}}\subseteq\mathcal{D}_{\vs{\alpha}}
\;\;\text{and}\;\;
\mathcal{D}_{\vs{\alpha}}\subseteq\mathcal{D}_{\vs{\alpha}\vs{\gamma}}\\
&\overset{\text{(ii)}}{\Longleftrightarrow}\qquad
\vs{\alpha}\vs{\gamma}\preceq\vs{\alpha}
\;\;\text{and}\;\;
\vs{\alpha}\preceq\vs{\alpha}\vs{\gamma}\\
&\overset{\text{(iii)}}{\Longleftrightarrow}\qquad
\vs{\gamma}\preceq\vs{\alpha}.
\end{split}
\end{equation*}
\textit{Equivalence (i)} is the antisymmetry of $\subseteq$ on sets,
\textit{equivalence (ii)} is (\ref{eq:relationEquiv}) on labels,
($\vs{\alpha}\preceq\vs{\alpha}\vs{\gamma}$ holds always)
\textit{equivalence (iii)} is from the observation that
$\vs{\beta}\preceq\vs{\alpha}$ and $\vs{\beta}'\preceq\vs{\alpha}$
if and only if $\vs{\beta}\vs{\beta}'\preceq\vs{\alpha}$, which can easily be seen from the definition (\ref{eq:labelreldef}).
\end{proof}
This means that when we obtain a proper label $\vs{\alpha}$
from a general label $\vs{\beta}$,
as was done above,
both of these lead to the same PS subset.
Since all PS subsets arise from general labels,
the above shows that they arise also from proper labels,
which is the surjectivity of the labelling by proper labels.

Now we have that the set of proper labels is isomorphic to the set of PS subsets.
The former one is much easier to handle.
Moreover, (\ref{eq:relationEquiv}) states now that
the lattice of $\vs{\alpha}$ proper labels with respect to the partial order $\preceq$
is isomorphic to
the lattice of $\mathcal{D}_{\vs{\alpha}}$ PS subsets with respect to the partial order $\subseteq$.
(This lattice is the generalization of the ``inclusion hierarchy'' in figure~\ref{fig:PS3incl}.)
To get all the PS subsets, we have to obtain all the proper labels.
A brute force method for this
is to form all the $\vs{\beta}$ labels (all the subsets of the set of all partitions),
then obtain the proper labels $\vs{\alpha}$ as before ($\vs{\beta}=\vs{\alpha}\vs{\gamma}$),
and keep the different proper labels obtained in this way.
This algorithm is very ineffective, 
we give a much more optimized one.

\begin{figure}
\centering
 \includegraphics{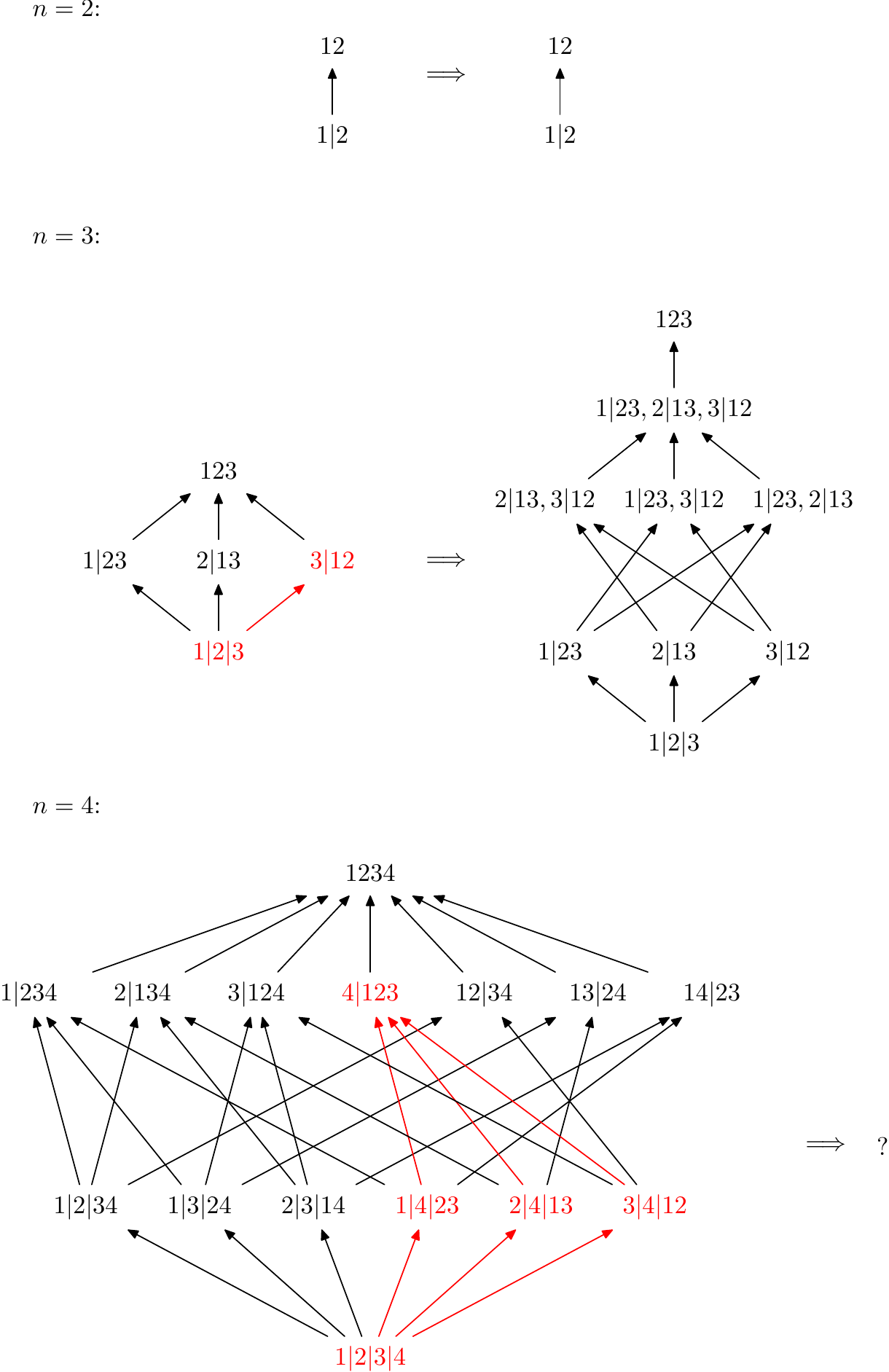}
 \caption{Lattices of partitions and proper labels in the case of two, three and four subsystems.}
\label{fig:lattices}
\end{figure}

To obtain an efficient algorithm generating the proper labels of
all PS subsets,
it is necessary to consider the labels as \emph{$l$-tuples of partitions} instead of \emph{sets of partitions.}
In this case $\vs{\alpha}=\alpha_1,\alpha_2,\dots,\alpha_l$,
so the order of the elements is considered to be fixed when an $l$-tuple is given,
and the $\alpha_j$s are different for different $j$s.
(The $(\dots)$ $l$-tuple-brackets are omitted.
Note that, contrary to the notation used in \cite{SeevinckUffinkMixSep}, 
the lower index of the partitions $\alpha_j$ here does not refer to the $k$ number of $L_r$ sets in $\alpha_j$.)
Using ordered structure has further advantages beyond the obvious one that
a computer stores data sequentially, so implementing sets would mean additional difficulty.
Now the algorithm is the following.

\begin{alg}
~{}
\begin{enumerate}
\item{} [\textit{initialization}] 
Fix an order of the partitions,
this defines a lexicographical ordering for $l$-tuples of partitions.
This is denoted by $<$.
(This is to avoid obtaining an $l$-tuple more than once
and to make the algorithm more optimized.)
\item{} [\textit{level $1$}] Using this ordering,
we have all the $1$-tuples of partitions ordered lexicographically.
\item{} [\textit{induction step:} obtaining the $l+1$-tuples of partitions (level $l+1$)
from the $l$-tuples of partitions (level $l$)]
To every $\vs{\alpha}=\alpha_1,\alpha_2,\dots,\alpha_l$ $l$-tuples 
(coming in lexicographically ordered sequence)
we have to append 
any such partition $\alpha_{l+1}$
(coming in lexicographically ordered sequence) that\\
(i) $\alpha_{l+1}\npreceq \alpha_j$ and $\alpha_j\npreceq\alpha_{l+1}$ for all $j=1,2,\dots,l$, and\\
(ii) $\alpha_{l+1}>\alpha_l$. 
(Because of the lexicographical order $<$, it is enough to consider only the last ($l$th) partition.) \\
Then the resulting $\vs{\alpha}=\alpha_1,\alpha_2,\dots,\alpha_l,\alpha_{l+1}$ $l+1$-tuples, 
and also the partitions in every such $l+1$-tuple are ordered lexicographically.
\end{enumerate}
The algorithm stops when no new partition can be appended to any of the $l$-tuples,
which comes in finite steps, since the number of all the partitions is finite.
\end{alg}

This algorithm generates the lattice of proper labels from the lattice of partitions,
see in figure \ref{fig:lattices}.
We note, however, that the number of proper labels is so high even for four subsystems
(more than a hundred)
that the lattice of proper labels can be handled only in computer.

\subsection{PS classes in general}
\label{subsec:PartSep.Gen.PSclasses}

The second point is the generalization of the PS classes $\mathcal{C}_{\dots}$,
which are the possible non-trivial intersections of the PS subsets $\mathcal{D}_{\dots}$.
Constructing these needs direct calculations for a given $n$, as was done in section~\ref{subsec:PartSep.Threepart.PSclasses}.

Let us divide the set of proper labels into two disjoint subsets, $\vvs{\alpha}$ and $\vvs{\beta}$,
then all the possible intersections of PS subsets can be labelled by such a pair,
which is called \emph{class-label}, as
\begin{equation}
\mathcal{C}_{\vvs{\alpha},\vvs{\beta}}=
\bigcap_{\vs{\alpha}\in\vvs{\alpha}} \cmpl{\mathcal{D}_{\vs{\alpha}}}\cap \bigcap_{\vs{\beta}\in\vvs{\beta}}\mathcal{D}_{\vs{\beta}}.
\end{equation}
It can happen that $\mathcal{C}_{\vvs{\alpha},\vvs{\beta}}=\emptyset$ \emph{by construction,}
by that we mean that its emptiness follows from the inclusion hierarchy of PS subsets.
For example, if $\mathcal{D}_{\vs{\beta}}\subseteq\mathcal{D}_{\vs{\alpha}}$ 
for some $\vs{\alpha}\in\vvs{\alpha}$ and $\vs{\beta}\in\vvs{\beta}$, then the intersection above is identically empty.
The PS classes for three subsystems in subsection~\ref{subsec:PartSep.Threepart.PSclasses} 
were obtained by the use of this observation.
In this general framework, this observation is formulated as follows:
\begin{equation*}
\begin{split}
\mathcal{C}_{\vvs{\alpha},\vvs{\beta}}=\emptyset
\qquad&\overset{\text{(i)}}{\Longleftrightarrow}\qquad
\cmpl{\bigcup_{\vs{\alpha}\in\vvs{\alpha}} \mathcal{D}_{\vs{\alpha}} } \cap \bigcap_{\vs{\beta}\in\vvs{\beta}}\mathcal{D}_{\vs{\beta}}=\emptyset\\
\quad&\overset{\text{(ii)}}{\Longleftrightarrow}\qquad
\bigcap_{\vs{\beta}\in\vvs{\beta}}\mathcal{D}_{\vs{\beta}} \subseteq \bigcup_{\vs{\alpha}\in\vvs{\alpha}} \mathcal{D}_{\vs{\alpha}}\\
\quad&\overset{\text{(iii)}}{\Longleftarrow}\qquad
\exists \vs{\alpha}\in\vvs{\alpha}, \exists\vs{\beta}\in\vvs{\beta}\;:\;\mathcal{D}_{\vs{\beta}}\subseteq\mathcal{D}_{\vs{\alpha}}\\
\quad&\overset{\text{(iv)}}{\Longleftrightarrow}\qquad
\exists \vs{\alpha}\in\vvs{\alpha}, \exists\vs{\beta}\in\vvs{\beta}\;:\; \vs{\beta}\preceq \vs{\alpha}.
\end{split}
\end{equation*}
\begin{proof}
\textit{Equivalence (i)} comes from De Morgan's law $\cmpl{A}\cap\cmpl{B}=\cmpl{A\cup B}$.
\textit{Equivalence (ii)} comes from the identity $B\subseteq A\Leftrightarrow (B\cap\cmpl{A}\equiv B\setminus A=\emptyset)$.
\textit{Implication (iii)} comes from $B\subseteq A\Rightarrow (B\cap B'\subseteq A\cup A')$.
\textit{Equivalence (iv)} is (\ref{eq:relationEquiv}).
\end{proof}

Implication (iii) is the point which makes possible 
to formulate the emptiness of PS classes \emph{purely by the use of labels}. 
That is still a question
whether implication (iii) can be replaced by a stronger one,
which leads to condition involving only labels again.
(The problem is that we have no interpretations of $\cap$ and $\cup$ in the language of labels.)
\emph{Our first conjecture is that 
implication (iii) above is the strongest one which leads to a condition involving only labels.}

Summarizing, we have
\begin{subequations}
\begin{equation}
\label{eq:emptyByConstruction}
\mathcal{C}_{\vvs{\alpha},\vvs{\beta}}=\emptyset 
\qquad\Longleftarrow\qquad
\exists \vs{\alpha}\in\vvs{\alpha}, \exists\vs{\beta}\in\vvs{\beta}\;:\; \vs{\beta}\preceq \vs{\alpha}.
\end{equation}
If the right-hand side holds then we say, according to the conjecture above,
that $\mathcal{C}_{\vvs{\alpha},\vvs{\beta}}$ is \emph{empty by construction}.
Since this implication is only one-way,
it could happen that $\mathcal{C}_{\vvs{\alpha},\vvs{\beta}}=\emptyset$
for such class-label $\vvs{\alpha},\vvs{\beta}$ for which the right-hand side does not hold,
resulting in a class which is empty but not by construction.
But we think that this can not happen:
\emph{our second conjecture is that there is an equivalence in (\ref{eq:emptyByConstruction}),}
that is, all the PS classes which are not empty by construction are non-empty.%
\footnote{
This implies the first conjecture above,
but it can still happen that implication (iii) can be replaced by a stronger condition,
so the first conjecture is false.
Then the (\ref{eq:emptyByConstruction}) definition of the emptiness-by-construction changes,
and the second conjecture concerns this new definition.}
The motivation of this is the same as
in the tripartite case (see at the end of subsection~\ref{subsec:PartSep.Threepart.PSclasses}),
where the PS classes conjectured to be non-empty
was obtained under the same assumptions.

An advantage of the formulation by the labelling constructions
is, roughly speaking, that by the use of that
``we have separated the \emph{algebraic} and the \emph{geometric part}'' of the problem 
of non-emptiness of the classes.
At this point, it seems that 
we have tackled all the \emph{algebraic} issues of the problem,
and these conjectures can not be proven without the investigation of the \emph{geometry} of $\mathcal{D}$,
more precisely, the geometry of the different kinds of $\mathcal{Q}_{\alpha}$ extremal points.

The negation of (\ref{eq:emptyByConstruction}) leads to
\begin{equation}
\label{eq:nonEmptyByConstruction}
\mathcal{C}_{\vvs{\alpha},\vvs{\beta}}\neq\emptyset 
\qquad\Longrightarrow\qquad
\forall \vs{\alpha}\in\vvs{\alpha},
\forall \vs{\beta}\in\vvs{\beta}\;:\; \vs{\beta}\npreceq \vs{\alpha},
\end{equation}
\end{subequations}
so if we obtain all $\vvs{\alpha},\vvs{\beta}$ class-labels
for which the right-hand side of this holds (``\emph{non-emptiness-by-construction}'')
then we will have all the non-empty classes,
together with some empty classes if the second conjecture does not hold.
Because of the nontrivial structure of the lattice of proper labels,
obtaining all the class-labels leading to
not-empty-by-construction classes
is also a non-trivial task.
The number of all the partitions of $n$ grows rapidly \cite{oeisA000110,oeisA000041},
which is only the number of the PS subsets of $\alpha$-separability $\mathcal{D}_{\alpha}$.
So the number of all the PS subsets $\mathcal{D}_{\vs{\alpha}}$ grows more rapidly,
and the number of all the PS classes $\mathcal{C}_{\vvs{\alpha},\vvs{\beta}}$ grows even more rapidly.
But at least, it is finite.

\subsection{Indicator functions in general}
\label{subsec:PartSep.Gen.Indicators}

The third point is the generalization of the indicator functions (\ref{eq:PS3PureInd}).
This is carried out in four steps.

(i) Let $F:\mathcal{D}(\mathcal{H}_K)\to\field{R}$ be a continuous function
for all $K\subset L$, that is, for all subsystems.
The only condition on $F$ is 
\begin{equation}
\label{eq:Fprop}
F(\varrho)\geq0, \quad\text{with equality if and only if $\varrho$ is pure},
\end{equation}
for example, the von Neumann entropy, any Tsallis or R\'enyi entropies are suitable.
(Note that the additional requirements of the features of
LU-invariance, convexity, Schur-concavity, 
additivity, being homogeneous polynomial, etc.,
are only optional, 
they are not needed for the construction.)
For all $K\subset L$ subsystems,
let the following functions on pure states be defined
\begin{equation}
\label{eq:fK}
\begin{split}
f_K&:\mathcal{P}(\mathcal{H})\longrightarrow \field{R},\\
f_K&(\pi) = F(\pi_K),
\end{split}
\end{equation}
where $\pi_K=\tr_{\pcmpl{K}}(\pi)$ with $\pcmpl{K}=L\setminus K$.
Then, for the $k$-partite split $\alpha=L_1|L_2|\dots|L_k$,
$f_{L_r}$ identifies the bipartite split $L_r|\pcmpl{L_r}$ (where $\pcmpl{L_r}=L\setminus L_r$),
\begin{equation}
f_{L_r}(\pi)=0 \qquad\Longleftrightarrow\qquad 
\pi\in \mathcal{P}_{L_r|\pcmpl{L_r}}
\equiv \bigcup_{\beta\preceq L_r|\pcmpl{L_r}}\mathcal{Q}_\beta,
\end{equation}
which is the consequence of (\ref{eq:Fprop}).
Note that $\alpha$ is the greatest element which is smaller than $L_r|\pcmpl{L_r}$ for all $r$.

(ii) Then the function
\begin{equation}
\label{eq:genIndicators}
f_\alpha(\pi)=
\sum_{r=1}^k f_{L_r}(\pi)
\end{equation}
has the ability to identify the $k$-partite split $\alpha$,
\begin{equation}
\label{eq:genIndicatorsdef}
f_\alpha(\pi)=0 \qquad\Longleftrightarrow\qquad 
\pi\in \mathcal{P}_\alpha
\equiv \bigcup_{\beta\preceq \alpha}\mathcal{Q}_\beta.
\end{equation}
All nonnegative $f_\alpha$ functions satisfying (\ref{eq:genIndicatorsdef})
are called \emph{$\alpha$-indicator functions},
not only the ones defined in (\ref{eq:genIndicators}).

(iii) The generalization of (\ref{eq:genIndicators}) for more-than-one partitions,
that is, for all labels, is defined as
\begin{equation}
\label{eq:genIndicatorsl}
f_{\vs{\alpha}}(\pi)= \prod_{\alpha\in\vs{\alpha}} f_{\alpha}(\pi),
\end{equation}
being the generalization of (\ref{eq:PS3PureInd}).
It vanishes exactly for the appropriate $\mathcal{Q}_{\alpha}$ sets,
\begin{equation}
\label{eq:genIndicatorsldef}
f_{\vs{\alpha}}(\pi)=0
\qquad\Longleftrightarrow\qquad 
\pi\in \mathcal{P}_{\vs{\alpha}}
\equiv \bigcup_{\alpha\in\vs{\alpha}}\mathcal{P}_\alpha.
\end{equation}
All nonnegative $f_{\vs{\alpha}}$ functions satisfying (\ref{eq:genIndicatorsldef})
are called \emph{$\vs{\alpha}$-indicator functions},
not only the ones defined in (\ref{eq:genIndicatorsl}).
For example, in the next chapter for the three-qubit case,
a special set of indicator functions will be given,
which will be constructed not by (\ref{eq:genIndicatorsl}),
but it will still satisfy (\ref{eq:genIndicatorsldef}).

(iv) Now, the vanishing of their convex roof extension 
\begin{equation*}
\cnvroof{f}_{\vs{\alpha}}(\varrho)=
\min\sum_i p_i f_{\vs{\alpha}}(\pi_i)
\end{equation*}
identify the PS sets,
\begin{equation}
\label{eq:genvanishing}
\cnvroof{f}_{\vs{\alpha}}(\varrho)=0
\qquad\Longleftrightarrow\qquad 
\varrho\in \mathcal{D}_{\vs{\alpha}}
\equiv \Conv\mathcal{P}_{\vs{\alpha}},
\end{equation}
being the generalization of (\ref{eq:PS3VanishingMix}).
Indeed,
$\cnvroof{f}_{\vs{\alpha}}(\varrho)=0$
if and only if there exists a decomposition $\varrho=\sum_ip_i\pi_i$
such that $f_{\vs{\alpha}}(\pi_i)=0$ for all $i$
($f_{\vs{\alpha}}$ is nonnegative),
which means that $\pi_i\in \mathcal{P}_{\vs{\alpha}}$
(\ref{eq:genIndicatorsldef}),
which means that $\varrho\in\mathcal{D}_{\vs{\alpha}}$.

\subsection{Entanglement-monotone indicator functions in general}
\label{subsec:PartSep.Gen.monIndicators}

As we have seen in section \ref{subsec:QM.EntMeas.EntMeas},
entanglement-monotonicity is a fundamental property of all entanglement measures.
Here we repeat the constructions above with a slight modification,
leading to indicator functions which are entanglement-monotones.
These are denoted by $m$ in contrast with the general $f$.

(i) It has been shown in \cite{VidalEntMon,HorodeckiEntMeas} that 
if $F:\mathcal{D}(\mathcal{H}_K)\to\field{R}$ is unitary-invariant and concave,
then the $f_K$ functions defined in (\ref{eq:fK}) are non-increasing on average for pure states,
that is, they obey (\ref{eq:averagePure}),
which is entanglement monotonity for pure states.
So let
\begin{equation}
\label{eq:mK}
m_K(\pi) = M(\pi_K)
\end{equation}
with $M:\mathcal{D}(\mathcal{H}_K)\to\field{R}$ \emph{vanishing if and only if the state is pure,} as before, 
but now we demand also \emph{unitary-invariace} and \emph{concavity.}
The von Neumann entropy (\ref{eq:Neumann}), 
the Tsallis entropies (\ref{eq:qTsallis}) for all $q>0$,
and the R\'enyi entropies (\ref{eq:qRenyi}) for all $0<q<1$
are known to be concave \cite{BengtssonZyczkowski}, and all of them are unitary-invariant.

(ii) Clearly, entanglement monotone functions form a cone, 
that is, their sums and multiples by nonnegative real numbers are also entanglement monotones, 
so we can conclude that the sums of the functions $m_K$ are also entanglement monotones. 
Here, instead of the original sums in (\ref{eq:genIndicators}),
we introduce the \emph{arithmetic mean} of the $m_{L_r}$ functions,
\begin{equation}
\label{eq:monIndicators}
m_\alpha(\pi)=\frac1k \sum_{r=1}^k m_{L_r}(\pi),
\end{equation}
which are also indicator functions, since they obey (\ref{eq:genIndicatorsdef}).
(The factor $1/k$ is not really important,
but the next step and 
the three-qubit case in the next chapter
motivate the use of mean values.)

(iii) The only problem we face here is that
the set of entanglement monotone functions 
is not closed under multiplication, 
which is involved in the case of the $f_{\vs{\alpha}}$ functions of (\ref{eq:genIndicatorsl}).
This is related to the fact that the product of two concave functions is not concave in general.
Moreover, a recent result of Eltschka et.~al.~\cite{EltschkaetalEntMon} suggests
that entanglement monotone functions
can not be of arbitrary high degree,%
\footnote{See Theorem I.~in \cite{EltschkaetalEntMon},
concerning a special class of functions.}
so we make a trial of such combination which does not change the degree,
but still fulfils the conditions (\ref{eq:genIndicatorsldef}).
The \emph{geometric mean} will be proven to be suitable, 
which is just a root of the product given in (\ref{eq:genIndicatorsl})
\begin{equation}
\label{eq:monIndicatorsl}
m_{\vs{\alpha}}(\pi)= \Bigl[\prod_{\alpha\in\vs{\alpha}} m_{\alpha}(\pi)\Bigr]^{1/\abs{\vs{\alpha}}},
\end{equation}
where $\abs{\vs{\alpha}}$ is the number of functions $m_{\alpha}$ in the product.
These functions obviously obey (\ref{eq:genIndicatorsldef}),
and they are also entanglement monotones, because the following proposition is true in general.
\begin{prop}
If the functions 
$\mu_j:\mathcal{P}(\mathcal{H})\to\field{R}$ ($j=1,2,\dots,q$)
are nonnegative and non-increasing on average,
\begin{subequations}
\begin{align}
\mu_j(\pi)&\geq0,\\
\label{eq:condAverage}
\sum_{i=1}^m p_i \mu_j(\pi_i) &\leq \mu_j(\pi),
\end{align}
\end{subequations}
then their geometric mean 
\begin{equation*}
\mu=(\mu_1\mu_2\dots\mu_q)^{1/q} 
\end{equation*}
is also nonnegative (obviously) and non-increasing on average,
\begin{subequations}
\begin{align}
\mu(\pi)&\geq0,\\
\sum_{i=1}^m p_i \mu(\pi_i) &\leq \mu(\pi).
\end{align}
\end{subequations}
\end{prop}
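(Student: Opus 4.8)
The plan is to reduce the statement to the single-partition case $q=2$ and then apply a concavity-type inequality. First I would observe that it suffices to prove the claim for $q=2$: if the geometric mean of any two nonnegative non-increasing-on-average functions is again non-increasing on average, then the general case follows by a straightforward induction, writing $\mu=(\mu_1\cdots\mu_q)^{1/q}$ as a weighted geometric combination and regrouping. (One has to be slightly careful about unequal weights in the induction, so in practice I would prove directly the more flexible statement that for weights $t_j\geq0$ with $\sum_j t_j=1$, the product $\prod_j \mu_j^{t_j}$ is non-increasing on average; the equal-weight case $t_j=1/q$ is then immediate.)

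The core step is the weighted version of the two-function inequality. Fix an ensemble $\{(p_i,\pi_i)\}$ with $\sum_i p_i\pi_i=\pi$, and abbreviate $a_i=\mu_1(\pi_i)\geq0$, $b_i=\mu_2(\pi_i)\geq0$, together with the hypotheses $\sum_i p_i a_i\leq\mu_1(\pi)=:A$ and $\sum_i p_i b_i\leq\mu_2(\pi)=:B$. I want $\sum_i p_i \sqrt{a_i b_i}\leq\sqrt{AB}$. The key tool is the Cauchy--Schwarz inequality for the probability weights $p_i$:
\begin{equation*}
\sum_i p_i \sqrt{a_i b_i}
= \sum_i \bigl(\sqrt{p_i a_i}\bigr)\bigl(\sqrt{p_i b_i}\bigr)
\leq \sqrt{\sum_i p_i a_i}\;\sqrt{\sum_i p_i b_i}
\leq \sqrt{A}\,\sqrt{B}.
\end{equation*}
The first inequality is Cauchy--Schwarz, and the second uses monotonicity of the square root together with the two hypotheses $\sum_i p_i a_i\leq A$ and $\sum_i p_i b_i\leq B$. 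This is exactly the desired conclusion $\sum_i p_i \mu(\pi_i)\leq\mu(\pi)$ for $\mu=(\mu_1\mu_2)^{1/2}$. For the weighted version with exponents $t$ and $1-t$ one replaces Cauchy--Schwarz by Hölder's inequality with conjugate exponents $1/t$ and $1/(1-t)$, which gives $\sum_i p_i a_i^t b_i^{1-t}\leq (\sum_i p_i a_i)^t(\sum_i p_i b_i)^{1-t}\leq A^t B^{1-t}$, and then the induction on $q$ goes through by grouping the first $q-1$ factors with weight $(q-1)/q$ and the last with weight $1/q$.

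I do not expect any serious obstacle here; the proof is essentially an application of Hölder's inequality to the weighted average, combined with monotonicity of $x\mapsto x^{t}$ for $0\le t\le 1$. The only mildly delicate point is bookkeeping in the induction with unequal weights, which is why I would phrase the lemma directly in the weighted form rather than iterating the equal-weight case; once that is done, the $q$-fold geometric mean is the special case $t_j=1/q$. Note also that nonnegativity of the $\mu_j$ is genuinely needed, both so that $a_i^t b_i^{1-t}$ and the roots $A^t$, $B^{1-t}$ are well defined and real, and so that the square-root/power functions are monotone on the relevant range; this is consistent with the hypothesis as stated. Applying this lemma to $\mu_j=m_{\alpha_j}$ (which are non-increasing on average by the cone property of entanglement monotones applied to the sums in~(\ref{eq:monIndicators}), themselves built from the $m_K$ of~(\ref{eq:mK}) via the result of~\cite{VidalEntMon,HorodeckiEntMeas}) shows that $m_{\vs{\alpha}}$ in~(\ref{eq:monIndicatorsl}) is non-increasing on average, hence—together with convexity of its convex roof extension—an entanglement monotone, which is the property advertised for the monotone indicator functions.
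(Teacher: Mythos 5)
Your proof is correct and follows essentially the same route as the paper's: both arguments rest on H\"older's inequality applied with the weights $p_i$, followed by monotonicity of the power functions and the hypotheses $\sum_i p_i\mu_j(\pi_i)\le\mu_j(\pi)$. The only cosmetic difference is that the paper establishes the $q$-factor H\"older-type inequality $\sum_i x_i^{(1)}\cdots x_i^{(q)}\le\prod_j\|\mathbf{x}^{(j)}\|_q$ directly from the arithmetic--geometric mean inequality and applies it in one step with $x_i^{(j)}=[p_i\mu_j(\pi_i)]^{1/q}$, whereas you build it up from the two-factor Cauchy--Schwarz/H\"older case by induction on $q$; both are sound.
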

(We use this for functions defined on pure states,
although the following proof does not use that,
so the statement holds also for functions defined on all states.)

\begin{proof}
To obtain this, we need a Cauchy-Bunyakowski-Schwarz-like inequality,
for nonnegative $m$-tuples $\ve{x}^{(j)}\in\field{R}^m$, $x_i^{(j)}\geq0$,
which states that
\begin{equation}
\label{eq:CBSlike}
\sum_{i=1}^m  x_i^{(1)} x_i^{(2)}\dots x_i^{(q)} \leq \norm{\ve{x}^{(1)}}_q \norm{\ve{x}^{(2)}}_q\dots \norm{\ve{x}^{(q)}}_q,
\end{equation}
where the usual $q$-norm is 
\begin{equation}
\norm{\ve{x}}_q=\Bigl[\sum_{i=1}^m x_i^q\Bigr]^{1/q}.
\end{equation}
Indeed,
if $\ve{x}^{(j)}=\vs{0}$ for some $j$ then the inequality holds trivially,
else
\begin{equation*}
\begin{split}
\sum_{i=1}^m \frac{x_i^{(1)}}{\norm{\ve{x}^{(1)}}_q} \frac {x_i^{(2)}}{\norm{\ve{x}^{(2)}}_q} \dots\frac{x_i^{(q)}}{\norm{\ve{x}^{(q)}}_q}
\equiv&\sum_{i=1}^m \left[\frac{(x_i^{(1)})^q}{\norm{\ve{x}^{(1)}}_q^q} \frac{(x_i^{(2)})^q}{\norm{\ve{x}^{(2)}}_q^q}\dots
\frac{(x_i^{(q)})^q}{\norm{\ve{x}^{(q)}}_q^q}  \right]^{1/q}\\
\leq&\sum_{i=1}^m \frac1q\left[ \frac{(x_i^{(1)})^q}{\norm{\ve{x}^{(1)}}_q^q}+\frac{(x_i^{(2)})^q}{\norm{\ve{x}^{(2)}}_q^q}+\dots+
\frac{(x_i^{(q)})^q}{\norm{\ve{x}^{(q)}}_q^q} \right] = 1,
\end{split}
\end{equation*}
where the inequality follows
from the inequality of the arithmetic and geometric means, applied to all terms in the sum.
Using this,
\begin{equation*}
\begin{split}
 \sum_{i=1}^m p_i\mu(\pi_i)
=&\sum_{i=1}^m p_i \bigl[\mu_1(\pi_i)\mu_2(\pi_i)\dots \mu_q(\pi_i) \bigr]^{1/q}\\
=&\sum_{i=1}^m \bigl[p_i\mu_1(\pi_i)\bigr]^{1/q} \bigl[p_i\mu_2(\pi_i)\bigr]^{1/q}\dots \bigl[p_i\mu_q(\pi_i)\bigr]^{1/q}\\
\leq &\Bigl[\sum_{i=1}^mp_i\mu_1(\pi_i)\Bigr]^{1/q} \Bigl[\sum_{i=1}^mp_i\mu_2(\pi_i)\Bigr]^{1/q} \dots \Bigl[\sum_{i=1}^mp_i\mu_q(\pi_i)\Bigr]^{1/q}\\
\leq &\bigl[\mu_1(\pi)\big]^{1/q} \bigl[\mu_2(\pi)\bigr]^{1/q} \dots \bigl[\mu_q(\pi)\bigr]^{1/q}
= \mu(\pi),
\end{split}
\end{equation*}
where the first inequality is (\ref{eq:CBSlike}) for $x_i^{(j)}=\bigl[p_i\mu_j(\pi_i)\bigr]^{1/q}$,
and the second inequality is the condition (\ref{eq:condAverage}).
\end{proof}

(iv) Now, $m_{\vs{\alpha}}$ of (\ref{eq:monIndicatorsl}) 
is entanglement monotone, that is, non-increasing on average for pure states (\ref{eq:averagePure}),
so, thanks to (\ref{eq:averageConvRoof}), its convex roof extension 
\begin{equation}
\cnvroof{m}_{\vs{\alpha}}(\varrho)=
\min\sum_i p_i m_{\vs{\alpha}}(\pi_i)
\end{equation}
is also non-increasing on average (\ref{eq:meas.average})
so entanglement-monotone,
and also identifies the PS subsets
\begin{equation}
\cnvroof{m}_{\vs{\alpha}}(\varrho)=0
\quad\Longleftrightarrow\quad 
\varrho\in \mathcal{D}_{\vs{\alpha}},
\end{equation}
as in (\ref{eq:genvanishing}).

\section{Summary and remarks}
\label{sec:PartSep.Sum}

In this chapter we have constructed the complete PS classification of multipartite quantum states
by the PS classes arising from the PS subsets (\ref{eq:genDsets.alphal}),
together with necessary and sufficient conditions for the identification of the PS classes
through the necessary and sufficient conditions for the identification of the PS subsets (\ref{eq:genvanishing})
by indicator functions arising as convex roof extensions of the pure state indicator functions (\ref{eq:genIndicatorsl}).
The indicator functions can be constructed so as to be entanglement-monotone (subsection~\ref{subsec:PartSep.Gen.monIndicators}).
We have also discussed the PS classes in the tripartite case in detail.

\begin{remarks}
\item As was mentioned before,
this PS classification scheme is an extension of the classification based on $k$-separability and $\alpha_k$-separability
given by Seevinck and Uffink \cite{SeevinckUffinkMixSep},
which is the extension of the classification dealing only with $\alpha_k$-separability
given by D\"ur and Cirac \cite{DurCiracTarrach3QBMixSep,DurCiracTarrachBMixSep}.
\item The non-emptiness of the new classes was only conjectured.
More fully, we could not give neccessary and sufficient condition for the non-emptiness of the PS classes
in the purely algebraic language of labels.
Probably, methods from geometry or calculus would be needed to solve this puzzle
(subsection~\ref{subsec:PartSep.Gen.PSclasses}).
\item In close connection with this,
a further geometry-related conjecture could be drafted about the non-empty classes:
they are of non-zero measure.
It is known in the bipartite case that
the separable states are of non-zero measure \cite{Acinetal3QBMixClass,BengtssonZyczkowski}, 
which can motivate this conjecture.
If the $\mathcal{D}_{1|2|\dots|n}$ PS subset of fully separable states is of non-zero measure,
then this is true for all $\mathcal{D}_{\vs{\alpha}}$ PS subsets.
However, the PS classes arise as intersections of these.
\item The necessary and sufficient criteria of the classes was given by convex roof extensions,
which has advantages and disadvantages.
\item First of all, convex roof extensions are hard to calculate \cite{RothlisbergerLehmannLossNumericalConvRoof,LibCreme}.
However, as we have seen in chapter \ref{chap:SepCrit},
neccessary and sufficient criteria for the detection of convex subsets 
seem always to be hard to calculate,
since they always contain an optimization problem,
such as finding a suitable witness,
or positive map, 
or symmetric extension,
or local spin mesurements,
or detection vector, 
or local bases, and so on.
These optimization problems have no solutions in a closed form in general cases.
\item Another disadvantage of convex roof extensions
is that this is a ``clearly theoretical'' method,
by that we mean that
the full tomography of the state is required, then the criteria are applied by computer.
The majority of the other known criteria share this disadvantage.
Exceptions are the criteria by witnesses (section \ref{subsec:QM.Ent.2Part}),
and by local spin mesurements (quadratic Bell inequalities, see in section \ref{subsec:SepCrit.3Part.Spin}, giving only necessary but not sufficient criteria)
where the criteria can be used in the laboratory, by the tuning of the measurement settings.
However, the optimization still has to be done by the measurement apparatus.
\item An advantage of the convex roof extension
is that it works independently of the dimensions of the subsystems,
so the criteria by that work for arbitrary dimensions.
However, the numerical optimization depends strongly on the rank of the state,
which can be high if the dimension is high,
resulting in slow convergence.
\item Last, but not least, the greatest advantage of our method is (at least for us) that
the criteria by convex roof extensions have a very transparent structure,
they reflect clearly the complicated structure of the PS classes by construction,
which can be seen most directly in (\ref{eq:genvanishing}).
\item
In closing, there is an important question, which can be of research interest as well.
The PS classification is about the issue:
``From which kinds of pure entangled states can a given state be \emph{mixed}?''
Another issue, 
which is equivalently important from the point of view of quantum computation
but which we have not dealt with, is 
``Which kinds of pure entangled states can be \emph{distilled out} from a given state?''
What can be said about this latter?
\end{remarks}

\chapter{Three-qubit systems and FTS approach}
\label{chap:ThreeQB}

In the previous chapter, we have obtained 
the partial separability classification of mixed states
together with necessary and sufficient conditions for the classes.
These conditions were formulated by convex roof extensions of
indicator functions defined on pure states.
Now we will study the case of three qubits,
in which a different set of indicator functions can be obtained.
The construction is based on a beautiful correspondence 
which was found between the three-qubit Hilbert space and a particular FTS (Freudenthal Triple System),
a correspondence which is ``compatible'' with the entanglement of pure three-qubit states \cite{BorstenetalFreudenthal3QBEnt}.
To our present knowledge, this construction works only for three qubits,
however, these results have advantages for the case of three qubits,
they have given us the main ideas for the constructions of the previous chapter,
and, besides these, they are beautiful and interesting in themselves.

Because the FTS approach is in connection with the SLOCC classification of three-qubit state vectors
rather than their partial separability only,
what we get for mixed states is a combination of
the PS classification given in the previous chapter
and the classification of three-qubit mixed states given by Ac{\'i}n et.~al.~(section \ref{subsec:QM.EntMeas.3QBMix}).
So we call this classification \emph{PSS classification,}
which stands for something like ``Partial Separability extended by pure-state SLOCC classes''.
This classification has the advantage of differentiating between different SLOCC classes of pure states,
and also between mixed states depending on which kind of pure entanglement is needed for the preparation of the state.
This is an important advantage, since the PS classification
does not make distinction between
pure states contained in
different SLOCC classes but having the same PS properties
(namely, classes W and GHZ),
although these states may be suitable for different tasks in quantum information processing.
However, in the majority of the cases there are continuously infinite SLOCC classes of pure states
labelled by more-than-one continuous parameters \cite{DurVidalCiracSLOCC3QB,VerstraeteetalSLOCC4QB,ChterentalDjokovicSLOCC4QB},
in which case it is not clear how this classification could be carried out,
if it could be at all.

The material of this chapter covers thesis statement~\ref{statement:threeqb}
(page \pageref{statement:threeqb}).
\begin{organization}
\item[\ref{sec:ThreeQB.Pure}]
we obtain a new set of indicator functions for the three-qubit case.
We recall the LSL-tensors of the FTS approach 
(section~\ref{subsec:ThreeQB.Pure.SLOCC3QBLSL})
by which the SLOCC classes can be identified.
Then we obtain a new set of LU-invariants
(section~\ref{subsec:ThreeQB.Pure.NewInvs})
which are suitable for the role of indicator functions.
Some of them are in connection with the Wootters concurrences of two-qubit subsystems
(section~\ref{subsec:ThreeQB.Pure.WConc}).
\item[\ref{sec:ThreeQB.Mixed}]
we define the PSS classification with the
PSS subsets (section~\ref{subsec:ThreeQB.Mixed.PSSsubsets}),
the PSS classes (section~\ref{subsec:ThreeQB.Mixed.PSSclasses}),
together with the indicator functions for these (section~\ref{subsec:ThreeQB.Mixed.CRoof}).
\item[\ref{sec:ThreeQB.GenThreePart}]
we try to generalize some functions coming from the FTS approach
for the case of three subsystems of arbitrary dimensions,
to obtain indicator functions.
We see that this is evident for full-separability and $a|bc$-separability
(section~\ref{subsec:ThreeQB.GenThreePart.ys}),
this depends on Raggio's conjecture for $b|ac$-$c|ab$-separability
(section~\ref{subsec:ThreeQB.GenThreePart.ga}),
and this does not work for $2$-separability
(section~\ref{subsec:ThreeQB.GenThreePart.t}).
\item[\ref{sec:ThreeQB.Sum}]
we give a summary and some remarks.
\end{organization}

\section{State vectors of three qubits}
\label{sec:ThreeQB.Pure}

In this section we review and use the FTS approach of three-qubit state vectors.%
\footnote{
Remember our convention:
The letters $a$, $b$ and $c$ are variables
taking their values in the set of labels $L=\{1,2,3\}$.
When these variables appear in a formula,
they form a partition of $\{1,2,3\}$,
so they take always different values
and \emph{the formula is understood for all the different values of these variables automatically.}
Although, sometimes a formula is symmetric under the interchange of two such variables
in which case we keep only one of the identical formulas.}
So we have the Hilbert space 
$\mathcal{H}=\mathcal{H}_1\otimes\mathcal{H}_2\otimes\mathcal{H}_3$
with the local dimensions $\tpl{d}=(2,2,2)$,
therefore, 
after the choice of an orthonormal basis $\{\cket{0},\cket{1}\}\subset\mathcal{H}_a$, $\mathcal{H}_a\isom\field{C}^2$.
The $\cket{\psi}\in\mathcal{H}$ state vectors
are not required to be normalized in this section,
and the $0\in\mathcal{H}$ zero-vector is also allowed.
(Physical states arise, however, from normalized vectors.)

\subsection{SLOCC Classification by LSL-covariants}
\label{subsec:ThreeQB.Pure.SLOCC3QBLSL}

In \cite{BorstenetalFreudenthal3QBEnt},
Borsten et.~al.~have revealed a very elegant correspondence
between 
the three-qubit Hilbert space $\mathcal{H}\isom\field{C}^2\otimes\field{C}^2\otimes\field{C}^2$
and 
the FTS (Freudenthal Triple System) $\mathfrak{M}(\mathcal{J})\isom\field{C}\oplus\field{C}\oplus\mathcal{J}\oplus\mathcal{J}$
over the cubic Jordan algebra $\mathcal{J}\isom\field{C}\oplus\field{C}\oplus\field{C}$.
The fundamental point of this correspondence is
that the automorphism group of this FTS is
$\LieGrp{Aut}\bigl(\mathfrak{M}(\field{C}\oplus\field{C}\oplus\field{C})\bigr)=\LieGrp{SL}(2,\field{C})^{\times3}$,
which is just the relevant LSL subgroup 
of $\LieGrp{GL}(2,\field{C})^{\times3}$, the LGL-group of SLOCC equivalence for three-qubit pure states
(section \ref{subsec:QM.Ent.LO}).
This group-theoretical coincidence arises only in the three-qubit case.
It has been shown \cite{BorstenetalFreudenthal3QBEnt} that
the vectors of \emph{different SLOCC classes of entanglement} in the three-qubit Hilbert space
(section \ref{subsec:QM.EntMeas.3QBPure})
are in one-to-one correspondence with
the elements of \emph{different rank} in the FTS.
The rank of an element of an FTS
is characterized by the vanishing of some associated elements, 
which are covariant (maybe invariant) under the action of the automorphism group,
resulting in conditions for the SLOCC classes in the Hilbert-space by the vanishing or non-vanishing of
$\LieGrp{SL}(2,\field{C})^{\times3}$ tensors.
Hence this classification is manifestly invariant under SLOCC-equivalence \cite{BorstenetalFreudenthal3QBEnt},
which can not be seen directly in the conventional classification (section \ref{subsec:QM.EntMeas.3QBPure}),
since the $c^2_a$ local entropies are scalars only under $\LieGrp{U}(2)^{\times3}$.
(However, the invariance of the vanishing of $c^2_a$s follows easily from the fact 
that the local rank is invariant under invertible transformations \cite{DurVidalCiracSLOCC3QB}.)

To a
\begin{equation*}
\cket{\psi}=\sum_{i,j,k=0}^1\psi^{ijk}\cket{ijk}
\end{equation*}
three-qubit state
we can assign an element $\psi\in\mathfrak{M}(\mathcal{\field{C}\oplus\field{C}\oplus\field{C}})$,
and calculate some associated quantities
needed for the identification of its rank.
Here we list these quantities 
in the form in which we will use them.
(For the basic definitions 
of Jordan algebras, Freudenthal triple systems
and the operations and maps defined on them,
see in \cite{BorstenetalFreudenthal3QBEnt} and in the references therein.)
\begin{subequations}
\label{eq:tensors}
\begin{align}
\label{eq:tensors.Upsilon}
\begin{split}
[\Upsilon_\phi(\psi)]^{ijk} = 
 &-\varepsilon_{ll'}\varepsilon_{mm'}\varepsilon_{nn'}\psi^{imn}\psi^{lm'n'}\phi^{l'jk}\\
 &-\varepsilon_{mm'}\varepsilon_{nn'}\varepsilon_{ll'}\psi^{ljn}\psi^{l'mn'}\phi^{im'k}\\
 &-\varepsilon_{nn'}\varepsilon_{ll'}\varepsilon_{mm'}\psi^{lmk}\psi^{l'm'n}\phi^{ijn'},
\end{split}\\
\label{eq:tensors.gamma1}
[\gamma_1(\psi)]^{ii'}=&\; \varepsilon_{jj'}\varepsilon_{kk'}\psi^{ijk}\psi^{i'j'k'},\\
\label{eq:tensors.gamma2}
[\gamma_2(\psi)]^{jj'}=&\; \varepsilon_{kk'}\varepsilon_{ii'}\psi^{ijk}\psi^{i'j'k'},\\
\label{eq:tensors.gamma3}
[\gamma_3(\psi)]^{kk'}=&\; \varepsilon_{ii'}\varepsilon_{jj'}\psi^{ijk}\psi^{i'j'k'},\\
\label{eq:tensors.T}
\begin{split}
[T(\psi,\psi,\psi)]^{ijk}
=&-\varepsilon_{ll'}\varepsilon_{mm'}\varepsilon_{nn'}\psi^{imn}\psi^{lm'n'}\psi^{l'jk},\\
=&-\varepsilon_{mm'}\varepsilon_{nn'}\varepsilon_{ll'}\psi^{ljn}\psi^{l'mn'}\psi^{im'k},\\
=&-\varepsilon_{nn'}\varepsilon_{ll'}\varepsilon_{mm'}\psi^{lmk}\psi^{l'm'n}\psi^{ijn'},
\end{split}\\
\begin{split}
\label{eq:tensors.q}
q(\psi) =&\;
\varepsilon_{ii'}\varepsilon_{jj'}
\varepsilon_{kk'}\varepsilon_{ll'}
\varepsilon_{mm'}\varepsilon_{nn'}
\psi^{ikl}\psi^{jk'l'}\psi^{i'mn}\psi^{j'm'n'}.
\end{split}
\end{align}
\end{subequations}
The summation for the pairs of indices occuring upstairs and downstairs are understood, as usual,
and $\varepsilon$ is the matrix of the 
$\LieGrp{Sp}(1)\isom\LieGrp{SL}(2)$-invariant non-degenerate antisymmetric bilinear form (\ref{eq:epsilon}):
Thanks to (\ref{eq:epstraf}),
index contraction by $\varepsilon$ is invariant under $\LieGrp{SL}(2,\field{C})$ transformations.
This shows that
if we regard $\psi$ and $\phi$ as tensors 
transform as a $(\mathbf{2},\mathbf{2},\mathbf{2})$
under $\LieGrp{SL}(2,\field{C})^{\times3}$,
then so do $\Upsilon_\phi(\psi)$ and $T(\psi,\psi,\psi)$,
while $\gamma_1(\psi)$, $\gamma_2(\psi)$ and $\gamma_3(\psi)$, being symmetric, transform as 
$(\mathbf{3},\mathbf{1},\mathbf{1})$,
$(\mathbf{1},\mathbf{3},\mathbf{1})$ and
$(\mathbf{1},\mathbf{1},\mathbf{3})$, respectively,
and $q(\psi)$ transforms as $(\mathbf{1},\mathbf{1},\mathbf{1})$,
which means that it is scalar.%
\footnote{Note that for any $2\times2$ matrix $M$, the determinant
$2\det M =\varepsilon_{ii'}\varepsilon_{jj'}M^{ij}M^{i'j'}$,
thanks to (\ref{eq:epstraf}),
so $2\det \gamma_a(\psi) =q(\psi)$.}
The main result of \cite{BorstenetalFreudenthal3QBEnt} is 
that the conditions for the SLOCC classes (section \ref{subsec:QM.EntMeas.3QBPure}) 
can be formulated by the vanishing of these tensors
in the way which can be seen in table~\ref{tab:SLOCC3Pure2}.
\begin{table}
\begin{tabu}{X[c]||X[c]|X[c]|X[c]X[c]X[c]|X[c]|X[c]}
\hline
Class  & $\psi$  & $\Upsilon_\phi(\psi)$    & $\gamma_1(\psi)$ & $\gamma_2(\psi)$ & $\gamma_3(\psi)$ &  $T(\psi,\psi,\psi)$  & $q(\psi)$ \\
\hline
\hline
$\mathcal{V}_\text{Null}$ & $=0$    & $=0,\forall\phi$         & $=0$     & $=0$     & $=0$     & $=0$     & $=0$ \\
\hline
$\mathcal{V}_{1|2|3}$     & $\neq0$ & $=0,\forall\phi$         & $=0$     & $=0$     & $=0$     & $=0$     & $=0$ \\
\hline
$\mathcal{V}_{1|23}$      & $\neq0$ & $\neq0,\exists\phi$      & $\neq0$  & $=0$     & $=0$     & $=0$     & $=0$ \\
$\mathcal{V}_{2|13}$      & $\neq0$ & $\neq0,\exists\phi$      & $=0$     & $\neq0$  & $=0$     & $=0$     & $=0$ \\
$\mathcal{V}_{3|12}$      & $\neq0$ & $\neq0,\exists\phi$      & $=0$     & $=0$     & $\neq0$  & $=0$     & $=0$ \\
\hline
$\mathcal{V}_\text{W}$    & $\neq0$ & $\neq0,\exists\phi$      & $\neq0$  & $\neq0$  & $\neq0$  & $\neq0$  & $=0$ \\
$\mathcal{V}_\text{GHZ}$  & $\neq0$ & $\neq0,\exists\phi$      & $\neq0$  & $\neq0$  & $\neq0$  & $\neq0$  & $\neq0$ \\
\hline
\end{tabu}
\bigskip
\caption{SLOCC classes of three-qubit state vectors
 identified by the vanishing of LSL-covariants~(\ref{eq:tensors}).}
\label{tab:SLOCC3Pure2}
\end{table}

\subsection{SLOCC Classification by a new set of LU-invariants}
\label{subsec:ThreeQB.Pure.NewInvs}

Now, we need quantities (indicator functions) which can be extended from pure states to mixed states
by the convex roof construction. 
There is no natural ordering on the tensors of~(\ref{eq:tensors})
so convex roof construction does not work directly for them,
but we can form quantities from them taking values in the field of real numbers.
During this, we will 
lose the covariance under $\LieGrp{SL}(2,\field{C})^{\times3}$,
but gain the invariance under the group $\LieGrp{U}(2)^{\times3}$.

Returning from the FTS language to the Hilbert space language,
we have vectors
\begin{align*}
\cket{\Upsilon_\phi(\psi)}&=\sum_{i,j,k=0}^1[\Upsilon_\phi(\psi)]^{ijk}\cket{ijk}\quad\in\mathcal{H},\\
\cket{T(\psi,\psi,\psi)}&=\sum_{i,j,k=0}^1[T(\psi,\psi,\psi)]^{ijk}\cket{ijk}\quad\in\mathcal{H},
\intertext{and local operators}
\gamma_1(\psi)\varepsilon&=\sum_{i,i'=0}^1[\gamma_1(\psi)\varepsilon]^i_{\;i'}\cket{i}\bra{i'}\quad\in\Lin(\mathcal{H}_1),\\
\gamma_2(\psi)\varepsilon&=\sum_{j,j'=0}^1[\gamma_2(\psi)\varepsilon]^j_{\;j'}\cket{j}\bra{j'}\quad\in\Lin(\mathcal{H}_2),\\
\gamma_3(\psi)\varepsilon&=\sum_{k,k'=0}^1[\gamma_3(\psi)\varepsilon]^k_{\;k'}\cket{k}\bra{k'}\quad\in\Lin(\mathcal{H}_3)
\end{align*}
associated to $\cket{\psi}\in\mathcal{H}$ through~(\ref{eq:tensors}) of the FTS language.%
\footnote{Note that 
$\varepsilon  \in \mathcal{H}_a^*\otimes\mathcal{H}_a^*
              \isom\Lin(\mathcal{H}_a\to\mathcal{H}_a^*)
              \isom\BiLin(\mathcal{H}_a\times\mathcal{H}_a\to\field{C})$,
while
$\gamma_a(\psi)\in  \mathcal{H}_a\otimes\mathcal{H}_a
              \isom\Lin(\mathcal{H}_a^*\to\mathcal{H}_a)$,
so $\gamma_a(\psi)\varepsilon\in\Lin(\mathcal{H}_a\to\mathcal{H}_a)$.}
These are just computational auxiliaries coming from considerations related to the LSL-tensors,
not state vectors and local observables of any systems,
because they depend nonlinearly on the state vector $\cket{\psi}$, 
moreover, $\gamma_a(\psi)\varepsilon\notin\mathcal{A}(\mathcal{H}_a)$.

Using these,
the vanishing conditions of the tensors~(\ref{eq:tensors}) in table~\ref{tab:SLOCC3Pure2} can be reformulated.
Clearly, $\psi=0$ if and only if $\norm{\psi}^2=0$.
Taking a look at $\Upsilon_\phi(\psi)$ in~(\ref{eq:tensors.Upsilon}) it turns out that
$\Upsilon_\phi(\psi)$ can be written in the Hilbert space language as
\begin{equation*}
\cket{\Upsilon_\phi(\psi)} = Y(\psi) \cket{\phi}
\end{equation*}
with the operator
\begin{equation*}
Y(\psi) = 
-\gamma_1(\psi)\varepsilon\otimes\Id\otimes\Id
-\Id\otimes\gamma_2(\psi)\varepsilon\otimes\Id
-\Id\otimes\Id\otimes\gamma_3(\psi)\varepsilon
\quad\in\Lin(\mathcal{H}).
\end{equation*}
Using this,
the vanishing condition of $\Upsilon_\phi(\psi)$ for all $\phi$:
\begin{equation*}
\begin{split}
\cket{\Upsilon_\phi(\psi)}=0\;\;\forall\cket{\phi}\qquad
&\Longleftrightarrow\qquad Y(\psi) \cket{\phi}=0\;\;\forall\cket{\phi}\\
&\Longleftrightarrow\qquad Y(\psi)=0\\
&\Longleftrightarrow\qquad \norm{Y(\psi)}^2=0, 
\end{split}
\end{equation*}
so we can eliminate the quantifiers and $\phi$ from the condition.
Since the last implication holds for any norm,
using the usual complex matrix $2$-norm $\norm{M}^2=\tr(M^\dagger M)$
we have
\begin{equation*}
\norm{Y(\psi)}^2=4\bigl(\norm{\gamma_1(\psi)}^2 + \norm{\gamma_2(\psi)}^2 + \norm{\gamma_3(\psi)}^2\bigr).
\end{equation*}
This formula has a remarkably structure. 
Namely, note that the local concurrence-squared (\ref{eq:conc2}),
which can be used as $a|bc$-indicator function (\ref{eq:PS3PureInd.a|bc}),
arises here as
\begin{equation*}
c^2_a(\psi)=C^2(\tr_{bc}\cket{\psi}\bra{\psi})=\norm{\gamma_b(\psi)}^2 + \norm{\gamma_c(\psi)}^2,
\end{equation*}
and $\gamma_a(\psi)=0$ if and only if $\norm{\gamma_a(\psi)}^2=0$.
Now turn to the vanishing of $T(\psi,\psi,\psi)$, given in~(\ref{eq:tensors.T}).
Again, this vanishes if and only if its norm $\norm{T(\psi,\psi,\psi)}^2$ does.%
\footnote{
The quantity $\norm{T(\psi,\psi,\psi)}^2$ appears also in the
twistor-geometric approach of three-qubit entanglement, 
it is proportional to $\omega_{\text{ABC}}$ in \cite{PeterGeom3QBEnt}.}
This can be calculated by the use of the form
\begin{equation*}
\cket{T(\psi,\psi,\psi)}
= -\gamma_1(\psi)\varepsilon\otimes\Id\otimes\Id \cket{\psi}
= -\Id\otimes\gamma_2(\psi)\varepsilon\otimes\Id \cket{\psi}
= -\Id\otimes\Id\otimes\gamma_3(\psi)\varepsilon \cket{\psi}
= \frac13 Y(\psi) \cket{\psi}.
\end{equation*}
About the scalar $q$, notice that $q(\psi)=-2\Det(\psi)$ with Cayley's hyperdeterminant (\ref{eq:HDet}),
and it vanishes if and only if the three-tangle~(\ref{eq:tau}) does.

Summarizing the observations above,
it will be useful to define the following set of real-valued functions on $\mathcal{H}$:
\begin{subequations}
\label{eq:newPureLUinvs}
\begin{align}
n(\psi)     &= \norm{\psi}^2,\\
\label{eq:newPureLUinvs.y}
y(\psi)     &= \frac23\bigl(g_1(\psi) + g_2(\psi) + g_3(\psi)\bigr),\\
\label{eq:newPureLUinvs.sa}
c^2_a(\psi)   &= g_b(\psi) + g_c(\psi),\\
\label{eq:newPureLUinvs.ga}
g_a(\psi)   &= \norm{\gamma_a(\psi)}^2,\\
\label{eq:newPureLUinvs.t}
t(\psi)     &= 4\norm{T(\psi,\psi,\psi)}^2,\\
\label{eq:newPureLUinvs.tau2}
\tau^2(\psi)&= 4\abs{q(\psi)}^2.
\end{align}
\end{subequations}
\begin{table}
\begin{tabu}{X[c]||X[c]|X[c]|X[c]X[c]X[c]|X[c]X[c]X[c]|X[c]|X[c]}
\hline
Class  & $n(\psi)$ & $y(\psi)$ & $c^2_1(\psi)$ & $c^2_2(\psi)$ & $c^2_3(\psi)$ & $g_1(\psi)$ & $g_2(\psi)$ & $g_3(\psi)$ & $t(\psi)$ & $\tau^2(\psi)$  \\
\hline
\hline
$\mathcal{V}_\text{Null}$ & $=0$     & $=0$     & $=0$     & $=0$     & $=0$     & $=0$     & $=0$     & $=0$     & $=0$     & $=0$ \\
\hline
$\mathcal{V}_{1|2|3}$     & $>0$     & $=0$     & $=0$     & $=0$     & $=0$     & $=0$     & $=0$     & $=0$     & $=0$     & $=0$ \\
\hline
$\mathcal{V}_{1|23}$      & $>0$     & $>0$     & $=0$     & $>0$     & $>0$     & $>0$     & $=0$     & $=0$     & $=0$     & $=0$ \\
$\mathcal{V}_{2|13}$      & $>0$     & $>0$     & $>0$     & $=0$     & $>0$     & $=0$     & $>0$     & $=0$     & $=0$     & $=0$ \\
$\mathcal{V}_{3|12}$      & $>0$     & $>0$     & $>0$     & $>0$     & $=0$     & $=0$     & $=0$     & $>0$     & $=0$     & $=0$ \\
\hline
$\mathcal{V}_\text{W}$    & $>0$     & $>0$     & $>0$     & $>0$     & $>0$     & $>0$     & $>0$     & $>0$     & $>0$     & $=0$ \\
$\mathcal{V}_\text{GHZ}$  & $>0$     & $>0$     & $>0$     & $>0$     & $>0$     & $>0$     & $>0$     & $>0$     & $>0$     & $>0$ \\
\hline
\end{tabu}
\bigskip
\caption{SLOCC classes of three-qubit state vectors
identified by the vanishing of the pure state indicator functions 
given in~(\ref{eq:newPureLUinvs}).}
\label{tab:PSS3Pclasses}
\end{table}
(It is shown in [\ref{pub:partsep}] of the list on page \pageref{chap:publist}
that the constant factors have been chosen so that 
$0\leq y(\psi),c^2_a(\psi),g_a(\psi),t(\psi),\tau^2(\psi) \leq1$ for normalized states.%
\footnote{
Otherwise, for unnormalized states
their maxima is scaling by the corresponding power of the norm,
$0\leq y(\psi),c^2_a(\psi),g_a(\psi) \leq n^2(\psi)$,
$0\leq t(\psi) \leq n^3(\psi)$,
$0\leq \tau^2(\psi) \leq n^4(\psi)$,
since these functions are homogeneous ones.}%
)
These quantities are obtained 
by index-contraction of $\psi^{ijk}$s and complex conjugated $\cc{(\psi^{i'j'k'})}=\psi_{i'j'k'}$s by $\delta^{i'}_{\phantom{i'}i}$s
from the tensors in~(\ref{eq:tensors}), which were obtained by index-contraction of $\psi^{ijk}$s and $\psi^{i'j'k'}$s by $\varepsilon_{ii'}$s.
From the contractions of free indices of the thesors in~(\ref{eq:tensors}), 
we have $U^\dagger\delta U= \delta$ for $U\in\LieGrp{U}(2)$.
From the contractions inside the tensors of~(\ref{eq:tensors}),
we have $U^\transp\varepsilon U= \varepsilon \det U$
but for every factor $\det U$ there is a conjugated $\cc{(\det U)}=1/\det U $
from $U^*\varepsilon U^\dagger= \varepsilon \det\cc{U}$.
Consequently, all the functions in~(\ref{eq:newPureLUinvs}) are LU-invariant ones,
while their vanishing are still LSL-invariant.%
\footnote{
Moreover, $n$ is invariant under the larger group $\LieGrp{U}(8)$,
and $\tau^2$ under $\left[\LieGrp{U}(1)\times\LieGrp{SL}(2,\field{C})\right]^{\times3}$,
and $n$, $y$, $t$ and $\tau^2$ under the discrete group of non-local transformations of the permutations of subsystems.}
Being LU-invariant homogeneous polynomials, 
we can express them in the standard basis of
three-qubit LU-invariant homogeneous polynomials (\ref{eq:3QBcanonPureLUinvs}) as
\begin{subequations}
\label{eq:newPureLUinvsCanon}
\begin{align}
n   &= I_0,\\
y   &= 2I_0^2-\frac23\bigl(I_1+I_2+I_3\bigr),\\
c^2_a &= 2\bigl(I_0^2-I_a\bigr),\\
g_a &= I_0^2+I_a-I_b-I_c,\\
\label{eq:newPureLUinvsCanon.t}
t   &= \frac83 I_4 + \frac{10}{3}I_0^3 - 2I_0\bigl(I_1+I_2+I_3\bigr),\\
\tau^2&= 4I_5.
\end{align}
\end{subequations}

Now, the conditions for the SLOCC classes 
by the vanishing of the tensors in~(\ref{eq:tensors})
(see in table~\ref{tab:SLOCC3Pure2})
can be reformulated 
by the vanishing of the functions in~(\ref{eq:newPureLUinvs})
in the way which can be seen in table~\ref{tab:PSS3Pclasses}.
Comparing this with table \ref{tab:PS3Pclasses}
we have that the functions $y$, $c^2_a$, $g_a$ and $t$ are proper indicator functions
for the PS classification of the states of three-qubit systems.
The function $\tau^2$ completes those for the PSS classification.
But, before turning to this, let us discuss the relation
of the new functions $g_a$ to the entanglement of the two-qubit subsystems.

\subsection{Entanglement of two-qubit subsystems}
\label{subsec:ThreeQB.Pure.WConc}
The entanglement inside the two-qubit subsystems 
can be calculated by the use of the Wootters concurrence (\ref{eq:WConc}).
Recall that, for a two-qubit state $\omega$,
it can be written by the eigenvalues of $\sqrt{\sqrt{\omega}\tilde{\omega}\sqrt{\omega}}$ as
\begin{equation*}
\cnvroof{c}(\omega)=\bigl(\lambda_1^\downarrow-\lambda_2^\downarrow-\lambda_3^\downarrow-\lambda_4^\downarrow\bigr)^+.
\end{equation*}
If the two-qubit mixed state
for which the Wootters concurrence-squared is calculated
is reduced from a pure three-qubit state,
then $\omega=\pi_{bc}$ is at the most of rank $2$,
and 
\begin{equation*}
{\cnvroof{c}}^2(\pi_{bc})=(\lambda_1-\lambda_2)^2 =\tr\pi_{bc}\tilde{\pi}_{bc}-2\lambda_1\lambda_2.
\end{equation*}
One can check that
\begin{subequations}
\begin{equation}
\label{eq:Conc3QB1}
\tr\pi_{bc}\tilde{\pi}_{bc}=\tr\gamma_a(\psi)^\dagger\gamma_a(\psi),
\end{equation}
which is just $g_a(\psi)$ of (\ref{eq:newPureLUinvs.ga}), and
\begin{equation}
\label{eq:Conc3QB2}
\lambda_1\lambda_2= \abs{\det\gamma_a(\psi)}=\abs{\Det\psi},
\end{equation}
\end{subequations}
see in \cite{CKWThreetangle}.
The Wootters concurrence is then given by
\begin{equation}
\label{eq:Conc3QB}
{\cnvroof{c}}^2(\pi_{bc})= g_a(\psi)-\frac12 \tau(\psi).
\end{equation}
As it can be seen, 
$g_a(\psi)$ measures the entanglement in two-qubit subsystems
for all vectors which are not of Class GHZ.
Note that this is only a zero-measured subset of all state vectors.
Note also that
${\cnvroof{c}}^2(\pi_{bc})$ can not be used instead of $g_a$ for the role of an indicator function,
because it can be zero for Class GHZ vectors%
\footnote{For example the standard GHZ state (\ref{eq:GHZ}) has separable two-qubit subsystems,
for which the Wootters concurrence vanishes.
Such vectors, having maximal three-tangle, form an important subclass of three-qubit entanglement.}
hence does not obey the last line of table \ref{tab:PSS3Pclasses}.
On the other hand, the CKW equality (\ref{eq:CKW}) about entanglement monogamy,
\begin{equation}
\label{eq:recallCKW}
c^2_a(\psi)= {\cnvroof{c}}^2(\pi_{ab})+ {\cnvroof{c}}^2(\pi_{ac})+\tau(\psi),
\end{equation}
is then equivalent to (\ref{eq:newPureLUinvs.sa}).

The roof extension relates the concurrence
with another important quantity, the fidelity \cite{UhlmannFidelityConcurrence}.
The \emph{fidelity between two density matrices} $\omega$ and $\sigma$
is $F(\omega,\sigma)=\tr\sqrt{\sqrt{\omega}\sigma\sqrt{\omega}}$,
which is the square root of the transition probability,
and it is in connection with distances and distinguishability measures on the space of density matrices \cite{BengtssonZyczkowski}.
The \emph{fidelity of a state} with respect to the spin flip is $F(\omega,\tilde{\omega})$,
which is just the \emph{concave roof extension}%
\footnote{The concave roof extension is the maximization of the weighted average over the decompositions
instead of the minimization in (\ref{eq:cnvroofext}).} 
of $c$. For the two-qubit case this takes the form \cite{UhlmannFidelityConcurrence}
\begin{equation*}
\cncroof{c}(\omega)=F(\omega,\tilde{\omega})=\lambda_1+\lambda_2+\lambda_3+\lambda_4.
\end{equation*}
Again, for the mixed states of two-qubit subsystems
arising from a three-qubit system being in a pure state,
the fidelity is
\begin{equation*}
{\cncroof{c}}^2(\pi_{bc})
=(\lambda_1+\lambda_2)^2=\tr\pi_{bc}\tilde{\pi}_{bc}+2\lambda_1\lambda_2.
\end{equation*}
Using (\ref{eq:Conc3QB1}) and (\ref{eq:Conc3QB2}),
it is of the form similar to the concurrence (\ref{eq:Conc3QB})
\begin{equation}
\label{eq:Fid3QB}
{\cncroof{c}}^2(\pi_{bc})= g_a(\psi)+\frac12 \tau(\psi).
\end{equation}
Then, using (\ref{eq:newPureLUinvs.sa}),
we get a CKW-like equality for the fidelities,
\begin{equation}
\label{eq:CKWFid}
c^2_a(\psi)= {\cncroof{c}}^2(\pi_{ab})+ {\cncroof{c}}^2(\pi_{ac})-\tau(\psi).
\end{equation}
On the other hand, from (\ref{eq:Conc3QB}) and (\ref{eq:Fid3QB}),
$g_a(\psi)$ is just the average of the concave and convex roofs
and $\tau(\psi)$ is their difference,
\begin{subequations}
\begin{align}
 g_a(\psi) &= \frac12\bigl( {\cncroof{c}}^2(\pi_{bc}) + {\cnvroof{c}}^2(\pi_{bc}) \bigr),\\
\tau(\psi) &=               {\cncroof{c}}^2(\pi_{bc}) - {\cnvroof{c}}^2(\pi_{bc}).
\end{align}
\end{subequations}
Hence, on the zero-measured set of non-GHZ pure states,
the (two-qubit) convex and concave roof extensions of local concurrence $c$ are equal,
and both of them are equal to the indicator function $g_a(\psi)$.

\section{Mixed states of three qubits}
\label{sec:ThreeQB.Mixed}
Now, we have the indicator functions $y$, $c^2_a$, $g_a$ and $t$ 
for the PS classification of the states of three-qubit systems,
and we will also have $\tau^2$ for Class W of the classification 
of Ac{\'i}n et.~al.~(section \ref{subsec:QM.EntMeas.3QBMix}).
The PSS classification arises as the combination of these two.
Since only the PS class $\mathcal{C}_1$ is divided into two PSS classes,
obtaining the PSS classification is straightforward.
Only for the sake of completeness, we write out this PSS classification
for which these indicator functions can be used.

\subsection{PSS subsets}
\label{subsec:ThreeQB.Mixed.PSSsubsets}

Starting with the extremal points of the space of states,
we have the pure states given by the vectors of different SLOCC classes
given in section \ref{subsec:QM.EntMeas.3QBPure}.
These are the same as those of the general tripartite case, given in (\ref{eq:PSPclasses}),
only the set of tripartite entangled pure states $\mathcal{Q}_{123}$ 
is divided into two disjoint subsets according to the W and GHZ-type three-qubit entanglement.
These are denoted with $\mathcal{Q}_{123}=\mathcal{Q}_\text{W}\cup\mathcal{Q}_\text{GHZ}$,
and with these we get the \emph{pure PSS classes}
\begin{subequations}
\label{eq:PSSPclasses}
\begin{align}
\mathcal{Q}_{1|2|3}     &=\bigl\{\pi=\cket{\psi}\bra{\psi}\;\big\vert\; \cket{\psi}\in \mathcal{V}_{1|2|3},\; \norm{\psi}^2=1 \bigr\},\\
\mathcal{Q}_{a|bc}      &=\bigl\{\pi=\cket{\psi}\bra{\psi}\;\big\vert\; \cket{\psi}\in \mathcal{V}_{a|bc},\;  \norm{\psi}^2=1 \bigr\},\\
\mathcal{Q}_\text{W}    &=\bigl\{\pi=\cket{\psi}\bra{\psi}\;\big\vert\; \cket{\psi}\in \mathcal{V}_\text{W},\; \norm{\psi}^2=1 \bigr\},\\
\mathcal{Q}_\text{GHZ}  &=\bigl\{\pi=\cket{\psi}\bra{\psi}\;\big\vert\; \cket{\psi}\in \mathcal{V}_\text{GHZ},\; \norm{\psi}^2=1 \bigr\}.
\end{align}
\end{subequations}
Again, these are disjoint sets covering $\mathcal{P}$ entirely.
From these, we can obtain the \emph{pure PSS sets}
\begin{subequations}
\label{eq:PSSPsets}
\begin{align}
\mathcal{P}_{1|2|3}    
&= \mathcal{Q}_{1|2|3},\\
\mathcal{P}_{a|bc}     
&= \mathcal{Q}_{1|2|3}\cup\mathcal{Q}_{a|bc},\\
\mathcal{P}_\text{W}   
&= \mathcal{Q}_{1|2|3}\cup\mathcal{Q}_{a|bc}\cup\mathcal{Q}_\text{W},\\
\mathcal{P}_\text{GHZ} &= \mathcal{Q}_{1|2|3}\cup\mathcal{Q}_{a|bc}\cup\mathcal{Q}_\text{W}\cup\mathcal{Q}_\text{GHZ} \equiv\mathcal{P}_{123}\equiv\mathcal{P}.
\end{align}
\end{subequations}
These sets are closed and contain each other in a hierarchic way,
which is illustrated in figure~\ref{fig:PSS3incl}.

Using these, we can obtain the \emph{PSS subsets},
which are the same as the PS subsets of the general tripartite case, given in (\ref{eq:PSsets}),
except the new set $\mathcal{D}_\text{W}$,
which can be mixed without the use of GHZ-type entanglement:
\begin{subequations}
\label{eq:PSSsets}
\begin{align}
\label{eq:PSSsets.1|2|3}
\mathcal{D}_{1|2|3} &= \Conv\bigl(\mathcal{P}_{1|2|3}\bigr)\equiv\mathcal{D}_\text{$3$-sep},\\
\mathcal{D}_{a|bc}  &= \Conv\bigl(\mathcal{P}_{a|bc}\bigr),\\
\label{eq:PSSsets.bcacab}
\mathcal{D}_{b|ac,c|ab}  &= \Conv \bigl(\mathcal{P}_{b|ac}\cup\mathcal{P}_{c|ab}\bigr),\\
\label{eq:PSSsets.2-sep}
\mathcal{D}_{1|23,2|13,3|12}  &= \Conv \bigl(\mathcal{P}_{1|23}\cup\mathcal{P}_{2|13}\cup\mathcal{P}_{3|12}\bigr)
\equiv\mathcal{D}_\text{$2$-sep},\\
\label{eq:PSSsets.123}
\mathcal{D}_\text{W}  &= \Conv \bigl(\mathcal{P}_\text{W}\bigr),\\
\begin{split}
\mathcal{D}_{123}  &= \Conv \bigl(\mathcal{P}_{123}\bigr)
\equiv\mathcal{D}_\text{$1$-sep}\equiv\mathcal{D}_\text{GHZ}\equiv\mathcal{D}.
\end{split}
\end{align}
\end{subequations}
Again, these sets are convex and they
contain each other in a hierarchic way,
which is illustrated in figure~\ref{fig:PSS3incl}.
\begin{figure}
 \includegraphics{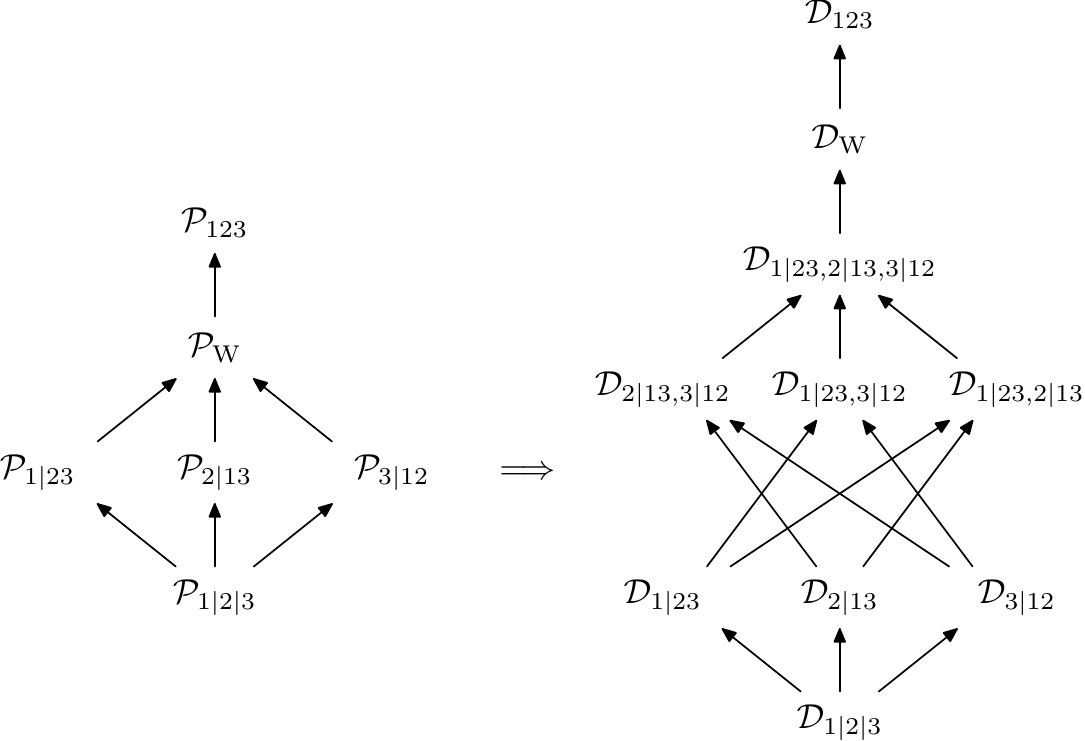}
 \caption{Inclusion hierarchy of the pure and mixed PSS sets $\mathcal{P}_{\dots}$ and $\mathcal{D}_{\dots}$ given in (\ref{eq:PSSPsets}) and (\ref{eq:PSSsets}).}
\label{fig:PSS3incl}
\end{figure}
And, again, these sets 
are the convex hulls of \emph{all the possible closed sets}
arising from the unions of the $\mathcal{Q}_{\dots}$ sets~(\ref{eq:PSSPclasses}) of extremal points.

\subsection{PSS classes}
\label{subsec:ThreeQB.Mixed.PSSclasses}

The PSS classes of three-qubit mixed states arise
as the possible intersection of the PSS subsets (\ref{eq:PSSsets}).
Since the inclusion hierarchy of these (figure \ref{fig:PSS3incl})
is just a slight extension of that of the PS subsets of tripartite mixed states (figure \ref{fig:PS3incl}),
the situation does not become more complicated.
The arising PSS classes are the same as the PS classes, given in (\ref{eq:PSClasses}),
except that class $\mathcal{C}_1=\mathcal{D}\setminus\mathcal{D}_{1|23,2|13,3|12}$, containing tripartite entanglement,
is divided into two PSS classes.
The first one of them is
\begin{subequations}
\begin{equation}
\mathcal{C}_\text{W}=\cmpl{\mathcal{D}_{1|23,2|13,3|12}}\cap\mathcal{D}_\text{W}
\equiv\mathcal{D}_\text{W}\setminus\mathcal{D}_{1|23,2|13,3|12},
\end{equation}
which is the set of states which can not be mixed
without the use of some tripartite entangled pure states,
but there is no need of GHZ-type entanglement \cite{Acinetal3QBMixClass}.
The second one is
\begin{equation}
\mathcal{C}_\text{GHZ}=\cmpl{\mathcal{D}_\text{W}}\cap\mathcal{D}_{123}
\equiv\mathcal{D}_{123}\setminus\mathcal{D}_\text{W},
\end{equation}
\end{subequations}
which is the set of states which can not be mixed 
without the use of GHZ-type entanglement.
We summarize these $1+18+2$ PSS classes in table~\ref{tab:PSS3Classes}.
\begin{table}
\begin{tabu}{X[2,c]||X[1,c]|X[1,c]X[1,c]X[1,c]|X[1,c]X[1,c]X[1,c]|X[1,c]|X[1,c]X[1,c]||X[3,c]X[3,c]X[3,c]}
\hline
\begin{sideways}PSS Class\end{sideways} & 
\begin{sideways}$\mathcal{D}_{1|2|3}$\end{sideways} & 
\begin{sideways}$\mathcal{D}_{a|bc}$\end{sideways} & 
\begin{sideways}$\mathcal{D}_{b|ac}$\end{sideways} & 
\begin{sideways}$\mathcal{D}_{c|ab}$\end{sideways} & 
\begin{sideways}$\mathcal{D}_{b|ac,c|ab}$\end{sideways} & 
\begin{sideways}$\mathcal{D}_{a|bc,c|ab}$\end{sideways} &
\begin{sideways}$\mathcal{D}_{a|bc,b|ac}$\end{sideways} &
\begin{sideways}$\mathcal{D}_{1|23,2|13,3|12}$\end{sideways} &
\begin{sideways}$\mathcal{D}_\text{W}$\end{sideways} &
\begin{sideways}$\mathcal{D}_{123}$\end{sideways} &
\begin{sideways}in \cite{SeevinckUffinkMixSep}\end{sideways} &
\begin{sideways}in \cite{DurCiracTarrachBMixSep}\end{sideways} &
\begin{sideways}in \cite{Acinetal3QBMixClass}\end{sideways} \\
\hline
\hline
$\mathcal{C}_3$           & $\subset$  & $\subset$  & $\subset$  & $\subset$  & $\subset$  & $\subset$  & $\subset$  & $\subset$  & $\subset$  & $\subset$  & 3       & 5       & S \\
\hline
$\mathcal{C}_{2.8}$       & $\nsubset$ & $\subset$  & $\subset$  & $\subset$  & $\subset$  & $\subset$  & $\subset$  & $\subset$  & $\subset$  & $\subset$  & 2.8     & 4       & B \\
$\mathcal{C}_{2.7.a}$     & $\nsubset$ & $\nsubset$ & $\subset$  & $\subset$  & $\subset$  & $\subset$  & $\subset$  & $\subset$  & $\subset$  & $\subset$  & 2.7,6,5 & 3.3,2,1 & B \\
$\mathcal{C}_{2.6.a}$     & $\nsubset$ & $\subset$  & $\nsubset$ & $\nsubset$ & $\subset$  & $\subset$  & $\subset$  & $\subset$  & $\subset$  & $\subset$  & 2.4,3,2 & 2.3,2,1 & B \\
$\mathcal{C}_{2.5.a}$     & $\nsubset$ & $\subset$  & $\nsubset$ & $\nsubset$ & $\nsubset$ & $\subset$  & $\subset$  & $\subset$  & $\subset$  & $\subset$  & 2.4,3,2 & 2.3,2,1 & B \\
$\mathcal{C}_{2.4}$       & $\nsubset$ & $\nsubset$ & $\nsubset$ & $\nsubset$ & $\subset$  & $\subset$  & $\subset$  & $\subset$  & $\subset$  & $\subset$  & 2.1     & 1       & B \\
$\mathcal{C}_{2.3.a}$     & $\nsubset$ & $\nsubset$ & $\nsubset$ & $\nsubset$ & $\nsubset$ & $\subset$  & $\subset$  & $\subset$  & $\subset$  & $\subset$  & 2.1     & 1       & B \\
$\mathcal{C}_{2.2.a}$     & $\nsubset$ & $\nsubset$ & $\nsubset$ & $\nsubset$ & $\subset$  & $\nsubset$ & $\nsubset$ & $\subset$  & $\subset$  & $\subset$  & 2.1     & 1       & B \\
$\mathcal{C}_{2.1}$       & $\nsubset$ & $\nsubset$ & $\nsubset$ & $\nsubset$ & $\nsubset$ & $\nsubset$ & $\nsubset$ & $\subset$  & $\subset$  & $\subset$  & 2.1     & 1       & B \\
\hline
$\mathcal{C}_\text{W}$    & $\nsubset$ & $\nsubset$ & $\nsubset$ & $\nsubset$ & $\nsubset$ & $\nsubset$ & $\nsubset$ & $\nsubset$ & $\subset$  & $\subset$  & 1       & 1       & W \\
$\mathcal{C}_\text{GHZ}$  & $\nsubset$ & $\nsubset$ & $\nsubset$ & $\nsubset$ & $\nsubset$ & $\nsubset$ & $\nsubset$ & $\nsubset$ & $\nsubset$ & $\subset$  & 1       & 1       & GHZ \\
\hline
\end{tabu}
\bigskip
\caption{PSS classes of mixed three-qubit states.
Additionally, we show the 
classifications obtained by
Seevinck and Uffink \cite{SeevinckUffinkMixSep},
D\"ur, Cirac and Tarrach \cite{DurCiracTarrachBMixSep},
and Ac\'in, Bru\ss{}, Lewenstein and Sanpera \cite{Acinetal3QBMixClass}.} 
\label{tab:PSS3Classes}
\end{table}

\subsection{Indicator functions}
\label{subsec:ThreeQB.Mixed.CRoof}

The indicator functions (\ref{eq:newPureLUinvs}) are given for state vectors $\cket{\psi}$
but they can be written for pure states $\pi=\cket{\psi}\bra{\psi}$ as well.%
\footnote{This is because they are LU-invariants.
Note that the original LSL-tensors (\ref{eq:tensors}) can not be written directly for pure states.}
During this, the index contractions with the $\varepsilon_{ii'}$s and $\delta^{i'}_{\phantom{i'}i}$s
lead to partial traces, partial transposes and spin-flips,
and it turns out that%
\footnote{A slight abuse of the notation is that we use the same symbols
for the $\mathcal{H}\supset S^{2d-1}\to\field{R}$ functions given in (\ref{eq:newPureLUinvs})  
and the $\mathcal{P}(\mathcal{H})\to\field{R}$ functions given in (\ref{eq:newPureLUinvsPi}),
which take the same values for $\cket{\psi}$ and $\pi=\cket{\psi}\bra{\psi}$.}
\begin{subequations}
\label{eq:newPureLUinvsPi}
\begin{align}
y(\pi)     &= \frac23\bigl(g_1(\pi) + g_2(\pi) + g_3(\pi)\bigr),\\
c^2_a(\pi) &= g_b(\pi) + g_c(\pi),\\
g_a(\pi)   &= \tr\pi_{bc}\widetilde{\pi_{bc}},\\
t(\pi)     &= 4\tr(\pi_a\otimes\pi_{bc})\tilde{\pi},\\
\tau^2(\pi)&= 4\tr(\pi^{\transp_a}\widetilde{\pi^{\transp_a}})^2,
\end{align}
\end{subequations}
where the spin-flipped states
$\widetilde{\pi_{bc}}=\bigl(\varepsilon\otimes\varepsilon\pi_{bc}\varepsilon^\dagger\otimes\varepsilon^\dagger\bigr)^*$
and
$\tilde{\pi}=\bigl(\varepsilon\otimes\varepsilon\otimes\varepsilon\pi\varepsilon^\dagger\otimes\varepsilon^\dagger\otimes\varepsilon^\dagger\bigr)^*$
arise, together with the partial transposition.

Now, we have the PSS subsets~(\ref{eq:PSSsets})
and the functions~(\ref{eq:newPureLUinvsPi})
with the vanishing properties 
\begin{subequations}
\label{eq:PSS3VanishingPure}
\begin{align}
\pi&\in\mathcal{P}_{1|2|3}&
\quad&\Longleftrightarrow&\quad y(\pi)&=0,\\
\pi&\in\mathcal{P}_{a|bc}&
\quad&\Longleftrightarrow&\quad c^2_a(\pi)&=0,\\
\pi&\in\mathcal{P}_{b|ca}\cup\mathcal{P}_{c|ab}&
\quad&\Longleftrightarrow&\quad g_a(\pi)&=0,\\
\pi&\in\mathcal{P}_{1|23}\cup\mathcal{P}_{2|13}\cup\mathcal{P}_{3|12}&
\quad&\Longleftrightarrow&\quad t(\pi)&=0,\\
\pi&\in\mathcal{P}_\text{W}&
\quad&\Longleftrightarrow&\quad \tau^2(\pi)&=0,
\end{align}
\end{subequations}
given also in table~\ref{tab:PSS3Pclasses}.
From these, the following holds for their convex-roof extension
in the same way as in~(\ref{eq:PS3VanishingMix}):
\begin{subequations}
\label{eq:PSS3VanishingMix}
\begin{align}
\label{eq:PSS3VanishingMix.y}
\varrho&\in\mathcal{D}_{1|2|3}&
\quad&\Longleftrightarrow&\quad \cnvroof{y}(\varrho)&=0,\\
\label{eq:PSS3VanishingMix.sa}
\varrho&\in\mathcal{D}_{a|bc}&
\quad&\Longleftrightarrow&\quad \cnvroof{{c^2_a}}(\varrho)&=0,\\
\label{eq:PSS3VanishingMix.ga}
\varrho&\in\mathcal{D}_{b|ac,c|ab}&
\quad&\Longleftrightarrow&\quad \cnvroof{g}_a(\varrho)&=0,\\
\label{eq:PSS3VanishingMix.t}
\varrho&\in\mathcal{D}_{1|23,2|13,3|12}&
\quad&\Longleftrightarrow&\quad \cnvroof{t}(\varrho)&=0,\\
\varrho&\in\mathcal{D}_\text{W}&
\quad&\Longleftrightarrow&\quad \cnvroof{{\tau^2}}(\varrho)&=0.
\end{align}
\end{subequations}

These necessary and sufficient conditions for the PSS subsets~(\ref{eq:PSS3VanishingMix})
yields necessary and sufficient conditions for the PSS classes,
and we can fill out table~\ref{tab:PSS3Mix} 
for the identification of the PSS classes of table~\ref{tab:PSS3Classes}, given for mixed states,
similar to table~\ref{tab:PSS3Pclasses}, given also for pure states.
\begin{table}
\begin{tabu}{X[2,c]||X[1,c]|X[1,c]X[1,c]X[1,c]|X[1,c]X[1,c]X[1,c]|X[1,c]|X[1,c]}
\hline
PSS Class 
& $\cnvroof{y}(\varrho)$ & 
$\cnvroof{{c^2_a}}(\varrho)$ & $\cnvroof{{c^2_b}}(\varrho)$ & $\cnvroof{{c^2_c}}(\varrho)$ & 
$\cnvroof{g}_a(\varrho)$ & $\cnvroof{g}_b(\varrho)$ & $\cnvroof{g}_c(\varrho)$ & 
$\cnvroof{t}(\varrho)$ & $\cnvroof{{\tau^2}}(\varrho)$  \\
\hline
\hline
$\mathcal{C}_3$           & $=0$     & $=0$     & $=0$     & $=0$     & $=0$     & $=0$     & $=0$     & $=0$     & $=0$ \\
\hline
$\mathcal{C}_{2.8}$       & $>0$     & $=0$     & $=0$     & $=0$     & $=0$     & $=0$     & $=0$     & $=0$     & $=0$ \\
$\mathcal{C}_{2.7.a}$     & $>0$     & $>0$     & $=0$     & $=0$     & $=0$     & $=0$     & $=0$     & $=0$     & $=0$ \\
$\mathcal{C}_{2.6.a}$     & $>0$     & $=0$     & $>0$     & $>0$     & $=0$     & $=0$     & $=0$     & $=0$     & $=0$ \\
$\mathcal{C}_{2.5.a}$     & $>0$     & $=0$     & $>0$     & $>0$     & $>0$     & $=0$     & $=0$     & $=0$     & $=0$ \\
$\mathcal{C}_{2.4}$       & $>0$     & $>0$     & $>0$     & $>0$     & $=0$     & $=0$     & $=0$     & $=0$     & $=0$ \\
$\mathcal{C}_{2.3.a}$     & $>0$     & $>0$     & $>0$     & $>0$     & $>0$     & $=0$     & $=0$     & $=0$     & $=0$ \\
$\mathcal{C}_{2.2.a}$     & $>0$     & $>0$     & $>0$     & $>0$     & $=0$     & $>0$     & $>0$     & $=0$     & $=0$ \\
$\mathcal{C}_{2.1}$       & $>0$     & $>0$     & $>0$     & $>0$     & $>0$     & $>0$     & $>0$     & $=0$     & $=0$ \\
\hline
$\mathcal{C}_\text{W}$    & $>0$     & $>0$     & $>0$     & $>0$     & $>0$     & $>0$     & $>0$     & $>0$     & $=0$ \\
$\mathcal{C}_\text{GHZ}$  & $>0$     & $>0$     & $>0$     & $>0$     & $>0$     & $>0$     & $>0$     & $>0$     & $>0$ \\  
\hline
\end{tabu}
\bigskip
\caption{PSS classes of mixed three-qubit states given in table~\ref{tab:PSS3Classes}
identified by the vanishing of the mixed indicator functions
(convex roof extension of the indicator functions~(\ref{eq:newPureLUinvs})).}
\label{tab:PSS3Mix}
\end{table}
On the other hand, if a classification does not involve all the PS(S) subsets,
then, through~(\ref{eq:PSS3VanishingMix}), 
we have to use only some of the indicator functions,
for example, $y$, $c^2_a$ and $t$ for the classification obtained by Seevinck and Uffink \cite{SeevinckUffinkMixSep},
$y$ and $c^2_a$ for the classification obtained by D\"ur, Cirac and Tarrach \cite{DurCiracTarrachBMixSep},
$y$, $t$ and $\tau^2$ for the classification obtained by Ac\'in, Bru\ss{}, Lewenstein and Sanpera \cite{Acinetal3QBMixClass}.

\section{Generalizations: Three subsystems}
\label{sec:ThreeQB.GenThreePart}

In section~\ref{subsec:PartSep.ThreePart.Indicators} of the previous chapter 
we obtained indicator functions for tripartite systems from a general approach (\ref{eq:PS3PureInd}),
then in section~\ref{subsec:ThreeQB.Pure.NewInvs} 
we got indicator functions for three-qubit systems from the FTS approach (\ref{eq:newPureLUinvs}).
In this section we break up with qubits and consider general tripartite systems, 
and we try to obtain indicator functions as the generalization of the ones coming from the FTS approach.
Note that the FTS approach has led to an ``additive'' definition of indicator functions (\ref{eq:newPureLUinvs}),
while the general approach has led to a ``multiplicative'' one (\ref{eq:PS3PureInd}).

So, we need the generalizations of the pure state indicator functions in~(\ref{eq:newPureLUinvs.y})-(\ref{eq:newPureLUinvs.t}).
Apart from continuity,
the main and only requirement for these is
to satisfy the vanishing requirements for pure states given in the relevant part of table~\ref{tab:PSS3Pclasses},
which is copied here in table \ref{tab:PSS3PclassesAdd}.
\begin{table}
\begin{tabu}{X[c]||X[c]|X[c]X[c]X[c]|X[c]X[c]X[c]|X[c]}
\hline
Class  & $y(\pi)$ & $s_1(\pi)$ & $s_2(\pi)$ & $s_3(\pi)$ & $g_1(\pi)$ & $g_2(\pi)$ & $g_3(\pi)$ & $t(\pi)$  \\
\hline
\hline
$\mathcal{Q}_{1|2|3}$     & $=0$     & $=0$     & $=0$     & $=0$     & $=0$     & $=0$     & $=0$     & $=0$     \\
\hline
$\mathcal{Q}_{1|23}$      & $>0$     & $=0$     & $>0$     & $>0$     & $>0$     & $=0$     & $=0$     & $=0$     \\
$\mathcal{Q}_{2|13}$      & $>0$     & $>0$     & $=0$     & $>0$     & $=0$     & $>0$     & $=0$     & $=0$     \\
$\mathcal{Q}_{3|12}$      & $>0$     & $>0$     & $>0$     & $=0$     & $=0$     & $=0$     & $>0$     & $=0$     \\
\hline
$\mathcal{Q}_{123}$       & $>0$     & $>0$     & $>0$     & $>0$     & $>0$     & $>0$     & $>0$     & $>0$     \\
\hline
\end{tabu}
\bigskip
\caption{Required vanishing properties of additive indicator functions for tripartite pure states.}
\label{tab:PSS3PclassesAdd}
\end{table}
Then the convex roof extensions of them identifies the corresponding PS classes $\mathcal{D}_{\dots}$,
since these vanishing properties are the only ones which are needed for e.g.~(\ref{eq:PS3VanishingMix}).

\subsection{\texorpdfstring{Indicator functions for $1|2|3$- and $a|bc$-separability}{Indicator functions for 1|2|3- and a|bc-separability}}
\label{subsec:ThreeQB.GenThreePart.ys}
The pure state indicator functions of~(\ref{eq:newPureLUinvs}) have been obtained from the FTS approach,
which works only for the qubit case.
However, some parts of the definitions can be generalized.
To do this, our basic quantities will be the~(\ref{eq:qTsallis}) local Tsallis entropies 
\begin{equation*}
s_a(\psi)=S^\text{Ts}_q(\pi_a) 
\end{equation*}
instead of the functions $g_a(\psi)$ given in~(\ref{eq:newPureLUinvs.ga}),
since the former ones are defined for all dimensions.%
\footnote{Previously in this chapter we have used $c^2_a(\pi)=C^2(\pi_a)$ with the concurrence squared (\ref{eq:conc2}),
which is a normalized version of the $q=2$ Tsallis entropy,
but now we relax $q$, and try other entropies as well.}
Obviously, for all Tsallis entropies of the subsystems,
$s_a(\psi)=S^\text{Ts}_q(\pi_a)$ fulfils the corresponding column of table~\ref{tab:PSS3PclassesAdd},
since it vanishes if and only if the subsystem is pure,
which means the separability of that subsystem from the rest of the system
if the whole system is in pure state.
From~(\ref{eq:newPureLUinvs.sa}) and~(\ref{eq:newPureLUinvs.ga}),
it turns out that $y$, given in~(\ref{eq:newPureLUinvs.y}),
is just the average of the local entropies $y=1/3(s_1+s_2+s_3)$,
vanishing if and only if no entanglement is present.
This works well not only for qubits, so we can keep this definition of $y$. 
This is not a novelty, 
since this is equivalent to the general construction (\ref{eq:PS3PureInd.1|2|3})-(\ref{eq:PS3PureInd.a|bc}).

\subsection{\texorpdfstring{Indicator functions for $b|ac$-$c|ab$-separability}{Indicator functions for b|ac-c|ab-separability}}
\label{subsec:ThreeQB.GenThreePart.ga}
What is more interesting is the $b|ab$-$c|ab$-indicator functions $g_a$ in~(\ref{eq:newPureLUinvs.ga}).
These can also be expressed by the local entropies~(\ref{eq:newPureLUinvs.sa})
for qubits as $g_a=1/2(s_b+s_c-s_a)$.
Can this definition be kept for subsystems of arbitrary dimensions?
For $\mathcal{Q}_{1|2|3}$, $s_a=s_b=s_c=0$ so $g_a=0$.
For $\mathcal{Q}_{a|bc}$, the subsystem $a$ can be separated from the others
so the subsystems $a$ and $bc$ are in pure states, $s_a=0$ and $s_b=s_c\neq0$,
from which $g_a\neq0$ and $g_b=g_c=0$.
So the first four rows of the $g_a$ columns of table~\ref{tab:PSS3PclassesAdd} is fulfilled by $g_a=1/2(s_b+s_c-s_a)$.
For the last row, we need that $g_a>0$
when tripartite entanglement is present.
And this is the problematic point.
This question can be traced back to the subadditivity of the Tsallis entropies.
Raggio's conjecture \cite{RaggioTsallis} 
about that is twofold: For%
\footnote{Note that for $0<q<1$, there is no definite relation
between $S^\text{Ts}_q(\varrho)$ and $S^\text{Ts}_q(\varrho_1) + S^\text{Ts}_q(\varrho_2)$.} 
$q>1$,
\begin{subequations}
\label{eq:Raggio}
\begin{align}
\label{eq:Raggio.subadd}
S^\text{Ts}_q(\varrho) &\leq S^\text{Ts}_q(\varrho_1) + S^\text{Ts}_q(\varrho_2),\\
\label{eq:Raggio.add}
S^\text{Ts}_q(\varrho) &= S^\text{Ts}_q(\varrho_1) + S^\text{Ts}_q(\varrho_2) 
\qquad\Longleftrightarrow\qquad
\bigl(\varrho = \varrho_1\otimes \varrho_2 \;\;\text{and}\;\;
\text{$\varrho_1$ or $\varrho_2$ pure}\bigr).
\end{align}
\end{subequations}
Both statements hold for the classical scenario \cite{RaggioTsallis},
which can be modelled in the quantum scenario 
by density matrices being LU-equivalent to diagonal ones.
The first part~(\ref{eq:Raggio.subadd}) of the conjecture
has been proven by Audenaert \cite{AudenaertTsallisSubadd}.
This guarantees the nonnegativity of our $g_a$s, since
for pure states,
$S^\text{Ts}_q(\pi_a)=S^\text{Ts}_q(\pi_{bc})\leq S^\text{Ts}_q(\pi_{b}) + S^\text{Ts}_q(\pi_{c})$,
so $0\leq 1/2(s_b+s_c-s_a) = g_a$.
On the other hand,~(\ref{eq:Raggio.add}) is exactly what we need.
That is, $\cket{\psi}\in\mathcal{Q}_{123}$ if and only if
neither of its subsystems are pure, which means that 
there is subadditivity in a strict sense,
so $0 < 1/2(s_b+s_c-s_a) = g_a$.
The $\Leftarrow$ implication in~(\ref{eq:Raggio.add}) holds obviously,
but the whole second part~(\ref{eq:Raggio.add}) of the conjecture,
to our knowledge, has not been proven yet.
A very little side-result of our work is that 
Raggio's conjecture holds
for the very restricted case of two-qubit mixed states which are at the most of rank two.

We note that the (\ref{eq:Neumann}) von Neumann entropies of the subsystems 
\emph{are not suitable} for the role of $s_a$s,
if we want to write $g_a$s by that as $1/2(s_b+s_c-s_a)$,
since the von Neumann entropy is additive for product states 
without any reference to the purity of subsystems, 
\begin{subequations}
\label{eq:NeumannProp}
\begin{align}
\label{eq:NeumannProp.subadd}
S(\varrho) &\leq S(\varrho_1) + S(\varrho_2),\\
\label{eq:NeumannProp.add}
S(\varrho) &= S(\varrho_1) + S(\varrho_2) 
\qquad\Longleftrightarrow\qquad
\varrho = \varrho_1\otimes \varrho_2.
\end{align}
\end{subequations}
Indeed, it is easy to construct a tripartite state,
which is not separable under any partition, but has vanishing $g_a$ (defined by the von Neumann entropy).
For example, let $\dim\mathcal{H}_a=4$,
then for the state
\begin{equation*}
\cket{\psi}=\frac12\bigl(\cket{000}+\cket{101}+\cket{210}+\cket{311}\bigr)
\end{equation*}
$\pi_{23}=\pi_2\otimes\pi_3$, so $g_1(\psi)=1/2\bigl(S(\pi_2)+S(\pi_3)-S(\pi_1)\bigr)=0$,
while $S(\pi_1)=\ln4$, and $S(\pi_2)=S(\pi_3)=\ln2$, so neither of the subsystems are pure,
the state is tripartite-entangled.

The~(\ref{eq:qRenyi}) \emph{R\'enyi entropy} 
has the advantage of additivity,
\begin{equation}
S^\text{R}_q(\varrho)=S^\text{R}_q(\varrho_1)+S^\text{R}_q(\varrho_2) 
\qquad\Longleftarrow\qquad \varrho = \varrho_1\otimes \varrho_2.
\end{equation}
This is an advantage when entanglement is studied in the asymptotic regime, 
when the state is present in multiple copies
and properties are investigated against the number of copies. 
Again, this advantage is a disadvantage from our point of view,
the R\'enyi entropies of the subsystems \emph{are not suitable} for the role of $s_a$s
if we want to write $g_a$s by that as $1/2(s_b+s_c-s_a)$.
Moreover, subadditivity does not hold for R\'enyi entropy,
so the non-negativity of $g_a$s defined by R\'enyi entropies
does not even guaranteed.%
\footnote{For further properties and references on the quantum entropies,
see e.g.~\cite{BengtssonZyczkowski,OhyaPetzQEntr,Petzfdivergence,FuruichiTsallis}.}

\subsection{\texorpdfstring{Indicator functions for $1|23$-$2|13$-$3|12$-separability}{Indicator functions for 1|23-2|13-3|12-separability}}
\label{subsec:ThreeQB.GenThreePart.t}
We have seen that the possibility of the additive definition of the $b|ac$-$c|ab$-indicator functions $g_a$
depends on Raggio's conjecture.
What can we say about $t$, the indicator for $1|23$-$2|13$-$3|12$-separability?
Since in the three-qubit case we need also the Kempe invariant $I_4$ to write $t$ (\ref{eq:newPureLUinvsCanon.t}),
it follows from the independency of the $I_0,\dots,I_5$ polynomials (\ref{eq:3QBcanonPureLUinvs})
that $t$ can not arise as linear combination of $c^2_a$ local Tsallis entropies of parameter $q=2$.
But we can also use $s_a$ local Tsallis entropies for $q\neq2$,
so it seems as if finding expression for $t$ with entropies could be possible.
However, $t$, being in connection with the Kempe invariant,
contains nonlocal information, which can not be extracted from the local spectra.
So the direct generalization through entropies, 
which works for $y$, $s_a$ and $g_a$ (this latter depends on Raggio's conjecture),
does not work for $t$.
However, note that finding indicator functions and the generalization of $t$ of the FTS approach
are independent issues, 
since the indicator functions of the general construction (\ref{eq:PS3PureInd}) 
could be constructed only from local informations.

\section{Summary and remarks}
\label{sec:ThreeQB.Sum}

In this chapter we have introduced the PSS classification of mixed three-qubit states,
which is the combination 
of the PS classification of thripartite states (section \ref{sec:PartSep.ThreePart})
and the three-qubit classification given by Ac{\'i}n et.~al.~(section \ref{subsec:QM.EntMeas.3QBMix}).
We have constructed the relevant indicator functions by the use of the FTS approach of three-qubit entanglement.

\begin{remarks}
\item First, note that the FTS approach of three-qubit entanglement 
\cite{BorstenetalFreudenthal3QBEnt}
is coming from the famous Black Hole/Qubit Correspondence \cite{BorstenDuffLevayBHQB}.
The FTS approach has turned out to be fruitful 
also in the description of the structure of entanglement in some other particular composite systems
\cite{LevayVranaThreeFermionSix,VranaLevayFTS}.
\item Since the convex roof extensions of polynomials are known to be semi-algebraic functions \cite{PetiPriv,ChenDjokovicSemialg},
it can be useful to use LU-invariant homogeneous polynomials for the identification of the classes.
Then we have polynomials of this kind 
from~(\ref{eq:newPureLUinvs}) coming from the FTS-approach,
and from~(\ref{eq:PS3PureInd}) with the Tsallis entropy for $q=2$ coming from the general constructions.
The former ones are of lower degree, which may lead to simpler convex roof extensions.
\item Moreover, 
this holds also for the  $g_a$ functions in the general tripartite case
if Raggio's conjecture holds (subsection~\ref{subsec:ThreeQB.GenThreePart.ga}).
\item A little side-result of our work is that
Raggio's conjecture holds
for two-qubit mixed states which are at the most of rank two.
\item An interesting question is 
as to whether all pure state indicator functions for $n$-partite systems
can be obtained without products of local entropies, but using only linear combinations of them.
Some issues in connection with this were discussed in section \ref{sec:ThreeQB.GenThreePart} for tripartite systems, 
but some ideas or hints are still missing (section \ref{subsec:ThreeQB.GenThreePart.t}).
The problem here is that
we have to find such linear combination of different entropies 
which fulfils \emph{every} line of the last column of table \ref{tab:PSS3PclassesAdd},
while this can depend on other entropic inequalities, which can be unknown at this time.
\item In the light of chapter \ref{chap:Deg6},
looking for convex roof extensions in the language of LU invariant polynomials
would be an interesting research direction.
\item\label{rem:threeqb.notmon} As a disadvantage of the FTS approach,
we have to mention that
some of the indicator functions coming from the FTS approach 
are not non-increasing on average~(\ref{eq:averagePure}), 
namely $g_a$ and $t$ given in~(\ref{eq:newPureLUinvs.ga}) and~(\ref{eq:newPureLUinvs.t}).
(Counter-examples for~(\ref{eq:averagePure}) can be constructed for these functions by direct calculation.)
\item
The functions (\ref{eq:newPureLUinvs})
give us some abstract motivations for the relevance of the extension of the
Seevinck-Uffink classification, done in the previous chapter.
Although we get back the classification given by Seevinck and Uffink
if we simply forget about the sets $\varrho\in\mathcal{D}_{b|ac,c|ab}$,
and the functions $\cnvroof{g}_a(\varrho)$,
but the appearance of the $g_a(\psi)$ polynomials is a natural
in the light of the formulae~(\ref{eq:newPureLUinvs.y}),~(\ref{eq:newPureLUinvs.sa}), and~(\ref{eq:newPureLUinvs.ga}).
This is another motivation of the introduction of the sets $\varrho\in\mathcal{D}_{b|ac,c|ab}$
to the classification.
\item The $g_a$ functions are interesting in themselves.
First, they mean the failure of additivity of the Tsallis entropies (section \ref{subsec:ThreeQB.GenThreePart.ga}),
which is a quantum mutual information like quantity, defined by Tsallis entropy instead of von Neumann entropy.
\item Second, for all non-GHZ vectors, the $g_a$ functions coincide with
the squared Wootters concurrences of two-qubit subsystems~(section \ref{subsec:ThreeQB.Pure.WConc}).
However, note that the Wootters concurrences of two-qubit subsystems
are not suitable for being indicator functions,
since they can be zero also for GHZ-type vectors,
so they do not fulfil the last row of $g_a$ columns of table~\ref{tab:PSS3Pclasses}.
For example for the usual GHZ state~(\ref{eq:GHZ}),
the Wootters concurrences of two-qubit subsystems are zero.
\item Third, by reason of (\ref{eq:newPureLUinvs.sa}), (\ref{eq:Conc3QB}) and (\ref{eq:recallCKW}),
the entanglement of subsystem $a$ with $bc$ given by the concurrence is $C^2(\pi_a)\equiv c^2_a(\pi)=g_b(\pi)+g_c(\pi)$,
and due to (\ref{eq:Conc3QB}), $g_a(\pi)={\cnvroof{c}}^2(\pi_{bc})+1/2\tau(\pi)$.
So $g_a$ seems like some kind of ``effective entanglement''
within the $bc$ subsystem, containing also tripartite entanglement.
However, note that $g_a$ is not entanglement monotone (item (\ref{rem:threeqb.notmon})),
so it does not express the amount of any kind of entanglement in general.
\end{remarks}

\chapter*{Epilogue}

In the Prologue, we mentioned the nonclassical correlations arising in quantum systems being in entangled state,
and their utilization for nonclassical tasks.
Then in the main part of the dissertation, we reviewed and studied some questions in connection with entanglement.
But, is the reverse also true, namely, 
is the presence of entanglement completely equivalent to the nonclassical correlations?
For pure states that holds, that is,
a pure state is classically correlated if and only if it is separable.
However, in this case, it is actually uncorrelated not only quantum mechanically but also classically.
To model classical correlations, we need to use density matrices.
For density matrices, however, separability is not equivalent to classicality.
Classically correlated density matrices are the ones which are local unitary equivalent to diagonal ones,
correlations of no other density matrices have classical counterpart, hence are regarded nonclassical \cite{DornerVedralCorrelations}.
Entangled states are of course nonclassical, but the most of the separable states are also nonclassical in this sense.
Moreover, there are quantum algorithms which use states of this latter kind to achieve quantum speed-up over classical algorithms.
These separable states exhibiting nonclassical behaviour can be created by the use of
local operations and classical correlations only.
In this sense, entanglement is regarded as a stronger form of nonclassicality.
There are different quantities measuring the amount of nonclassical correlations,
maybe the most important one of them is the \emph{quantum discord} \cite{ModietalDiscord}.
 
Nowadays, investigating the correlations arising in quantum systems is an active field of research,
helping us to uncover the intriguing differences between the classical and quantum worlds.


%
\backmatter

%
\bibliographystyle{amsalpha}
\phantomsection
\bibliography{dissertation.bib}

%

\end{document}